\title{Algorithms from Invariants: \\ Smoothed Analysis of Orbit Recovery over $SO(3)$}
\author{Allen Liu \thanks{Email: \texttt{cliu568@mit.edu}. This work was supported in part by an NSF Graduate Research Fellowship, a Fannie and John Hertz Foundation Fellowship and Ankur Moitra's NSF CAREER Award CCF-1453261 and NSF Large CCF1565235.}\and Ankur Moitra \thanks{Email: \texttt{moitra@mit.edu}. This work was supported in part by a Microsoft Trustworthy AI Grant, NSF CAREER Award CCF-1453261, NSF Large CCF1565235, a David and Lucile Packard Fellowship and an ONR Young Investigator Award.}}
\begin{document}

\maketitle
\thispagestyle{empty}

\begin{abstract}
   In this work we study the orbit recovery problem over $SO(3)$, where the goal is to recover a function on the sphere from noisy, randomly rotated copies of it. Furthermore we assume that the function is a linear combination of low-degree spherical harmonics. This is a natural abstraction for the problem of recovering the three-dimensional structure of a molecule through cryo-electron tomography. When it comes to provably learning the parameters of a generative model, the method of moments is the standard workhorse of theoretical machine learning. It turns out that there is a natural incarnation of the method of moments for orbit recovery based on invariant theory. 
   
   Bandeira et al. \cite{bandeira2018estimation} used invariant theory to give tight upper and lower bounds on the sample complexity in terms of the noise level. However many of the key challenges remain: Can we prove bounds on the sample complexity that are polynomial in $n$, the dimension of the signal? The bounds in \cite{bandeira2018estimation} hide constants that have an unspecified dependence on $n$ and only hold in the limit as $\sigma^2 \rightarrow \infty$ where $\sigma^2$ is the variance of the noise. Moreover can we give efficient algorithms? 
   
  We revisit these challenges from the perspective of smoothed analysis, whereby we assume that the coefficients of the signal (in the basis of spherical harmonics) are perturbed by a small amount of Gaussian noise. Our main result is a quasi-polynomial time algorithm for orbit recovery over $SO(3)$ in this model. Our approach is based on frequency marching, which solves a linear system to find the higher degree coefficients assuming that the lower degree coefficients have already been found. Our main technical contribution is to show that these linear systems have unique solutions, are well-conditioned, and that the error can be made to compound over at most a logarithmic number of rounds. We believe that our work takes an important first step towards uncovering the algorithmic implications of invariant theory, particularly when fitting the parameters of a generative model with group symmetries.

\end{abstract}

\clearpage
\pagenumbering{arabic} 

\section{Introduction}


In this work we study the {\em orbit recovery} problem, where the goal is to recover a planted signal from noisy measurements under unknown group actions. Formally, there is
\begin{enumerate}
    \item[(1)] an unknown signal $x \in \mathbb{C}^n$
    
    \item[(2)] a group $G$ with a group action $\rho: G \rightarrow \mathbb{GL}(n, \mathbb{C})$
    
\end{enumerate}
\noindent and we get observations of the form $y = \rho(g) \cdot x + \eta$ where $g$ is drawn from the Haar measure on $G$ and $\eta$ is additive Gaussian noise with variance $\sigma^2$. The goal is to give a statistically and computationally efficient algorithm for recovering $x$ up to a group action. In particular, it is impossible to recover $x$ uniquely but we can still hope to find an element that is close to its orbit. We will be mainly interested in the case where $G = SO(3)$, which is challenging in part because $G$ is non-abelian.

Let us start with the motivation: Cryo-electron tomography (cryo-ET) \cite{frank2008electron} is a popular technique for imaging biological macromolecules and cells. It works by tilting an object and using an electron beam to image a two-dimensional slice through it. These two-dimensional slices can be combined to produce a three-dimensional image. However the image is extremely noisy. The goal is to develop methods for combining many noisy three-dimensional images to produce a single high-resolution image. The main difficulty is that the objects that are used to generate different samples are not necessarily aligned in some common starting configuration. Thus samples correspond to noisy observations of identical objects under unknown rotations. Orbit recovery over $SO(3)$ is a natural abstraction for this problem. In particular we assume that the object is described as an unknown signal on the sphere and moreover it is a linear combination of spherical harmonics of bounded degree. Let $x$ represent the coefficients in this basis. This is the natural notion of being band-limited, and ensures that the signal is finite-dimensional. Finally each sample is generated by randomly rotating the signal and getting noisy observations of the new coefficients in the basis of spherical harmonics. See Section~\ref{sec:setup} for further details. Later we will study generalizations where there are multiple band-limited functions, each associated with concentric spherical shells, which are all rotated together. 

Different flavors of orbit recovery arise naturally in other engineering problems. In signal processing, consider the problem of recovering a discrete and periodic signal from noisy and misaligned measurements. Orbit recovery over $\mathbb{Z}_n$ is a natural abstraction. It is also referred to as the discrete multireference alignment (discrete MRA) problem. Formally, there is an unknown signal $x$ which is a vector of dimension $n$. Each sample is generated by cyclically shifting the coordinates of $x$ by a random integer from $0$ to $n-1$ and adding Gaussian noise to each of the coordinates separately. The main challenge is that samples correspond to noisy observations under unknown cyclic shifts. Moreover if we can estimate $x$, we can also use $x$ to find an approximate alignment of the different samples. In other situations, we can allow the signal to be continuous and work with continuous cyclic shifts. This corresponds to orbit recovery over $SO(2)$. It is also called the continuous MRA problem. Formally, we assume the signal is band-limited. Let $x$ represent the coefficients of the signal in the Fourier basis. Now cyclically shifting the signal by $\theta$ corresponds to pointwise multiplying $x$ with a complex trigonometric polynomial. Finally we obtain noisy estimates of the new Fourier coefficients after the cyclic shift.

There are efficient algorithms for discrete and continuous MRA. 
Perry et al. \cite{perry2019sample} gave an efficient algorithm for discrete MRA, and matching sample complexity lower bounds, based on tensor decompositions. Essentially, they exploit the fact that the samples can be viewed as coming from a mixture of $n$ spherical Gaussians in $n$ dimensions whose centers are cyclic shifts of each other. For continuous MRA there is an efficient algorithm based on frequency marching \cite{bendory2017bispectrum}. Moitra and Wein \cite{moitra2019spectral} gave an algorithm for heterogenous continuous MRA, i.e. where the samples are generated from a mixture model over $x$. However their algorithm only works when the $x$'s are random and solves the weaker list-recovery problem. 

In light of these works, a natural question is: Are there efficient algorithms for more general groups? The problem seems to become significantly harder\footnote{One explanation is that in the cases when $G = \mathbb{Z}_n$ or $G = SO(2)$, or whenever $G$ is finite and abelian, the invariant ring becomes {\em simple}: It can be generated by monomials. This will not be the case when $G = SO(3)$.} when the group is non-abelian and very little is known in this case.  An important work of Bandeira et al. \cite{bandeira2018estimation} established a link between orbit recovery and invariant theory.  However, even for the case of $SO(3)$, there are no known algorithms or even sample complexity guarantees that are sub-exponential in $n$.

\subsection{The Method of Moments and Invariant Theory}


In this section, we give some background on the method of moments. Our main interest is in drawing natural parallels between the key steps in the method of moments, and the corresponding challenges in giving algorithms for orbit recovery via invariant theory. The method of moments is a workhorse in theoretical machine learning. It gives a general blueprint for fitting the parameters of a generative model, provided we can answer the following key questions:

\begin{enumerate}
    \item[(1)] \textbf{How many moments suffice to uniquely determine the parameters?} In many cases, the moments of a distribution can be expressed as polynomials in the unknown parameters. Then bounding the number of moments needed to uniquely identify the parameters is equivalent to reasoning about the solutions to some system of polynomial equations. 

    \item[(2)] \textbf{Is the system of equations, that determines the parameters in terms of the moments, stable to noise?} We are not given the exact moments of the distribution, but rather we estimate them given samples from the model. Thus we need to bound how the sampling error translates into error bounds on the parameters, so that we can get effective bounds on the sample complexity. 
    
    \item[(3)] \textbf{Are there efficient algorithms for solving the system of equations?} In general, solving systems of polynomial equations is computationally hard. However in some settings we can reduce a high-dimensional learning problem to a series of one-dimensional problems, so that the systems of polynomial equations each have a constant number of variables. In other cases, there are direct algorithms based on tensor decompositions, or even through rounding various semidefinite programming relaxations. 
    
\end{enumerate}

\noindent The method of moments has had many successes, including efficient algorithms for learning mixture models \cite{kalai2010efficiently, moitra2010settling, belkin2010polynomial, hsu2013learning, ge2015learning}, HMMs \cite{mossel2005learning}, phylogenetic trees \cite{mossel2005learning}, topic models \cite{arora2012learning, anandkumar2012spectral}, linear dynamical systems \cite{oymak2019non} and super-resolution \cite{moitra2015super, chen2021algorithmic}. It is also the building block of provably robust learning algorithms that can tolerate a constant fraction of their samples being arbitrarily corrupted \cite{diakonikolas2019robust, lai2016agnostic}. 

In orbit recovery, it is not immediately clear what sorts of polynomials in the unknown parameters can actually be estimated from the samples. This is because of the effect of the unknown group action. However, there is a natural incarnation of the method of moments for orbit recovery based on invariant theory. We say that a polynomial $q(x)$ for $x \in \mathbb{C}^n$ is invariant under the group action $G$ if for any $g \in G$ we have that 
\[ 
q(\rho(g) \cdot x) = q(x) \,.
\] 
This property allows us to construct unbiased estimators for the invariant polynomials (Lemma~\ref{fact:invariant-poly}) \cite{bandeira2018estimation}. Thus in orbit recovery, invariant polynomials are the natural moments of the generative model. 
And so we ask: {\em How much of the standard recipe for the method of moments can be carried over to orbit recovery?} 

The elegant paper of Bandeira, Blum-Smith, Kileel, Perry, Weed and Wein \cite{bandeira2018estimation} showed several striking applications of invariant theory to the orbit recovery problem. First let's give some additional background: The ring of invariant polynomials is called the {\em invariant ring}. A classic fact in invariant theory is that for any compact group $G$ that acts continuously, the invariant ring determines $x$ up to orbit \cite{kac1994invariant}. Thus for orbit recovery, the main question is: At what degree $d^*$ do the invariant polynomials of degree at most $d^*$ generate the full invariant ring? This is like bounding the number of moments in the method of moments. Essentially, both questions revolve around showing that there is no new information contained in higher degree moments/invariant polynomials.

The main result of Bandeira et al. \cite{bandeira2018estimation} is a tight bound on the sample complexity for the list recovery variant of orbit recovery over any compact group $G$. In particular they show that $\Theta(\sigma^{2d^*})$ samples are both necessary and sufficient. However the key questions from before remain unanswered. They were not able to determine $d^*$, except for small values of $n$ using computational algebra tools. Their statistical guarantees are asymptotic in nature because the hidden constants are ineffective and can depend on $G$ and $n$. In particular the upper bounds only hold in the limit as $\sigma^2 \rightarrow \infty$. And finally there are no known computationally efficient algorithms since the natural approach would be to solve a large system of polynomial equations with $n$ variables and one constraint for each of the generators. The connection between learning and invariant theory is tantalizing, but can we realize its promise by giving efficient algorithms whose running time and sample complexity are polynomial in $n$?

\subsection{Our Techniques}

Smoothed analysis is a popular framework for studying many learning and parameter recovery problems \cite{bhaskara2014smoothed, ge2015learning, huang2015minimal, bhaskara2020smoothed}. In particular, by assuming that the parameters are perturbed, we can circumvent many algebraic degeneracies that would otherwise derail the method of moments. In studying orbit recovery over $SO(3)$, we will also work in the framework of smoothed analysis: We assume that the coefficients of the signal in the basis of spherical harmonics have been perturbed by small Gaussian random variables. Our running time and sample complexity will have an inverse quasi-polynomial dependence on the size of the perturbation. Thus our algorithms work on all but a small measure of the parameter space around {\em any} starting point. For the related problem of orbit recovery over $SO(2)$, there are strong information-theoretic lower bounds that are known for the worst-case problem that come from the signal having non-trivial automorphisms \cite{bandeira2020optimal}. It is not known whether such algebraic degeneracies can arise for $SO(3)$. Nevertheless smoothing will be a fixture in our analysis.  We now give a high-level overview of our techniques.

Our algorithm will be based on frequency marching, which uses the degree three invariants and exploits the layered structure of the invariant polynomials. Specifically, if we assume that all lower degree spherical harmonics have already been found, we can write down constraints on the higher degree spherical harmonics. These constraints turn out to be systems of {\em linear} equations whose coefficients are polynomials in the lower degree spherical harmonics. Thus the invariant ring has a layered structure where we can add one layer at a time by solving a much simpler linear system. However the usual questions remain: Why do the linear systems have a unique solution? And how does the error, e.g. due to sampling noise, compound over the steps of the algorithm? 

The natural conjecture for orbit recovery over $SO(3)$ is that $d^* = 3$ \--- i.e. the degree three invariants determine the orbit of $x$ \cite{bandeira2020optimal}. We are not able to prove this conjecture, but we accomplish the next best thing: We prove that, after applying a perturbation, the degree three invariants uniquely determine how to extend any solution to a constant-sized subproblem to the full problem. In particular the linear systems that arise in frequency marching all have unique solutions, except in the first constant number of steps. This still suffices for our purposes since we can use standard techniques from algebraic geometry to analyze a brute-force algorithm to solve the constant-sized subproblem and then run frequency marching as usual (see Section~\ref{appendix:const-degree}).

The main technical challenge is to show that frequency marching is stable. Our approach is based on analyzing the condition number of the linear systems that arise in frequency marching. As usual, bounding the condition number of an algebraically structured linear system when its coefficients are perturbed is quite challenging and requires a tailor-made analysis \cite{bhaskara2014smoothed}. The linear systems that arise when solving for the $\ell$th order spherical harmonics have coefficients that are quadratic polynomials in the lower order spherical harmonics. These quadratic polynomials are given by the Clebsch-Gordan (CG) coefficients.  Roughly, the CG-coefficients are defined as the integral over the sphere of the point-wise product of three spherical harmonics.  Many properties of these coefficients are known. However, many other basic properties are, surprisingly, open.  For instance, what are the necessary and sufficient conditions for whether a CG coefficient is non-zero \cite{heim2009some}? Even if we could answer that, the non-zero coefficients are sometimes exponentially small. This makes it challenging to prove polynomial bounds on the condition number.

Nevertheless we are able to establish polynomial bounds on the condition number while only utilizing a few simple properties of CG coefficients (see Section~\ref{sec:CG-coeffs}): We exploit their orthogonality relations and estimate various edge-case CG coefficients that have simple explicit expressions (whereas most of the CG coefficients do not have nice expressions). Surprisingly, it turns out that this suffices to establish certain combinatorial properties about the locations of large CG coefficients (see e.g. Lemma~\ref{lemma:good-assignment}). Next we prove that the determinant of some square submatrix is bounded away from zero by combining the combinatorial properties of the locations of large CG coefficients with anti-concentration inequalities (see Lemma~\ref{lem:almostfullrank}, Section~\ref{sec:poly-anticoncentration}). Of course, our bound on the determinant must be exponentially small (because the determinant has high degree).  Nevertheless, we argue that this implies a lower bound on most of the singular values.  To obtain a bound on the condition number, we then argue that the remaining rows outside of this square submatrix suffice to give lower bounds on the remaining singular values with high probability.



Lastly we need to bound how the errors compound over different steps. In principle the error could grow exponentially with the number of steps of frequency marching. Instead we modify the standard frequency marching algorithm to take larger steps. In particular we solve for the coefficients at degree $\ell$ using only the coefficients of degree $k$ for $\ell/4 \leq k \leq 3 \ell/4$. We still get polynomial bounds on the condition number at each step. Finally we get quasipolynomial bounds on the sample complexity since the errors can only compound over a logarithmic number of levels.  


\subsection{Our Results}

Putting it all together, our main result is:



\begin{theorem} \label{thm:main-one-shell-informal} [informal]
Let $f: S^2 \rightarrow \C$ be a function that is a linear combination of spherical harmonics of degree at most $N$.  Also assume that the coefficients in this expansion are $\delta$-smoothed.  Let $\sigma$ be the noise level.  Then there is an algorithm that solves orbit recovery over $SO(3)$ with running time and sample complexity $(N/\delta)^{O(\log N)} (\sigma/\epsilon)^{O(1)}$ and recovers an $\epsilon$-approximation to $f$ up to rotation with $0.9$ probability.  
\end{theorem}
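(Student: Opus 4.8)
The plan is to assemble the three ingredients sketched in the introduction --- the constant-degree base case, the per-step linear-system analysis, and the error-compounding bound --- into a single end-to-end algorithm and analysis. First I would set up the precise smoothed model: the signal $f$ has spherical-harmonic coefficients $\{x_{\ell,m}\}_{\ell \le N}$, and each is perturbed by an independent complex Gaussian of variance $\delta^2$. From Lemma~\ref{fact:invariant-poly} we get unbiased estimators for all invariant polynomials of degree three, and I would first record that with $(\sigma/\eta)^{O(1)}$ samples we can estimate every degree-three invariant to additive error $\eta$ with high probability (a routine concentration argument, since the relevant estimators are bounded polynomials in the $y$'s and the number of degree-three invariants is $\poly(N)$). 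This reduces the problem to: given noisy degree-three invariants, recover the $x_{\ell,m}$ up to a global rotation.

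Next I would handle the base case. By Section~\ref{appendix:const-degree}, there is a constant $\ell_0$ such that a brute-force algebraic-geometry argument recovers the coefficients of degree $\le \ell_0$ up to rotation, in time and sample complexity depending only on $\ell_0$ and $\delta$ (and $\sigma/\epsilon$), because the relevant variety has bounded dimension and degree after perturbation. I would fix a rotation so that these low-degree coefficients are pinned down, and then run the modified frequency-marching loop: for $\ell$ from $\ell_0+1$ up to $N$, solve for $\{x_{\ell,m}\}_m$ from the degree-three invariants that couple degree $\ell$ to two lower degrees $k, k'$ with $\ell/4 \le k, k' \le 3\ell/4$. Each such invariant is linear in the unknown degree-$\ell$ coefficients with coefficients that are quadratic in the already-recovered lower-degree coefficients via the Clebsch--Gordan coefficients. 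Invoking the condition-number bound (Lemma~\ref{lem:almostfullrank} together with the anti-concentration results of Section~\ref{sec:poly-anticoncentration}), each of these linear systems, after the $\delta$-perturbation, has a unique solution and condition number $(N/\delta)^{O(1)}$ with probability $1 - \poly(N)^{-1}$; a union bound over the $\le N$ steps keeps the total failure probability below $0.1$.

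Finally I would track the error. Writing $E_\ell$ for the $\ell_2$ error in the recovered degree-$\ell$ coefficients, the linear-system analysis gives a recursion of the form $E_\ell \le (N/\delta)^{O(1)} \big(\eta + \max_{\ell/4 \le k \le 3\ell/4} E_k\big)$, where the $\max$ rather than a sum is what the ``large step'' modification buys us. Unrolling this recursion, the depth of the recursion tree is $O(\log N)$ because each step drops the index by a constant factor, so the accumulated blowup is $(N/\delta)^{O(\log N)}$; choosing $\eta = \epsilon \cdot (N/\delta)^{-O(\log N)}$ makes $E_N \le \epsilon$, and translating back through the relation between coefficient error and $L^2$ distance of functions (plus the bound on $\|f\|$ that holds with high probability for a smoothed signal) gives an $\epsilon$-approximation to $f$ up to rotation. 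The sample complexity needed to achieve estimation error $\eta$ for the invariants is $(\sigma/\eta)^{O(1)} = (\sigma/\epsilon)^{O(1)} (N/\delta)^{O(\log N)}$, matching the claimed bound, and the running time is dominated by solving $\poly(N)$ linear systems of size $\poly(N)$ plus the constant-size brute-force step, hence also $(N/\delta)^{O(\log N)}(\sigma/\epsilon)^{O(1)}$.

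The main obstacle is the per-step condition-number bound, i.e. Lemma~\ref{lem:almostfullrank} and the surrounding combinatorial analysis of where the large Clebsch--Gordan coefficients sit: one must show that enough structured submatrices of the frequency-marching system have determinant bounded away from zero despite the CG coefficients being, in general, exponentially small and lacking closed forms, and then upgrade an exponentially small determinant lower bound into a $\poly$ lower bound on all but a few singular values, using the extra rows to control the rest. Everything else --- the concentration for moment estimation, the base case, and the error recursion --- is comparatively standard once that spectral bound is in hand, though care is needed that the perturbation used for anti-concentration in each step is consistent with having already conditioned on the lower-degree coefficients (one resolves this by revealing the perturbation level by level).
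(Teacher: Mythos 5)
Your overall architecture---estimate the degree-$3$ invariants from samples, brute-force the constant-degree coefficients via the effective algebraic-geometry bound of Appendix~\ref{appendix:const-degree}, pin a rotation, then march upward solving perturbed linear systems whose conditioning comes from the CG-coefficient and anti-concentration machinery, with errors compounding over $O(\log N)$ levels---is the same as the paper's. The gap is in how you specify the marching step and then invoke the conditioning lemma. You restrict the system at level $\ell$ to invariants $\mathcal{I}_{k,k',\ell}$ with $\ell/4 \le k,k' \le 3\ell/4$ and cite Lemma~\ref{lem:almostfullrank} for a $(N/\delta)^{O(1)}$ condition number. But that lemma (together with Lemma~\ref{lemma:extra-subspace}, which supplies the remaining singular directions) is proved for a very different row set: the key square submatrix lives in $A \times B$ with $A \approx [0.01L,\,0.0101L]$ and $B \approx [0.9901L,\,0.9999L]$, and the auxiliary rows are $(l_1, l_1+1)$ with $l_1 \in [0.5L, 0.99L]$. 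Every row of $M^{(A,B)}$ has one index below $\ell/4$ and one above $3\ell/4$, so it is excluded from your system, and the lemma you cite says nothing about the matrix you actually propose to solve with. This is not merely bookkeeping: the only CG coefficients the paper can certify to be inverse-polynomially large are the extremal ones $\langle a\,a\,b\,(k-a)|L\,k\rangle$ with $k$ near $k^* = (a^2+L^2-b^2)/(2a)$ (Claim~\ref{claim:large-CG-coeffs}, Corollary~\ref{corollary:large-CG-coeffs}), and for pairs with $a,b \in [\ell/4, 3\ell/4]$ and $a+b \ge \ell$ (forced by the triangle condition) one has $k^* \gtrsim 0.66\,L$; hence the column-coverage argument of Lemma~\ref{lemma:good-assignment}, which needs certified large entries in every column with $0.02L < |k| < 0.98L$, cannot be reproduced for your restricted system, and its well-conditionedness is simply not established by anything in the paper.

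The repair is to do what the formal algorithm actually does: the ``long stride'' is only a cap, $a,b \le 0.9999L$, with no lower floor (the $[\ell/4, 3\ell/4]$ phrasing in the introduction is loose). Then Lemma~\ref{lem:well-conditioned} applies and gives smallest singular value $(\delta/L)^{O(1)}$ at each level, with failure probability like $2^{-L^{0.1}}$, which sums below $0.1$ once the base constant $C$ is large. Your clean recursion $E_\ell \le \poly(N/\delta)\bigl(\eta + \max_k E_k\bigr)$ then no longer has the index dropping by a constant factor per step, so the quasi-polynomial bound is instead obtained by grouping levels into batches $[L_0, 1.0001L_0]$: since $0.9999 \times 1.0001\,L_0 < L_0$, every level in a batch uses only coefficients of degree below $L_0$, so the $\poly(N/\delta)$ blow-up is incurred once per batch and there are $O(\log N)$ batches. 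With that substitution, the rest of your outline---invariant estimation to accuracy $\eta = \epsilon\,(N/\delta)^{-O(\log N)}$ with $\poly(N)(\sigma/\eta)^{O(1)}$ samples, the constant-degree base case, and the final translation from coefficient error to $d_{SO(3)}$ via Parseval---matches the paper's proof of Theorem~\ref{thm:main-oneshell}.
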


\noindent See Theorem \ref{thm:main-oneshell} for the full version. We also give extensions to multiple shells and the heterogeneous variant of the problem in Theorem \ref{thm:multiple-shells} and Theorem \ref{thm:heterogeneous} respectively. 

Another way to view our results is from the perspective of tensor decomposition, but with an underlying group action. 
In {\em (noisy) orbit tensor decomposition}, we get a noisy estimate of the tensor $$T = \int_{g \in G} (\rho(g) \cdot x)^{\otimes 3} dg$$ and our goal is to find $\widehat{x}$ so that $T$ and $\widehat{T}$ are close, where $$\widehat{T} = \int_{g \in G} (\rho(g) \cdot \widehat{x})^{\otimes 3} dg$$
In traditional tensor decompositions \cite{moitra2018algorithmic} there is no structure among the terms in the decomposition into rank one tensors. Even for orbit recovery over $\mathbb{Z}_n$, where we get tensors whose rank one terms are generated by vectors that are cyclic shifts of each other, we can ignore the algebraic structure among the factors and still get good enough bounds on the rank that we can apply tensor methods \cite{perry2019sample}. However for orbit tensor decomposition over a continuous group it is necessary to exploit this algebraic structure. One can view our main result as a stable algorithm that works in the specific case where $\rho(g)$ acts by applying an element of $SO(3)$ to a signal and writing down how the coefficients in the basis of spherical harmonics change. It turns out that we can interpret frequency marching as solving orbit tensor decomposition and we obtain:


\begin{theorem}\label{coro:main-orbit-tensor-informal} [informal]
There is a quasi-polynomial time algorithm that solves noisy orbit tensor decomposition over $SO(3)$ with noise level $\sigma$ when $x$ is $\delta$-smoothed and its entries represent coefficients of spherical harmonics of degree at most $N$. In particular given an estimate of $T$ that is $(\delta/N)^{O(\log N)} \epsilon^{O(1)}$-close in Frobenius norm it outputs an $\widehat{x}$ so that $\widehat{T}$ is $\epsilon$-close to $T$ in Frobenius norm. Moreover the algorithm runs in $(N/\delta)^{O(\log N)} (1/\epsilon)^{O(1)}$ time and succeeds with $0.9$ probability.
\end{theorem}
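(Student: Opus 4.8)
The plan is to obtain Theorem~\ref{coro:main-orbit-tensor-informal} as a corollary of the full version of Theorem~\ref{thm:main-one-shell-informal} (i.e. Theorem~\ref{thm:main-oneshell}), using the fact that the third moment tensor $T$ is nothing more than a repackaging of the degree-three invariant polynomials of $x$ --- precisely the objects that frequency marching consumes. So there is no new algorithm to design: the only content is a stable dictionary between ``noisy estimate of $T$'' and ``noisy estimates of the low-degree invariants,'' plus a Lipschitz argument to translate the output back.

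First I would make the identification precise. Writing $\C^n = \bigoplus_\ell V_\ell$ for the decomposition into degree-$\ell$ spherical harmonics, $T = \int_{g\in G}(\rho(g)x)^{\otimes 3}\,dg$ is the orthogonal projection of $x^{\otimes 3}$ onto the $G$-invariant subspace of $(\C^n)^{\otimes 3}$. In a basis adapted to this decomposition, the entries of $T$ are, up to a fixed and explicitly known linear change of basis given by Clebsch-Gordan / $3j$ symbols, exactly the degree-three invariant polynomials evaluated at $x$; hence each degree-three invariant used in frequency marching is a fixed linear functional of the entries of $T$ and vice versa, with only $\poly(N)$ loss in both directions. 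The invariants of degree at most two needed for the constant-sized base case (Section~\ref{appendix:const-degree}) are also recoverable from $T$ alone: contracting indices of $T$ against the $G$-invariant bilinear forms on the $V_\ell$ produces the quantities $\|x_\ell\|^2 x_0$ and $x_0^3$, where $x_0$ is the one-dimensional degree-$0$ component, and under $\delta$-smoothing $x_0$ is nonzero and bounded below by $\poly(\delta/N)$ with high probability, so this inversion is stable. Consequently an estimate $\widetilde T$ with $\|\widetilde T - T\|_F \le \eta$ yields estimates of all the relevant degree-at-most-three invariants accurate to $\poly(N,\|x\|,1/x_0)\cdot \eta$.

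Next I would invoke the guarantee underlying Theorem~\ref{thm:main-oneshell}: seeded by the brute-force solution to the constant-sized subproblem, frequency marching turns noisy estimates of the degree-at-most-three invariants of accuracy $\gamma$ into an $\widehat x$ with $\mathrm{dist}(\widehat x,\mathrm{orbit}(x)) \le \gamma\cdot(N/\delta)^{O(\log N)}$ with probability $0.9$ and in time $(N/\delta)^{O(\log N)}$ --- this is exactly the internal content of the main theorem, where in the sampling model $\gamma$ is the accuracy to which $\poly(\sigma)$ samples pin down the invariants (Lemma~\ref{fact:invariant-poly}). Finally I would translate back to a bound on $\widehat T$. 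Haar-invariance of the measure makes $x\mapsto T(x)$ constant on orbits, so $\|\widehat T - T\|_F = \|T(\widehat x) - T(Rx)\|_F$ where $R$ realizes the orbit distance; since $T(\cdot)$ is a fixed cubic map whose differential is quadratic, it is Lipschitz with constant $\poly(N,\max(\|\widehat x\|,\|x\|))$ on the relevant ball, whence $\|\widehat T - T\|_F \le \poly(N,\|x\|)\cdot\mathrm{dist}(\widehat x,\mathrm{orbit}(x))$. Chaining the three estimates, taking $\|\widetilde T - T\|_F \le (\delta/N)^{O(\log N)}\epsilon^{O(1)}$ forces $\|\widehat T - T\|_F \le \epsilon$, and the running time is dominated by frequency marching, giving $(N/\delta)^{O(\log N)}(1/\epsilon)^{O(1)}$.

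Granting everything in the earlier sections, I expect the main obstacle inside this reduction to be the bookkeeping of the second paragraph: verifying that $T$ really does contain \emph{exactly} the invariants frequency marching needs (including the low-degree seed), that all the change-of-basis and contraction maps cost only $\poly(N)$ in condition number, and that the smoothing-dependent quantities $1/x_0$ and $\|x_\ell\|$ are under polynomial control with high probability. The genuinely hard work --- uniqueness and conditioning of the linear systems via estimates of CG coefficients, and controlling how error compounds over $O(\log N)$ levels --- all lives in Theorem~\ref{thm:main-oneshell} and is used here as a black box.
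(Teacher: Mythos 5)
The core of your reduction is right for the frequency marching phase: Algorithm \ref{alg:iterate} consumes only the degree-$3$ invariants $\mathcal{I}_{a,b,L}$, and these are fixed linear functionals of the entries of $T$, so a Frobenius-norm estimate of $T$ gives estimates of exactly the quantities that subroutine needs, and your final Lipschitz step (orbit distance on $\widehat{x}$ implies Frobenius distance on $\widehat{T}$, since $T(\cdot)$ is constant on orbits and polynomial) is also fine. The genuine gap is in how you seed the base case. You claim the constant-degree subproblem only needs ``invariants of degree at most two,'' recoverable from $T$ by contractions yielding $\|x_\ell\|^2 x_0$ and $x_0^3$. That is not what the base case requires: Lemma \ref{lem:alg1-analysis} and Algorithm \ref{alg:const} test a guess against a \emph{generating set} $P_1,\dots,P_k$ of the full invariant ring $\C^{SO(3)}[x_{00},\dots,x_{CC}]$, whose generators may have degree well above $3$ (the paper explicitly cannot prove that degree-$3$ invariants determine the orbit, even after smoothing --- that is the open conjecture $d^*=3$). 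Those higher-degree invariants are estimable from samples via Lemma \ref{lem:est-invariant}, but they are \emph{not} functions of the degree-$3$ tensor $T$, so your claimed dictionary between $T$ and ``all the invariants frequency marching needs, including the low-degree seed'' breaks exactly at the seed. (As a side point, the contractions you name give $\sum_k (-1)^k f_{\ell k}f_{\ell(-k)} f_{00}$, not $\|x_\ell\|^2 x_0$, since the paper imposes no reality condition on the coefficients, and norms of the isotypic components would in any case not pin down the relative orientations between them; your appeal to $1/x_0$ being bounded is also not something the paper needs.)

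The paper's proof of Theorem \ref{coro:orbit-tensor} handles this differently, and you need some version of its fix: it does \emph{not} run Algorithm \ref{alg:const} at all. Instead it grid searches over \emph{all} possible constant-degree coefficient vectors, runs frequency marching (fed by degree-$3$ invariants read off from $\widehat{T}$) to extend every guess, and only at the end identifies a correct candidate by recomputing $\int_{R\sim SO(3)} (R\cdot \widetilde{x})^{\otimes 3}$ for each candidate and checking Frobenius closeness to $\widehat{T}$. This enumerate-and-verify structure is what lets the reduction get by with degree-$3$ information only, and it is also why the theorem's guarantee is stated as tensor closeness rather than orbit recovery of $x$. Your write-up, as it stands, would need either a proof that degree-$\leq 3$ invariants robustly determine the smoothed constant-degree coefficients (not available) or this restructuring of the base case.
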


\noindent See Theorem~\ref{coro:orbit-tensor} for the full version. We remark that the blow-up in approximation error that we incur when we solve the decomposition problem is consistent with our sample complexity bounds for orbit recovery, as when we take a quasi-polynomial number of samples we will be able to estimate the entries of $T$ to inverse quasi-polynomial accuracy. 

Improving the bounds to polynomial and also proving anything for the more general orbit tensor decomposition problem remain open. However we believe that our work takes an important next step, particularly when working with infinite and non-abelian groups. 


\subsection{Relations to Cryo-Electron Microscopy}

Although our focus is on a different problem, we would be remiss to not mention cryo-electron microscopy (Cryo-EM). It is an imaging technique in structural biology that has been responsible for many important scientific discoveries. Its pioneers were awarded the 2017 Nobel Prize in Chemistry \cite{adrian1984cryo ,nogales2016development}. It involves taking two-dimensional images (tomographic projections) of a molecule with an unknown orientation and trying to reconstruct its three-dimensional structure. This reconstruction problem can be formulated as a generalized orbit retrieval problem  \cite{bandeira2018estimation} that involves not only a random group action, which is again a rotation, but also a projection. Giving statistically and computationally efficient algorithms for this problem is one of the major goals of the orbit retrieval literature. We hope that our work, which handles the case with rotations but no projections, might be a stepping stone towards this larger goal. 

There are other abstractions besides orbit recovery that are based on the idea that we can get noisy measurements of the relative rotation from one projection to another. This is called the synchronization approach \cite{singer2018mathematics}. However when the noise is large this approach is challenging to analyze, in part because consistent estimation of the group elements 
is impossible \cite{aguerrebere2016fundamental}.

\subsection{Paper Organization}

In Section~\ref{sec:setup}, we formally set up the problems that we study.  We define spherical harmonics and the corresponding group action of $SO(3)$.  We then formally state our main theorem (see Theorem~\ref{thm:main-oneshell}).  In Section~\ref{sec:invariants}, we formally introduce the invariant polynomial machinery that will be used in our algorithms. 

In Section~\ref{sec:algorithm-description}, we describe our algorithm in detail.  Recall that the two steps are to first solve a constant-sized subproblem and then solve the full problem by extending the solution to this constant-sized subproblem via frequency marching, exploiting the layered structure of the degree-$3$ invariant polynomials for $SO(3)$.  The analysis of the initial step is in Appendix~\ref{appendix:const-degree}.  The analysis of the frequency marching step is in Section~\ref{sec:analysis}.  We first note a few properties about the CG-coefficients (which appear as coefficients in the degree-$3$ invariant polynomials) in Section~\ref{sec:CG-coeffs}.  Recall that the properties that we need are orthogonality and explicit bounds for a few ``edge"-case coefficients.  Then in Section~\ref{sec:well-conditioned}, we prove that the linear systems that arise from frequency marching using the degree-$3$ invariants are well-conditioned (see Lemma~\ref{lem:well-conditioned}).  This is the main technical component.  We first prove certain combinatorial properties about the locations of large CG-coefficients in Lemma~\ref{lemma:good-assignment}.  We use these properties to lower bound the determinant of a square submatrix, and then combine with additional rows outside of this square submatrix to get a bound on the condition number of the linear system.

In Section~\ref{sec:multiple-shells}, we extend our results to the case of multiple shells, meaning that instead of just one function $f: S^2 \rightarrow \C$, there are actually several functions $(f^{(1)}, \dots , f^{(T)})$ all from $S^2 \rightarrow \C$ that are rotated together.  It turns out that our techniques extend naturally to this setting.  In Section~\ref{sec:heterogeneous}, we extend our results to the case of heterogeneous observations, meaning that each observation is one of $k$ possible functions $f^{[1]}, \dots , f^{[k]}$ with probabilities $w_1, \dots , w_k$.  In this setting, we show how to essentially decouple the mixture, using higher-order invariants to eventually estimate the degree-$3$ invariant polynomials of each individual function $f^{[i]}$ (see Section~\ref{sec:decouple}).  Finally, in Section~\ref{sec:further-discussion}, we discuss connections with other problems, namely tensor decomposition with group structure and cryo-electron microscopy.

\section{Preliminaries}\label{sec:setup}
In cryo-electron tomography (cryo-ET), there is an unknown function $f$ defined on the unit sphere in $\R^3$ i.e. $f: S^2 \rightarrow \C$.  We receive observations of the form
\[
y_i = R_i(f) + \zeta_i
\]
where $R_i \in \SO(3)$ is a random rotation, $R_i(f)$ is the function $x \rightarrow f(R_i^{-1}(x))$, and $\zeta_i$ is some noise function.  The goal is to recover the $f$ up to orbit under the action of $SO(3)$.

Of course, we need some assumption on $f$ that restricts it to a finite dimensional space in order to make the problem well-posed (as otherwise it is not even clear how to process a single observation).  The standard way to do this is through the concept of spherical harmonics, which form a basis of functions on the sphere and are the natural analogue of the Fourier basis but for a non-abelian group.

\subsection{Spherical Harmonics}

Here we introduce spherical harmonics and some of their key properties. See \cite{blanco1997evaluation} for more details. 
We use the notation $(\theta, \phi)$ to denote the representation of a point in $S^2$ using spherical coordinates. 
\begin{definition}[Spherical Harmonics]
For integers $l \geq 0$ and $-l \leq m \leq l$, the spherical harmonic $Y_{lm}(\theta, \phi)$ is defined as 
\[
Y_{lm}(\theta, \phi) = (-1)^m N_{lm} P^m_l(\cos \theta) e^{im \phi}
\]
where 
\[
N_{lm} = \sqrt{\frac{(2l+1)(l-m)!}{4\pi (l + m)!}}
\]
and $P^m_l(x)$ are the associated Legendre polynomials defined as 
\[
P_l^m(x) = \frac{1}{2^l l!}(1 - x^2)^{m/2} \frac{d^{l+m}}{dx^{l + m}}(x^2 - 1)^l \,.
\]
\end{definition}

For a spherical harmonic $Y_{lm}$, its degree is $l$. Thus there are $2l + 1$ spherical harmonics of degree $l$.  Some key properties of the spherical harmonics are summarized below (see \cite{blanco1997evaluation}).

\begin{fact}\label{fact:harmonic-basic}
The spherical harmonics satisfy the following properties:
\begin{enumerate}
    \item The set of all spherical harmonics forms an orthonormal basis for square integrable functions on $S^2$
    
    \item For any rotation $R \in SO(3)$, $R(Y_{lm})$ can be written as a linear combination of $Y_{l(-l)}, \dots , Y_{ll}$.  
    
    \item Note the previous statement implies that the spherical harmonics of degree $l$ form a $2l+1$-dimensional representation of $SO(3)$.  In fact, these representations (over all $l$) are exactly the irreducible representations of $SO(3)$.
\end{enumerate}
\end{fact}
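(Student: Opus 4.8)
The plan is to derive all three properties from the identification of the degree-$l$ spherical harmonics with the restrictions to $S^2$ of the space $\mathcal{H}_l$ of homogeneous harmonic polynomials of degree $l$ in the three Cartesian coordinates of $\mathbb{R}^3$. First I would verify that $\{Y_{lm}\}_{-l \le m \le l}$ is, up to the normalization constants $N_{lm}$, exactly such a basis: passing to spherical coordinates and separating variables, a harmonic polynomial of degree $l$ has a radial factor $r^l$, an azimuthal factor $e^{im\phi}$, and a polar factor satisfying the associated Legendre equation whose polynomial solution is $P_l^m(\cos\theta)$. Orthonormality within a fixed degree then follows from $\int_0^{2\pi} e^{i(m-m')\phi}\,d\phi = 2\pi\,\delta_{mm'}$ together with the classical orthogonality relation $\int_{-1}^{1} P_l^m(x)P_{l'}^m(x)\,dx = \tfrac{2(l+m)!}{(2l+1)(l-m)!}\,\delta_{ll'}$ --- the constant $N_{lm}$ is chosen precisely to cancel the right-hand side.

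For orthogonality across distinct degrees and for completeness (property 1) I would bring in the Laplace--Beltrami operator $\Delta_{S^2}$: each $Y_{lm}$ is an eigenfunction with eigenvalue $-l(l+1)$, and since $\Delta_{S^2}$ is self-adjoint on $L^2(S^2)$ with a distinct eigenvalue for each $l$, the degree-$l$ eigenspaces are mutually orthogonal. Completeness follows from the Stone--Weierstrass theorem --- polynomials restricted to $S^2$ are dense in $C(S^2)$ and hence in $L^2(S^2)$ --- combined with the classical fact that the space of degree-$l$ homogeneous polynomials on $\mathbb{R}^3$ decomposes as $\mathcal{H}_l \oplus r^2\mathcal{H}_{l-2} \oplus r^4\mathcal{H}_{l-4} \oplus \cdots$, so that on the sphere every polynomial lies in $\bigoplus_l \mathcal{H}_l|_{S^2} = \overline{\mathrm{span}}\{Y_{lm}\}$.

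Property 2 is then immediate, since $\Delta_{S^2}$ commutes with rotations, so any $R \in SO(3)$ carries the $-l(l+1)$-eigenspace --- which is spanned by $Y_{l(-l)},\dots,Y_{ll}$ --- to itself. For property 3, this says each degree-$l$ eigenspace is a $(2l+1)$-dimensional representation of $SO(3)$; I would prove irreducibility using the ladder operators $L_{\pm}$ coming from the induced $\mathfrak{so}(3)$-action, showing that $Y_{ll}$ is the unique (up to scalar) vector annihilated by $L_+$ and that successive applications of $L_-$ to $Y_{ll}$ span the whole eigenspace, so any nonzero invariant subspace must contain $Y_{ll}$ and therefore everything. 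That these exhaust the irreducible representations is the standard classification for the compact group $SO(3)$: every irrep is finite-dimensional and unitary, lifts to $SU(2)$, and the $SU(2)$-irreps (one of each dimension $2j+1$, $j \in \tfrac12\mathbb{Z}_{\ge 0}$) descend to $SO(3) = SU(2)/\{\pm I\}$ exactly when $j$ is an integer.

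The step I expect to be the main obstacle is property 3: the irreducibility argument needs the $\mathfrak{so}(3)$ (equivalently $\mathfrak{su}(2)$) action on $\mathcal{H}_l$ together with the weight/ladder-operator analysis, and the claim that these are \emph{all} the irreducibles genuinely invokes the full classification of $SU(2)$-representations. By contrast the separation of variables, the Legendre orthogonality identity, the Stone--Weierstrass density argument, and the rotation-invariance of $\Delta_{S^2}$ are all routine, which is why it suffices to cite \cite{blanco1997evaluation} and standard references on representation theory rather than reproduce the computations here.
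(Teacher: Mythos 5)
Your proposal is correct, but note that the paper does not prove this statement at all: it is recorded as a Fact and delegated to the reference \cite{blanco1997evaluation} (and to standard representation theory of $SO(3)$), since nothing in the paper's arguments depends on how these classical properties are established. What you have written is the standard self-contained route, and each step is sound: the identification of degree-$l$ spherical harmonics with restrictions of homogeneous harmonic polynomials via separation of variables, the cancellation of the Legendre normalization by $N_{lm}$, cross-degree orthogonality from the Laplace--Beltrami eigenvalues $-l(l+1)$, completeness from Stone--Weierstrass together with the decomposition $\mathcal{P}_l = \mathcal{H}_l \oplus r^2 \mathcal{P}_{l-2}$, rotation-invariance of $\Delta_{S^2}$ for property 2, and the highest-weight/ladder argument plus the $SU(2)$ classification for property 3. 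The one place in your sketch worth tightening is the irreducibility step: to conclude that a nonzero invariant subspace contains $Y_{ll}$ you should decompose it into weight spaces under the azimuthal rotation action (the weights $m=-l,\dots,l$ each occur with multiplicity one in $\mathcal{H}_l$, visible from the $e^{im\phi}$ factors) and then observe that applying $L_+$ to a top-weight vector of the subspace forces that vector to be proportional to $Y_{ll}$; as you correctly flag, the claim that these exhaust all irreducibles genuinely requires the classification of $SU(2)$ representations and which of them factor through $SO(3)=SU(2)/\{\pm I\}$. In short, your argument is a correct expansion of a fact the paper simply cites; the paper's choice buys brevity, while yours makes the preliminaries self-contained at the cost of importing the $\mathfrak{su}(2)$ weight theory.
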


\subsection{Problem Formulation using Spherical Harmonics}\label{sec:problem-formulation}
We are now ready to formally define orbit recovery over $SO(3)$ using the language of spherical harmonics.  We assume that $f$ is band-limited i.e. it can be decomposed into spherical harmonics of degree at most $N$ for some parameter $N$.  Then learning $f$ is equivalent to learning its coefficients, say $\{f_{lm} \}_{l \leq N}$, in the decomposition of $f$ onto spherical harmonics i.e.
\[
f = \sum_{l \leq N}\sum_{m - -l}^l f_{lm} Y_{lm}(\theta, \phi) \,.
\]
The number of coefficients that we need to learn is 
\[
1 + 3 + \dots + (2N + 1) = (N+1)^2 \,.
\]
We will often view $f$ as a vector in $\C^{(N+1)^2}$ given by $\{f_{lm} \}_{l \leq N}$ since the representations are equivalent.  Since the spherical harmonics form an orthonormal basis, we have the following equality.
\begin{fact}\label{fact:orthonormality}[Parseval]
For all functions $f: S^2 \rightarrow \C$, 
\[
\norm{f}_2^2 = \sum_{l,m} |f_{lm}|^2 \,.
\]
\end{fact}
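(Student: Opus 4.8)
The plan is to derive this directly from the orthonormality of the spherical harmonics (item~1 of Fact~\ref{fact:harmonic-basic}) together with the expansion $f = \sum_{l \leq N}\sum_{m=-l}^l f_{lm} Y_{lm}$. Concretely, I would write $\norm{f}_2^2 = \langle f, f\rangle$, substitute the expansion in both arguments of the inner product, and expand bilinearly (being careful that the inner product on $L^2(S^2)$ is conjugate-linear in one slot, so a coefficient $\overline{f_{l'm'}}$ appears):
\[
\norm{f}_2^2 = \left\langle \sum_{l,m} f_{lm} Y_{lm},\ \sum_{l',m'} f_{l'm'} Y_{l'm'} \right\rangle = \sum_{l,m}\sum_{l',m'} f_{lm}\,\overline{f_{l'm'}}\,\langle Y_{lm}, Y_{l'm'}\rangle.
\]

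The key step is then to invoke orthonormality, which gives $\langle Y_{lm}, Y_{l'm'}\rangle = \delta_{ll'}\,\delta_{mm'}$, so only the diagonal terms survive and the double sum collapses to $\sum_{l,m} |f_{lm}|^2$. If one wants to be fully rigorous about interchanging the sum with the integral defining the inner product, note that for band-limited $f$ the sum is finite (there are only $(N+1)^2$ terms), so no convergence subtleties arise and the interchange is automatic.

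I do not expect any real obstacle here: the statement is a special case of Parseval's identity for a (finite) orthonormal system, and the only thing to be attentive to is the conjugation convention, since the $f_{lm}$ are complex-valued. The one place where a careless argument could go wrong is forgetting that $f$ is $\C$-valued and writing $f_{lm} f_{l'm'}$ instead of $f_{lm}\overline{f_{l'm'}}$; keeping the conjugate makes the right-hand side manifestly nonnegative and real, as it must be. This fact will be used repeatedly later to translate between $L^2$ error on the reconstructed function and $\ell^2$ error on the coefficient vector, so stating it explicitly up front is worthwhile even though the proof is immediate.
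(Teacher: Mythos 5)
Your proof is correct and is exactly the standard argument the paper implicitly relies on: the paper states this as a Fact without proof, as a direct consequence of the orthonormality (and completeness) of the spherical harmonics in Fact~\ref{fact:harmonic-basic}. Your handling of the conjugation and the observation that band-limitedness makes the sum finite (with completeness of the basis covering the general square-integrable case) are precisely the right points of care.
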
 
We will assume that each of our observations consists of a function $\wh{f}$ represented as some set of coefficients $\{\wh{f}_{lm} \}_{l \leq N}$ obtained as 
\begin{equation}\label{eq:sample-model}
\wh{f}_{lm} = R(f)_{lm} + \zeta_{lm}
\end{equation}  
where $R \in SO(3)$ is a random unknown rotation and $\zeta$ is some noise.  Recall that by Fact \ref{fact:harmonic-basic}, a rotation $R$ acts as a separate linear transformation on $(f_{l(-l)}, \dots , f_{ll})$ for each $l$.  Thus, $R(f)_{lm}$ is a fixed linear combination (depending only on $R$) of $(f_{l(-l)}, \dots , f_{ll})$.  This also justifies the fact that we only need to consider the coefficients of $\wh{f}$ corresponding to spherical harmonics of degree at most $N$.  Furthermore it means that we can use the following notation: 
\begin{definition}
For $N' \leq N$, the function $f_{\leq N'}$ denotes truncating $f$ by keeping only the terms in its spherical harmonic expansion of degree at most $N'$.  Moreover it is important to note that for any $R \in SO(3)$,
\[
 R(f_{\leq N'})  = R(f)_{\leq N'} \,.
\]
\end{definition}

Of course, we can only hope to learn $f$ up to rotation.  Thus, our goal will be to output a function $\wt{f}$ with spherical harmonic coefficients $\{\wt{f_{lm}}\}_{l \leq N}$ of degree at most $N$ such that the error up to rotation is small.  More formally:
\begin{definition}
For two functions $f, \wt{f}$ whose spherical harmonic expansions have degree at most $N$, we define their distance up to rotation as 
\[
d_{SO(3)}(f, \wt{f}) = \min_{R \in SO(3)}\left(\norm{R(f) - \wt{f}}_2^2 \right) = \min_{R \in SO(3)}\left( \sum_{l = 0}^N \sum_{m = -l}^l |R(f)_{lm} - \wt{f_{lm}}|^2 \right) \,.
\]
Note the first equality makes sense even if the degree of the spherical harmonic expansion is larger than $N$.
\end{definition}

We also make a few remarks about the noise distribution for $\zeta_{lm}$.  The precise distribution of $\zeta_{lm}$ does not affect our algorithm or its analysis as long as the noise is bounded and unbiased.  Thus, we could equivalently consider the noise as being added directly to the function $f$ before measuring its spherical harmonics.  For simplicity though, we will assume that the real and imaginary part of  $\zeta_{lm}$ are drawn independently from $ N(0, \sigma^2)$ for each $l,m$ and the noise is added directly to the coefficient measurements.  We will assume $\sigma \geq 1$.

Finally, we will work in the framework of smoothed analysis.  Formally, an adversary picks the coefficients $\{ f_{lm}' \}$ and then the coefficients of the true function are smoothed by adding complex Gaussian noise
\[
\{f_{lm} \} =  \{ f_{lm}' + N(0, \delta^2) + i N(0, \delta^2) \} \,.
\]
where in the above, $i = \sqrt{-1}$.  After applying this procedure, we say that the coefficients are $\delta$-smoothed.

We are now ready to state our main theorem.
\begin{theorem}\label{thm:main-oneshell}
Let $f$ be a function $f: S^2 \rightarrow \C$ whose expansion in spherical harmonics $\{f_{lm} \}$  has degree at most $N$ and such that $\norm{f}_2 \leq  1$.  Assume that the coefficients $\{f_{lm} \}$ are $\delta$-smoothed.  Let $\eps > 0$ be the desired accuracy.  There is an algorithm (Algorithm \ref{alg:main}) that takes 
\[
Q = \left( \frac{N}{\delta}\right)^{O(\log N)} \poly\left(\sigma, \frac{1}{\eps} \right)
\]
noisy observations with noise drawn from $N(0, \sigma^2)$ entrywise, runs in $\poly(Q)$ time, and with probability $0.9$, outputs a function $\wt{f}$ whose expansion in spherical harmonics has degree at most $N$ and satisfies
\[
d_{SO(3)}(f , \wt{f}) \leq \eps \,.
\]
\end{theorem}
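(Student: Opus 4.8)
The plan is to follow the frequency-marching strategy outlined above: reduce orbit recovery to a sequence of linear systems, one per degree level, whose coefficients are built from Clebsch--Gordan coefficients and whose unknowns are the degree-$\ell$ spherical-harmonic coefficients of $f$. First I would set up the estimators: using the unbiasedness of invariant polynomials (Lemma~\ref{fact:invariant-poly}), from $Q$ samples one estimates the degree-$2$ invariants (the power spectrum) and the degree-$3$ invariants (the bispectrum) of $f$ to accuracy $\poly(\sigma)/\sqrt{Q}$ with high probability over the samples; these are the only statistics the algorithm ever touches. Next I would dispose of the base case: for a sufficiently large absolute constant $N_0$, recover $f_{\leq N_0}$ up to rotation by brute force, solving the constant-size polynomial system coming from the low-degree degree-$3$ invariants using standard real-algebraic-geometry tools (quantifier elimination), which runs in time polynomial in the bit complexity because the number of variables is $O(1)$. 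This is also where the possible non-uniqueness of the first few linear systems is absorbed (Appendix~\ref{appendix:const-degree}).

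For the inductive step, suppose $f_{\leq L}$ has been recovered to some accuracy; I would extend to $f_{\leq \frac{4}{3}L}$ in a single ``super-level''. For each $\ell \in (L, \tfrac{4}{3}L]$, the degree-$3$ invariants that pair $Y_{\ell m}$ against two harmonics of degrees in $[\ell/4,\, 3\ell/4]$ yield a linear system $M_\ell f_\ell = b_\ell$ in the unknowns $\{f_{\ell m}\}_{|m| \leq \ell}$, where the entries of $M_\ell$ are quadratic in the already-known coefficients $\{f_{km}\}_{\ell/4 \leq k \leq 3\ell/4}$ with CG coefficients as structure constants and $b_\ell$ is assembled from the estimated bispectrum. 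The key point is that all degrees used to build $M_\ell$ lie below $L$, so within a super-level the errors do not cascade: the only error fed into the system is the accumulated error through level $L$ together with the sampling error. The heart of the argument is then Lemma~\ref{lem:well-conditioned}, that over the $\delta$-smoothing $M_\ell$ has condition number $(N/\delta)^{O(1)}$ with probability $1 - 1/\poly(N)$. I would prove this exactly as sketched in the introduction: first establish the combinatorial ``good assignment'' structure locating the large CG coefficients (Lemma~\ref{lemma:good-assignment}) using only their orthogonality relations and the explicit edge-case formulas of Section~\ref{sec:CG-coeffs}; then exhibit a square submatrix of $M_\ell$ whose determinant is bounded away from zero by combining that structure with polynomial anti-concentration for the Gaussian perturbation (Lemma~\ref{lem:almostfullrank}); then deduce a lower bound on all but a few singular values of $M_\ell$ from this determinant bound, noting that even though the determinant is exponentially small, it has degree only $\poly(N)$; and finally use the rows of $M_\ell$ outside the chosen submatrix to lower-bound the remaining singular values. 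Solving $M_\ell f_\ell = b_\ell$ recovers $f_\ell$ with error amplified by at most $(N/\delta)^{O(1)}$.

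It remains to chain the super-levels. There are $O(\log N)$ of them, since each multiplies the top degree by $\tfrac{4}{3}$, and within each one the error is amplified by $(N/\delta)^{O(1)}$ (the condition number, times the $O(N)$ individual degree systems solved in the super-level). Hence the total amplification from the base case up to degree $N$ is $(N/\delta)^{O(\log N)}$, so the output $\wt{f}$ satisfies $d_{SO(3)}(f, \wt{f}) \leq \eps$ as soon as the sampling error is at most $\eps \cdot (\delta/N)^{O(\log N)}$, i.e.\ as soon as $Q = (N/\delta)^{O(\log N)} \poly(\sigma, 1/\eps)$; a union bound over the $O(\log N)$ super-levels --- each good with probability $1 - 1/\poly(N)$ from the well-conditioning event, together with the estimators concentrating with probability $1 - 1/\poly(N)$ --- yields overall success probability $0.9$. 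The normalization $\norm{f}_2 \leq 1$ together with the $\delta$-smoothing keeps every intermediate quantity $(N/\delta)^{O(1)}$-bounded. By a wide margin the main obstacle is Lemma~\ref{lem:well-conditioned}: one must extract enough usable information about the locations and magnitudes of CG coefficients --- whose exact vanishing pattern is not even known --- to force the perturbed matrix $M_\ell$ to be well-conditioned, which is why the analysis must route through a carefully chosen submatrix, anti-concentration, and a singular-value-by-singular-value argument rather than any direct estimate.
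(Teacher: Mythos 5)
Your overall plan coincides with the paper's: recover the constant-degree coefficients by brute force over a constant-dimensional invariant system (Appendix~\ref{appendix:const-degree}), then frequency-march using the layered degree-$3$ invariants, prove the resulting linear systems are well-conditioned under smoothing via the locations of large CG coefficients, a square-submatrix determinant bound with polynomial anti-concentration, and extra rows for the last few singular values (Lemmas~\ref{lemma:good-assignment}, \ref{lem:almostfullrank}, \ref{lem:well-conditioned}), and finally compound errors over only $O(\log N)$ rounds to get the quasi-polynomial bound.

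There is, however, a concrete gap in how you instantiate the inductive step. You build the system for degree $\ell$ only from pairs of degrees in $[\ell/4, 3\ell/4]$, so that a factor-$4/3$ super-level uses only finalized coefficients, and then plan to prove conditioning ``as sketched.'' But the only CG coefficients with usable explicit lower bounds are the edge cases $m_1 = \pm l_1$ (Claim~\ref{claim:large-CG-coeffs}, Corollary~\ref{corollary:large-CG-coeffs}), and for a row $(a,b)$ these are provably large only in columns near $k \approx (a^2 + \ell^2 - b^2)/(2a)$. The column-covering step in Lemma~\ref{lemma:good-assignment} works precisely because the rows used have $a \approx 0.01\ell$ and $b$ ranging over $[0.9901\ell, 0.9999\ell]$: then $\partial k/\partial b \approx -b/a$ is of order $100$, so a narrow window of $b$ sweeps the peak across essentially all of $[0.02\ell, 0.98\ell]$. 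If instead $a,b \in [\ell/4, 3\ell/4]$ (with $a+b \ge \ell$ forced by the triangle condition), minimizing $(a^2+\ell^2-b^2)/(2a)$ over that region shows the reachable peaks satisfy $|k| \ge (\sqrt{7}/4)\ell \approx 0.66\,\ell$, so roughly two thirds of the columns receive no provably nonzero entry and the balanced square submatrix cannot be assembled; the $0.05(2L+1)$ slack absorbed by the auxiliary rows is far too small to cover this. The paper resolves the same tension differently: the level-$L$ system uses all pairs $a,b \le 0.9999L$ (including the asymmetric ones needed for conditioning), and the super-levels are thin (top degree grows by a factor $1.0001$ per group) so that each group still consumes only already-finalized coefficients; this preserves the $O(\log N)$ rounds of error compounding. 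Your argument is repaired by adopting those parameters, but as written the combination ``degrees in $[\ell/4,3\ell/4]$ with factor-$4/3$ super-levels'' leaves the key conditioning lemma unproved.
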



\section{Invariant Polynomials}\label{sec:invariants}

Following Bandeira et al. \cite{bandeira2018estimation}, we will rely on an incarnation of the method of moments through invariant polynomials.  Recall that we are trying to recover the $(N+1)^2$-dimensional vector $\{f_{lm} \}_{l \leq N}$ from noisy observations.  We would like to measure quantities of the form $P(\{f_{lm} \}_{l \leq N})$ for various polynomials $P$ by averaging over the samples.  However for most polynomials $P$, this is not actually possible because all of our observations come with an unknown rotation.  However, there are certain polynomials that are rotationally invariant, which are called invariant polynomials.  We will be able to estimate these polynomials from samples and use them in our reconstruction algorithm.

We formally define invariant polynomials below.  We will define them over a general group $G$ that acts linearly on a vector space $V = \C^n$.  The exposition here will be brief and we omit many details.  A more detailed account can be found in \cite{kac1994invariant}.  For the purposes of this paper, we only need that the following results hold for $G = SO(3)$.  

\begin{remark}
Recall that in the introduction, we used the notation $\rho(g) \cdot x$ to emphasize the group action. 
However, to simplify notation later on, we will slightly abuse notation and just write $g \cdot x$.
\end{remark}

\begin{definition}
For a compact group $G$ acting linearly on a vector space $V = \C^n$, a polynomial $P \in \C[x_1, \dots , x_{n} ] $ is an invariant polynomial if for any element $g \in G$ and $x \in V$, 
\[
P( g \cdot x ) = P(x) \,.
\]
\end{definition}
\begin{remark}
This is equivalent to requiring that $P( g \cdot  x ) = P(x)$ as a formal identity where $x$ is a vector of $n$ formal variables and $g \cdot x$ denotes the linear transformation given by $g$ applied to $x$.
\end{remark}

The invariant polynomials form a ring, which is called the invariant ring.
\begin{definition}
The invariant ring $\C^{G}[x_1, \dots , x_{n} ]$ is the ring of all invariant polynomials.  For an integer $d$, let $\C_d^{G}[x_1, \dots , x_{n} ]$ denote the set of invariant polynomials that are homogeneous of degree $d$.  
\end{definition}

We now introduce the concept of the Reynold's operator which will be crucial in understanding the structure of the ring of invariant polynomials.  
\begin{definition}
Let $G$ be a compact group acting linearly on the vector space $V = \C^n$.  The Reynolds operator $\mcl{R}: \C[x_1, \dots , x_n] \rightarrow \C^G[x_1, \dots , x_n]$ is defined as 
\[
\mcl{R}(P)(x) = \E_{g \sim G}[P(g \cdot x)]
\]
where $x = (x_1, \dots , x_n)$ and the expectation is taken with respect to the Haar measure of $G$.  
\end{definition}

For the case of $G = SO(3)$, the Haar measure is simply the uniform measure.  It is immediate from the definition that $\mcl{R}(P)$ projects onto the invariant ring.  We now present several basic properties.  The proofs can be found in \cite{kac1994invariant}.

\begin{fact}\label{fact:invariant-poly}
The invariant polynomials satisfy the following properties:
\begin{enumerate}
    \item Any polynomial $P \in \C_d^{G}[x_1, \dots , x_{n} ]$ can be written as a linear combination 
    \[
    \sum_{\alpha} c_{\alpha} \mcl{R}\left(x^{\alpha}\right)
    \]
    where $\alpha$ ranges over all $n$-variate monomials of degree at most $d$ and $c_{\alpha} \in \C$ for all $\alpha$.
    
    \item The invariant ring $\C^G[x_1, \dots , x_n]$ is finitely generated (as an algebra over $\C$)
    
    \item For any $x, x'$ such that $g \cdot x \neq x'$ for all $g \in G$ (i.e. $x,x'$ are not in the same orbit), there is a polynomial $P \in \C^G[x_1, \dots , x_n]$ such that 
    \[
    P(x) \neq P(x') \,.
    \]
\end{enumerate}
\end{fact}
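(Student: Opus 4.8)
The plan is to deduce all three parts from two elementary properties of the Reynolds operator $\mcl{R}$, both immediate from the definition together with the fact that the Haar measure on a compact group is a $G$-invariant probability measure. First, $\mcl{R}$ \emph{fixes invariant polynomials}: if $P$ is invariant then $\mcl{R}(P)(x) = \E_{g}[P(g\cdot x)] = \E_{g}[P(x)] = P(x)$. Second, $\mcl{R}$ is a \emph{homomorphism of $\C^G[x_1,\dots,x_n]$-modules}: if $P$ is invariant and $Q$ is arbitrary then $\mcl{R}(PQ)(x) = \E_g[P(g\cdot x)Q(g\cdot x)] = P(x)\,\E_g[Q(g\cdot x)] = P(x)\,\mcl{R}(Q)(x)$. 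In addition $\mcl{R}$ is $\C$-linear and, since $G$ acts linearly on $V = \C^n$, it sends homogeneous polynomials of degree $d$ to homogeneous polynomials of degree $d$ (or to $0$).

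For part (1), write a homogeneous invariant $P$ of degree $d$ in the monomial basis, $P = \sum_{|\alpha| = d} c_\alpha x^\alpha$. Since $P$ is invariant, the first property gives $P = \mcl{R}(P)$, and linearity yields $P = \sum_{\alpha} c_\alpha \mcl{R}(x^\alpha)$; this is exactly the claimed expansion (in fact we only ever need $\alpha$ of degree exactly $d$, which is stronger than the stated ``degree at most $d$'').

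Part (2) is Hilbert's finiteness theorem, and I would prove it in the classical way. Let $\mathfrak{a} \subseteq \C[x_1,\dots,x_n]$ be the ideal generated by all homogeneous invariants of positive degree. Since $\mathfrak{a}$ is a homogeneous ideal generated by invariants, one can choose (this is the one slightly fussy bookkeeping step) a finite generating set $f_1, \dots, f_k$ consisting of homogeneous invariants of positive degree — this uses Hilbert's basis theorem. I then claim $\C^G[x_1,\dots,x_n] = \C[f_1,\dots,f_k]$. By part (1) it suffices to treat homogeneous invariants $P$, and I induct on $\deg P$; the case $\deg P = 0$ is trivial. If $\deg P = d > 0$ then $P \in \mathfrak{a}$, so $P = \sum_i a_i f_i$ with each $a_i$ homogeneous of degree $d - \deg f_i < d$. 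Applying $\mcl{R}$ and using the two properties above, $P = \mcl{R}(P) = \sum_i f_i\,\mcl{R}(a_i)$ where each $\mcl{R}(a_i)$ is a homogeneous invariant of degree $< d$, hence a polynomial in $f_1, \dots, f_k$ by the inductive hypothesis; thus so is $P$.

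For part (3), the key input is geometric rather than algebraic: an orbit $G\cdot x$ is the continuous image of the compact set $G$, hence compact, hence closed in $\C^n \cong \R^{2n}$. So if $x, x'$ lie in distinct orbits, then $G\cdot x$ and $G\cdot x'$ are disjoint compact sets, and by the Stone--Weierstrass theorem there is a polynomial $p$ with $|p| < 1/3$ on $G\cdot x$ and $|p - 1| < 1/3$ on $G\cdot x'$. Then $\mcl{R}(p)$ is an invariant polynomial with $|\mcl{R}(p)(x)| = |\E_g[p(g\cdot x)]| \le 1/3$ while $|\mcl{R}(p)(x') - 1| = |\E_g[p(g\cdot x') - 1]| \le 1/3$, so $\mcl{R}(p)(x) \neq \mcl{R}(p)(x')$. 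I expect part (2) to be the real substance — it is exactly Hilbert's finiteness theorem, and the nontrivial ingredient beyond the module-homomorphism property of $\mcl{R}$ is Hilbert's basis theorem plus the observation that a homogeneous ideal generated by invariants admits a homogeneous invariant generating set. Part (3), by contrast, is a point-set-topology argument whose one delicate point is that it genuinely relies on compactness of $G$ (orbit-closedness can fail otherwise), which is precisely why ``compact group'' is hypothesized; for our purposes everything is invoked only with $G = SO(3)$, where the Haar measure is the normalized uniform measure and the action on each degree-$\ell$ block is the corresponding irreducible representation.
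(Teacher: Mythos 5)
The paper does not actually prove this Fact---it cites \cite{kac1994invariant}---so the relevant comparison is with the standard treatment there, and your parts (1) and (2) reproduce it faithfully. Part (1) is exactly the one-line argument from $\C$-linearity of $\mcl{R}$ together with $\mcl{R}(P)=P$ for invariant $P$, and part (2) is the classical Hilbert finiteness proof: Hilbert's basis theorem applied to the ideal generated by the positive-degree homogeneous invariants, a finite generating set chosen among those invariants, and induction on degree using the module property $\mcl{R}(a_if_i)=f_i\,\mcl{R}(a_i)$ and the fact that $\mcl{R}$ preserves degree. Both of these are correct as written.

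Part (3), however, has a genuine gap. Stone--Weierstrass requires a subalgebra of continuous functions that is closed under complex conjugation, and the algebra $\C[x_1,\dots,x_n]$ of \emph{holomorphic} polynomials on a compact subset of $\C^n$ is not (on the unit circle one cannot uniformly approximate $\bar z$ by polynomials in $z$). So the separating $p$ your argument produces is a polynomial in the real and imaginary parts of the coordinates, and $\mcl{R}(p)$ is then an invariant polynomial on $\R^{2n}$---not an element of $\C^G[x_1,\dots,x_n]$ as the paper defines it, namely a polynomial in $x_1,\dots,x_n$ alone. This is not a removable technicality: with holomorphic polynomials only, orbit separation for a compact group acting on $\C^n$ can genuinely fail; for instance $G=SO(2)\cong U(1)$ acting on $\C$ by $z\mapsto e^{i\theta}z$ has only constant invariant polynomials in $z$, yet circles of different radii are distinct orbits. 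The intended (and classical) statement is for polynomial invariants in the real coordinates, i.e.\ one treats $\C^n$ as $\R^{2n}$ and allows conjugate variables---which is also how the paper itself handles the real/complex passage in the remark following Lemma~\ref{lem:est-moments} and in Appendix~\ref{appendix:const-degree}. Your compactness-plus-Stone--Weierstrass-plus-Reynolds argument is the right one and becomes correct once stated at that level; as written, the invariant it constructs does not lie in the ring named in the statement, and that discrepancy should be flagged and resolved rather than passed over.
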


\subsection{Estimating Invariant Polynomials}
The reason invariant polynomials are useful is that we can measure their values even when our observations are noisy and come with an unknown rotation.  We now use a few simple results from \cite{bandeira2018estimation} to make this intuition quantitative.

\begin{lemma}[See Section 7.1 in \cite{bandeira2018estimation}]\label{lem:est-moments}
Let $G$ be a compact group acting linearly on a vector space $V = \C^n$.  Let $x \in V$ and assume $\norm{g \cdot x}_2 \leq 1$ for all $g \in G$.  Assume we are given $Q$ independent observations $y_1, \dots , y_Q$ of the form
\[
y_j = g_j \cdot x + N(0, \sigma^2 I) + iN(0, \sigma^2 I)
\]  
where $g_j$ is drawn randomly (according to the Haar measure) from $G$ and $i = \sqrt{-1}$.  

Let $P_{\alpha}(x) = x^{\alpha}$ for all $n$-variate monomials $x^{\alpha}$ of degree at most $d$.  Let $\tau > 0$ be a parameter.  We can compute in $\poly(Q, n^d) $ time, estimates $\wt{P_{\alpha}(x)}$ such that with probability $1- \tau$, we have for all $\alpha$,
\[
\left \lvert \wt{P_{\alpha}(x)} - \E_{g \sim G}\left[ P_{\alpha}(g \cdot x)  \right] \right \rvert \leq c_d \sigma^d \sqrt{\frac{\log n/\tau}{Q}}
\]
where $c_d$ is a constant depending only on $d$.
\end{lemma}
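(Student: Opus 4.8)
The plan is to use the most naive estimator possible and show it already works: for each monomial $x^\alpha$ of degree at most $d$, set
\[
\wt{P_\alpha(x)} = \frac{1}{Q}\sum_{j=1}^Q y_j^\alpha ,
\]
the empirical average of the monomial evaluated directly at the noisy observations. The first step is to verify unbiasedness. Since $g_j$ and $\zeta_j$ are independent, it suffices to compute $\E_{\zeta_j}[y_j^\alpha \mid g_j]$. Expanding $y_j^\alpha = \prod_k\big((g_j\cdot x)_k + \zeta_{j,k}\big)^{\alpha_k}$ multinomially, every term other than $(g_j\cdot x)^\alpha$ carries a factor $\prod_k \zeta_{j,k}^{\beta_k}$ with some $\beta_k \ge 1$. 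The coordinates $\zeta_{j,k} = a_{j,k} + i\,b_{j,k}$ are independent across $k$, and each is a circularly symmetric complex Gaussian: $e^{i\theta}\zeta_{j,k}$ is equal in distribution to $\zeta_{j,k}$, so $\E[\zeta_{j,k}^m] = e^{im\theta}\E[\zeta_{j,k}^m]$ for all $\theta$, forcing $\E[\zeta_{j,k}^m]=0$ for every $m\ge 1$. Hence $\E_{\zeta_j}[\zeta_j^\beta] = 0$ unless $\beta = 0$, so $\E_{\zeta_j}[y_j^\alpha \mid g_j] = (g_j\cdot x)^\alpha = P_\alpha(g_j\cdot x)$, and taking the expectation over the Haar-random $g_j$ gives $\E[\wt{P_\alpha(x)}] = \E_{g\sim G}[P_\alpha(g\cdot x)]$. (This is precisely where the complex, circularly symmetric noise is convenient: for real Gaussian noise one would instead have to debias the monomial via Hermite polynomials.)

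Next I would control the fluctuations. Because $|(g\cdot x)_k| \le \|g\cdot x\|_2 \le 1$ for every $k$, and $\sigma\ge 1$, conditioning on $g$ and using independence of the $\zeta_k$ gives
\[
\E_\zeta\big[\,|y^\alpha|^2 \,\big|\, g\,\big] = \prod_k \E_\zeta\big[\,|(g\cdot x)_k + \zeta_k|^{2\alpha_k}\,\big] \le \prod_{k:\,\alpha_k>0} C_{\alpha_k}\,\sigma^{2\alpha_k} \le c_d^2\,\sigma^{2d},
\]
where we used $\E[|\zeta_k|^{2m}] = O_m(\sigma^{2m})$, the fact that at most $d$ of the $\alpha_k$ are nonzero, and $\sum_k\alpha_k\le d$. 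So $\Var\big(\wt{P_\alpha(x)}\big)\le c_d^2\sigma^{2d}/Q$, which together with Chebyshev's inequality and a union bound over the at most $\binom{n+d}{d} = n^{O(d)}$ monomials already gives the lemma with $\log(n/\tau)$ replaced by a polynomial in $n/\tau$.

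To obtain the logarithmic dependence I would replace Chebyshev by a moment (Bernstein-type) bound. Conditioned on $g$, the quantity $y^\alpha - (g\cdot x)^\alpha$ is a mean-zero polynomial of degree at most $d$ in the real Gaussian vector $(\mathrm{Re}\,\zeta,\mathrm{Im}\,\zeta)$, so Gaussian hypercontractivity yields $\E_\zeta\big[\,|y^\alpha - (g\cdot x)^\alpha|^p \,\big|\, g\,\big]^{1/p} \le (Cp)^{d/2}\,\E_\zeta\big[\,|y^\alpha-(g\cdot x)^\alpha|^2\,\big|\,g\,\big]^{1/2} \le (Cp)^{d/2} c_d \sigma^d$; combined with $|(g\cdot x)^\alpha|\le 1$ this gives $\big\| y^\alpha - \E y^\alpha \big\|_p \le (c_d\,p)^{d/2}\sigma^d$ for all $p\ge 2$ (handling real and imaginary parts separately costs only a constant factor). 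Applying a standard tail bound for an average of $Q$ i.i.d.\ copies with $p = \Theta\big(d\log(n/\tau)\big)$, the deviation $t = c_d\,\sigma^d\sqrt{\log(n/\tau)/Q}$ fails with probability at most $\tau/n^{O(d)}$ for each $\alpha$, and a union bound over all monomials finishes the proof; the estimator is plainly computable in $\poly(Q, n^d)$ time. (Alternatively one can truncate to the high-probability event $\max_{j,k}|\zeta_{j,k}|\le O(\sigma\sqrt{\log(nQ/\tau)})$, on which each $|y_j^\alpha|$ is deterministically bounded, and invoke Bernstein's inequality with the variance bound above.) The only point requiring a little care is tracking the dependence on $d$ through hypercontractivity so that it can be absorbed into the constant $c_d$; this --- rather than any genuinely new idea --- is the main ``obstacle,'' since the statement is essentially a quantitative version of the unbiased-estimator observation from \cite{bandeira2018estimation}.
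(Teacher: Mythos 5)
Your proposal is correct, but note that the paper does not actually prove this lemma: it is imported from Section 7.1 of \cite{bandeira2018estimation}, with only a remark that the real-variable statement transfers to $\C$ by separating real and imaginary parts. So your argument is necessarily a different, self-contained route rather than a reproduction of the paper's. The cited treatment works with real Gaussian noise, where empirical averages of monomials (moment tensors) are biased and must be explicitly debiased using lower-order moments and the known $\sigma^2$ (Hermite-type corrections); your observation that for the circularly symmetric complex noise used here the plain holomorphic monomial average $\frac{1}{Q}\sum_j y_j^{\alpha}$ is already unbiased, because $\E[\zeta_k^m]=0$ for all $m\ge 1$, is exactly right and is arguably cleaner for the monomials $P_{\alpha}(x)=x^{\alpha}$ actually needed here (they contain no conjugated variables, so no debiasing is required). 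The variance bound and the hypercontractivity/truncation step are fine as well. Two small points to keep honest: (i) you invoke $\sigma\ge 1$, which is a standing assumption of the paper (Section~\ref{sec:setup}) but is not stated in the lemma itself; without it the bound should read $c_d\max(1,\sigma)^d$, since the fluctuation of $P_{\alpha}(g\cdot x)$ over the random $g$ contributes on its own; (ii) the purely sub-Gaussian form $c_d\sigma^d\sqrt{\log(n/\tau)/Q}$ follows from your moment (or Bernstein-after-truncation) argument only once $Q$ exceeds a polylogarithmic function of $n/\tau$, since otherwise the higher-order term of order $(\log (n/\tau))^{O(d)}/Q$ dominates; this regime restriction is harmless in every use of the lemma in the paper and is implicit in the cited statement, but it deserves a sentence.
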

\begin{remark}
The results in \cite{bandeira2018estimation} are stated for functions over $\R$ instead of functions over $\C$.  To transfer them to $\C$, it suffices to separate complex variables into their real and imaginary parts.
\end{remark}

As an immediate consequence of the above, we can estimate all low-degree invariant polynomials using polynomially many samples.
\begin{lemma}\label{lem:est-invariant}
Let $G$ be a compact group acting linearly on a vector space $V = \C^n$.  Let $x \in V$ and assume $\norm{g \cdot x}_2 \leq 1$ for all $g \in G$.  Assume we are given $Q$ independent observations $y_1, \dots , y_Q$ of the form
\[
y_j = g_j \cdot  x + N(0, \sigma^2 I) + iN(0, \sigma^2 I)
\]  
where $g_j$ is drawn randomly (according to the Haar measure) from $G$ and $i = \sqrt{-1}$.  

Let $\eps$ be a desired accuracy parameter and $\tau$ be the allowable failure probability.  If
\[
Q \geq O_d(1) \poly\left( n^d, \sigma^d, \frac{1}{\eps}, \log \frac{1}{\tau}\right)
\]
the for any invariant polynomial $P$ of degree at most $d$ with coefficients of magnitude at most $1$, we can compute in $\poly(Q)$ time, an estimate $\wt{P(x)}$ such that with probability $1 - \tau$,
\[
\left \lvert \wt{P(x)} - P(x) \right \rvert \leq \eps  \,.
\]
\end{lemma}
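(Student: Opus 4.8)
The plan is to deduce Lemma~\ref{lem:est-invariant} from Lemma~\ref{lem:est-moments} together with part (1) of Fact~\ref{fact:invariant-poly}, which says that any invariant polynomial of degree at most $d$ is a linear combination of Reynolds operators applied to monomials. So first I would write the target invariant polynomial $P$ in the form $P = \sum_\alpha c_\alpha \mcl{R}(x^\alpha)$, where $\alpha$ ranges over $n$-variate monomials of degree at most $d$. By definition $\mcl{R}(x^\alpha)(x) = \E_{g\sim G}[P_\alpha(g\cdot x)]$, which is exactly the quantity that Lemma~\ref{lem:est-moments} lets us estimate. Thus the natural estimator is $\wt{P(x)} = \sum_\alpha c_\alpha \wt{P_\alpha(x)}$, and the error is controlled by the triangle inequality: $|\wt{P(x)} - P(x)| \le \sum_\alpha |c_\alpha| \cdot |\wt{P_\alpha(x)} - \E_{g\sim G}[P_\alpha(g\cdot x)]|$.

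Next I would bound the two ingredients of this sum. The number of monomials $\alpha$ of degree at most $d$ in $n$ variables is at most $n^{O(d)}$ (more precisely $\binom{n+d}{d}$), so there are polynomially many terms. Each $|c_\alpha|$ is at most $1$ by hypothesis. Hmm --- one subtlety is that part (1) of Fact~\ref{fact:invariant-poly} gives \emph{some} representation $P = \sum_\alpha c_\alpha \mcl{R}(x^\alpha)$, but it is the coefficients of $P$ \emph{itself} (in the monomial basis) that are assumed bounded by $1$, not the $c_\alpha$. However, since $\mcl{R}$ is a projection onto the invariant ring, for an invariant $P$ one can simply take $c_\alpha$ to be the monomial coefficients of $P$: then $\sum_\alpha c_\alpha \mcl{R}(x^\alpha) = \mcl{R}(\sum_\alpha c_\alpha x^\alpha) = \mcl{R}(P) = P$, so the $c_\alpha$ are exactly the coefficients of $P$ and hence bounded by $1$. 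With that in hand, applying Lemma~\ref{lem:est-moments} with failure probability $\tau$ gives, simultaneously for all $\alpha$, $|\wt{P_\alpha(x)} - \E_{g\sim G}[P_\alpha(g\cdot x)]| \le c_d\sigma^d\sqrt{(\log n/\tau)/Q}$, so the total error is at most $n^{O(d)} c_d \sigma^d \sqrt{(\log n/\tau)/Q}$.

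Finally I would choose $Q$ large enough to make this at most $\eps$: solving $n^{O(d)} c_d \sigma^d \sqrt{(\log n/\tau)/Q} \le \eps$ for $Q$ gives $Q \ge O_d(1)\cdot n^{O(d)} \cdot \sigma^{2d} \cdot \eps^{-2} \cdot \log(n/\tau)$, which is of the form $O_d(1)\poly(n^d, \sigma^d, 1/\eps, \log(1/\tau))$ as claimed (absorbing $\log n$ and the polynomial factor in $n$ into $\poly(n^d)$). The running time is $\poly(Q, n^d) = \poly(Q)$ since evaluating the linear combination costs an additional $n^{O(d)}$ which is $\poly(Q)$. The only genuinely delicate point --- and the one I would be most careful about --- is the bookkeeping between ``coefficients of $P$ bounded by $1$'' and ``coefficients $c_\alpha$ in the Reynolds expansion bounded by $1$''; as noted above this is handled cleanly by using the monomial coefficients of $P$ directly and the fact that $\mcl{R}$ fixes invariant polynomials, so there is no real obstacle, just a notational check. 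Everything else is a routine union bound and triangle inequality.
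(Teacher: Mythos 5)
Your proposal is correct and follows essentially the same route as the paper: expand $P$ in monomials, use invariance of $P$ (equivalently, that the Reynolds operator fixes $P$) to write $P(x) = \sum_\alpha c_\alpha \E_{g}[P_\alpha(g\cdot x)]$ with the $c_\alpha$ being the monomial coefficients of $P$, estimate each term via Lemma~\ref{lem:est-moments}, and finish with a union bound and the triangle inequality over the at most $(n+1)^d$ monomials. The coefficient-bookkeeping point you flag is resolved exactly as in the paper, so there is no gap.
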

\begin{proof}
Write $P$ as a sum of monomials i.e.
\[
P = \sum_{\alpha } c_{\alpha} x^{\alpha} \,.
\]
Since $P$ is an invariant polynomial, we must have
\[
\E_{g}[P(g \cdot x)] = P(x)
\]
so therefore
\[
P(x) = \sum_{\alpha} c_{\alpha} \E_{g} [P_{\alpha}(g \cdot x)]
\]
where $P_{\alpha} = x^{\alpha}$.  Now by Lemma \ref{lem:est-moments}, we can estimate each of the terms $\E_{g} [P_{\alpha}(g \cdot x)]$ to within $\eps/(n+1)^d$ so since there are at most $(n+1)^d$ different monomials and all of the coefficients have magnitude at most $1$, we are done by the triangle inequality.
\end{proof}

Now we briefly explain what happens in our setting where $G = SO(3)$.  Recall we have a function $f:S^2 \rightarrow \C$ whose expansion into spherical harmonics $\{f_{lm} \}_{l \leq N}$ has degree at most $N$.  An element $R \in SO(3)$ acts linearly on the $(N+1)^2$-dimensional vector given by $\{f_{lm} \}_{l \leq N}$.  In fact, by Fact \ref{fact:harmonic-basic}, it can be decomposed into $N + 1$ separate linear maps on $\{f_{00} \}, \dots , \{f_{N(-N)}, \dots , f_{NN} \} $ respectively.  Thus, Lemma \ref{lem:est-moments} and Lemma \ref{lem:est-invariant} can be applied to our sampling model (\ref{eq:sample-model}) by viewing the action of $R \in SO(3)$ as a linear operator on the $(N+1)^2$-dimensional vector $\{f_{lm} \}_{l \leq N}$.    

We will use Lemma \ref{lem:est-invariant} to measure polynomials $P(\{f_{lm} \}_{l \leq N})$ where $P: \C^{(N+1)^2} \rightarrow \C$ is a polynomial in $(N+1)^2$ variables.  We will slightly abuse notation and use $P(f)$ to denote the polynomial $P$ applied to the spherical harmonic coefficients of $f$.

\subsection{Explicit Formulas}\label{sec:explicit-invariants}
Here, we present explicit formulas from \cite{bandeira2018estimation} for the degree-$3$ invariant polynomials.  These formulas will be used later.  
\begin{theorem}[\cite{bandeira2018estimation}]\label{thm:explicit-invariants}
The degree-$3$ invariant polynomials are (up to scaling) 
\begin{equation}\label{eq:deg3-invariant}
\mathcal{I}_{l_1,l_2,l_3}(f) = \sum_{\substack{k_1 + k_2 + k_3 = 0 \\ \abs{k_i} \leq l_i}} (-1)^{k_3} \langle l_1k_1l_2k_2| l_3(-k_3) \rangle f_{l_1k_1}f_{l_2k_2}f_{l_3k_3}
\end{equation}
where $\langle l_1k_1l_2k_2| l_3(-k_3) \rangle$ is defined in Fact~\ref{fact:CG-explicit} and denote the Clebsch-Gordan (CG) coefficients.
\end{theorem}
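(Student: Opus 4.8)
The plan is to reduce everything to the Reynolds operator characterization already recorded in Fact~\ref{fact:invariant-poly}(1), and to invoke the Clebsch--Gordan theory for $SO(3)$ only at the end to pin down multiplicities. By Fact~\ref{fact:invariant-poly}(1), any homogeneous degree-$3$ invariant $P$ is a $\C$-linear combination of the polynomials $\mcl{R}(f_{l_1k_1}f_{l_2k_2}f_{l_3k_3})$; since the group acts linearly, $\mcl{R}$ preserves the grading, so only monomials of degree exactly $3$ contribute. Thus it suffices to (a) identify each $\mcl{R}(f_{l_1k_1}f_{l_2k_2}f_{l_3k_3})$ as a scalar multiple of $\mcl{I}_{l_1,l_2,l_3}$, which immediately shows the $\mcl{I}_{l_1,l_2,l_3}$ span the degree-$3$ invariants (and are themselves invariant, being in the image of $\mcl{R}$), and (b) show that for each fixed triple the space of invariants ``of that type'' is at most one-dimensional, which is what lets us say the $\mcl{I}_{l_1,l_2,l_3}$ are the degree-$3$ invariants ``up to scaling.''

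For (a): by Fact~\ref{fact:harmonic-basic}, a rotation $R$ acts on the degree-$l$ block by the unitary Wigner matrix $D^l(R)$, i.e.\ $f_{lk}\mapsto\sum_{k'}D^l_{kk'}(R)f_{lk'}$. Hence
\[
\mcl{R}\bigl(f_{l_1k_1}f_{l_2k_2}f_{l_3k_3}\bigr) = \sum_{k_1',k_2',k_3'}\Bigl(\int_{SO(3)} D^{l_1}_{k_1k_1'}(R)\,D^{l_2}_{k_2k_2'}(R)\,D^{l_3}_{k_3k_3'}(R)\,dR\Bigr)\,f_{l_1k_1'}f_{l_2k_2'}f_{l_3k_3'}\,.
\]
The inner integral is the classical Gaunt-type integral of three Wigner matrices. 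To evaluate it, use the conjugation symmetry $\overline{D^l_{mm'}}=(-1)^{m-m'}D^l_{-m,-m'}$ to turn $D^{l_3}$ into a conjugate, decompose $D^{l_1}\otimes D^{l_2}$ into irreducibles via the Clebsch--Gordan coefficients, and apply Schur orthogonality on the compact group $SO(3)$; this produces a product of two $3j$-symbols, which in the CG normalization of the theorem is a product of two factors of the shape $(-1)^{k_3}\langle l_1k_1l_2k_2|l_3(-k_3)\rangle$ (one in the $k_i$, one in the $k_i'$), each carrying the selection rule $k_1+k_2+k_3=0$, $|k_i|\le l_i$. Therefore $\mcl{R}(f_{l_1k_1}f_{l_2k_2}f_{l_3k_3}) = c_{l_1,l_2,l_3}\,(-1)^{k_3}\langle l_1k_1l_2k_2|l_3(-k_3)\rangle\cdot\mcl{I}_{l_1,l_2,l_3}(f)$ for a constant depending only on the triple, as claimed; combined with Fact~\ref{fact:invariant-poly}(1) this gives the spanning statement.

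For (b): invariants built out of the irreducibles $V_{l_1},V_{l_2},V_{l_3}$ correspond to copies of the trivial representation inside $V_{l_1}\otimes V_{l_2}\otimes V_{l_3}$, and by the Clebsch--Gordan rule $V_{l_1}\otimes V_{l_2}\cong\bigoplus_{l=|l_1-l_2|}^{l_1+l_2}V_l$ (plus self-duality of $SO(3)$ irreps) this multiplicity is $1$ when the triangle inequality $|l_1-l_2|\le l_3\le l_1+l_2$ holds and $0$ otherwise. So $\mcl{I}_{l_1,l_2,l_3}$ is the unique invariant of its type up to scaling; it vanishes identically exactly when the triangle inequality fails (all CG coefficients are $0$) or when a symmetry obstruction applies --- e.g.\ if two of the $l_i$ coincide, relabelling the matching summation indices and using the column-transposition symmetry of the $3j$-symbol multiplies $\mcl{I}_{l_1,l_2,l_3}$ by $(-1)^{l_1+l_2+l_3}$, forcing $\mcl{I}_{l_1,l_2,l_3}\equiv 0$ when that sign is negative. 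Listing the surviving triples yields an explicit basis of the space of degree-$3$ invariant polynomials.

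The main obstacle is the Gaunt-type identity for $\int_{SO(3)}D^{l_1}D^{l_2}D^{l_3}\,dR$: each ingredient (Schur orthogonality, the CG decomposition of a tensor product of Wigner matrices, and the standard conjugation and permutation symmetries of CG coefficients and $3j$-symbols) is elementary, but reconciling the normalization conventions --- CG versus $3j$, the $(-1)^m$ factors from complex conjugation, the $1/(2l_3+1)$ from Schur orthogonality, and checking that the selection rule $k_1+k_2+k_3=0$ and the sign $(-1)^{k_3}$ in \eqref{eq:deg3-invariant} come out correctly --- is the delicate bookkeeping. A secondary, less essential point is a clean enumeration of the parity/symmetry obstructions; for the statement as phrased (``up to scaling'') it already suffices that the $\mcl{I}_{l_1,l_2,l_3}$ span and that each is unique up to a scalar.
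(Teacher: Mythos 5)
Your proposal is correct, and it is essentially the standard derivation: the paper itself does not prove Theorem~\ref{thm:explicit-invariants} but imports it from \cite{bandeira2018estimation}, where the degree-$3$ invariants (the bispectrum) are obtained by exactly this route --- averaging degree-$3$ monomials over $SO(3)$ via the Gaunt integral of three Wigner $D$-matrices, which yields the CG-coefficient expression, together with the multiplicity-one count of trivial representations in $V_{l_1}\otimes V_{l_2}\otimes V_{l_3}$ to get uniqueness up to scaling. The only piece you leave unverified is the normalization bookkeeping in the Gaunt identity (CG versus $3j$ conventions, the sign $(-1)^{k_3}$, the selection rule $k_1+k_2+k_3=0$), which is classical and acceptable to cite, so there is no gap.
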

\begin{fact}\label{fact:CG-trivial}
All CG-coefficients are real numbers and satisfy $\left \lvert \langle l_1k_1l_2k_2| l_3(-k_3) \rangle \right \rvert \leq 1$.  Also if $l_1, l_2 , l_3$ are not the sides of a triangle (i.e. $l_1 > l_2 + l_3$ or $l_1 < |l_2 - l_3|$) then $ \langle l_1k_1l_2k_2| l_3(-k_3) \rangle = 0$.
\end{fact}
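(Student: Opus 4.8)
The plan is to derive all three assertions from the interpretation of the Clebsch–Gordan coefficients as the entries of the unitary change-of-basis matrix resolving a tensor product of two irreducible representations of $SO(3)$ into irreducibles. By Fact~\ref{fact:harmonic-basic}, for each $l \geq 0$ the degree-$l$ spherical harmonics span the (unique up to isomorphism) irreducible representation $V_l$ of $SO(3)$ of dimension $2l+1$, and a classical computation with the raising and lowering operators gives the decomposition $V_{l_1} \otimes V_{l_2} \cong \bigoplus_{l_3 = |l_1 - l_2|}^{l_1 + l_2} V_{l_3}$. Fixing orthonormal bases of both sides — the product basis $\{Y_{l_1 k_1} \otimes Y_{l_2 k_2}\}$ on the left and a coupled basis adapted to the summands on the right — the coefficient $\la l_1 k_1 l_2 k_2 | l_3 (-k_3) \ra$ (where $k_1 + k_2 + k_3 = 0$, so the coupled vector has weight $k_1 + k_2$) is exactly the corresponding entry of this change-of-basis matrix.

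Given this dictionary, the triangle constraint is immediate: if $l_1, l_2, l_3$ are not the sides of a triangle then $V_{l_3}$ does not occur in the decomposition of $V_{l_1} \otimes V_{l_2}$ at all, so there is no coupled basis vector to expand and every corresponding coefficient is zero. (This is also why the explicit formula of Fact~\ref{fact:CG-explicit} vanishes in that range: its normalizing prefactor contains a factorial of a negative integer, with the usual convention that the reciprocal of such a factorial is $0$.) The magnitude bound follows because the full array of CG coefficients, indexed by the product basis and the coupled basis, is the matrix of a change of basis between two orthonormal bases of $V_{l_1} \otimes V_{l_2}$, hence unitary; each column therefore has unit $\ell_2$ norm, so every entry has modulus at most $1$. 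Alternatively this is a one-line consequence of the orthogonality relations recorded in Section~\ref{sec:CG-coeffs}.

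Reality is the only point requiring care, and it is a statement about a choice of convention rather than a mathematical necessity. First I would fix the Condon–Shortley phase convention: scale each coupled basis vector so that, in the highest-weight vector of the $V_{l_3}$-summand, the coefficient of $Y_{l_1 l_1} \otimes Y_{l_2 (l_3 - l_1)}$ is real and positive. Applying the lowering operator $J_- = J_-^{(1)} \otimes I + I \otimes J_-^{(2)}$, whose entries in the spherical-harmonic basis are real, then yields real coefficients at every lower weight, and orthogonalizing successive summands by a real Gram–Schmidt step keeps everything real. The cleanest rigorous shortcut, however, is to invoke Racah's closed form (Fact~\ref{fact:CG-explicit}): it writes each coefficient as a finite signed sum of square roots of ratios of factorials, which is manifestly real. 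I expect the main ``obstacle'' here to be nothing more than bookkeeping with these conventions — there is no hard analysis — so since the paper defines the CG coefficients through the explicit formula, reality, the triangle property, and (via orthogonality) the bound $\leq 1$ can all be read off directly from it.
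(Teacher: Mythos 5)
Your proposal is correct, and it actually supplies more than the paper does: the paper states Fact~\ref{fact:CG-trivial} with no proof at all, treating it as a standard property of Clebsch--Gordan coefficients imported from the literature (alongside Fact~\ref{fact:CG-explicit} and Fact~\ref{fact:CG-orthogonality}). Your route --- identify the CG array with the orthogonal change of basis between the product basis and the coupled basis of $V_{l_1}\otimes V_{l_2}\cong\bigoplus_{l_3=|l_1-l_2|}^{l_1+l_2}V_{l_3}$, read off the triangle constraint from the absence of $V_{l_3}$ in the decomposition, get $\left\lvert \langle l_1k_1l_2k_2|l_3(-k_3)\rangle\right\rvert\le 1$ from unitarity (equivalently, from the type-2 orthogonality relation, since a sum of squares of real numbers equal to $1$ bounds each term), and get reality either from the Condon--Shortley convention with real lowering operators or directly from the Racah closed form in Fact~\ref{fact:CG-explicit} --- is the standard derivation and is sound. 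One small imprecision worth fixing: your parenthetical claim that the explicit formula vanishes outside the triangle range ``because the reciprocal of a factorial of a negative integer is $0$'' does not quite apply here, since the offending factorials $(l+l_1-l_2)!$, $(l-l_1+l_2)!$, $(l_1+l_2-l)!$ sit in the \emph{numerator} under a square root, so the formula is simply not defined there rather than manifestly zero; the correct justification for the vanishing is exactly your primary representation-theoretic one (the coefficient is defined to be zero when $V_{l_3}$ does not occur), so the parenthetical should be dropped or rephrased, but this does not affect the validity of the argument.
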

\begin{remark}
In light of Fact \ref{fact:CG-trivial}, the invariant polynomials $\mcl{I}_{l_1,l_2,l_3}$ are nonzero only when $l_1, l_2,l_3$ are the sides of a triangle.
\end{remark}
We will define the Clebsch-Gordan-coefficients and prove several additional properties about them in Section \ref{sec:CG-coeffs}.  For now, note that the invariant polynomials are layered in the sense that if $l_1, l_2 < l_3$ and the spherical harmonic coefficients of degree $l_1, l_2$ are known, then the invariant $\mcl{I}_{l_1,l_2,l_3}$ is linear in the spherical harmonics of degree $l_3$.  Exploiting this structure is one of the key to our algorithm and its analysis.

\section{Algorithm}\label{sec:algorithm-description}

\subsection{Algorithm Overview}
We are now ready to present our main algorithm .  At a high-level, we might hope to measure the values of all low-degree invariant polynomials using Lemma \ref{lem:est-invariant} and then solve the resulting polynomial system for the values of the coefficients $f_{lm}$.  Naively, this would be computationally intractable since it would involve solving a polynomial system with many variables and equations.  Instead we exploit the layered structure of the degree-$3$ invariant polynomials. Let $C$ be a sufficiently large (universal) constant.  Our algorithm consists of two phases:
\begin{enumerate}
    \item Recovering all spherical harmonic coefficients of constant degree i.e. all $f_{lm}$ with $l \leq C$ where $C$ is a universal constant
    \item Using the layered structure to do frequency marching i.e. we iteratively solve for the spherical harmonic coefficients of degree $C+1, C+2 , \dots $ and so on
\end{enumerate}

\subsubsection{Recovering Constant-degree Coefficients}
We can grid search for the values of the spherical harmonic coefficients $f_{lm}$ with $l \leq C$ (since there are only $(C+1)^2$ variables and $C$ is a constant).  It remains to show that we can construct a test for our guesses such that the test passes only if the guess is close to the truth up to orbit.

Recall property $2$ of Fact \ref{fact:invariant-poly}.  Note that we can view the spherical harmonic coefficients of degree at most $C$ as a $(C+1)^2$-dimensional vector and rotations $\R \in SO(3)$ act linearly on this vector.  Thus, we can compute invariant polynomials $P_1, \dots , P_k$ that generate the invariant ring using standard techniques from computational algebra (see e.g. Section 8.1 in \cite{bandeira2018estimation}).  The actual algorithm for this step does not matter because $C$ is constant so all of this can be done in $O(1)$ time and all of $P_1, \dots ,P_k$ have degree $O(1)$.  Now we describe how to test our guesses.  Let our guesses be $f_{lm}'$ for $l \leq C$.  For a desired accuracy parameter $\eps$, to test whether our guess is close to the truth up to orbit, we evaluate 
\[
P_1\left(\{f_{lm}'\}_{l \leq C}\right), \dots , P_k\left(\{f_{lm}'\}_{l \leq C}\right)
\]
and check whether they are close to 
\[
P_1\left(\{f_{lm}\}_{l \leq C}\right), \dots , P_k\left(\{f_{lm}\}_{l \leq C}\right)
\]
which we can measure by Lemma \ref{lem:est-invariant}.  Using results from computational algebraic geometry (Theorem 7 in \cite{solerno1991effective}), we can prove that there is a universal constant $K$ such that if 
\[
\left \lvert P_j\left(\{f_{lm}'\}_{l \leq C}\right) - P_j\left(\{f_{lm}\}_{l \leq C}\right) \right \rvert \leq \frac{1}{\eps^K}
\]
for all $j = 1, 2, \dots , k$, then the guesses $\{f_{lm}'\}_{l \leq C}$ must be $\eps$-close to the true coefficients $\{f_{lm}\}_{l \leq C} $ up to the action of some rotation in $SO(3)$.  Thus, if we grid search using a sufficiently fine grid, we get an algorithm that runs in time $(1/\eps)^{O(1)}$ and learns the coefficients $\{f_{lm}\}_{l \leq C} $ to within $\eps$.

\subsubsection{Frequency Marching for Remaining Coefficients}
Now assume that we know all of the spherical harmonic coefficients $f_{lm}$ with $l \leq C$.  We can set up a linear system to solve for the degree-$(C+1)$ spherical harmonic coefficients as follows.  For all $l_1, l_2  \leq 0.9999C$,  we can measure the value of $\mcl{I}_{l_1,l_2, C+1}(f)$ (as defined in (\ref{eq:deg3-invariant})).  This gives us $\Omega(C^2)$ linear equations to solve for the $2C + 3$ variables $\{f_{(C+1),-(C+1)}, \dots  , f_{(C+1),(C+1)} \}$.  Thus, as long as we can prove that these linear equations are well-conditioned, then we will be able to solve for $\{f_{(C+1),-(C+1)}, \dots  , f_{(C+1),(C+1)} \}$.  Proving that these linear equations are well-conditioned is the main technical piece of this paper (see Section \ref{sec:well-conditioned}).  Once we have done this, we can repeat the same method to solve for the spherical harmonic coefficients of degree $C + 2$ and so on.

The reason that we ensure $l_1, l_2 \leq 0.9999C$ (instead of just $l_1, l_2 \leq C$) is that this limits how errors in our estimates propagate.  In particular, with this modification, we only need to track errors through $O(\log N)$ levels of recursion instead of $N$ levels.  This propagation of errors through $O(\log N)$ levels of recursion is the source of the quasi-polynomial bound in Theorem \ref{thm:main-oneshell}.

\subsection{Formal Algorithm Description}

We now formally describe our algorithm.  As mentioned above, it consists of two subroutines: recovering the constant degree coefficients via grid-search and then frequency marching to solve for the remaining coefficients layer-by-layer. Below, we will use $C$ to denote a sufficiently large (universal) constant.

\begin{algorithm}[H]
\caption{{\sc $SO(3)$ Reconstruction Algorithm} }
\begin{algorithmic} 
\State \textbf{Input:} Parameters $N$, $\delta$, $\sigma$, $\eps$ 
\State \textbf{Input:} $Q$ samples of $f$ from the sampling model (\ref{eq:sample-model}) where 
\begin{itemize}
    \item $f$ is an unknown function with spherical harmonic expansion of degree at most $N$
    \item $\norm{f}_2 \leq 1$
    \item The coefficients of $f$ are $\delta$-smoothed
    \item The number of samples is
    \[
    Q = \left( \frac{N}{\delta}\right)^{O(\log N)} \poly\left(\sigma, \frac{1}{\eps} \right) \,.
    \]
\end{itemize}

\State Run {\sc Learn Constant-Degree Coefficients} with parameters $\sigma, \gamma$ where $\gamma  = \left( \delta/N\right)^{O(\log N)} \eps^{O(1)}$ to obtain estimates $\{\wt{f_{lm}} \}_{l \leq C} $
\For{L = C+1 , \dots , N}
\State Run {\sc Frequency Marching with Long Stride} with parameters $\delta, \sigma, \gamma$ where $\gamma = \left( \delta/N\right)^{O(\log N)} \eps^{O(1)}$, index $L$ and estimates $\{\wt{f_{lm}} \}_{l \leq L-1} $ to obtain solution $\{\wt{f_{L(-L)}}, \dots , \wt{f_{LL}} \}$ 
\EndFor
\State \textbf{Output:} $\{ \wt{f_{lm}} \}_{l \leq N} $
\end{algorithmic}
\label{alg:main}
\end{algorithm}

\begin{algorithm}[H]
\caption{{\sc Learn Constant-Degree Coefficients} }
\begin{algorithmic} 
\State \textbf{Input:} Parameters $\sigma$, $\gamma$ 
\State \textbf{Input:} $Q = \poly(\sigma, 1/\gamma)$ samples of $f$ from the sampling model (\ref{eq:sample-model})
\State Compute invariant polynomials $P_1, \dots , P_k$ that generate 
\[
\C^{SO(3)}[ x_{00}, \dots , x_{C(-C)}, \dots , x_{CC}] \,,
\]
the invariant ring for coefficients of spherical harmonics of degree at most $C$

\State Obtain estimates $\wt{P_1}, \dots , \wt{P_k}$ for $P_1(\{f_{lm} \}_{l \leq C} ), \dots , P_k(\{f_{lm} \}_{l \leq C} )$ using Lemma \ref{lem:est-invariant}
\State Let $K$ be a sufficiently large universal constant (in terms of $C, P_1, \dots , P_k$)
\State Grid search for values of $\{f_{lm}\}_{l \leq C}$ with $|f_{lm}| \leq 1$ with discretization $(1/\gamma)^{O(K)}$
\For{each guess $\{\wt{f_{lm}}\}_{l \leq C}$}
\State Check if for all $j \in [k]$
\[
\left \lvert P_j( \{\wt{f_{lm}}\}_{l \leq C}) - \wt{P_j} \right \rvert \leq 0.2 (1/\gamma)^{K}
\]
\If{above check passes}
\State \textbf{Output:} $\{\wt{f_{lm}}\}_{l \leq C}$
\State \textbf{break}
\EndIf
\EndFor
\end{algorithmic}
\label{alg:const}
\end{algorithm}

\begin{algorithm}[H]
\caption{{\sc Frequency Marching with Long Stride} }
\begin{algorithmic} 
\State \textbf{Input:} Parameters $\delta, \sigma, \gamma$
\State \textbf{Input:} Index $L$ and estimates $\{ \wt{f_{lm}} \}_{l \leq L - 1}$ of coefficients of degree less than $L$ 
\State \textbf{Input:} $Q = \poly(L\sigma/(\delta \gamma))$ samples of $f$ from the sampling model (\ref{eq:sample-model})

\State 
\For{all integers $a,b \leq 0.9999 L$}
\State Compute estimate $\wt{\mcl{I}_{a,b,L}}$ of $\mcl{I}_{a,b,L}(f)$ using Lemma \ref{lem:est-invariant} 
\EndFor
\For{all integers $a,b \leq 0.9999 L$}
\State For variables $X = \{ x_{L(-L)}, \dots , x_{LL} \}$ , define the vector $M_{(a,b)}$ such that
\[
  M_{(a,b)} \cdot  X = \sum_{\substack{k_1 + k_2 + k_3 = 0 \\ \abs{k_1} \leq a, \abs{k_2} \leq b, \abs{k_3} \leq L}} (-1)^{k_3} \langle ak_1bk_2| L(-k_3) \rangle \wt{f_{ak_1}} \wt{f_{bk_2}} x_{Lk_3}
\]
\EndFor
\State Let $M$ be the $(0.9999L)^2 \times (2L + 1)$ matrix with rows given by $M_{(a,b)}$ for $a,b \leq 0.9999 L$
\State Let $\mcl{I}$ be the vector of length $(0.9999L)^2$ with entries $\wt{\mcl{I}_{a,b,L}}$ for $a,b \leq 0.9999 L$
\State \textbf{Output:} $\{\wt{f_{L(-L)}}, \dots , \wt{f_{LL}} \}$  as the solution to
\[
 \arg \min_{X} \left( \norm{MX - \mcl{I}}_2^2\right) \,.
\]
\end{algorithmic}
\label{alg:iterate}
\end{algorithm}

\subsection{Analysis of Algorithm \ref{alg:main}}

Naturally, the proof of Theorem \ref{thm:main-oneshell}, which involves analyzing Algorithm \ref{alg:main}, is split into two parts.  The first part involves analyzing {\sc Learn Constant-Degree Coefficients} and the second part involves analyzing {\sc Frequency Marching with Long Stride}.  The two main lemmas that we prove (one for each part) are stated below.

It will be important to note that the first part does not depend on the smoothing of the coefficients in the spherical harmonic expansion of $f$.
\begin{lemma}\label{lem:alg1-analysis}
With probability $1 - 2^{-1/\gamma}$ over the samples, the output of {\sc Learn Constant-Degree Coefficients} satisfies the property that there is a rotation $R \in SO(3)$ such that 
\[
\sum_{l = 0}^C \sum_{m = -l}^l |\wt{f_{lm}} - R(f)_{lm}|^2 \leq \gamma \,.
\]
This statement holds with no smoothing on the coefficients of $f$.
\end{lemma}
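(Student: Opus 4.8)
The plan is to split the argument into a concentration step, a completeness step, and a soundness step, using throughout that $C$ is a universal constant, so that $n=(C+1)^2$, the number $k$ of generators, the degrees and coefficient sizes of $P_1,\dots,P_k$, and all constants derived from them are universal and in particular independent of $N$ and of the smoothing parameter $\delta$. Consequently none of the three steps will invoke smoothing, which yields the last sentence of the lemma for free.

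First I would handle concentration. Since $\norm{f}_2\le 1$ and any $R\in SO(3)$ acts unitarily on the spherical-harmonic coefficients (rotations preserve the $L^2$ norm, so by Parseval the action is unitary block-by-block, cf.\ Fact~\ref{fact:harmonic-basic}), we get $\norm{R(f_{\le C})}_2\le 1$ for all $R$, so the hypothesis $\norm{g\cdot x}_2\le 1$ of Lemma~\ref{lem:est-invariant} holds for the action of $SO(3)$ on $\C^{(C+1)^2}$. After rescaling $P_1,\dots,P_k$ so their coefficients have magnitude at most $1$ (possible since there are finitely many and $C$ is constant), apply Lemma~\ref{lem:est-invariant} with target accuracy $\eta$ chosen a constant factor below the test threshold of Algorithm~\ref{alg:const}, and with failure probability $\tau=2^{-1/\gamma}$. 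Because $\log(1/\tau)=\Theta(1/\gamma)$, $1/\eta=\poly(1/\gamma)$, and $n^{O(1)},\sigma^{O(1)}$ contribute only $\poly(\sigma)$, the stated $Q=\poly(\sigma,1/\gamma)$ samples suffice; so with probability $1-2^{-1/\gamma}$ we have $\abs{\wt{P_j}-P_j(f)}\le\eta$ for every $j$. Condition on this event from now on.

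Next, completeness: the for-loop does produce an output. Each $P_j$ has constant degree and coefficients bounded by $1$ in $O(1)$ variables, hence is Lipschitz with a universal constant on the domain $\{\abs{f_{lm}}\le 1\}$; since the grid spacing is a sufficiently small power of $\gamma$, there is a grid point $\{f_{lm}'\}$ within $\ell_2$-distance a small power of $\gamma$ of the true coefficients $\{f_{lm}\}$, so $\abs{P_j(\{f_{lm}'\})-P_j(f)}$ is also a small power of $\gamma$. Combined with the concentration bound, $\abs{P_j(\{f_{lm}'\})-\wt{P_j}}$ falls below the test threshold for every $j$, so $\{f_{lm}'\}$ passes and the algorithm outputs some guess $\{\wt{f_{lm}}\}$. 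For soundness I would argue that \emph{any} passing guess is $\gamma$-close up to rotation: since $P_1,\dots,P_k$ generate the invariant ring of $SO(3)$ on $\C^{(C+1)^2}$ they separate orbits (Fact~\ref{fact:invariant-poly}(3)), and the effective bound of \cite{solerno1991effective} (Theorem~7) upgrades this to a {\L}ojasiewicz-type inequality with universal exponent $K$ and constant depending only on $C$ and the $P_j$: if $\max_j\abs{P_j(x)-P_j(x')}$ is at most an appropriate small power of $\sqrt{\gamma}$, then $d_{SO(3)}(x,x')\le\sqrt{\gamma}$, i.e.\ $d_{SO(3)}(x,x')^2\le\gamma$. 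For the output, the triangle inequality combining the passed test with the concentration bound gives $\max_j\abs{P_j(\{\wt{f_{lm}}\})-P_j(f)}\le(\text{test threshold})+\eta$, which by the choice of $\eta$ and of the grid discretization (set in terms of the universal $K$) lies in the regime of the {\L}ojasiewicz inequality for target accuracy $\sqrt{\gamma}$. Hence $d_{SO(3)}(f,\{\wt{f_{lm}}\})^2\le\gamma$, restricted to the spherical harmonics of degree at most $C$, which is exactly the claimed bound $\sum_{l=0}^C\sum_{m=-l}^l\abs{\wt{f_{lm}}-R(f)_{lm}}^2\le\gamma$ for the minimizing $R$.

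The hard part is not any single step but making the interlocking choices of grid fineness, estimation accuracy $\eta$, and test threshold consistent with the exponent $K$ coming out of the effective algebraic-geometry bound, and verifying the claimed sample size realizes failure probability $2^{-1/\gamma}$; there is no genuine technical difficulty here precisely because $C$ is constant, which is why the substance of the paper lies in the frequency-marching lemma (Lemma~\ref{lem:well-conditioned}) instead.
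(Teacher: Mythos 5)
Your proposal matches the paper's proof essentially step for step: estimate the generators via Lemma~\ref{lem:est-invariant}, show a sufficiently fine grid point passes the test, and convert approximate agreement of the generating invariants into closeness up to rotation via the effective {\L}ojasiewicz-type bound of Theorem~7 of \cite{solerno1991effective}, with all constants universal because $C$ is constant and hence no smoothing needed. The only detail you compress is that the paper applies Theorem~7 (after splitting into real and imaginary parts) to the real variety of rotation-equivalent \emph{pairs} of coefficient vectors, and then converts the resulting Euclidean distance to that variety into the stated bound on $d_{SO(3)}$ by a short triangle-inequality argument using unitarity of the rotation action, which you implicitly fold into your stated {\L}ojasiewicz inequality.
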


The second part does depend crucially on the smoothing of the coefficients in the spherical harmonic expansion of $f$.
\begin{lemma}\label{lem:alg2-analysis}
With probability $1 - 2^{-L^{0.1}}$ over the random smoothing of the coefficients, the following holds.  Given any initial estimates for  {\sc Frequency Marching with Long Stride} that satisfy 
\[
\sum_{l = 0}^{0.9999L} \sum_{m = -l}^l |\wt{f_{lm}} - f_{lm}|^2 \leq \gamma' 
\]
for some sufficiently small $\gamma' \leq (\delta / L)^{O(1)}$, the algorithm, with probability $1 - 2^{-L}$ over the samples, outputs a solution $\{\wt{f_{L(-L)}}, \dots , \wt{f_{(LL)}} \}$ satisfying
\[
\sum_{m = -L}^L |\wt{f_{Lm}} - f_{Lm}|^2 \leq \poly(L/\delta) (\gamma' + \gamma) \,.
\]
\end{lemma}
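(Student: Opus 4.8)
The plan is to reduce the error bound to (i) a condition-number bound for the linear system $MX = \mathcal{I}$ and (ii) a perturbation argument tracking how the errors in the input estimates $\{\wt{f_{lm}}\}_{l \le 0.9999L}$ and the sampling errors in $\wt{\mathcal{I}_{a,b,L}}$ feed through. First I would set up the exact and approximate systems: let $M^\star$ be the matrix built from the \emph{true} coefficients $\{f_{lm}\}$ and $\mathcal{I}^\star$ the vector of true invariant values $\mathcal{I}_{a,b,L}(f)$; then the true degree-$L$ coefficients $\{f_{Lm}\}$ form an exact solution $M^\star X^\star = \mathcal{I}^\star$ (this is just the definition \eqref{eq:deg3-invariant} of the degree-$3$ invariant, split into the layered form). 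The algorithm instead solves the least-squares problem $\arg\min_X \norm{MX - \mathcal{I}}_2^2$. The standard least-squares perturbation bound then gives
\[
\norm{\wt{X} - X^\star}_2 \le \frac{1}{\sigma_{\min}(M)}\Big( \norm{\mathcal{I} - \mathcal{I}^\star}_2 + \norm{(M - M^\star)X^\star}_2 \Big) + (\text{higher-order terms}),
\]
so everything comes down to lower bounding $\sigma_{\min}(M)$ and upper bounding the two perturbation terms.

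For the perturbation terms: $\norm{\mathcal{I} - \mathcal{I}^\star}_2$ is controlled by Lemma~\ref{lem:est-invariant} — with $Q = \poly(L\sigma/(\delta\gamma))$ samples each of the $(0.9999L)^2$ entries is accurate to within $\poly(\delta/(L\sigma))\cdot(\text{tiny})$ with failure probability $2^{-L}$, contributing at most $\poly(L)\cdot(\text{desired accuracy})$. The term $\norm{(M - M^\star)X^\star}_2$ is where the input error $\gamma'$ enters: each entry of $M_{(a,b)}$ is a quadratic form $\sum (-1)^{k_3}\langle ak_1 bk_2 | L(-k_3)\rangle \wt{f_{ak_1}}\wt{f_{bk_2}}$ in the input estimates, versus the same form in the true $f_{ak_1}f_{bk_2}$; since the CG coefficients are bounded by $1$ (Fact~\ref{fact:CG-trivial}), $\norm{f}_2 \le 1$, and $\norm{\wt f - f}_2 \le \sqrt{\gamma'}$, a routine bilinear expansion bounds $\norm{M - M^\star}_F \le \poly(L)\sqrt{\gamma'}$, hence $\norm{(M-M^\star)X^\star}_2 \le \poly(L)\sqrt{\gamma'}\,\norm{X^\star}_2 \le \poly(L)\sqrt{\gamma'}$. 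Combining, $\norm{\wt X - X^\star}_2^2 \le \poly(L)/\sigma_{\min}(M)^2 \cdot (\gamma' + \gamma)$, which is the claimed bound once we know $\sigma_{\min}(M) \ge 1/\poly(L/\delta)$.

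The main obstacle — and the heart of the lemma — is the well-conditioning claim: that with probability $1 - 2^{-L^{0.1}}$ over the $\delta$-smoothing, $\sigma_{\min}(M) \ge (\delta/L)^{O(1)}$, i.e., the $(0.9999L)^2 \times (2L+1)$ matrix $M$ (equivalently $M^\star$, since $M$ and $M^\star$ differ by only $\poly(L)\sqrt{\gamma'}$ in Frobenius norm and $\gamma'$ is taken sufficiently small) has all $2L+1$ singular values polynomially bounded away from zero. This is exactly Lemma~\ref{lem:well-conditioned}, whose proof I would invoke wholesale: identify a structured square submatrix whose determinant is bounded below using the combinatorial structure of where the large CG coefficients sit (Lemma~\ref{lemma:good-assignment}) together with polynomial anti-concentration of the smoothed coefficients (Lemma~\ref{lem:almostfullrank} and Section~\ref{sec:poly-anticoncentration}), conclude a lower bound on most singular values from the determinant bound, and then use the remaining rows of $M$ — of which there are $\Omega(L^2) \gg L$ — to knock out the last few small singular values with high probability over the smoothing. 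Here the entries of $M$ depend on $\{f_{lm}\}_{l \le 0.9999L}$, which are themselves $\delta$-smoothed, so the anti-concentration is over genuinely independent Gaussian perturbations; the subtlety, handled in Section~\ref{sec:well-conditioned}, is that the CG coefficients can be exponentially small so the determinant lower bound is exponentially small, yet this still yields a $1/\poly$ bound on the bulk of the spectrum because the matrix is tall and the extra rows provide the needed slack. Finally I would assemble the pieces: invoke Lemma~\ref{lem:well-conditioned} on the event of probability $1 - 2^{-L^{0.1}}$ over smoothing, then on that event take the probability-$1-2^{-L}$ event over samples from Lemma~\ref{lem:est-invariant}, plug the bounds into the least-squares perturbation inequality, and read off $\sum_{m=-L}^{L}|\wt{f_{Lm}} - f_{Lm}|^2 \le \poly(L/\delta)(\gamma' + \gamma)$.
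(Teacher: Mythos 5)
Your proposal is correct and follows essentially the same route as the paper: invoke Lemma~\ref{lem:well-conditioned} for the smoothed $M^{\textsf{truth}}$, transfer the singular-value bound to $M$ using $\norm{M - M^{\textsf{truth}}}_F \leq \poly(L/\delta)\sqrt{\gamma'}$, control $\norm{\mcl{I} - \mcl{I}^\star}$ via Lemma~\ref{lem:est-invariant}, and conclude by a least-squares perturbation argument (the paper phrases it slightly more directly, comparing the objective at the true $f_L$ with that at the minimizer and dividing by $\rho_{\min}$, which avoids any higher-order terms, but this is the same argument).
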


The proof of Lemma~\ref{lem:alg1-analysis} is given in Appendix~\ref{appendix:const-degree}.  The proof of Lemma~\ref{lem:alg2-analysis} is given in Section~\ref{sec:analysis}.  Now we can prove Theorem \ref{thm:main-oneshell} using Lemma \ref{lem:alg1-analysis} and Lemma \ref{lem:alg2-analysis}.

\begin{proof}[Proof of Theorem \ref{thm:main-oneshell}]
By Lemma \ref{lem:alg1-analysis}, with probability $0.99$, we recover coefficients $\{ \wt{f_{lm}} \}_{l \leq C}$ such that there is a rotation $R \in SO(3)$ such that
\[
\sum_{l = 0}^C \sum_{m = -l}^l |\wt{f_{lm}} - R(f)_{lm}|^2 \leq \gamma \,,
\]
where $\gamma = \left( \delta/N\right)^{O(\log N)} \eps^{O(1)}$.

Now without loss of generality we may pretend that the hidden function is actually $R(f)$ and repeatedly apply Lemma \ref{lem:alg2-analysis}.  Applying Lemma \ref{lem:alg2-analysis} for $L = C, C+1 , \dots 1.0001C$, we get that
\[
\sum_{l= 0}^{1.0001C}\sum_{m = -l}^l  |\wt{f_{lm}} - R(f)_{lm}|^2  \leq \poly(L/\delta) \gamma \,.
\]
Now we can repeat the above argument for $L = 1.0001C, \dots , (1.0001)^2C$.  Overall, we repeat this argument at most $O(\log N)$ times before we get to $L = N$.  We conclude that with probability at least
\[
1 - (2^{-C^{0.1}} + 2^{-C}) + \dots + (2^{-N^{0.1}} + 2^{-N}) \geq 0.99
\]
(since $C$ is a sufficiently large universal constant), we have that
\[
\sum_{l= 0}^{N}\sum_{m = -l}^l |\wt{f_{lm}} - R(f)_{lm}|^2  \leq \poly(N/\delta)^{O(\log N)} \gamma \leq \eps 
\]
and we are done.  Note it is clear that the total number of samples used and the total runtime of both subroutines is bounded by $(N/\delta)^{O(\log N)} \poly\left(\sigma, 1/\eps \right) $.
\end{proof}

\section{Analysis of {\sc Frequency Marching with Long Stride}}\label{sec:analysis}
In this section, we prove Lemma \ref{lem:alg2-analysis}.  The first step will be to better understand the CG-coefficients that appear in the system at the crux of the algorithm. 

\subsection{Properties of CG-coefficients}\label{sec:CG-coeffs}

First, we note that the CG-coefficients admit a not-so-simple explicit formula (see \cite{bandeira2018estimation}).

\begin{fact}[Explicit Formulas]\label{fact:CG-explicit}
\begin{align*}
\langle l_1 m_1 l_2 m_2 | l m \rangle = &1_{m_1 + m_2 = m} \sqrt{ \frac{(2l+1)(l + l_1 - l_2)!(l - l_1 + l_2)!(l_1 + l_2 - l)!}{(l_1 + l_2 + l + 1)!}}     \\
 &\times \sqrt{(l+m)! ( l - m)! (l_1 - m_1)! (l_1 + m_1)! (l_2 - m_2)! (l_2 + m_2)!}  \\
 & \times\sum_{k}\frac{(-1)^k}{k!(l_1 + l_2 -l + k)!(l_1 - m_1 - k)!(l_2 + m_2 - k)!(l - l_2 + m_1 + k)!(l - l_1 - m_2 + k)!}
\end{align*}
where the sum in the above expression is over all integers $k$ for which the factorials in the summand are defined.
\end{fact}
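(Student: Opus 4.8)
The plan is to establish Fact~\ref{fact:CG-explicit} by recognizing $\langle l_1 m_1 l_2 m_2 \,|\, l m\rangle$ as the entries of the unitary change of basis in the Clebsch--Gordan decomposition $V_{l_1}\otimes V_{l_2}\cong\bigoplus_{l\,=\,|l_1-l_2|}^{l_1+l_2} V_l$ of irreducible $SU(2)$-representations (equivalently of $SO(3)$, via the spherical harmonics in Fact~\ref{fact:harmonic-basic}), and then computing those entries in an explicit polynomial model. I would realize $V_l$ on the homogeneous degree-$2l$ polynomials in two variables $(u,v)$, with orthonormal weight basis $\lvert l m\rangle = u^{l+m}v^{l-m}/\sqrt{(l+m)!\,(l-m)!}$; in this model $J_+ = u\,\partial_v$, $J_- = v\,\partial_u$, $J_z = \tfrac12(u\partial_u - v\partial_v)$, the ladder relations $J_\pm\lvert l m\rangle = \sqrt{(l\mp m)(l\pm m+1)}\,\lvert l(m\pm1)\rangle$ hold, and the $SU(2)$-invariant (Bargmann) inner product makes monomials orthogonal with $\lVert u^a v^b\rVert^2 = a!\,b!$. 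The tensor product is then polynomials in two pairs $(u_1,v_1),(u_2,v_2)$, and since weights add under tensoring, $\langle l_1 m_1 l_2 m_2 | l m\rangle = 0$ unless $m_1+m_2=m$ --- this accounts for the indicator $1_{m_1+m_2=m}$.

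The key structural point is that, for each admissible $l$, the (essentially unique) $SU(2)$-equivariant embedding $V_l\hookrightarrow V_{l_1}\otimes V_{l_2}$ carries the highest-weight vector to a scalar multiple of
\[
\Phi_{l}\;:=\;(u_1 v_2 - u_2 v_1)^{\,l_1+l_2-l}\; u_1^{\,l+l_1-l_2}\, u_2^{\,l-l_1+l_2}\,,
\]
because $u_1 v_2 - u_2 v_1$ is $SU(2)$-invariant, $\Phi_l$ has the correct bidegree $(2l_1,2l_2)$, is annihilated by $J_+^{(1)}+J_+^{(2)}=u_1\partial_{v_1}+u_2\partial_{v_2}$, and has $J_z$-weight exactly $l$. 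Hence the image of $\lvert l m\rangle$ is proportional to $(J_-^{(1)}+J_-^{(2)})^{\,l-m}\Phi_l$, and $\langle l_1 m_1 l_2 m_2 | l m\rangle$ is, up to the overall normalization discussed below, the coefficient of $u_1^{l_1+m_1}v_1^{l_1-m_1}u_2^{l_2+m_2}v_2^{l_2-m_2}$ in this polynomial. Extracting it is pure bookkeeping: expand $(u_1 v_2 - u_2 v_1)^{l_1+l_2-l}$ by the binomial theorem and the power $(J_-^{(1)}+J_-^{(2)})^{l-m}$ of the lowering operator by the multinomial theorem; matching exponents again forces $m_1+m_2=m$ and collapses the double sum to a single alternating sum over one free index $k$, which is exactly $\sum_k \frac{(-1)^k}{k!\,(l_1+l_2-l+k)!\,(l_1-m_1-k)!\,(l_2+m_2-k)!\,(l-l_2+m_1+k)!\,(l-l_1-m_2+k)!}$ (up to the standard freedom of reindexing $k$), the range of $k$ being cut out by nonnegativity of the factorial arguments. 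The factor $\sqrt{(l+m)!(l-m)!(l_1-m_1)!(l_1+m_1)!(l_2-m_2)!(l_2+m_2)!}$ is produced by re-expressing monomials in terms of the three orthonormal weight bases.

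It remains to pin down the $l,l_1,l_2$-dependent scalar $\sqrt{\tfrac{(2l+1)(l+l_1-l_2)!(l-l_1+l_2)!(l_1+l_2-l)!}{(l_1+l_2+l+1)!}}$ and the sign. The scalar is fixed by demanding that the coupled basis be orthonormal, i.e.\ that $\iota\colon V_l\hookrightarrow V_{l_1}\otimes V_{l_2}$ be an isometry; equivalently one computes $\lVert\Phi_l\rVert^2$ in the $u^av^b\mapsto a!\,b!$ inner product, tracks the norm growth $\lVert (J_-^{(1)}+J_-^{(2)})^{l-m}\Phi_l\rVert$ inherited from the ladder relations, and divides. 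The sign is then pinned down by the Condon--Shortley convention $\langle l_1 l_1\, l_2 (l-l_1)\,|\,l l\rangle>0$, which matches $\Phi_l$ as written. Finally, the triangle condition of Fact~\ref{fact:CG-trivial} ($|l_1-l_2|\le l\le l_1+l_2$) is precisely the range of $l$ over which $V_l$ occurs in the decomposition; outside it the prefactor contains the factorial of a negative integer (read as $0$), consistent with the coefficient vanishing there.

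I expect the main obstacle to be the normalization step --- evaluating $\lVert\Phi_l\rVert^2 = \lVert (u_1 v_2 - u_2 v_1)^{l_1+l_2-l}\, u_1^{l+l_1-l_2}\, u_2^{l-l_1+l_2}\rVert^2$ in closed form. Expanding the binomial gives a single sum of products of squared binomial coefficients and factorials, and collapsing it to the clean ratio involving $(l_1+l_2+l+1)!$ requires a Vandermonde/Saalsch\"utz-type hypergeometric identity, or, more painlessly, a generating-function argument --- for instance integrating $\lvert u_1 v_2 - u_2 v_1\rvert^{2(l_1+l_2-l)}$ (times the appropriate powers of $|u_1|,|u_2|$) against complex Gaussian measures on $(u_1,v_1)$ and $(u_2,v_2)$ and using rotational invariance. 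Everything else --- the equivariance and highest-weight property of $\Phi_l$, the exponent bookkeeping that yields the single-sum form, and the phase convention --- is routine once the polynomial model is in place.
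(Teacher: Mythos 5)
Your proposal is correct in outline, but note that the paper does not prove this statement at all: it is quoted as a known fact (the classical Racah/van der Waerden formula for Clebsch--Gordan coefficients) with a pointer to \cite{bandeira2018estimation}, and the only things actually used downstream are the magnitude bound and triangle condition of Fact~\ref{fact:CG-trivial}, the orthogonality/symmetry relations of Fact~\ref{fact:CG-orthogonality}, and the single-term specialization $m_1=l_1$ in Claim~\ref{claim:large-CG-coeffs}. What you sketch is the standard textbook derivation in the Bargmann polynomial model: your ladder operators, the orthonormal monomial basis, the highest-weight vector $\Phi_l=(u_1v_2-u_2v_1)^{l_1+l_2-l}u_1^{l+l_1-l_2}u_2^{l-l_1+l_2}$ (which is indeed the unique such vector up to scale, by multiplicity one), the collapse of the double binomial expansion to a single alternating sum, the isometry normalization, and the Condon--Shortley sign are all right, and you correctly identify the evaluation of $\lVert\Phi_l\rVert^2$ (a Vandermonde/Saalsch\"utz-type identity or Gaussian integral) as the only nontrivial computation. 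Two small cautions: the derivation you describe most naturally produces the van der Waerden form with $(l_1+l_2-l-k)!$ in the denominator, and passing to the form printed in the paper, with $(l_1+l_2-l+k)!$, via the reindexing $k\mapsto l_1+l_2-l-k$ also changes the overall sign by $(-1)^{l_1+l_2-l}$, so "up to reindexing" needs care (the printed form may simply carry a convention or typo inherited from the source; in the paper's actual uses, which involve only magnitudes and the one-term case, the forms coincide). Also, strictly speaking the prefactor outside the triangle range is not "a factorial of a negative integer read as $0$" --- the vanishing there is cleanest to state as $V_l$ simply not occurring in the decomposition, which is how Fact~\ref{fact:CG-trivial} should be read.
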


The next important fact to note is that the CG-coefficients satisfy several symmetry and orthogonality relations (again see \cite{bandeira2018estimation}).

\begin{fact}[Orthogonality Relations]\label{fact:CG-orthogonality}
The CG-coefficients satisfy the following relations
\begin{enumerate}
    \item Symmetry
    \begin{align*}
    \langle l_1k_1 l_2k_2| lk \rangle     &= (-1)^{l_1 + l_2 - l}\langle l_2k_2 l_1k_1| lk \rangle \\
    &= (-1)^{l_1 - k_1} \sqrt{\frac{2l + 1}{2l_2 + 1}} \langle l_1k_1l(-k)| l_2(-k_2) \rangle
    \end{align*}
    \item Orthogonality type-1
    \begin{align*}
        \sum_{l,k} \langle l_1k_1 l_2k_2| lk \rangle\langle l_1k_1' l_2k_2'| lk \rangle = 1_{k_1 = k_1'}1_{k_2 = k_2'}
    \end{align*}
    \item Orthogonality type-2
    \begin{align*}
        \sum_{k_1,k_2} \langle l_1k_1 l_2k_2| lk \rangle\langle l_1k_1 l_2k_2| l'k' \rangle = 1_{l = l'}1_{k = k'}
    \end{align*}
\end{enumerate}
\end{fact}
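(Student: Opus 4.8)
The plan is to deduce all three families of identities from the representation-theoretic meaning of the Clebsch--Gordan coefficients, namely that they are the entries of the (real) orthogonal change-of-basis matrix realizing the decomposition $V_{l_1}\otimes V_{l_2}\cong\bigoplus_{l=\abs{l_1-l_2}}^{l_1+l_2}V_l$ of a tensor product of irreducible $\SO(3)$-representations, and to fall back on the explicit formula of Fact~\ref{fact:CG-explicit} only where a specific sign or normalization must be pinned down. Set up notation: let $u_{lk}$ be the standard weight basis of $V_l$, so that $\{u_{l_1 k_1}\otimes u_{l_2 k_2}\}$ is the product orthonormal basis of $V_{l_1}\otimes V_{l_2}$, and let $\{v_{lk}\}$ be the coupled orthonormal basis adapted to the direct-sum decomposition above, with
\[
v_{lk}=\sum_{k_1,k_2}\la l_1 k_1 l_2 k_2 | l k\ra\, u_{l_1 k_1}\otimes u_{l_2 k_2}.
\]
The coefficients are uniquely determined once the Condon--Shortley phase convention is fixed; this is precisely the convention implicitly encoded in Fact~\ref{fact:CG-explicit}, and under it all coefficients are real.

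Granting this, the two orthogonality relations are immediate. Orthogonality type-2 is just the statement that the $v_{lk}$ are orthonormal: expanding $\langle v_{lk}, v_{l'k'}\rangle$ via the display above and using orthonormality of the product basis gives $\sum_{k_1,k_2}\la l_1 k_1 l_2 k_2 | l k\ra\la l_1 k_1 l_2 k_2 | l'k'\ra = 1_{l=l'}1_{k=k'}$. Orthogonality type-1 is the dual fact obtained by inverting the change of basis (its matrix being orthogonal, the inverse is the transpose): $u_{l_1 k_1}\otimes u_{l_2 k_2}=\sum_{l,k}\la l_1 k_1 l_2 k_2 | l k\ra\, v_{lk}$, and then expanding $\langle u_{l_1 k_1}\otimes u_{l_2 k_2},\, u_{l_1 k_1'}\otimes u_{l_2 k_2'}\rangle$ and using orthonormality of the $v_{lk}$ gives $\sum_{l,k}\la l_1 k_1 l_2 k_2 | l k\ra\la l_1 k_1' l_2 k_2' | l k\ra = 1_{k_1=k_1'}1_{k_2=k_2'}$. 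The fact that the nominal sums in the statement run over all $(k_1,k_2)$, resp.\ all $(l,k)$, rather than over the ``diagonal'' $k_1+k_2=k$, is harmless: every CG coefficient with $k_1+k_2\ne k$ vanishes (the $1_{m_1+m_2=m}$ factor of Fact~\ref{fact:CG-explicit}), so the extra terms are all zero.

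For the symmetry relations I would route through the Wigner $3j$-symbol, the manifestly permutation-symmetric repackaging defined (in the same convention) by
\[
\la l_1 m_1 l_2 m_2 | l m\ra = (-1)^{l_1-l_2+m}\sqrt{2l+1}\begin{pmatrix} l_1 & l_2 & l \\ m_1 & m_2 & -m\end{pmatrix}.
\]
The $3j$-symbol is invariant under cyclic permutations of its columns, picks up $(-1)^{l_1+l_2+l}$ under a transposition of two columns, and also picks up $(-1)^{l_1+l_2+l}$ when all three lower entries are negated. Feeding the transposition $l_1\leftrightarrow l_2$ back through the defining relation yields the first symmetry relation (the prefactor becoming $(-1)^{l_1+l_2-l}$ after the $(-1)^{2(l_1-l_2)}=1$ cancellation); feeding in the transposition that swaps the $l_2$-column with the $l$-column, followed by the all-$m$ negation, yields the second, with the $\sqrt{(2l+1)/(2l_2+1)}$ arising from the mismatch between the $\sqrt{2l+1}$ and $\sqrt{2l_2+1}$ prefactors on the two sides and $(-1)^{l_1-k_1}$ from gathering the residual phases. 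Alternatively, and as a cross-check, each relation can be proved head-on from Fact~\ref{fact:CG-explicit}: swapping $(l_1,m_1)\leftrightarrow(l_2,m_2)$ corresponds to the reindexing $k\mapsto l_1+l_2-l-k$ of the inner summation, after which the summand reproduces the original up to the overall sign $(-1)^{l_1+l_2-l}$, and the second relation follows from an analogous but more delicate reindexing together with the factorial identities relating the two normalization prefactors.

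I expect the only genuine obstacle to be the symmetry relations — specifically, pinning down the phase $(-1)^{l_1-k_1}$ and the ratio $\sqrt{(2l+1)/(2l_2+1)}$ in the second one. The orthogonality relations require essentially no computation once the change-of-basis picture is in place, whereas the symmetry phases depend delicately on the Condon--Shortley conventions, which is exactly the bookkeeping that the $3j$-symbol formalism is designed to mechanize. Since all of these are classical identities in the representation theory of $SU(2)$, in the final writeup I would simply cite a standard reference (e.g.\ the treatment in \cite{bandeira2018estimation}) rather than reproduce the full sign chase.
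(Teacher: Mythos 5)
Your proposal is correct: the coupled-versus-product orthonormal basis picture gives the two orthogonality relations immediately (the vanishing of coefficients with $k_1+k_2\neq k$ indeed makes the unrestricted sums harmless), and routing the symmetry relations through the Wigner $3j$-symbol, or a direct reindexing of the sum in Fact~\ref{fact:CG-explicit}, is the standard way to pin down the phases $(-1)^{l_1+l_2-l}$ and $(-1)^{l_1-k_1}\sqrt{(2l+1)/(2l_2+1)}$. The paper itself offers no proof of this Fact — it is stated as a classical identity with a pointer to \cite{bandeira2018estimation} and the standard angular-momentum literature — so your plan to sketch the representation-theoretic argument and then cite a reference for the sign bookkeeping is exactly in line with how the paper treats it.
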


\subsubsection{CG-coefficients when $l_1 = m_1$}

The CG-coefficients are significantly easier to understand when $l_1 = m_1$ because the sum only contains one term (for $k = 0$).  Understanding these coefficients will be a crucial part of our algorithm.  Throughout this section, we will assume that $l$ is sufficiently large and restrict ourselves to considering $l_1,l_2,l$ satisfying the following conditions: 
\begin{itemize}
    \item $0.01l < l_1 < 0.0101 l$
    \item $0.9901 l < l_2 < 0.9999l$
\end{itemize}

The first claim tells us that for certain $k_2,k$, the CG-coefficient  $\langle l_1 l_1  l_2 k_2 | l k \rangle$ is at least inverse-polynomially large.  
\begin{claim}\label{claim:large-CG-coeffs}
Let $l_1, l_2, l$ satisfying $0.01l < l_1 < 0.0101 l$ and $0.9901 l < l_2 < 0.9999l$ be given.  Let  
\[
k = \left\lfloor \frac{l^2 + l_1^2 - l_2^2}{2l_1} \right\rfloor \,.
\]
Then
\[
\abs{\langle l_1 l_1  l_2 (k - l_1)| l k \rangle} \geq \frac{1}{\poly(l)} \,.
\]
\end{claim}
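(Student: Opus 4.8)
The plan is to use the explicit formula from Fact~\ref{fact:CG-explicit} in the special case $l_1 = m_1$, where (as noted) the sum over $k$ collapses to the single term $k=0$. First I would write out
\[
\langle l_1 l_1\, l_2 (k-l_1) \mid l k \rangle
= 1_{(k-l_1) \text{ admissible}} \cdot \sqrt{\frac{(2l+1)(l+l_1-l_2)!(l-l_1+l_2)!(l_1+l_2-l)!}{(l_1+l_2+l+1)!}} \cdot \sqrt{\frac{(l+k)!(l-k)!(2l_1)!(l_2-k+l_1)!(l_2+k-l_1)!}{(l_1 + l_2 - l)!^2 (2l_1 - (l_1+l_2-l))!^2 \cdots}}
\]
— more carefully, with $k$ (the summation index) set to $0$, the summand's denominator is $(l_1+l_2-l)!\,(l_1-m_1)!\,(l_2+m_2)!\,(l-l_2+m_1)!\,(l-l_1-m_2)!$ with $m_1 = l_1$, $m_2 = k - l_1$, so $(l_1 - m_1)! = 0! = 1$. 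The next step is to verify that the choice $k = \lfloor (l^2 + l_1^2 - l_2^2)/(2l_1)\rfloor$ is \emph{admissible}, i.e. that all six factorial arguments appearing (after plugging in $m_1 = l_1$, $m_2 = k - l_1$) are nonnegative integers of size $O(l)$: this is exactly the statement that $l_1, l_2, l$ form a triangle (guaranteed by the hypotheses $0.01 l < l_1 < 0.0101 l$ and $0.9901 l < l_2 < 0.9999 l$, which give $|l_2 - l_1| < l < l_1 + l_2$ for $l$ large) together with $|k| \le l$ and $|k - l_1| \le l_2$. The geometric meaning of the chosen $k$ is that it makes $l - l_2 + m_1 = l - l_2 + l_1 - (\text{something})$ and $l - l_1 - m_2 = l - k$ both roughly balanced — concretely, $k$ is chosen so that $l^2 + l_1^2 - 2 l_1 k \approx l_2^2$, which is the "law of cosines" sweet spot ensuring no factorial argument is near $0$ or near its maximum.

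The heart of the argument is then an elementary but careful estimate: a product of factorials $\prod a_i! / \prod b_j!$ where $\sum a_i = \sum b_j$ (up to $O(1)$) and every $a_i, b_j$ is $\Theta(l)$ and bounded away from its extreme values by $\Omega(l)$ is $\exp(\pm O(l))$ in general — but here I only need a \emph{one-sided} lower bound of the form $1/\poly(l)$, which means I must show the exponential factors cancel. The clean way to do this is via Stirling: write $\log(\text{CG coefficient}) = \frac12\sum(\text{terms of the form } a\log a)$ and observe that the leading $\Theta(l \log l)$ contributions cancel precisely because of the conservation identities among the factorial arguments (the triangle relations $2 \times (l_1 + l_2 - l) + \cdots$ telescoping), leaving only lower-order terms. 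Rather than grinding through Stirling, a slicker route I would try first: bound the CG coefficient below by pairing it against \emph{all} of the other CG coefficients $\langle l_1 l_1\, l_2 k_2' \mid l k'\rangle$ via Orthogonality type-1 or type-2 from Fact~\ref{fact:CG-orthogonality}. Orthogonality type-2 with $k_1 = k_1' = l_1$ gives $\sum_{k_2} \langle l_1 l_1\, l_2 k_2 \mid l k\rangle^2 = 1$ (taking $l = l', k = k'$), and since the number of admissible $k_2$ is $O(l)$, some term is at least $1/\poly(l)$ — but this only shows \emph{some} $k_2$ works, not the specific one in the claim. To pin down the specific $k$, I would then argue that the coefficients $\langle l_1 l_1\, l_2 (k_2) \mid l (k_2 + l_1)\rangle$, as a function of $k_2$, are \emph{unimodal} with peak near the claimed value (each consecutive ratio is a simple rational function of the factorial arguments, monotone through the claimed $k$), so the maximum term — which is $\geq 1/\poly(l)$ by the orthogonality/counting bound — is attained within $O(1)$ of the claimed $k$, and a final $O(1)$-step ratio comparison transfers the bound to the exact $k$.

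The step I expect to be the main obstacle is the last one: controlling the ratio of consecutive CG coefficients to establish unimodality and to show the peak is within $O(1)$ of $\lfloor (l^2+l_1^2-l_2^2)/(2l_1)\rfloor$ rather than merely within $o(l)$. The ratio $\langle l_1 l_1\, l_2 (k_2+1) \mid l(k+1)\rangle / \langle l_1 l_1\, l_2 k_2 \mid l k\rangle$ works out (from the explicit formula with the single-term sum) to something like $\sqrt{\frac{(l-k)(l_2+k-l_1+1)}{(l+k+1)(l_2-k+l_1)}}$ — a clean expression — and I would need to verify this crosses $1$ exactly when $k$ crosses the claimed value, with derivative bounded away from $0$ so that within $O(1)$ steps the ratio is bounded away from $1$ by $1 + \Omega(1/l)$, not $1 + \Omega(1/l^2)$; getting the polynomial degree right here is where the bookkeeping is delicate. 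If that route proves too fiddly I would fall back to the direct Stirling computation, where the obstacle instead becomes carefully tracking that all the $a \log a$ terms cancel to leave a net $-O(\log l)$ rather than $+\Omega(l)$.
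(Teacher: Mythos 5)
Your overall strategy is the same as the paper's: use an orthogonality/normalization identity to show that \emph{some} coefficient in the family $\{\langle l_1 l_1\, l_2 (k-l_1)\mid l k\rangle\}_k$ is at least $1/\poly(l)$, and then use the explicit single-term formula (valid since $m_1=l_1$) to show the claimed $k$ is essentially the argmax, either by Stirling/log-concavity (the paper's route) or by monotonicity of consecutive ratios (an equivalent discrete version; your crossing condition $(l+k)(l-k)=(l_1+l_2-k)(l_2-l_1+k)$ does recover $k=\frac{l^2+l_1^2-l_2^2}{2l_1}$, matching the paper's vanishing-derivative condition).

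However, the normalization step as you state it is wrong, and fixing it requires an ingredient you did not identify. Orthogonality type-2 in Fact~\ref{fact:CG-orthogonality} fixes $(l,k)$ and sums over \emph{both} $k_1$ and $k_2$; you cannot pin $k_1=l_1$ inside it. Indeed, for fixed $k$ the only nonzero term with $k_1=l_1$ is $k_2=k-l_1$, so your claimed identity $\sum_{k_2}\langle l_1 l_1\, l_2 k_2\mid lk\rangle^2=1$ would assert that a single CG coefficient has magnitude $1$, which is false in general. The identity you actually need is over the family with $m_1$ pinned at $l_1$ and the \emph{total} projection $k$ varying, and to get it you must first apply the symmetry relation $\langle l_1k_1l_2k_2\mid lk\rangle=(-1)^{l_1-k_1}\sqrt{\tfrac{2l+1}{2l_1+1}}\,\langle l_1k_1\,l(-k)\mid l_2(-k_2)\rangle$ (suitably permuted) to move the pinned extreme index into the output slot; only then does type-2 orthogonality apply and give $\sum_k \langle l_1 l_1\, l_2(k-l_1)\mid lk\rangle^2=\tfrac{2l+1}{2l_1+1}$, which (with at most $2l+1$ nonzero terms) yields the $1/\poly(l)$ bound on the maximum. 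This symmetry-then-orthogonality step is exactly what the paper does in equation~(\ref{eq:sum_bound}); without it your "slicker route" collapses, and your fallback (a direct Stirling evaluation of the exact value at the claimed $k$, rather than a comparison to the max) is substantially more delicate, as you anticipate.

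One further simplification: your concern about showing the consecutive ratio deviates from $1$ by $\Omega(1/l)$ within $O(1)$ steps (so as to locate the peak within $O(1)$ of the claimed $k$) is unnecessary. You do not need the peak's exact location; you only need that the value at $k=\lfloor\frac{l^2+l_1^2-l_2^2}{2l_1}\rfloor$ is within a $\poly(l)$ factor of the maximum over integer $k$. Since the logarithm of the coefficient is, up to $O(\log l)$, a concave function of $k$ with second derivative $O(1/l)$ whose critical point is the unfloored value, rounding costs only $O(1)$ in the exponent, and the comparison to the maximum is immediate. This is precisely how the paper closes the argument.
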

\begin{proof}

Note
\begin{equation}\label{eq:sum_bound}
\sum_{k}\langle l_1 l_1  l_2 (k - l_1)| l k \rangle^2 = \sum_k \frac{2l+1}{2l_1 + 1} \langle l (-k)  l_2 (k - l_1)| l_1 (-l_1) \rangle^2 = \frac{2l+1}{2l_1 + 1}
\end{equation}
where we first used the symmetry properties and then orthogonality.

\noindent Also note that by Fact \ref{fact:CG-explicit}, we can write
\begin{align*}
\langle l_1 l_1  l_2 (k - l_1)| l k \rangle = &\sqrt{(2l+1)\frac{(l + l_1 - l_2)! (l - l_1 + l_2)! (l_1 + l_2 - l)!}{(l_1 + l_2 + l + 1)!}}  \\
&\times\sqrt{(l +k)! ( l-k)!(2l_1)!(l_2 - k + l_1)!(l_2 - l_1 + k)!}  \\
&\times \frac{1}{(l_1 + l_2 - l)!(l_2 + k - l_1)!(l - l_2 + l_1)!(l -k )! } \,.
\end{align*}

The above rearranges into 
\[
\langle l_1 l_1  l_2 (k - l_1)| l k \rangle = \sqrt{\frac{(2l+1)(l - l_1 + l_2)!(l+k)!(2l_1)!(l_2 - k + l_1)!}{(l_1 + l_2 + l + 1)!(l_1 + l_2 - l)!(l_2 + k - l_1)!(l - l_2 + l_1)! (l-k)! }} \,.
\]
Note that the above quantity is defined for integers $l_1, l_2, k$.  We will now extend it to real-valued inputs and understand the resulting real-valued function.  We then argue about what this means for the function restricted to the integers.  Let $f(x) = x \log x$.  Using Stirling's approximation,
\begin{align*}
\log \langle l_1 l_1  l_2 (k - l_1)| l k \rangle = &O(\log l) + f(l - l_1 + l_2) + f(l + k) + f((2l_1) + f(l_2 - k + l_1) \\ & - f(l_1 + l_2 + l) - f(l_1 + l_2 - l)- f(l_2 + k - l_1) - f(l - l_2 + l_1) - f(l-k) \,.
\end{align*}
Thus, it suffices to understand the function
\[
g(k) = f(l + k) + f(l_2 - k + l_1) -   f(l-k) - f(l_2 + k - l_1)\,.
\]
Note that the second derivative is
\begin{equation}\label{eq:second-deriv}
g''(k) = \frac{1}{(l + k)} + \frac{1}{ l_2 - k + l_1} - \frac{1}{l - k} - \frac{1}{l_2 + k - l_1} \,.
\end{equation}
Since $l_1 + l_2 > l$, it is immediately verified that this function is concave down.  Thus, $g(k)$ is maximized when its derivative is equal to $0$ i.e. when
\[
(l+k)(l-k) = (l_2 - k + l_1)(l_2 + k - l_1)
\]
which rearranges into 
\[
k = \frac{l_1^2 + l^2 - l_2^2}{2l_1} \,.
\]
It can be immediately verified that $c l < k < (1-c)l$ for some fixed constant $c > 0$.  Also note that taking the floor of $k$ affects the value of  
\[
f(l + k) + f(l_2 - k + l_1) -   f(l-k) - f(l_2 + k - l_1)\,.
\]
by at most a $O(\log l)$ factor.  Equation (\ref{eq:sum_bound}) implies that some CG-coefficients must be at least $1/\poly(l)$ and combining with the previous observation completes the proof.
\end{proof}

Actually, the proof of Claim \ref{claim:large-CG-coeffs} can be extended to give a slightly stronger statement, that for $k$ with 
\[
\left\lvert k - \frac{l_1^2 + l^2 - l_2^2}{2l_1} \right\rvert = O(\sqrt{l}) \,,
\]
the CG-coefficient $\langle l_1 l_1  l_2 (k - l_1)| l k \rangle$ is also inverse-polynomially large.
\begin{corollary}\label{corollary:large-CG-coeffs}
Let $l_1, l_2, l$ satisfying $0.01l < l_1 < 0.0101 l$ and $0.9901 l < l_2 < 0.9999l$ be given.  Let  $k$ be such that 
\[
\left\lvert k - \frac{l_1^2 + l^2 - l_2^2}{2l_1} \right\rvert  \leq \sqrt{l} \,,
\]
Then
\[
\abs{\langle l_1 l_1  l_2 (k - l_1)| l k \rangle} \geq \frac{1}{\poly(l)} \,.
\]
\end{corollary}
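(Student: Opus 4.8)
The plan is to reuse the analysis in the proof of Claim~\ref{claim:large-CG-coeffs} rather than redo the Stirling estimate. Recall from that proof that, with $f(x) = x\log x$ and
\[
g(k) \;=\; f(l+k) + f(l_2 - k + l_1) - f(l-k) - f(l_2 + k - l_1),
\]
we have $\log\bigl\lvert\langle l_1 l_1 l_2 (k - l_1)| lk\rangle\bigr\rvert = g(k) + c_0 + O(\log l)$ for a constant $c_0$ not depending on $k$; that $g$ is concave (the sign of~(\ref{eq:second-deriv}), using $l_1 + l_2 > l$); and that $g$ is maximized at $k^\ast = (l^2 + l_1^2 - l_2^2)/(2l_1)$, which satisfies $cl < k^\ast < (1-c)l$ for an absolute constant $c > 0$. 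Feeding the integer point $k = \lfloor k^\ast\rfloor$ from Claim~\ref{claim:large-CG-coeffs} back into the displayed identity, and using $g(\lfloor k^\ast\rfloor)\le g(k^\ast)$, gives $g(k^\ast) + c_0 \ge -O(\log l)$. Thus it suffices to show $g(k) \ge g(k^\ast) - O(\log l)$ for every integer $k$ with $\lvert k - k^\ast\rvert \le \sqrt l$, since then $\log\bigl\lvert\langle l_1 l_1 l_2 (k-l_1)| lk\rangle\bigr\rvert \ge g(k^\ast) + c_0 - O(\log l) \ge -O(\log l)$, which is the claimed inverse-polynomial lower bound.

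To get this I would bound $g''$ over the window $[k^\ast - \sqrt l,\, k^\ast + \sqrt l]$. By~(\ref{eq:second-deriv}), $g''(k)$ is a sum of four terms of the form $\pm 1/(\text{affine in }k)$ with denominators $l+k$, $l_2 - k + l_1$, $l-k$, and $l_2 + k - l_1$. Using $cl < k^\ast < (1-c)l$, the denominators $l \pm k$ are $\Theta(l)$ on the window; for $l_2 - k + l_1$ I would use the identity
\[
l_1 + l_2 - k^\ast \;=\; \frac{(l_1 + l_2 - l)(l_1 + l_2 + l)}{2l_1},
\]
which is $\Theta(l)$ because $l_1 + l_2 - l \ge 0.0001\,l$ in the stated parameter regime, and $l_2 + k - l_1 \ge l_2 - l_1 = \Theta(l)$; shifting $k$ by at most $\sqrt l$ perturbs each denominator by a lower-order amount once $l$ is large. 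Hence $\sup_{[k^\ast - \sqrt l,\,k^\ast + \sqrt l]}\lvert g''\rvert = O(1/l)$. Since $g'(k^\ast) = 0$, Taylor's theorem yields $\lvert g(k) - g(k^\ast)\rvert \le \tfrac12\bigl(\sup\lvert g''\rvert\bigr)(k - k^\ast)^2 = O(1/l)\cdot l = O(1)$ for $\lvert k - k^\ast\rvert \le \sqrt l$, which is even stronger than the $O(\log l)$ slack we need.

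The only step that requires genuine care is checking that all four affine denominators in~(\ref{eq:second-deriv}) remain $\Theta(l)$ across the entire window; the borderline one is $l_2 - k + l_1$ near $k = k^\ast$, which is precisely where the triangle inequality $l_1 + l_2 > l$ is invoked and where it is barely satisfied in the regime $0.01l < l_1 < 0.0101l$, $0.9901l < l_2 < 0.9999l$ --- this is the only place the specific numeric constants enter. Everything else is a one-line Taylor estimate, so I expect no further obstacle.
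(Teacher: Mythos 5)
Your proposal is correct and follows essentially the same route as the paper's own proof: reuse the Stirling expression and concavity from Claim~\ref{claim:large-CG-coeffs}, note $g'(k^*)=0$ and $|g''|=O(1/l)$ on the window, so $|g(k)-g(k^*)|=O(1)$ for $|k-k^*|\le\sqrt{l}$, and combine with the inverse-polynomial bound at $\lfloor k^*\rfloor$. Your explicit check that the four denominators in (\ref{eq:second-deriv}) stay $\Theta(l)$ (via $l_1+l_2-k^* = (l_1+l_2-l)(l_1+l_2+l)/(2l_1)$) just fills in a detail the paper asserts without computation.
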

\begin{proof}
The proof is essentially the same as the proof of Claim \ref{claim:large-CG-coeffs}.  The only additional step needed is as follows.  Note that the derivative of the function
\[
g(k) = f(l + k) + f(l_2 - k + l_1) -   f(l-k) - f(l_2 + k - l_1)\,.
\]
(where $f(x) = x \log x$) is $0$ at 
\[
k_0 = \frac{l_1^2 + l^2 - l_2^2}{2l_1} \,.
\]
Also the second derivative, given by (\ref{eq:second-deriv}), is equal to $O(1/l)$ for the range of parameters that we are allowed.  Thus,
\[
g(k) - g(k_0) \leq O(1/l) \cdot |k - k_0|^2 = O(1)
\]
and combining with the result of Claim \ref{claim:large-CG-coeffs} we get the desired bound.
\end{proof}

\subsection{Bounding The Smallest Singular Value}\label{sec:well-conditioned}

Recall that in {\sc Frequency Marching with Long Stride} we want to solve for the spherical harmonic coefficients of degree $L$ and we take as inputs estimates for the lower-degree coefficients  $\{\wt{f_{lm}} \}_{l \leq L - 1}$.  The key step is solving a linear system in variables $X = \{x_{L(-L)}, \dots , x_{LL} \}$ of the form
\begin{equation}\label{eq:minimization}
\min_X \norm{MX - \mcl{I}}_2^2 
\end{equation}
where $M$ is a $(0.9999L)^2 \times (2L + 1)$ matrix where
\begin{itemize}
    \item The rows of $M$ are indexed by pairs of integers $a,b$ with $a,b \leq 0.9999L$ and the columns are indexed by integers $m = -L, \dots , L$ with entries given by 
    \[
    M_{(a,b) m} = \sum_{\substack{k_1 + k_2 = -m \\ \abs{k_1} \leq a, \abs{k_2} \leq b}} (-1)^{m} \langle ak_1bk_2| L(-m) \rangle \wt{f_{ak_1}} \wt{f_{bk_2}} 
    \]
   where $\wt{f_{lm}}$ (with $l \leq L - 1$) are the estimates for the lower-degree spherical harmonic coefficients. 
    \item The entries of $\mcl{I}$ are indexed by pairs of integers $a,b$ with $a,b \leq 0.9999L$ and are equal to our estimates $\wt{\mcl{I}_{a,b,L}}$ for the invariant polynomial $\mcl{I}_{a,b,L}(f)$
\end{itemize}

\begin{definition}\label{def:key-matrix}
Let $M^{\textsf{truth}}$ be the matrix whose rows are indexed by pairs of integers $a,b$ with $a,b \leq 0.9999L$ and the columns are indexed by integers $m = -L, \dots , L$ with entries given by 
\[
M^{\textsf{truth}}_{(a,b) m} = \sum_{\substack{k_1 + k_2 = -m \\ \abs{k_1} \leq a, \abs{k_2} \leq b}} (-1)^{m} \langle ak_1bk_2| L(-m) \rangle f_{ak_1} f_{bk_2}
\]
\end{definition}
\begin{remark}
Note $M$ would be equal to $M_{\textsf{truth}}$ if our input estimates were exactly correct.
\end{remark}

The key ingredient in the proof of Lemma \ref{lem:alg2-analysis} is to prove that $M^{\textsf{truth}}$ is reasonably well-conditioned (note this is equivalent to lower-bounding its smallest singular value because the largest singular value is trivially bounded above).  Once we prove this, the proof of Lemma \ref{lem:alg2-analysis} will be straight-forward. This is because if our estimates for the lower-degree coefficients  $\{\wt{f_{lm}} \}_{l \leq L - 1}$ and the invariant polynomials $\wt{\mcl{I}_{a,b,L}}$ are close to the truth, then $M$ will be well-conditioned as well.  Then, note that setting $x_{Lm} = f_{Lm}$ for all $m = -L, \dots , L$ would make the quantity $\norm{MX - \mcl{I}}_2^2 $ small. Thus the actual solution to (\ref{eq:minimization}) will be close to $\{f_{L(-L)}, \dots , f_{LL} \}$. 

The key lemma is stated below: 
\begin{lemma}\label{lem:well-conditioned}
There is an absolute constant $K$ such that for any $0< \eps < (\delta/L)^K$, with probability at least 
\[
1 - \eps^{\Omega(L^{0.4})} \,,
\] 
(over the $\delta$-smoothing of the coefficients), the smallest singular value of $M^{\textsf{truth}}$ is at least $\eps$.
\end{lemma}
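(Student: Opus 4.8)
The plan is to lower bound the smallest singular value of $M^{\textsf{truth}}$ in two stages. First I would isolate a carefully chosen square submatrix $M_0$ of $M^{\textsf{truth}}$ whose rows correspond to index pairs $(a,b)$ with $0.01L < a < 0.0101L$ and $0.9901L < b < 0.9999L$ --- precisely the ``edge-case'' regime where Claim \ref{claim:large-CG-coeffs} and Corollary \ref{corollary:large-CG-coeffs} apply --- and argue that $\det(M_0)$ is bounded away from $0$. The point of this regime is that when $a = k_1$ (i.e.\ $k_1$ takes its extreme value $a$), the CG-coefficient $\langle a\, a\, b\, k_2 \mid L(-m)\rangle$ has a one-term explicit formula and is inverse-polynomially large for a whole window of $m$ of width $\Theta(\sqrt L)$ around the ``resonant'' value $m \approx (a^2 + L^2 - b^2)/(2a)$. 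So each row $(a,b)$ of $M_0$, viewed as a polynomial in the (smoothed) lower-degree coefficients $\{f_{lm}\}$, contains a monomial $f_{aa} f_{b(m-a)} x_{Lm}$ with a coefficient that is $1/\poly(L)$ in magnitude, and by varying $a$ and $b$ over the allowed ranges we can arrange for these ``heavy'' monomials to hit distinct columns $m$, or at least to form a combinatorially nondegenerate pattern. This combinatorial step --- showing that one can select $2L+1$ rows so that the heavy entries form a pattern supporting a nonzero determinant with an identifiable leading term --- is exactly the content of a result like Lemma \ref{lemma:good-assignment}, and I would invoke it here.

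Given such a pattern, I would expand $\det(M_0)$ along the Leibniz formula and identify the (or a) dominant term: a product of heavy CG-coefficients times a monomial in the smoothed coefficients $\{f_{lm}\}_{l < L}$. Since the $\{f_{lm}\}$ are $\delta$-smoothed Gaussians, $\det(M_0)$ becomes a nonzero polynomial of degree $O(L)$ in these Gaussian variables, and I would apply a polynomial anti-concentration inequality (Carbery--Wright, or the version cited as Section \ref{sec:poly-anticoncentration}) to conclude that, except with probability $\eps^{\Omega(1)} \cdot \poly(L)$-type bounds sharpened to $\eps^{\Omega(L^{0.4})}$, we have $|\det(M_0)| \geq \eps^{\Theta(L)} \cdot (\delta/L)^{\Theta(L)}$ --- exponentially small, but nonzero. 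Combined with the trivial operator-norm upper bound $\|M_0\| \leq \poly(L)$ (all entries are bounded by products of $|f_{lm}| \le \poly$ via Fact \ref{fact:CG-trivial} and $\|f\|_2 \le 1$ plus smoothing), the determinant lower bound forces all but a vanishing fraction of singular values of $M_0$ to be at least, say, $\eps^{\Theta(L)}$; more carefully, $\sigma_{\min}(M_0) \ge |\det M_0| / \|M_0\|^{2L} \ge \eps^{\Theta(L)}$. But this is not yet the claimed bound $\sigma_{\min} \geq \eps$ with $\eps < (\delta/L)^K$ --- the determinant argument alone only controls the product of singular values, so a few of them could still be as small as $\eps^{\Theta(L)}$.

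The fix --- and what I expect to be the main obstacle --- is the final step: we must promote ``most singular values are large'' to ``all singular values are at least $\eps$'' by using the many extra rows of $M^{\textsf{truth}}$ outside $M_0$. Concretely, suppose $\sigma_{\min}(M^{\textsf{truth}})$ were smaller than $\eps$; then there is a near-null unit vector $v \in \C^{2L+1}$ with $\|M^{\textsf{truth}} v\|_2 < \eps$. By the first stage, $v$ must lie (up to $\eps^{\Theta(L)}$ error) in the span of the few small-singular-value directions of $M_0$ --- a subspace of dimension $O(1)$ or so. One then needs to show that for $v$ ranging over any such low-dimensional ``bad'' subspace, at least one of the remaining rows $M^{\textsf{truth}}_{(a,b)}$ (with $(a,b)$ outside the edge-case window) has $|M^{\textsf{truth}}_{(a,b)} \cdot v|$ bounded below by $(\delta/L)^{O(1)}$ with high probability over the smoothing. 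This is again a net argument plus anti-concentration: fix an $\eps$-net over the bad subspace (small, since it's low-dimensional), and for each net point apply polynomial anti-concentration to the random variable $M^{\textsf{truth}}_{(a,b)} \cdot v$ --- a nonzero polynomial in the $\{f_{lm}\}$ provided some CG-coefficient along the relevant diagonal is nonzero, which we again verify via the edge-case estimates --- then union bound. The delicate points are (i) keeping track of which CG-coefficients are provably nonzero using only the limited toolkit (orthogonality relations and the $l_1 = m_1$ formula), since a full characterization of nonvanishing CG-coefficients is open; (ii) ensuring the bad subspace really is low-dimensional, which comes out of the quantitative form of the determinant bound; and (iii) getting the failure probability down to $\eps^{\Omega(L^{0.4})}$, which requires the anti-concentration bound to be applied with enough room and the net to not be too fine. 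Assembling these gives $\sigma_{\min}(M^{\textsf{truth}}) \geq \eps$ off a failure event of probability $\eps^{\Omega(L^{0.4})}$, as claimed.
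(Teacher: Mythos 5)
Your two-stage architecture matches the paper's (edge-case rows indexed by $A\times B$, the combinatorial Lemma~\ref{lemma:good-assignment}, a determinant lower bound via anti-concentration, then extra rows to finish), but stage~2 as you describe it would fail quantitatively. From an exponentially small determinant bound you cannot conclude that the ``bad'' subspace is $O(1)$-dimensional: if $t$ singular values of the $m\times m$ submatrix ($m\geq 0.95(2L+1)$) lie below a threshold $\tau$, then the determinant is at most $\tau^t\cdot \poly(L)^{2m}$, so a lower bound of the form $(\delta\eps/L)^{\Theta(L)}$ only forces $t=O(L)$ when $\tau$ is a fixed power of $\delta\eps/L$; insisting on $t=O(1)$ would force $\tau$ to be exponentially small, which is useless for the final claim. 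This is why the paper only bounds the $0.0001L$\ts{th} smallest singular value of the submatrix (Lemma~\ref{lem:almostfullrank}) and must handle a bad subspace $V$ of dimension roughly $0.1L$ (also accounting for the $0.05(2L+1)$ columns the square submatrix misses). A net over such a subspace has size exponential in $L$, so your plan of hitting each net point with ``at least one of the remaining rows'' and a single anti-concentration application (failure probability $O(\eps)$ per point) cannot survive the union bound. The paper's fix (Lemma~\ref{lemma:extra-subspace}) is to use $\Omega(L)$ auxiliary rows indexed by $(l_1,l_1+1)$ for odd $l_1\in[0.5L,0.99L]$: these rows involve pairwise disjoint sets of smoothed coefficients, which are also disjoint from the coefficients entering $M^{(A,B)}$, so after conditioning on the first-stage randomness the per-net-point failure probability becomes $(O(\eps))^{\Omega(L)}$, enough to beat the net. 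This independence structure across the extra rows is the missing idea in your stage~2 (and the nonzero-coefficient certificate there comes from the type-2 orthogonality relation of Fact~\ref{fact:CG-orthogonality}, not the edge-case estimates).

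Separately, the failure probability $\eps^{\Omega(L^{0.4})}$ cannot come from Carbery--Wright: the determinant is a polynomial of total degree $2m=\Theta(L)$ in the smoothed coefficients, so Carbery--Wright only yields failure probability $\eps^{O(1/L)}$. The paper proves a tailored anti-concentration bound (Lemma~\ref{lemma:multivariate-poly}, Corollary~\ref{coro:multivariate-anticoncentration}) whose exponent is the ratio of the total degree to the maximum individual degree; the entire purpose of the $O(L^{0.6})$-balancedness in Lemma~\ref{lemma:good-assignment} (transferred to the determinant via Corollary~\ref{corollary:large-coeff}) is to cap each individual degree at $O(L^{0.6})$ so that this ratio is $\Omega(L^{0.4})$. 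You invoke the balancedness lemma but then feed it into an anti-concentration statement that ignores individual degrees, so the claimed probability would not follow. Relatedly, the useful output of stage~1 is not $\sigma_{\min}(M_0)\geq \eps^{\Theta(L)}$ but rather that the $0.0001L$\ts{th} smallest singular value is at least a fixed power of $\delta\eps/L$, obtained by taking the $0.0001L$\ts{th} root of the determinant bound after dividing out the trivial upper bound on the remaining singular values.
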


The difficulty in proving Lemma \ref{lem:well-conditioned} lies in the fact that in the matrix $M^{\textsf{truth}}$, each spherical harmonic coefficient $ f_{ak_1}$ actually appears in many rows and thus we cannot easily decouple the randomness.  This makes it difficult to employ standard approaches such as leave-one-out distance (see e.g. \cite{bhaskara2014smoothed}) that require decoupling the randomness completely.

To get around this issue, we adopt a different approach.  Our approach can be separated into two parts.  We will first prove that a submatrix of $M^{\textsf{truth}}$ obtained by taking only a subset of the rows is robustly almost full-rank i.e. it has $0.95(2L + 1)$ non-negligible singular values.  We do this via algebraic techniques, arguing about the determinant as a polynomial and then using anticoncentration results in Appendix \ref{sec:poly-anticoncentration}.  While our bound on the determinant will be exponentially small in $L$, this will be enough to imply an inverse polynomial bound for the top $0.95(2L + 1)$ singular values.  we will then argue that adding in the remaining rows will make $M$ robustly full-rank i.e. it will have $2L + 1$ non-negligible singular values.  This second step can be done using a more standard approach because we only need to find a small amount of additional, independent randomness to go from $0.95(2L + 1)$ to $2L + 1$.

More formally, define the sets $A,B$ as follows:
\begin{itemize}
    \item $A$ denotes the set of integers between $0.01L$ and $0.0101 L$.
    \item $B$ denotes the set of integers between $0.9901 L$ and $0.9999 L$.
\end{itemize}
Let $M^{(A,B)}$ denote the matrix $M^{\textsf{truth}}$ restricted to the rows indexed by $(a,b) \in A \times B$.  We will first prove in Section \ref{sec:almostfullrank} that $M^{(A,B)}$ has $0.95(2L + 1)$ non-negligible singular values.  Once we have proved that $M^{(A,B)}$ is robustly almost full-rank, we then argue that we can use rows indexed by $(a,b)$ for $a \notin A, b \notin B$ to ensure that the entire matrix $M^{\textsf{truth}}$ is robustly full-rank.  This step is actually not too difficult because we only need to add $0.05(2L+1)$ more linearly independent rows and we can actually do this by finding a set of rows where the randomness in the smoothing is completely independent.

\subsubsection{ A Submatrix is Robustly Almost Full-rank}\label{sec:almostfullrank}

We will first prove that the determinant of some $0.95(2L+1) \times 0.95(2L+1)$-submatrix of $M^{(A,B)}$, viewed as a polynomial in the variables $\{ f_{ak_1}\}, \{f_{bk_2} \}$ is nonzero.  To do this, we will plug in values for the variables $\{ f_{ak_1}\}, \{f_{bk_2} \}$ and evaluate the determinant.  We will then argue about what this means for the determinant, when viewed as a formal polynomial.

We will restrict ourselves to a certain class of submatrices and a very simple class of assignments for the values of the variables $\{ f_{ak_1}\}, \{f_{bk_2} \}$.  These two notions are defined below.

\begin{definition}
We say that a subset $S$ of the rows of $M^{(A,B)}$ is $C$-\textbf{balanced} for a constant $C$ if 
\begin{itemize}
    \item For each $a \in A$, at most $C$ of the rows in $S$ are indexed by $(a,b')$ for some $b' \in B$.
    \item For each $b \in B$, at most $C$ of the rows in $S$ are indexed by $(a', b)$ for some $a' \in A$.
\end{itemize}
We say a submatrix $M'$ of $M^{(A,B)}$ is $C$-balanced if the subset of rows in $M'$ is $C$-balanced.
\end{definition}

\begin{definition}
We say that an assignment of values to the variables $\{ f_{ak_1}\}, \{f_{bk_2} \}$ is \textbf{simple} if 
\begin{itemize}
    \item For each $a$, there is exactly one value of $k_1$ with $-a \leq k_1 \leq a$, $f_{ak_1} \neq 0$
    \item For each $b$, there is exactly one value of $k_2$ with $-b \leq k_2 \leq b$, $f_{bk_2} \neq  0$
    \item For each $a$, either $f_{aa} = 1$ or $f_{a(-a)} = 1$ 
    \item For each $b$, there is some $-b \leq k_2 \leq b$ such that $f_{bk_2} = 1$
\end{itemize}
\end{definition}

Now we prove the following combinatorial lemma that will then allow us to argue about the determinant of some square submatrix of $M^{(A,B)}$.
\begin{lemma}\label{lemma:good-assignment}
There exists a simple assignment of values to the variables  $\{ f_{ak_1}\}, \{f_{bk_2} \}$ such that we can find a $O(L^{0.6})$-balanced square submatrix $M'$ of $M^{(A,B)}$ of size at least $0.95(2L+1) \times 0.95(2L+1)$ such that:
\begin{itemize}
    \item Each row of $M'$ contains exactly one nonzero entry
    \item Each column of $M'$ contains exactly one nonzero entry
    \item The nonzero entries have magnitude at least $1/\poly(L)$
\end{itemize}
\end{lemma}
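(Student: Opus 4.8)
The plan is to exploit that, under \emph{any} simple assignment, the matrix $M^{(A,B)}$ automatically has exactly one nonzero entry in each row, so the problem becomes purely one of choosing the assignment and selecting rows. Indeed, under a simple assignment the vector $(f_{a,-a},\dots,f_{a,a})$ is the standard basis vector supported at index $\epsilon_a a$ for some $\epsilon_a\in\{+1,-1\}$, and $(f_{b,-b},\dots,f_{b,b})$ is supported at a single index $\kappa_b$; hence the sum defining $M^{(A,B)}_{(a,b)m}$ has a unique nonzero term, occurring only for $m=-\epsilon_a a-\kappa_b$, where the entry equals $\pm\langle a\,(\epsilon_a a)\,b\,\kappa_b\,|\,L(-m)\rangle$. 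So I would reduce the lemma to choosing the numbers $\epsilon_a\in\{\pm1\}$ ($a\in A$) and $\kappa_b$ ($b\in B$) of a simple assignment and then selecting a set $S$ of rows whose induced columns $-\epsilon_a a-\kappa_b$ are pairwise distinct, with $|S|\ge 0.95(2L+1)$, with each entry of a selected row of magnitude $\ge 1/\poly(L)$, and with $S$ being $O(L^{0.6})$-balanced; restricting $M^{(A,B)}$ to these rows and the columns they hit gives the required $M'$.

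Next I would pin down which rows carry a large entry. Writing $k^*(a,b):=\tfrac{a^2+L^2-b^2}{2a}$, Claim~\ref{claim:large-CG-coeffs} and Corollary~\ref{corollary:large-CG-coeffs} (applicable precisely because $a\in A$, $b\in B$ satisfy $0.01L<a<0.0101L$, $0.9901L<b<0.9999L$), together with the sign-flip symmetry $\langle l_1 m_1 l_2 m_2|l m\rangle=\pm\langle l_1(-m_1)\,l_2(-m_2)|l(-m)\rangle$ coming from Fact~\ref{fact:CG-explicit} (which reduces the $\epsilon_a=-1$ case to the $\epsilon_a=+1$ case), imply that the entry of row $(a,b)$ has magnitude $\ge 1/\poly(L)$ whenever its column $m$ lies within $\sqrt L$ of $-\epsilon_a\,k^*(a,b)$. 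A short computation on the windows $A,B$ gives $|\partial_b k^*|=b/a\in(98,100)$ and $|\partial_a k^*|=|\tfrac12-\tfrac{L^2-b^2}{2a^2}|\in(0.48,99)$; in particular, for any fixed integer $a\in A$ the set $\{k^*(a,b):b\in B\}$ is $100$-dense in an interval, and evaluating its endpoints at $a=\min A$ shows this interval contains $(0.015L,0.990L)$. Hence large entries can be placed in every integer column of $(-0.990L,-0.015L)$ using $\epsilon_a=+1$ and, symmetrically, in every integer column of $(0.015L,0.990L)$ using $\epsilon_a=-1$; together these are all but about $0.05L$ of the $2L+1$ columns, comfortably more than $0.95(2L+1)$.

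For the construction I would split $A=A^+\sqcup A^-$ into two halves and set $\epsilon_a=+1$ on $A^+$, $\epsilon_a=-1$ on $A^-$, and split $B=B^+\sqcup B^-$ into two halves each still spanning all of $B$ (say by parity). For $b\in B^+$ I would take $\kappa_b:=\lfloor k^*(a_1,b)\rfloor-a_1$ with $a_1:=\min A^+$, and for $b\in B^-$ symmetrically $\kappa_b:=a_2-\lceil k^*(a_2,b)\rceil$ with $a_2:=\min A^-$ (so that $|\kappa_b|\le b$, since $k^*(a,b)<b$ on $A\times B$). Then for $b\in B^+$ the rows $(a,b)$ with a large entry are exactly those with $a$ in an interval $I_b\subseteq A^+$ of length $O(\sqrt L)$ beginning at $a_1$ (because $a\mapsto a+\kappa_b-k^*(a,b)$ is within $1$ of $0$ at $a=a_1$ and has derivative in $(1.48,99.5)$), and they hit the $|I_b|$ consecutive columns at or below $-\lfloor k^*(a_1,b)\rfloor$. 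As $b$ ranges over $B^+$ the anchors $-\lfloor k^*(a_1,b)\rfloor$ are $200$-dense in $(-0.990L,-0.015L)$, so an $O(\sqrt L)$-spaced subset of $B^+$ has column-windows that tile this interval (covering all but $O(\sqrt L)$ of its columns); $B^-$ handles the positive columns in the same way. Selecting one realizing row for each covered column then yields $S$ with $|S|\ge 0.95(2L+1)$, exactly one nonzero of magnitude $\ge 1/\poly(L)$ per row and per column; and $S$ is $O(\sqrt L)$-balanced, because each used $b$ occurs in at most $|I_b|=O(\sqrt L)$ selected rows and each $a$ occurs at most once per used $b$, of which only $O(\sqrt L)$ are needed---and $O(\sqrt L)\subseteq O(L^{0.6})$.

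The hard part is the third step. The difficulty is the coupling: $\kappa_b$ is one number shared by all rows with that value of $b$, so a given $b$ produces large entries only for $a$ in a short window (of width $\Theta(\sqrt L)$, set by the linear-in-$a$ term of $k^*$), whereas $|A|=\Theta(L)$; one must choose the $\kappa_b$ so that these windows and the induced column-windows simultaneously cover $\ge 0.95(2L+1)$ columns and keep every row- and column-index below the reuse budget. Getting both at once is exactly what forces the quantitative control of $k^*$, $\partial_a k^*$, $\partial_b k^*$ on the precise windows $A=(0.01L,0.0101L)$, $B=(0.9901L,0.9999L)$, and it is here that the exact constants defining $A$ and $B$, together with the $\sqrt L$ slack in Corollary~\ref{corollary:large-CG-coeffs}, are used.
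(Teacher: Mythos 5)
Your construction is correct, but it takes a genuinely different route from the paper. You derandomize the lemma: after observing that any simple assignment makes every row of $M^{(A,B)}$ a one-hot vector in column $-\epsilon_a a-\kappa_b$, you choose the signs $\epsilon_a$ and the supports $\kappa_b$ explicitly, using the derivative bounds $|\partial_b k^*|\in(98,100)$ and $|\partial_a k^*|\in(0.48,99)$ for the peak location $k^*(a,b)=\frac{a^2+L^2-b^2}{2a}$ together with the $\sqrt L$ slack in Corollary~\ref{corollary:large-CG-coeffs}, to tile the columns $|m|\in(0.015L,0.99L)$ by $\Theta(\sqrt L)$-long windows anchored at $a_1=\min A^{\pm}$; this gives an $O(\sqrt L)$-balanced selection, which is even stronger than the $O(L^{0.6})$ the lemma asks for. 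The paper instead takes a \emph{uniformly random} simple assignment, proves (Property~\ref{prop:statement1}) that each relevant column admits $\Omega(L^{0.4})$ disjoint large-entry rows with probability $1-\exp(-\Omega(L^{0.1}))$ via a product-of-probabilities computation over the choices for $b\in B$, union bounds over columns, and then greedily selects rows while maintaining $O(L^{0.6})$-balance; that argument needs only the density of ``forbidden'' pairs rather than your finer quantitative control of $k^*$. Two small points you should make explicit if you write this up: (i) your $\epsilon_a=-1$ case uses the sign-flip symmetry $|\langle l_1(-m_1)l_2(-m_2)|l(-m)\rangle|=|\langle l_1m_1l_2m_2|lm\rangle|$, which is a true and standard Clebsch--Gordan identity but is not among the relations listed in Fact~\ref{fact:CG-orthogonality}, so it should be verified from Fact~\ref{fact:CG-explicit} (the paper's own ``without loss of generality $0.02L<k<0.98L$'' implicitly leans on the same reduction); and (ii) your windows $I_b$ must fit inside $A^{\pm}$, which requires $\sqrt L\lesssim |A|=\Theta(L)$, i.e.\ $L$ sufficiently large, consistent with the standing assumption in Section~\ref{sec:CG-coeffs}.
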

\begin{proof}
We will actually take a uniformly random simple assignment and prove that with positive probability we can find a submatrix $M'$ of $M^{(A,B)}$ with the desired properties.  Here uniformly random assignment means 
\begin{itemize}
    \item For each $a \in A$ we will pick exactly one of $f_{a(-a)}$ and $f_{aa}$ to set to $1$ and we set all of the other $f_{ak_1}$ to $0$.
    \item For each $b \in B$ we pick exactly one of $f_{b(-b)}, f_{b(-b+1)}, \dots, f_{bb}$ to set to $1$ and set all of the others to $0$.
\end{itemize}    
These choices are all made independently and uniformly at random (over $2$ choices for each $a \in A$ and over $2b+1$ choices for each $b \in B$).

Note that when plugging a simple assignment of values into the matrix $M^{(A,B)}$, each row contains at most one nonzero entry.  First we will establish the following property.

\begin{property}\label{prop:statement1} For any given $k$ with $ 0.02L < k < 0.98L$ or $ -0.02L < k < -0.98L$, with at least $1 - \exp(-\Omega(L^{0.1}))$ probability (over the random assignment of values to the $f_{ak_1},f_{bk_2}$), there are $\Omega(L^{0.4})$ disjoint pairs  $(a,b)$ with $a\in A, b \in B$ such that there exist $k_1,k_2$ with
\begin{itemize}
    \item $f_{ak_1}f_{bk_2} = 1$
    \item $\abs{\langle ak_1bk_2| Lk \rangle} \geq \frac{1}{\poly(L)}$
\end{itemize}
\end{property}

To see this, without loss of generality $ 0.02L < k < 0.98L$.  From now on, we treat $k$ as fixed.   With $1 - \exp(-\Omega(L))$ probability, there are at least $|A|/3 $ elements $a \in A$ with $f_{aa} = 1$.  We use $A^+$ to denote this subset of $A$.  Divide the set $A^+$ into disjoint subsets of size $\sim 1000 L^{0.6}$, say $A_1, \dots , A_d$ where $d = \Omega(L^{0.4})$.

For each $i$ with $1 \leq i \leq d$, we say a pair of integers $(a,b)$ is $A_i$-forbidden if $ a \in A_i, b \in B$ and
\[
\left\lvert k - \frac{a^2 + L^2 - b^2}{2a} \right\rvert \leq \sqrt{L} \,.
\]
If for each $i$, there is $f_{bk_2} = 1$ where $k_2 = k - a$ for some $A_i$-forbidden pair $(a,b)$, then by Corollary \ref{corollary:large-CG-coeffs},
\[
\abs{\langle aab(k-a)| Lk \rangle} \geq \frac{1}{\poly(L)}
\]
and we would get the two desired conditions for some $a \in A_i$.  Next observe that for fixed $b$, the two desired conditions cannot hold for two distinct $a,a'$ since $f_{bk_2} = 1$ for only one value of $k_2$ and then we must have $a+k_2 = a' + k_2 = k$.  Thus, it now suffices to lower bound the probability that  $f_{b(k-a)} = 1$ for some $A_i$-forbidden pair $(a,b)$.

We will first compute the probability of the complement.  Recall that for each $b$, we choose to set exactly one coefficient $f_{bk_2} = 1$ where $k_2$ is chosen uniformly at random from $\{-b, -b + 1, \dots , b \}$. For each $b$ there are some forbidden values $k_2$, which we may call $A_i$-forbidden values, such that we must avoid setting $f_{bk_2} = 1$.  For each $b$, let $s_b$ be the number of $A_i$-forbidden values.  Then the probability that all of these forbidden values are avoided is 
\[
\prod_{b \in B} \frac{2b+1 - s_b}{2b+1} \,.
\]
However, note that 
\[
\sum_{b \in B}s_b = \Omega(L^{1.1})
\]
because we are combining over $\Omega(L^{0.6})$ possible values of $a \in A_i$ and each value of $a$ forbids $\Omega(\sqrt{L})$ distinct pairs.  Thus
\[
\prod_{b \in B} \frac{2b+1 - s_b}{2b+1} \leq \prod_{b \in B}\left( 1 - \frac{s_b}{2L}\right) \leq \exp\left(-\frac{1}{L} \cdot \sum_{b \in B} s_b\right) = \exp\left(-\Omega(L^{0.1})\right) \,.
\]
We conclude that for a fixed $k$, with at least $1 - \exp\left(-\Omega(L^{0.1})\right)$ probability (over the random assignment of values to the $f_{ak_1},f_{bk_2}$), there are $a\in A_i, b \in B$ and $k_1,k_2$ such that
\begin{itemize}
    \item $f_{ak_1}f_{bk_2} = 1$
    \item $\abs{\langle ak_1bk_2| Lk \rangle} \geq \frac{1}{\poly(L)}$
\end{itemize}

Union bounding over all of $i = 1,2, \dots , d$ completes the proof of Property~\ref{prop:statement1}. Now union bounding the result of Property~\ref{prop:statement1} over all $k$ with $ 0.02L < k < 0.98L$ or $ -0.02L < k < -0.98L$, we get that there is positive probability that the property holds simultaneously for all such $k$.

We now plug this assignment of values into the matrix $M^{(A,B)}$, and restrict to the columns indexed by $k$ with $ 0.02L < k < 0.98L$ or $ -0.02L < k < -0.98L$.  For each of these columns, there are $\Omega(L^{0.4})$ rows of $M^{(A,B)}$ that have their only nonzero entry in the column indexed by $k$ and such that the value of this nonzero entry has magnitude at least $1/\poly(L)$.  We can now form a square submatrix $M'$ as follows.  For each column, we look at the $\Omega(L^{0.4})$ possible rows that we can pick and since these rows are indexed by disjoint $(a,b)$, we can always pick one of these rows to ensure that the subset of rows selected so far remains $O(L^{0.6})$-balanced.  At the end, we have constructed a square submatrix $M'$ with the desired properties, completing the proof.
\end{proof}

We can now take the submatrix $M'$ found in Lemma \ref{lemma:good-assignment} and understand its determinant when written as a polynomial in the variables $\{f_{ak_1} \}, \{f_{bk_2}\}$.  We will use the following notation:
\begin{itemize}
    \item Let $m$ be the size of the submatrix $M'$ (so $M'$ is an $m \times m$ matrix and $m \geq 0.95(2L+1)$.
    \item  For each $a \in A, b \in B$ let $t_a$ (respectively $t_b$) be the unique integer in the interval $[-a, a]$ for which the assignment computed in Lemma \ref{lemma:good-assignment} sets $f_{at_a} = 1$.
\end{itemize}

The following result is a simple corollary of Lemma \ref{lemma:good-assignment}.

\begin{corollary}\label{corollary:large-coeff}
Let $P(\{f_{ak_1} \}, \{f_{bk_2}\})$ be the polynomial obtained when writing the determinant of $M'$ as a polynomial in $\{f_{ak_1} \}, \{f_{bk_2}\}$.  Then $P$ is a homogeneous polynomial of degree $2m$.  Furthermore there are nonnegative integers $g_a$ for each $a \in A$ and $g_b$ for each $b \in B$ such that 
\begin{itemize}
    \item $g_a, g_b \leq O(L^{0.6})$ for all $a,b$
    \item $\sum_{a \in A} g_a + \sum_{b \in B} g_b = 2m$
    \item The coefficient of 
    \[
    \prod_{a \in A} f_{at_a}^{g_a}\prod_{b \in B} f_{bt_b}^{g_b}
    \]
    in $P$ has magnitude at least $1/L^{CL}$ for some universal constant $C$.
    \item $P$ has degree $g_a$ when viewed as a polynomial in only $f_{at_a}$ for each $a \in A$ and $P$ has degree $g_b$ when viewed as a polynomial in only $f_{bt_b}$ for each $b \in B$ 
\end{itemize}
\end{corollary}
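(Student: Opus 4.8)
\textbf{Proof proposal for Corollary~\ref{corollary:large-coeff}.}

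The plan is to read off the structure of $\det M'$ directly from the combinatorial picture established in Lemma~\ref{lemma:good-assignment}. Recall that each entry $M^{(A,B)}_{(a,b)k}$ is a homogeneous quadratic form in the variables $\{f_{ak_1}\},\{f_{bk_2}\}$ (a sum of terms $(-1)^k\langle ak_1bk_2|Lk\rangle f_{ak_1}f_{bk_2}$), so the $m\times m$ determinant is a homogeneous polynomial of degree $2m$ in these variables --- that gives the first bullet immediately. For the rest, I would proceed as follows. First, fix the simple assignment and the submatrix $M'$ from Lemma~\ref{lemma:good-assignment}; under that assignment every row of $M'$ has exactly one nonzero entry, every column exactly one, and these nonzero entries have magnitude $\ge 1/\poly(L)$. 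Thus the permutation expansion of $\det M'$ has exactly one surviving term under this assignment, equal (up to sign) to the product of those $m$ nonzero entries, which is $\ge 1/\poly(L)^m \ge 1/L^{CL}$ in magnitude for a suitable universal $C$ (using $m = O(L)$). That single surviving monomial is precisely $\prod_{a\in A} f_{at_a}^{g_a}\prod_{b\in B} f_{bt_b}^{g_b}$ where $g_a$ (resp.\ $g_b$) counts how many of the $m$ chosen rows ``use'' the variable $f_{at_a}$ (resp.\ $f_{bt_b}$) --- each chosen row contributes one factor $f_{at_a}$ and one factor $f_{bt_b}$ for its row index $(a,b)$, so $\sum_a g_a + \sum_b g_b = 2m$, and $C'L^{0.6}$-balancedness of $M'$ forces $g_a, g_b = O(L^{0.6})$. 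This establishes the first three sub-bullets.

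For the last sub-bullet --- that $P$ has degree exactly $g_a$ in the single variable $f_{at_a}$ (and similarly for $f_{bt_b}$) --- I would argue via the permutation expansion of $\det M'$ over \emph{all} monomials, not just under the fixed assignment. In each entry $M^{(a,b)k}$, the variable $f_{at_a}$ can appear to degree at most $1$ (each term of the quadratic form is $f_{ak_1}f_{bk_2}$ with $k_1$ an $a$-index and $k_2$ a $b$-index, so at most one factor of $f_{at_a}$). Hence in any term of the Leibniz expansion, $f_{at_a}$ appears to degree at most the number of rows of $M'$ indexed by $(a,\cdot)$, and by balancedness at most $O(L^{0.6})$; more precisely it appears to degree at most $g_a$ since $g_a$ was defined to be exactly that row-count. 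Combined with the fact that the monomial $\prod f_{at_a}^{g_a}\prod f_{bt_b}^{g_b}$ does occur with nonzero (indeed $\ge 1/L^{CL}$) coefficient, the degree of $P$ in $f_{at_a}$ is exactly $g_a$, and likewise for each $f_{bt_b}$.

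The only mildly delicate point, and the step I expect to need the most care, is verifying that the surviving term of the Leibniz expansion under the simple assignment is genuinely unique --- i.e.\ that no \emph{other} permutation $\pi$ picks up a nonzero product. This is exactly the content of ``each row contains exactly one nonzero entry and each column contains exactly one nonzero entry'' in Lemma~\ref{lemma:good-assignment}: a row with a single nonzero entry forces $\pi$ to select that entry's column, and since the columns of the nonzero entries are distinct (one per column) there is a unique such $\pi$. So the non-vanishing and the magnitude bound both follow, and I would just need to track the sign and the $\poly(L)$ factors carefully to land the $1/L^{CL}$ bound with a clean universal constant $C$.
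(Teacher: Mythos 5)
Your proposal is correct and follows essentially the same route as the paper: define $g_a,g_b$ as the row counts of $M'$ (so balancedness gives the $O(L^{0.6})$ bound and $\sum g_a+\sum g_b=2m$), identify the coefficient of $\prod_a f_{at_a}^{g_a}\prod_b f_{bt_b}^{g_b}$ with the value of $\det M'$ under the simple assignment from Lemma~\ref{lemma:good-assignment} (unique surviving permutation, product of entries of magnitude $\ge 1/\poly(L)$ each), and bound the individual degrees by the number of rows containing the corresponding variable. No gaps beyond the level of detail the paper itself uses.
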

\begin{proof}
Note that the rows of $M'$ may be indexed by pairs of integers $(a,b)$ with $a \in A$ and $b \in B$.  For each $a \in A$, $g_a$ is simply the number of rows of $M'$ that have $a$ as the first coordinate.  Similarly, $g_b$ is the number of rows of $M'$ that have $b$ as the second coordinate.  The balancedness condition in Lemma \ref{lemma:good-assignment} gives us that $g_a, g_b \leq O(L^{0.6})$ for all $a,b$.  We also immediately get the second condition that  
\[
\sum_{a \in A} g_a + \sum_{b \in B} g_b = 2m \,.
\]
Next, there is only one way to get the term $\prod_{a \in A} f_{at_a}^{g_a}\prod_{b \in B} f_{bt_b}^{g_b}$ in the expansion of the determinant of $M'$.  The coefficient is exactly the value of the determinant when we set $f_{at_a} = 1, f_{bt_b} =  1 $ for all $a \in A, b \in B$ and set all of the other variables to $0$.  Lemma \ref{lemma:good-assignment} tells us that the value of this determinant has magnitude at least $1/\poly(L)^L$.  This gives us the third of the desired conditions.

Finally, to verify the fourth condition, note that the maximum possible degree of any monomial of $P$ in the variable $f_{at_a}$ is equal to the number of rows that contain the variable $f_{at_a}$ somewhere.  The number of such rows is exactly $g_a$.  Similarly, we get the same result for $f_{bt_b}$  for all $b \in B$ and this completes the proof.
\end{proof}

Combining Corollary \ref{corollary:large-coeff} and the anticoncentration bound in Corollary \ref{coro:multivariate-anticoncentration}, we can prove the main result of this section.
\begin{lemma}\label{lem:almostfullrank}
Assume that the values of the spherical harmonic coefficients $f_{ak_1}, f_{bk_2}$ are $\delta$-smoothed.  There is a square submatrix $M'$ of $M^{(A,B)}$ of size at least $0.95(2L+1) \times 0.95(2L+1)$ such that the following holds. There is an absolute constant $K$ such that for any $\eps < (\delta/L)^K$, with probability at least 
\[
1 - \eps^{\Omega(L^{0.4})} \,,
\] 
the $0.0001L$ \ts{th} smallest singular value of $M'$ is at least $\eps$.
\end{lemma}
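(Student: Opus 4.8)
The plan is to combine the structural result of Corollary~\ref{corollary:large-coeff} with a multivariate anticoncentration bound (Corollary~\ref{coro:multivariate-anticoncentration}) to show that a large minor of $M^{(A,B)}$ does not vanish after smoothing, and then translate this into a singular-value lower bound. First I would fix the square submatrix $M'$ and the exponents $\{g_a\}_{a \in A}, \{g_b\}_{b \in B}$ guaranteed by Corollary~\ref{corollary:large-coeff}. The determinant $P = \det(M')$ is a homogeneous polynomial of degree $2m$ in the variables $\{f_{ak_1}\}, \{f_{bk_2}\}$, but the key facts are: (i) it has a monomial $\prod_a f_{at_a}^{g_a} \prod_b f_{bt_b}^{g_b}$ with coefficient of magnitude at least $1/L^{CL}$, and (ii) in each individual variable $f_{at_a}$ (resp. $f_{bt_b}$) it has degree at most $O(L^{0.6})$. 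After $\delta$-smoothing, the variables $f_{at_a}, f_{bt_b}$ that appear in this monomial are independent complex Gaussians with standard deviation $\delta$ (in each of the real and imaginary parts), shifted by the adversarial choice. Applying the anticoncentration bound from Corollary~\ref{coro:multivariate-anticoncentration} to $P$, viewed as a polynomial in these roughly $m/O(L^{0.6}) = \Omega(L^{0.4})$ "live" variables each of bounded individual degree, with a coefficient that is not too small relative to $\delta^{2m}$, yields that $|P| = |\det(M')| \geq (\text{something like } \delta^{2m} L^{-CL})$ except with probability $\eps^{\Omega(L^{0.4})}$ — here the constraint $\eps < (\delta/L)^K$ for a suitable absolute constant $K$ is exactly what makes the target threshold $\eps$ lie safely below the typical determinant magnitude.

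Next I would convert the determinant lower bound into a lower bound on the small singular values of $M'$. Since $M'$ is an $m \times m$ matrix whose entries are polynomials in the $f$'s of bounded degree and whose coefficients are CG-coefficients of magnitude at most $1$, and the $f$'s are bounded (by $\norm{f}_2 \leq 1$ plus the smoothing, which is of size at most $\poly(\delta \log(1/\eps))$ with overwhelming probability), the operator norm $\norm{M'}_{\mathrm{op}}$ is at most $\poly(L)$. Writing $|\det(M')| = \prod_{i=1}^m \sigma_i(M')$ and using $\sigma_i \leq \norm{M'}_{\mathrm{op}} \leq \poly(L)$ for all $i$, we get a lower bound on the product of the $0.0001L$ smallest singular values: $\prod_{i > m - 0.0001L} \sigma_i \geq |\det(M')| \cdot \poly(L)^{-(m - 0.0001L)}$. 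Since the right-hand side is at least $\eps^{O(1)}$ after adjusting constants (the determinant lower bound $\delta^{2m} L^{-CL}$ divided by $\poly(L)^{m}$ is still at least, say, $(\delta/L)^{C'L} \geq \eps^{0.5}$ when $\eps < (\delta/L)^K$ with $K$ large enough relative to $C'$), and the product of $0.0001L$ numbers each at most $\poly(L)$ being at least $\eps^{0.5}$ forces the smallest of them to be at least $\eps^{0.5} / \poly(L)^{0.0001L} \geq \eps$, we conclude that $\sigma_{0.0001L\text{-th smallest}}(M') \geq \eps$. (All these polynomial-in-$L$ losses are absorbed into the slack between $\eps$ and $(\delta/L)^K$.)

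The main obstacle I expect is the anticoncentration step: $P$ is a high-degree polynomial in many variables that appear with multiplicity across the rows of $M'$, so a naive application of a Carbery--Wright-type bound would only give an inverse-polynomial-in-degree probability bound, which is far too weak here — we need failure probability $\eps^{\Omega(L^{0.4})}$, i.e. the exponent must scale with the number of effectively independent smoothed directions. This is precisely why Corollary~\ref{corollary:large-coeff} is set up to control the \emph{individual-variable} degrees $g_a, g_b = O(L^{0.6})$: restricting $P$ to the line (or low-dimensional affine subspace) spanned by one live variable at a time, the restricted polynomial has degree only $O(L^{0.6})$, so single-variable anticoncentration gives failure probability $\eps^{\Omega(1)}$ per variable, and independence across the $\Omega(L^{0.4})$ live variables boosts this to $\eps^{\Omega(L^{0.4})}$ — this is the content of Corollary~\ref{coro:multivariate-anticoncentration}, and wiring it correctly (identifying the live variables, checking the nonzero-coefficient hypothesis survives the restriction, and tracking that the bounded-individual-degree property is what licenses the tensorization) is the delicate part. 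The rest — the operator-norm bound and the determinant-to-singular-value passage — is routine.
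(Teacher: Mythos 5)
Your approach is the same as the paper's (use Corollary~\ref{corollary:large-coeff} plus the individual-degree-aware anticoncentration of Corollary~\ref{coro:multivariate-anticoncentration} to lower bound $\det(M')$ after conditioning on the non-live coordinates, then convert to a singular-value bound via the trivial operator-norm bound), but the final determinant-to-singular-value step is quantitatively wrong for the range of $\eps$ the lemma must cover. The losses you incur are of size $\poly(L)^{\Theta(L)}$, i.e.\ exponential in $L$, and they cannot be ``absorbed into the slack between $\eps$ and $(\delta/L)^K$,'' which is only polynomial: the inequality $(\delta/L)^{C'L} \geq \eps^{0.5}$ is false whenever $\eps$ is merely polynomially small (e.g.\ $\eps = (\delta/L)^{K+1}$ with $L$ large), and likewise $\eps^{0.5}/\poly(L)^{0.0001L} \geq \eps$ forces $\eps \leq \poly(L)^{-\Theta(L)}$, which is not implied by $\eps < (\delta/L)^K$ for an absolute constant $K$. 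The correct conversion is to let $s$ denote the $0.0001L$\ts{th} smallest singular value, bound $|\det(M')| \leq s^{0.0001L} \norm{M'}_{\textsf{op}}^{\,m} \leq s^{0.0001L}(10L)^{2m}$, and then take the $(0.0001L)$\ts{th} root: the exponentially small determinant bound becomes a bound of the form $s \geq (\delta \eps'/L)^{C'}$ with $C'$ an absolute constant, which is exactly how the paper proceeds.

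A related bookkeeping issue: you dropped the $(\eps/e)^{\Theta(m)}$ factor from Corollary~\ref{coro:multivariate-anticoncentration}. The determinant threshold and the failure probability in that corollary are coupled through the same parameter, so you cannot get failure probability $\eps^{\Omega(L^{0.4})}$ for a threshold that does not depend on $\eps$. The clean way (as in the paper) is to run the anticoncentration at an auxiliary level $\eps'$, obtaining $|\det(M')| \geq L^{-CL}\delta^{2m}(\eps'/e)^{80m}$ with probability $1 - \eps'^{\Omega(L^{0.4})}$, and only after the root-taking step reparametrize $\eps = (\delta\eps'/L)^{C'}$, using $\eps < (\delta/L)^K$ to turn $\eps'^{\Omega(L^{0.4})}$ into $\eps^{\Omega(L^{0.4})}$. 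Two smaller remarks: the number of live variables is $|A| + |B| = \Theta(L)$, not $m/O(L^{0.6})$; the exponent $\Omega(L^{0.4})$ arises as (total degree)/(maximum individual degree) $= \Theta(L)/O(L^{0.6})$ in Lemma~\ref{lemma:multivariate-poly}. And the point you flagged as delicate --- that the leading coefficient survives restriction to the live variables --- is fine for the simple reason that the distinguished monomial already has total degree $2m = \deg P$, so no monomial involving a non-live variable can contribute to it.
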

\begin{proof}
$M'$ will be the matrix constructed in Lemma \ref{lemma:good-assignment}.  First, smooth all of the values of $f_{ak_1}, f_{bk_2}$ for $k_1 \neq t_a, k_2 \neq t_b$ for all $a \in A, b \in B$.  Now consider the polynomial $P(\{f_{at_a}\}, \{f_{bt_b} \})$ obtained by writing the determinant of the matrix $M'$ as a polynomial in variables $ \{f_{at_a}\}, \{f_{bt_b} \}$ (and plugging in values for all of the other $f_{ak_1}, f_{bk_2}$).  We can now use Corollary \ref{corollary:large-coeff} and apply Corollary \ref{coro:multivariate-anticoncentration} to this polynomial $P$ with variables $\{f_{at_a}\}, \{f_{bt_b} \}$ to deduce that the magnitude of the determinant is at least
\[
\frac{1}{L^{CL}}\delta^{2m} \left(\frac{\eps'}{e}\right)^{80m}
\]
(for some universal constant $C$) with probability 
\[
1 - \eps'^{\Omega(L^{0.4})} \,.
\]
where $\eps'$ is a parameter that will be set later.  Note that to obtain the probability bound, we used the fact that the degree of $P$ in each individual variable is $O(L^{0.6})$.

Note that $0.95(2L+1) \leq m \leq  2L+1$, and also that the largest singular value of $M'$ is at most $(10L)^2$ (since all of the $f_{ak_1}, f_{ak_2}$ have magnitude at most $1$ and also all of the CG-coefficients have magnitude at most $1$).  Also, the product of the singular values of $M'$ is equal to the magnitude of the determinant so with $1 - \eps'^{\Omega(L^{0.4})}$ probability, the $0.0001L$ \ts{th} smallest singular value is at least $(\delta \eps'/L)^{C'}$ for some absolute constant $C'$.  Choosing $\eps'$ so that $\eps = (\delta \eps'/L)^{C'}$, completes the proof.

\end{proof}

\subsubsection{Finishing the Proof of Lemma \ref{lem:well-conditioned}}\label{sec:extra-subspace}

In the previous section, we found a submatrix of $M^{(A,B)}$ that robustly has almost full rank.  In order to show that $M^{\textsf{truth}}$ robustly has full rank, we will add a few more rows corresponding to $(a,b)$ with $a \notin A, b \notin B$.  The analysis in this section will be significantly simpler because we only need to add a small number of rows so we can actually ensure that the random smoothing in the new rows we add is independent.  We formalize this below.

For each odd integer $l_1$ between $0.5L$ and $0.99L$, we consider the row of $M^{\textsf{truth}}$ indexed by $(l_1, l_1 + 1)$, which we denote $M^{\textsf{truth}}_{(l_1,l_1 + 1)}$.  Let $M^{\textsf{aux}}$ be the matrix whose rows are $M^{\textsf{truth}}_{(l_1,l_1 + 1)}$ where $l_1$ ranges over all odd integers in the range $[0.5L, 0.99L]$.  The key lemma that we will show is stated below.
\begin{lemma}\label{lemma:extra-subspace}
Assume that the spherical harmonic coefficients $f_{lk}$ are $\delta$-smoothed.  Let $V \subset \C^{2L+1}$ be a fixed subspace of dimension at most $0.11L$.  There is an absolute constant $K$ such that for any $0< \eps < (\delta/L)^{-K}$, with probability at least $1 - \eps^{0.001L}$ over the smoothing, for all unit vectors $v \in V$, 
\[
\norm{M^{\textsf{aux}}v}_2 \geq \frac{\delta^2 \eps^2}{20L} \,.
\]
\end{lemma}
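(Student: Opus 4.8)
The plan is to reduce the claim to a lower bound on the smallest singular value of a structured matrix with (conditionally) independent rows, and then combine a sharp per-row anti-concentration estimate with an $\eps$-net over the unit sphere of $V$. Let $k = \dim V \le 0.11L$ and let $W \in \C^{(2L+1)\times k}$ have orthonormal columns spanning $V$; since $\norm{M^{\textsf{aux}}v} \ge \sigma_{\min}(M^{\textsf{aux}}W)\norm{v}$ for $v \in V$, it suffices to show $\sigma_{\min}(M^{\textsf{aux}}W) \ge \delta^2\eps^2/(20L)$ with the stated probability. The rows of $M^{\textsf{aux}}$ are indexed by the $N = \Theta(L)$ odd integers $l_1 \in [0.5L, 0.99L]$, and the whole point of taking only \emph{odd} $l_1$ is that then the index sets $\{l_1, l_1+1\}$ are pairwise disjoint; consequently each row depends on its own private pair of coefficient blocks, and different rows are independent.

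The crux is the following per-row bound: for any fixed unit $\alpha \in \C^k$ (write $v = W\alpha$) and any such $l_1$,
\[
\Pr\!\left[\, \abs{(M^{\textsf{aux}}W\alpha)_{l_1}} \le t \,\right] \le \frac{O(Lt)}{\delta^2}.
\]
To see this, first expose the smoothing in block $l_1+1$. Then $(M^{\textsf{aux}}W\alpha)_{l_1}$ is the bilinear pairing $\langle f^{(l_1)}, z_{l_1}\rangle$, where $f^{(l_1)}$ is still a fresh $\delta$-Gaussian vector and $z_{l_1}$ is a fixed linear image of $f^{(l_1+1)}$. The key structural input is that $z_{l_1}$ is \emph{not} typically tiny: $\norm{z_{l_1}}$ dominates a single coordinate $(z_{l_1})_{k_1^*}$, which is a linear functional of $f^{(l_1+1)}$ one of whose coefficients has magnitude $\ge \abs{v_{m^*}}\cdot\max_{k_1}\abs{\langle l_1 k_1\, (l_1+1)(-m^*-k_1)\,|\,L(-m^*)\rangle} \ge \tfrac{1}{2L+1}$; here $m^*$ maximizes $\abs{v_m}$ (so $\abs{v_{m^*}} \ge 1/\sqrt{2L+1}$ since $\norm v = 1$), and the inner maximum is $\ge 1/\sqrt{2L+1}$ because, by the type-2 orthogonality relation of Fact~\ref{fact:CG-orthogonality}, the coefficients $\{\langle l_1 k_1\,(l_1+1)(-m^*-k_1)\,|\,L(-m^*)\rangle\}_{k_1}$ form a unit vector supported on at most $2L+1$ entries (note $1 \le L \le 2l_1+1$, so $L$ is a valid third index). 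Gaussian anti-concentration for $f^{(l_1+1)}$ then gives $\Pr[\norm{z_{l_1}} \le s] \le O(L^2 s^2/\delta^2)$, and, conditioned on $\norm{z_{l_1}} \ge s$, Gaussian anti-concentration for the fresh block $f^{(l_1)}$ gives $\Pr[\abs{\langle f^{(l_1)}, z_{l_1}\rangle} \le t \mid z_{l_1}] \le O(t^2/(\delta^2 s^2))$; taking $s^2 = t/L$ yields the displayed bound. The essential feature is that the exponent on $t$ is $1$: each row behaves like a genuine one-dimensional Gaussian.

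Now fix a unit $\alpha$. Because the events $\{\abs{(M^{\textsf{aux}}W\alpha)_{l_1}} \le t\}$ over odd $l_1 \in [0.5L, 0.99L]$ involve pairwise disjoint blocks, they are independent, so $\Pr[\norm{M^{\textsf{aux}}W\alpha} < t] \le (O(Lt)/\delta^2)^N$. Take a $\rho$-net $\mathcal N$ of the unit sphere of $\C^k$ with $\abs{\mathcal N} \le (3/\rho)^{2k}$, set $\tau = \delta^2\eps^2/(20L)$ and $\rho = \tau/\poly(L)$; since all entries of $M^{\textsf{aux}}$ are at most $1$ in absolute value (by Cauchy--Schwarz, using $\norm{f}_2 \le 1$), so $\norm{M^{\textsf{aux}}W}_{op} \le \poly(L)$, controlling every net point at level $2\tau$ controls every unit vector at level $\tau$. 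A union bound gives failure probability at most $(3/\rho)^{2k}(O(L)\tau/\delta^2)^N = (\poly(L)/(\delta^2\eps^2))^{2k}\,(\eps^2/5)^N$, and since $2N - 4k \ge 0.04L \gg 0.001L$ this is at most $\eps^{0.001L}$ once $\eps < (\delta/L)^K$ for a sufficiently large universal constant $K$ (a single spare factor of $\eps^{0.03L}$ absorbs all the $(\poly(L)/\delta^2)^{O(L)}$ terms). This gives the lemma.

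I expect the main obstacle to be the per-row anti-concentration estimate: both the observation that the relevant Clebsch--Gordan ``slices'' are unit vectors in $O(L)$ coordinates (hence never exponentially small, so all constants stay polynomial in $L$), and the bookkeeping needed to reveal the smoothing block-by-block so that each row collapses to a one-dimensional Gaussian. That the per-row exponent is linear in $t$ --- rather than the $t^{1/2}$ one would get by treating the row as a generic bilinear form --- is precisely what lets the $\approx 0.245L$ independent rows beat the $(3/\rho)^{2k}$-point net of the $(\le 0.11L)$-dimensional subspace $V$.
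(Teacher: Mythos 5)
Your proof is correct, and while its skeleton (disjoint coefficient blocks for odd $l_1$ give independent rows; per-row anti-concentration; $\rho$-net over the unit sphere of $V$; union bound) matches the paper's, the key per-row estimate is obtained by a genuinely different and quantitatively stronger argument. The paper treats $\langle r_{l_1}, v\rangle$ as a generic multilinear degree-$2$ polynomial with a coefficient of magnitude $\geq 1/(10L)$ and invokes its Carbery--Wright-based Corollary~\ref{coro:anti-concentration-general}, yielding per-row failure $O(\eps)$ at scale $\delta^2\eps^2/(10L)$; you instead expose the smoothing block-by-block, use type-2 orthogonality to show the frozen vector $z_{l_1}$ has a coordinate that is a complex linear form with a coefficient of size $\geq 1/(2L+1)$, and then use two-dimensional (complex) Gaussian small-ball bounds twice, getting $\Pr[|\cdot|\le t]\le O(Lt/\delta^2)$, i.e.\ per-row failure $O(\eps^2)$ at the same scale. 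This buys real robustness in the bookkeeping: the unit sphere of a complex $k$-dimensional subspace needs a net of size $(3/\rho)^{2k}$ (real dimension $2k\le 0.22L$), and with the paper's $O(\eps)$-per-row bound and $\approx 0.245L$ rows the exponents $0.245L$ versus $0.44L$ would not close --- the paper's write-up gets $\eps^{0.02L}$ only because it takes the net-size exponent to be the complex dimension $0.11L$; your $O(\eps^2)$-per-row bound gives $\eps^{2N-4k}\ge \eps^{0.04L}$ and closes the union bound comfortably under the honest count. Two harmless nits: the claim that the entries of $M^{\textsf{aux}}$ are at most $1$ via Cauchy--Schwarz presumes $\norm{f}_2\le 1$ after smoothing, which is only approximately true, but all you need is $\norm{M^{\textsf{aux}}W}_{\textsf{op}}\le \poly(L)$, which holds regardless (the paper uses the same trivial bound); and, like the paper's own proof, you read the condition on $\eps$ as $\eps<(\delta/L)^{K}$, which is clearly the intended meaning of the statement's $(\delta/L)^{-K}$.
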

\begin{proof}
Let $v$ be a unit vector in $\C^{2L+1}$.  Let $r_{l_1}$ be the row of $M^{\textsf{aux}}$ indexed by $(l_1, l_1 + 1)$ i.e. 
\[
r_{l_1} = M^{\textsf{truth}}_{(l_1,l_1 + 1)} \,.
\]
Then $\langle r_{l_1}, v \rangle$ is a homogeneous degree-$2$ polynomial in the variables  $f_{l_1k_1}, f_{(l_1+1) k_2}$ that is linear in each of the individual variables.  We claim that some coefficient of this polynomial has magnitude at least $1/(10L)$.  To see this, note that some entry of $v$ must have magnitude at least $1/\sqrt{2L+1}$ (since $v$ is a unit vector). Without loss of generality the entry of $v$ indexed by $k$ has magnitude at least $1/\sqrt{2L+1}$.  Next, by Fact \ref{fact:CG-orthogonality},
\[
\sum_{k_1 + k_2 = k}\langle l_1k_1(l_1 + 1)k_2| Lk \rangle^2 = 1 \,,
\]
so for some $k_1 + k_2 = k$, we have
\[
|\langle l_1k_1(l_1 + 1)k_2| Lk \rangle | \geq 1/\sqrt{2L+1} \,.
\]
This implies that coefficient of $f_{l_1k_1}f_{(l_1 + 1)k_2}$ in $\langle r_{l_1}, v \rangle$ has magnitude at least $1/(10L)$.  Recall that the expression $\langle r_{l_1}, v \rangle$ is multilinear and has total degree $2$ in the variables  $f_{l_1k_1}, f_{(l_1+1) k_2}$.  This means, by Corollary \ref{coro:anti-concentration-general}, that
\[
\Pr \left[ |\langle r_{l_1}, v \rangle| \leq  \frac{\delta^2 \eps^2}{10L} \right] \leq O(\eps) \,.
\]

Next, observe that the randomness in all of the rows of $M^{\textsf{aux}}$ is independent.  Thus, for any fixed vector $v$, 
\[
\Pr \left[ \norm{M^{\textsf{aux}}v}_2 \leq \frac{\delta^2 \eps^2}{10 L} \right] \leq \left( O(\eps)\right)^{0.24L}\,.
\]
Now we can consider an $\gamma$-net, say $\Gamma$, of all unit vectors in $V$ with $\gamma = \frac{\delta^2 \eps^2}{10^4 L^3}$  which has size 
\[
|\Gamma| \leq \left( \frac{10^5 L^3}{\delta^2 \eps^2}\right)^{0.11L} \,.
\]
As long as $\eps < (\delta/L)^{K}$ for some sufficiently large universal constant $K$, a union bound tells us that with probability $1 - \eps^{0.001L}$, 
\[
\norm{M^{\textsf{aux}}v}_2 \geq \frac{\delta^2 \eps^2}{10L}
\]
for all $v \in \Gamma$.  This then implies for all $v \in V$,
\[
\norm{M^{\textsf{aux}}v}_2 \geq \frac{\delta^2 \eps^2}{10L} - \frac{\delta^2 \eps^2}{10^4 L^3} \norm{M^{\textsf{aux}}}_{\textsf{op}} \geq \frac{\delta^2 \eps^2}{20L} 
\]
where we use a trivial upper bound on $\norm{M^{\textsf{aux}}}_{\textsf{op}}$ since its entries are bounded by $O(L)$.  This completes the proof.
\end{proof}

We can now combine Lemma \ref{lemma:extra-subspace} with Lemma \ref{lem:almostfullrank} to prove that the entire matrix $M^{\textsf{truth}}$ is well conditioned, which will complete the proof of Lemma \ref{lem:well-conditioned}.  

\begin{proof}[Proof of Lemma \ref{lem:well-conditioned}]
Let $\eps' \leq  ( \delta/L)^{O(1)}$ be a parameter that will be set later.  Lemma \ref{lem:almostfullrank} implies that with probability $1 - \eps'^{\Omega(L^{0.4})}$, there is a subspace $U \subset \C^{2L+1}$ of dimension at least 
\[
0.95(2L+1) - 0.0001L \geq 1.899L
\]
such that for any $u \in \C^{2L+1}$, if the projection of $u$ onto $U$ has length at least $1$, then $\norm{M^{(A,B)}u}_2 \geq \eps'$.  To see this, it suffices to take the top-$1.899L $-singular subspace of the matrix $M^{(A,B)}$.

Let $V$ be the orthogonal complement of $U$.  Lemma \ref{lemma:extra-subspace} implies that with  $1 - \eps'^{0.001L}$ probability,
\[
\norm{M^{\textsf{aux}}v}_2 \geq \frac{\delta^2 \eps'^2}{20L}
\]
for all unit vectors $v \in V$.  Note that this application of Lemma \ref{lemma:extra-subspace} is valid because the rows of $M^{\textsf{aux}}$ and $M^{(A,B)}$ do not have any overlapping variables, so we can imagine sampling the smoothing in the rows of $M^{(A,B)}$ first, which fixes the subspaces $U,V$, and then applying Lemma \ref{lemma:extra-subspace}.

Next, observe that the largest singular value of $M^{\textsf{aux}}$ is at most $(10L)^2$ (just by using the trivial upper bound on its individual entries).  Now given a unit vector say $z \in \C^{2l+1}$ we can write it as a sum $z_U + z_V$ obtained by projecting $z$ onto $U$ and $V$ respectively.  We consider two cases.  If the projection onto $U$ satisfies
\[
\norm{z_U}_2 \geq \frac{1}{(100L)^2} \frac{\delta^2 \eps'^2}{20L}
\]
then we have 
\[
\norm{M^{(A,B)}z}_2 \geq \frac{1}{(100L)^2} \frac{\delta^2 \eps'^2}{20L} \cdot  \eps' \,.
\]
Otherwise, we have 
\[
\norm{M^{\textsf{aux}}z}_2 \geq \norm{M^{\textsf{aux}}z_V}_2 - \norm{M^{\textsf{aux}}z_U}_2 \geq \frac{\delta^2 \eps'^2}{40L} - (10L)^2 \cdot \frac{1}{(100L)^2} \frac{\delta^2 \eps'^2}{20L} \geq \frac{\delta^2 \eps'^2}{10^3L} \,.
\]
In both cases, 
\[
\norm{M^{\textsf{truth}}z}_2 \geq (\eps' \delta/ L)^{O(1)} \,,
\]
so actually the above is true for all unit vectors $z$.  Thus, ensuring that $\eps \leq (\delta/L)^{K}$ for some sufficiently large constant $K$ and choosing $\eps' = \eps^{\Omega(1)}$ appropriately completes the proof.

\end{proof}

\subsection{Completing the Analysis of {\sc Frequency Marching with Long Stride} }

Using Lemma \ref{lem:well-conditioned}, we can complete the proof of Lemma \ref{lem:alg2-analysis}.

\begin{proof}[Proof of Lemma \ref{lem:alg2-analysis}]

Recall the discussion at the beginning of Section \ref{sec:well-conditioned}.  In {\sc Frequency Marching with Long Stride}, the optimization problem that we need to solve is as follows.  For variables $X = \{x_{L(-L)}, \dots , x_{LL} \}$ we want to solve
\[
\min_X \norm{MX - \mcl{I}}_2^2 \,.
\] 
where $M$ is a $(0.9999L)^2 \times (2L + 1)$ matrix where
\begin{itemize}
    \item The rows of $M$ are indexed by pairs of integers $a,b$ with $a,b \leq 0.9999L$ and the columns are indexed by integers $m = -L, \dots , L$ with entries given by 
    \[
    M_{(a,b) m} = \sum_{\substack{k_1 + k_2 = -m \\ \abs{k_1} \leq a, \abs{k_2} \leq b}} (-1)^{m} \langle ak_1bk_2| L(-m) \rangle \wt{f_{ak_1}} \wt{f_{bk_2}} 
    \]
   where $\wt{f_{lm}}$ (with $l \leq L - 1$) are the estimates for the lower-degree spherical harmonic coefficients. 
    \item The entries of $\mcl{I}$ are indexed by pairs of integers $a,b$ with $a,b \leq 0.9999L$ and are equal to our estimates $\wt{\mcl{I}_{a,b,L}}$ for the invariant polynomial $\mcl{I}_{a,b,L}(f)$
\end{itemize}

By Lemma \ref{lem:well-conditioned}, with probability at least $1 - 2^{-L^{0.4}}$, after the $\delta$-smoothing, the spherical harmonic coefficients of $f$ satisfy that the matrix  $M^{\textsf{truth}}$ (recall Definition \ref{def:key-matrix}) has smallest singular value at least $(\delta/L)^K$ where $K$ is an absolute constant. 

Now, as long as $\gamma' < (\delta/L)^{O(1)}$ is sufficiently small, we have that the smallest singular value of $M$ is at least  $0.5(\delta/L)^K$.  To see this, note that $M$ is obtained from $M^{\textsf{truth}}$ by replacing the values of the true coefficients $f_{lm}$ with the values of our estimates $\wt{f_{lm}}$ so we have 
\begin{equation}\label{eq:estbound1}
\norm{M - M^{\textsf{truth}}}_F^2 \leq \poly(L/\delta) \gamma' \,.
\end{equation}
Also, by Lemma \ref{lem:est-invariant}, we can ensure that with $1 - 2^{-L}$ probability, our estimates $\wt{\mcl{I}_{a,b,L}}$ all satisfy 
\begin{equation}\label{eq:estbound2}
\left \lvert \wt{\mcl{I}_{a,b,L}} - \mcl{I}_{a,b,L}(f) \right \rvert \leq \sqrt{\gamma} \,.
\end{equation}
Now let $f_L = \{f_{L(-L)}. \dots , f_{LL} \}$.  Since by definition,
\[
M^{\textsf{truth}} f_L = \{ \mcl{I}_{a,b,L}(f) \}_{a,b \leq 0.9999L} 
\]
we can now use (\ref{eq:estbound1}) and (\ref{eq:estbound2}) to get that
\[
\norm{Mf_L - \mcl{I}}_2^2 \leq \poly(L/\delta) (\gamma' + \gamma) \,.
\]
Now let $\rho_{\min}$ be the smallest singular value of $M$.  For any $X$,
\[
\norm{ M(f_L - X)}_2 \geq \rho_{\min} \norm{f_l - X}_2 
\]
so the solution that is actually output by the algorithm {\sc Frequency Marching with Long Stride}, say $\wt{f_L} = \{\wt{f_{L(-L)}}. \dots , \wt{f_{LL}} \} $ must satisfy 
\[
\rho_{\min} \norm{f_L - \wt{f_L}}_2  \leq \norm{ Mf_L - M\wt{f_L}}_2 \leq 2\norm{Mf_L - \mcl{I}}_2 \leq \poly(L/\delta) \sqrt{\gamma' + \gamma} \,.
\]
Since we have that $\rho_{\min} \geq 0.5(\delta/L)^K$ for an absolute constant $K$, we get
\[
\norm{f_L - \wt{f_L}}_2^2 \leq \poly(L/\delta) (\gamma' + \gamma) \,,
\]
as desired.  It is clear that the failure probability over all of the steps is at most $ 1 - 2^{-L^{0.1}}$ and that the algorithm runs in polynomial time (since the optimization problem we are solving is just linear regression).
\end{proof}

\section{Multiple Shells}\label{sec:multiple-shells}

We can consider a more general setting of our problem where instead of a function $f:S^2 \rightarrow \C$, we have a function $f$ that is defined on multiple spherical shells of radii $r_1, \dots , r_T$.  Let $B_{r}$ denote the spherical shell of radius $r$ (in $\R^3$).  There is some unknown function $f: B_{r_1} \cup \dots \cup B_{r_T} \rightarrow \C$.

We will assume that the radii of all of the shells are lower and upper bounded by some constant so the measures on these shells differ by at most a constant factor.  Thus, up to a constant factor in the reconstruction guarantee, we may equivalently view $f$ as a $T$-tuple of functions $(f^{(1)}, \dots , f^{(T)})$ each from $S^2 \rightarrow \C$ where a rotation $R \in SO(3)$ acts by rotating all of $f^{(1)}, \dots , f^{(T)}$ simultaneously i.e. 
\[
R(f) = (R(f^{(1)}), \dots , R(f^{(T)})) \,.
\]
Our reconstruction goal will be to output a function $\wt{f} = (\wt{f^{(1)}}, \dots , \wt{f^{(T)}})$ such that the distance $d_{SO(3)}(\wt{f}, f)$ is small where we define 
\[
d_{SO(3)}(\wt{f}, f)  = \min_{R \in SO(3)} \norm{\wt{f} - R(f)}_2^2 = \min_{R \in SO(3)} \left( \sum_{j \in [T]}\norm{\wt{f^{(j)}} - R(f^{(j)}) }_2^2 \right) \,.
\]

We assume that the expansion of each of $(f^{(1)}, \dots , f^{(T)})$ into spherical harmonics has degree at most $N$.  We can now formulate the same problem as in Section \ref{sec:problem-formulation}.  Analogous to (\ref{eq:sample-model}), each of our observations is of the form
\begin{equation}\label{eq:multipleshells-sample-model}
\wh{f_{lm}^{j}} = R(f^{j})_{lm} + \zeta 
\end{equation}
for all $l \leq N, -l \leq m \leq l, j \in [T]$ where $\zeta$ has real and imaginary part drawn independently from $N(0, \sigma^2)$.  

We will also assume that all of the coefficients $f_{lm}^{j}$ are $\delta$-smoothed (independently).  Our main theorem for reconstructing a function defined  multiple shells is as follows:

\begin{theorem}\label{thm:multiple-shells}
Let $f = (f^{(1)}, \dots , f^{(T)})$ be a function where each $f^{(j)}: S^2 \rightarrow \C$ is a function whose expansion in spherical harmonics has degree at most $N$.  Also assume  $\norm{f^{(j)}}_2 \leq 1$ for all $j$.  Then given $Q$ observations from (\ref{eq:multipleshells-sample-model}), where
\[
Q = \left( \frac{N}{\delta} \right)^{O(\log N)} \poly\left( T, \sigma, \frac{1}{\eps}\right) 
\]
there is an algorithm (Algorithm \ref{alg:full-multiple-shells}) that runs in $poly(Q)$ time and with probability $0.9$ outputs a function $\wt{f} = (\wt{f^{(1)}}, \dots , \wt{f^{(T)}})$ where each $\wt{f^{(j)}}$ has spherical harmonic expansion of degree at most $N$ and such that
\[
d_{SO(3)}(f, \wt{f}) \leq \eps \,.
\]
\end{theorem}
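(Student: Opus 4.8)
The plan is to reduce the multiple-shells problem to the single-shell case by treating the $T$-tuple of functions $(f^{(1)}, \dots, f^{(T)})$ as a single signal living in $\C^{T(N+1)^2}$ on which $SO(3)$ acts block-diagonally: for each degree $\ell$, the relevant representation is now $T$ copies of the degree-$\ell$ irreducible representation of $SO(3)$ (i.e. a $T(2\ell+1)$-dimensional space), and a rotation $R$ acts on all $T$ copies simultaneously by the same matrix. All the structural facts carry over: the invariant polynomials still have the same layered form, and the degree-$3$ invariants are now indexed not only by a triple $(l_1,l_2,l_3)$ but also by a triple of shell indices $(j_1,j_2,j_3) \in [T]^3$, with the same Clebsch--Gordan coefficients appearing. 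So the algorithm is the same two-phase structure: first grid-search for all coefficients $f^{(j)}_{lm}$ with $l \le C$ (now $(C+1)^2 T = O(T)$ variables, still a constant number since $T$ multiplies only polynomially and the brute-force / algebraic-geometry argument of Lemma~\ref{lem:alg1-analysis} tolerates this), then frequency-march layer by layer, at level $L$ solving for all $2T(2L+1)$ coefficients $\{f^{(j)}_{Lm}\}$ jointly.

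The key steps, in order: (i) Set up the analogue of Lemma~\ref{lem:alg1-analysis}, observing that the constant-degree subproblem now has $\poly(T)$ variables but still $O(1)$ degree, so the effective-Nullstellensatz bound from \cite{solerno1991effective} and the grid search still run in $\poly(T, 1/\eps)$ time with the same guarantee. (ii) Set up the analogue of the linear system in {\sc Frequency Marching with Long Stride}: the unknown vector $X$ now has the $2T(2L+1)$ entries $x^{(j)}_{Lk}$, and for each $(a,b)$ with $a,b \le 0.9999L$ and each pair $(j_1,j_2) \in [T]^2$ and target shell $j_3$ we get an equation from $\mcl{I}_{a,b,L}^{(j_1,j_2,j_3)}(f)$; the coefficient matrix $M^{\textsf{truth}}$ is now of size roughly $T^2 (0.9999L)^2 \times T(2L+1)$. (iii) Prove the analogue of Lemma~\ref{lem:well-conditioned}: that $M^{\textsf{truth}}$ has smallest singular value at least $(\delta/(LT))^{O(1)}$ with high probability over the smoothing. (iv) Feed this into the same error-propagation argument: errors compound over $O(\log N)$ levels, each level multiplying by $\poly(LT/\delta)$, giving the stated quasi-polynomial sample complexity with the extra $\poly(T)$ factor. (v) Assemble into Algorithm~\ref{alg:full-multiple-shells} and prove Theorem~\ref{thm:multiple-shells} exactly as Theorem~\ref{thm:main-oneshell} was proved from Lemmas~\ref{lem:alg1-analysis} and~\ref{lem:alg2-analysis}.

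The main obstacle is step (iii), the conditioning bound. There are two ways the multi-shell structure could be exploited, and it is important to pick the cleaner one. The easiest route is to \emph{not} try to use cross-shell invariants at all for conditioning: fix any single shell $j$, restrict to the equations $\mcl{I}_{a,b,L}^{(j,j,j)}(f)$ and the variables $x^{(j)}_{Lk}$, and observe that this is \emph{exactly} the single-shell matrix $M^{\textsf{truth}}$ from Definition~\ref{def:key-matrix} built from $f^{(j)}$; by Lemma~\ref{lem:well-conditioned} applied to each shell (and a union bound over the $T$ shells, which costs only a $\log T$ loss absorbed into the polynomial) each of these $T$ blocks is well-conditioned, and since the full matrix $M^{\textsf{truth}}$ contains all $T$ blocks as disjoint row/column submatrices, its smallest singular value is at least the minimum over shells. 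This makes step (iii) essentially immediate. The only subtlety is that the smoothing of different shells is independent, which is exactly what Lemma~\ref{lem:well-conditioned} needs, and that we must check the coefficients $f^{(j)}_{lm}$ for $l \le L-1$ in block $j$ have been estimated well enough --- but that is guaranteed inductively by the previous levels exactly as in the single-shell proof.

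One should double-check that nothing in the single-shell analysis secretly used $\norm{f}_2 \le 1$ for the \emph{whole} signal in a way that breaks when the total norm is $\le \sqrt{T}$; it does not, since Lemma~\ref{lem:est-invariant} only needs $\norm{g \cdot x}_2 \le 1$ and we can rescale so each $\norm{f^{(j)}}_2 \le 1/\sqrt{T}$ (absorbing the $\sqrt{T}$ into the $\poly(T)$ sample-complexity factor), or simply note that the invariant estimation bounds scale polynomially in $\norm{f}_2$. Thus the whole extension is a routine lift of the single-shell machinery, with the per-shell conditioning lemma doing all the real work and the $\poly(T)$ factors tracking through cleanly.
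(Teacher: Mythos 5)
There is a genuine gap, in fact two, and both come from the $T$-dependence that Theorem~\ref{thm:multiple-shells} promises to be only polynomial. First, your constant-degree phase grid-searches over all $(C+1)^2T$ coefficients jointly. That is $\Theta(T)$ variables, so the grid has size exponential in $T$, not $\poly(T)$; moreover the effective bound of Theorem~\ref{thm:dist-to-solution} degrades with the number of variables (the exponent is $D^{n^c}$ with $n = \Theta(T)$), so the constant $K$ in the acceptance test is no longer universal. Your parenthetical ``still a constant number since $T$ multiplies only polynomially'' is exactly the step that fails. The paper avoids this by running the joint constant-degree recovery only on groups of at most $5$ shells (Lemma~\ref{lem:const-degree-multiple-shells}), and then aligning the groups by finding a rotation matching the four common shells; this alignment is only well-posed because, after smoothing, no rotation far from the identity approximately fixes $(f^{(1)},\dots,f^{(4)})$ simultaneously (Claim~\ref{claim:multipleshell-rotation}). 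That alignment step and the rotation-invariance claim are missing from your plan entirely.

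Second, your step (iii) runs the single-shell conditioning lemma (Lemma~\ref{lem:well-conditioned}) once per shell and union bounds over the $T$ shells, asserting this costs ``only a $\log T$ loss.'' It does not: the failure probability of the conditioning event at level $L$ is only $\eps^{\Omega(L^{0.4})}$, i.e.\ about $2^{-\Omega(L^{0.1})}$ at the smallest levels $L \approx C$, which is a constant. To survive a union bound over $T$ shells you must either shrink $\eps$ so that the singular-value bound becomes $\poly(\delta/(LT))$ at the low levels, or push the starting degree $C$ up to $\mathrm{polylog}(T)$ (which re-breaks the first phase). In the former case the per-level error blowup is $\poly(LT/\delta)$ at the bottom levels, and these errors compound through the subsequent epochs, producing a $T^{\omega(1)}$ factor in the final sample complexity rather than the claimed $\poly(T)$; your own step (iv) already concedes a $\poly(LT/\delta)$ per-level factor, which compounds to $T^{O(\log N)}$. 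The paper sidesteps this with a different design of the linear systems: by Theorem~\ref{thm:multiple-shell-invariants} one may use the cross-shell invariants $\mcl{I}_{1,a,1,b,j,L}$, so that for \emph{every} target shell $j$ the coefficient matrix is built from the coefficients of $f^{(1)}$ alone. Then the conditioning is a single event per level (depending only on the smoothing of $f^{(1)}$), Lemma~\ref{lem:well-conditioned} applies verbatim with condition number $\poly(L/\delta)$ independent of $T$, and the only $T$-dependence is in estimating the invariants $\mcl{I}_{1,a,1,b,j,L}$ for all $j$, which costs merely $\poly(T)$ samples (Lemma~\ref{lem:iterate-multiple-shells}). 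So the two ideas doing the real work in the paper \--- small-group recovery plus rotation alignment, and shell-$1$-anchored frequency marching \--- are precisely the ones your reduction omits.
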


Throughout this section, we will assume that $N \geq 1$ because the case where $N = 0$ is trivial.  Theorem \ref{thm:multiple-shells} can be proven using almost the same methods as Theorem \ref{thm:main-oneshell}.  First, we attempt to recover the constant-degree coefficients of all of the functions $f^{(1)}, \dots , f^{(T)}$.  Individually, this is simple by Lemma \ref{lem:alg1-analysis} but we then need to align them.  To do this, we learn groups of functions e.g. $(f^{(1)}, f^{(2)}, f^{(3)})$, simultaneously using the same technique as the earlier algorithm {\sc Learn Constant-Degree Coefficients}.  Then we can patch together two groups, say $(f^{(1)}, f^{(2)}, f^{(3)})$ and $(f^{(2)}, f^{(3)}, f^{(4)})$ by finding a rotation that aligns the common parts (here that would be $f^{(2)}, f^{(3)}$).  This will let us align the constant-degree coefficients for all of the functions.  We then run an algorithm that is very similar to  {\sc Frequency Marching with Long Stride} to recover the higher-degree coefficients for each of the functions $f^{(1)}, \dots , f^{(T)}$.  Note that while we could run $T$ independent instances of {\sc Frequency Marching with Long Stride}, we would then need to pay a $T^{O(\log N)}$ term in the sample complexity because we would only be able to guarantee that the condition numbers of the matrices are $\poly(NT/\delta)$ (instead of $\poly(N/\delta)$).

\subsection{Invariant Polynomials for Multiple Shells}

It turns out that with multiple-shells, there is an explicit formula for the degree-$3$ invariant polynomials that maintains the crucial layered structure of the formula in Theorem \ref{thm:explicit-invariants}.  This result is also from \cite{bandeira2018estimation}.
\begin{theorem}[\cite{bandeira2018estimation}]\label{thm:multiple-shell-invariants}
For a function $f = (f^{(1)}, \dots , f^{(T)})$ over multiple shells, the degree-$3$ invariant polynomials are (up to scaling) 
\begin{equation}
\mathcal{I}_{s_1,l_1,s_2, l_2, s_3,l_3}(f) = \sum_{\substack{k_1 + k_2 + k_3 = 0 \\ \abs{k_i} \leq l_i}} (-1)^{k_3} \langle l_1k_1l_2k_2| l_3(-k_3) \rangle f_{l_1k_1}^{(s_1)} f_{l_2k_2}^{(s_2)}f_{l_3k_3}^{(s_3)}
\end{equation}
where $\langle l_1k_1l_2k_2| l_3(-k_3) \rangle$ denotes the Clebsch-Gordan (CG) coefficient.
\end{theorem}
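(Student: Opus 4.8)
The plan is to prove this exactly as in the single-shell case (Theorem~\ref{thm:explicit-invariants}), the point being that the representation-theoretic argument never used that each irreducible occurred with multiplicity one, so it goes through verbatim when we allow the index $s$ to range over the $T$ shells. First I would record the module structure: by Fact~\ref{fact:harmonic-basic}, the coefficient space $\C^{T(N+1)^2}$ carrying $\{f^{(s)}_{lm}\}$ decomposes as an $SO(3)$-module into $\bigoplus_{s=1}^T \bigoplus_{l=0}^N V_l^{(s)}$, where $V_l^{(s)}$ is a copy of the spin-$l$ irreducible representation sitting on the block $\{f^{(s)}_{l(-l)}, \dots, f^{(s)}_{ll}\}$. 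A homogeneous degree-$3$ invariant polynomial is an $SO(3)$-invariant element of $\mathrm{Sym}^3$ of the dual module; mapping an invariant trilinear form $T$ to the cubic $x \mapsto T(x,x,x)$ is surjective onto this space, so it suffices to understand $SO(3)$-invariant trilinear forms on individual triples $V_{l_1}^{(s_1)} \otimes V_{l_2}^{(s_2)} \otimes V_{l_3}^{(s_3)}$ and then symmetrize.

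The core step is that the space of $SO(3)$-invariant trilinear forms on $V_{l_1} \otimes V_{l_2} \otimes V_{l_3}$ is one-dimensional when $l_1, l_2, l_3$ form a triangle and zero otherwise. Indeed, since $SO(3)$-irreps are self-dual, this space is $\mathrm{Hom}_{SO(3)}(V_{l_1} \otimes V_{l_2}, V_{l_3})$, whose dimension is the multiplicity of $V_{l_3}$ in the Clebsch--Gordan decomposition $V_{l_1} \otimes V_{l_2} = \bigoplus_{|l_1 - l_2| \le l \le l_1 + l_2} V_l$, which is $1$ under the triangle condition and $0$ otherwise by Schur's lemma; this matches Fact~\ref{fact:CG-trivial}. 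To pin down the generator explicitly, I would note that the CG coefficients are by definition the matrix entries of the intertwiner $V_{l_1} \otimes V_{l_2} \to V_{l_3}$, so composing with the self-duality isomorphism $V_{l_3} \to V_{l_3}^*$ (which sends the weight-$k$ vector to $(-1)^{k}$ times the dual of the weight-$(-k)$ vector) and reading off coordinates produces the invariant trilinear form whose associated cubic is, up to a global scalar, precisely $\mathcal{I}_{s_1,l_1,s_2,l_2,s_3,l_3}(f) = \sum_{k_1+k_2+k_3=0} (-1)^{k_3}\langle l_1 k_1 l_2 k_2 | l_3(-k_3)\rangle f^{(s_1)}_{l_1 k_1} f^{(s_2)}_{l_2 k_2} f^{(s_3)}_{l_3 k_3}$; equivalently, the $(-1)^{k_3}$ is exactly the sign converting the CG coefficient into the fully-balanced Wigner $3j$-symbol.

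It then remains to check that the symmetrization bookkeeping is consistent with this formula. When two of the pairs $(s_i,l_i)$ coincide, the cubic $T(x,x,x)$ substitutes the same block into two slots, and the symmetry relations of Fact~\ref{fact:CG-orthogonality} show the CG trilinear form is symmetric or antisymmetric under that swap with sign $(-1)^{l_1+l_2-l_3}$, so $\mathcal{I}$ is either a genuine symmetric invariant or identically zero — which is exactly the count produced by $\mathrm{Sym}^2(V_l) = \bigoplus_{j\text{ even}} V_j$ versus $\wedge^2(V_l) = \bigoplus_{j\text{ odd}} V_j$; the fully degenerate case $(s_1,l_1)=(s_2,l_2)=(s_3,l_3)$ is handled the same way using the full permutation symmetry of the $3j$-symbol. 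Since the invariant trilinear forms are spanned by these one per triple of summands, and symmetrization is surjective onto the invariant cubics, the polynomials $\mathcal{I}_{s_1,l_1,s_2,l_2,s_3,l_3}$ — each determined up to a scalar, and some identically zero — span the homogeneous degree-$3$ invariants, which is the assertion.

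I expect the only genuinely delicate point to be the explicit identification in the core step, namely producing the precise $(-1)^{k_3}$ sign from the self-duality intertwiner (equivalently, keeping the CG-to-$3j$ conversion consistent); the dimension count, the vanishing/symmetry bookkeeping, and the passage from trilinear forms to polynomials are all routine, and the extra shell index $s$ plays no role beyond enlarging the index set, so this is essentially a transcription of the single-shell argument already cited from \cite{bandeira2018estimation}.
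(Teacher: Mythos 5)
The paper does not prove this statement at all—it is imported directly from \cite{bandeira2018estimation}—and your argument is the standard derivation underlying that source: decompose the coefficient space into irreducibles $V_{l}^{(s)}$, use Schur's lemma plus the Clebsch--Gordan decomposition to see that the invariant trilinear forms on each triple $V_{l_1}^{(s_1)}\otimes V_{l_2}^{(s_2)}\otimes V_{l_3}^{(s_3)}$ are one-dimensional exactly under the triangle condition, and identify the generator as the CG intertwiner contracted with the self-duality pairing (the $(-1)^{k_3}$, i.e.\ the $3j$-symbol normalization), with the symmetrization/parity bookkeeping handling repeated blocks. Your proposal is correct and is essentially the same approach as the cited proof, so there is nothing to flag beyond the phase-convention point you already noted, which is absorbed by the ``up to scaling'' qualifier.
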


Also note that we can view $f$ as a vector with dimensionality $T(N+1)^2$ given by the coefficients in the spherical harmonic expansions of its components $( \{ f_{lm}^{(1)}\}_{l \leq N}, \dots ,  \{ f_{lm}^{(T)}\}_{l \leq N})$ and that any $R \in SO(3)$ acts linearly on this vector and preserves its $L^2$ norm.  Thus, we can apply Lemma \ref{lem:est-invariant} to estimate the invariant polynomials in the multiple-shell setting as well.

\subsection{Algorithm for Multiple Shells}

We now summarize our algorithm for multiple shells.  First, we have an algorithm for recovering constant-degree coefficients for a function with $T$ shells, but the runtime and sample complexity may depend exponentially (or worse) on $T$ so we will only use this algorithm for $T \leq 5$.  The algorithm is very similar to the earlier algorithm {\sc Learn Constant-Degree Coefficients}.

Note that we can view $f$ as a vector with dimensionality $T(N+1)^2$ given by the coefficients in the spherical harmonic expansions of its components $( \{ f_{lm}^{(1)}\}_{l \leq N}, \dots ,  \{ f_{lm}^{(T)}\}_{l \leq N})$ and that any $R \in SO(3)$ acts linearly on this vector and preserves its $L^2$ norm.  Thus, we can apply Lemma \ref{lem:est-invariant} to estimate the invariant polynomials in the multiple-shell setting as well. 

Below, we will use $C$ to denote a sufficiently large universal constant.
\begin{algorithm}[H]\label{alg:const-degree-multiple-shells}
\caption{{\sc Learn Constant-degree Coefficients for Multiple Shells} }
\begin{algorithmic} 
\State \textbf{Input:} Parameters $\sigma, \gamma, T \leq 5$ 
\State \textbf{Input:} $Q = \poly(\sigma,  1/\gamma)$ samples of $f$ from the sampling model (\ref{eq:multipleshells-sample-model})
\State Compute invariant polynomials $P_1, \dots , P_k$ that generate 
\[
\C^{SO(3)}[ x^{(1)}_{00}, \dots , x^{(1)}_{CC}, \dots , x^{(T)}_{00}, \dots , x^{(T)}_{CC}] \,,
\]
the invariant ring for coefficients of spherical harmonics of degree at most $C$

\State Obtain estimates $\wt{P_1}, \dots , \wt{P_k}$ for $P_1\left(\{f_{lm}^{(j)} \}_{l \leq C, j \in [T]} \right), \dots , P_k\left(\{f_{lm}^{(j)} \}_{l \leq C, j \in [T]} \right)$ using Lemma \ref{lem:est-invariant}
\State Let $K$ be a sufficiently large universal constant (in terms of $C, P_1, \dots , P_k$)
\State Grid search for values of $\{f_{lm}^{(j)}\}_{l \leq C, j \in [T]}$ with $|f_{lm}^{(j)}| \leq 1$ with discretization $(1/\gamma)^{O(K)}$
\For{each guess $\{\wt{f_{lm}^{(j)}}\}_{l \leq C, j \in [T]}$}
\State Check if for all $j \in [k]$
\[
\left \lvert P_j\left( \{\wt{f_{lm}^{(j)}}\}_{l \leq C, j \in [T]} \right) - \wt{P_j} \right \rvert \leq 0.2 (1/\gamma)^{K}
\]
\If{above check passes}
\State \textbf{Output:} $\{\wt{f_{lm}^{(j)}}\}_{l \leq C, j \in [T]}$
\State \textbf{break}
\EndIf
\EndFor
\end{algorithmic}
\end{algorithm}

Next, we describe our algorithm for iteratively recovering the higher-degree spherical harmonic coefficients via frequency marching.
\begin{algorithm}[H]
\caption{{\sc Frequency Marching for Multiple Shells} }
\begin{algorithmic} 
\State \textbf{Input:} Parameters $\delta, \sigma, \gamma$
\State \textbf{Input:} Indices $L \in [N], j \in [T]$ and estimates $\{ \wt{f_{lm}^{(1)}} \}_{l \leq L - 1}$ of coefficients of degree less than $L$ of $f^{(1)}$
\State \textbf{Input:} $Q = \poly(L T \sigma/(\delta \gamma))$ samples of $f$ from the sampling model (\ref{eq:sample-model})

\State 
\For{all integers $a,b \leq 0.9999 L$}
\State Compute estimate $\wt{\mcl{I}_{1,a,1,b,j,L}}$ of $\mcl{I}_{1,a,1,b,j,L}(f)$ using Lemma \ref{lem:est-invariant} 
\EndFor
\For{all integers $a,b \leq 0.9999 L$}
\State For variables $X = \{ x_{L(-L)}, \dots , x_{LL} \}$ , define the vector $M_{(a,b)}$ such that
\[
  M_{(a,b)} \cdot  X = \sum_{\substack{k_1 + k_2 + k_3 = 0 \\ \abs{k_1} \leq a, \abs{k_2} \leq b, \abs{k_3} \leq L}} (-1)^{k_3} \langle ak_1bk_2| L(-k_3) \rangle \wt{f_{ak_1}^{(1)}} \wt{f_{bk_2}^{(1)}} x_{Lk_3}
\]
\EndFor
\State Let $M$ be the $(0.9999L)^2 \times (2L + 1)$ matrix with rows given by $M_{(a,b)}$ for $a,b \leq 0.9999 L$
\State Let $\mcl{I}$ be the vector of length $(0.9999L)^2$ with entries $\wt{\mcl{I}_{1,a,1,b,j,L}}$ for $a,b \leq 0.9999 L$
\State \textbf{Output:} $\{\wt{f_{L(-L)}^{(j)}}, \dots , \wt{f_{LL}^{(j)}} \}$  as the solution to
\[
 \arg \min_{X} \left( \norm{MX - \mcl{I}}_2^2\right) \,.
\]
\end{algorithmic}
\end{algorithm}
Note that this algorithm is exactly the same as the algorithm in the one-shell case except we only use estimates for the coefficients of $f^{(1)}$ to set up our linear system and recover the coefficients of $f^{(j)}$ for any $j$.  This exploits the structure of the invariant polynomials (see Theorem \ref{thm:multiple-shell-invariants}) where we set $s_1 = s_2 = 1, s_3 = j$.

Now we describe our full algorithm for recovering the remaining coefficients and aligning all of the functions.  We essentially run the above algorithm over $5$-tuples of indices that share $4$ common indices and use the $4$ common indices to align them.  This allows us to align all of the functions one-by-one.  We then use {\sc Frequency Marching for Multiple Shells} to recover the higher-degree coefficients for each shell.
\begin{algorithm}[H]
\caption{{\sc $SO(3)$ Reconstruction with Multiple Shells} }
\begin{algorithmic} 
\State \textbf{Input:} Parameters $N,T,  \delta, \sigma , \eps$ 
\State \textbf{Input:} $Q$ samples of $f$ from the sampling model (\ref{eq:multipleshells-sample-model}) where 
\begin{itemize}
    \item $f = (f^{(1)}, \dots , f^{(T)})$ is an unknown function whose spherical harmonic expansions have degree at most $N$
    \item $\norm{f}_2 \leq 1$
    \item The coefficients of $f$ are $\delta$-smoothed
    \item The number of samples is
    \[
    Q = \left( \frac{N}{\delta}\right)^{O(\log N)} \poly\left(T, \sigma, \frac{1}{\eps} \right) \,.
    \]
\end{itemize}
\State Set $\gamma  = \left( \delta/N\right)^{O(\log N)} (\eps/T)^{O(1)}$
\State Run {\sc Learn Constant-degree Coefficients for Multiple Shells} with parameters $(\sigma, \gamma, T = 5)$ on 
\[
\left( f^{(1)}, f^{(2)},f^{(3)},f^{(4)} , f^{(5)} \right)
\]
to obtain estimates
\[
\{ \wt{f_{lm}^{(j)}} \}_{l \leq C, j \leq 5} \,. 
\]

\For{all $6 \leq i \leq T$} 
\State Run {\sc Learn Constant-degree Coefficients for Multiple Shells} with parameters $(\sigma, \gamma, T = 5)$ on 
\[
\left( f^{(1)}, f^{(2)},f^{(3)},f^{(4)} , f^{(i)} \right)
\]
\indent to obtain estimates 
\[
\{ \wh{f_{lm}^{(j)}} \}_{l \leq C, j  \in \{1,2,3,4, i \}} \,.
\]

\State Find $R \in SO(3)$ that minimizes 
\[
\sum_{l \leq C}\sum_{m = -l}^l \sum_{j \in \{1,2,3,4 \}} \left \lvert R(\wh{f^{(j)}})_{lm} - \wt{f_{lm}^{(j)}}\right \rvert^2 \,.
\]
\State Set $\{ \wt{f_{lm}^{(i)}} \}_{l \leq C} = \{ R(\wh{f^{(i)}})_{lm} \}_{l \leq C}$
\EndFor

\For{j = 1,2, \dots , T}
\For{L = C+1 , \dots , N}
\State Run {\sc Frequency Marching for Multiple Shells} with parameters $\delta, \sigma, \gamma$, indices $L, j$ and estimates $\{\wt{f_{lm}^{(1)}} \}_{l \leq L-1} $ to obtain solution $\{\wt{f_{L(-L)}^{(j)}}, \dots , \wt{f_{LL}^{(j)}} \}$ 
\EndFor
\EndFor

\State \textbf{Output:} $\wt{f^{(j)}} = \{\wt{f_{lm}^{(j)}} \}_{l \leq N}$ for all $j = 1,2, \dots , T$
\end{algorithmic}
\label{alg:full-multiple-shells}
\end{algorithm}
\begin{remark}
To solve the minimization over $R \in SO(3)$, we will simply grid search over a sufficiently fine grid.
\end{remark}

The following two lemmas correspond to Lemma \ref{lem:alg1-analysis} and Lemma \ref{lem:alg2-analysis}.  The first, about {\sc Learn Constant-Degree Coefficients For Multiple Shells} follows from exactly the same proof as Lemma \ref{lem:alg1-analysis}. 

\begin{lemma}\label{lem:const-degree-multiple-shells}
Let $1 \leq T \leq 5$ be an integer.  With probability $1 - 2^{-1/\gamma}$ over the samples, the output of {\sc Learn Constant-Degree Coefficients for Multiple Shells} for a function with $T$ shells satisfies the property that there is a rotation $R \in SO(3)$ such that 
\[
\sum_{l = 0}^C \sum_{m = -l}^l \sum_{j \in [T] }\left\lvert \wt{f_{lm}^{(j)}} - R(f^{(j)})_{lm}\right \rvert^2 \leq \gamma \,.
\]
This statement holds with no smoothing on the coefficients of $f$.
\end{lemma}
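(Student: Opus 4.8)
The plan is to run the proof of Lemma~\ref{lem:alg1-analysis} essentially unchanged, the only new point being that we treat all $T\le 5$ shells simultaneously under the diagonal action of $SO(3)$; since $T$ and the cutoff $C$ are universal constants, the ambient dimension $T(C+1)^2$ stays constant and nothing in the argument degrades, so in particular the smoothing of the coefficients plays no role. First I would view the degree-$\le C$ coefficients $\{f_{lm}^{(j)}\}_{l\le C,\,j\in[T]}$ as a single vector in $\C^{T(C+1)^2}$ on which $R\in SO(3)$ acts linearly by $R\cdot(f^{(1)},\dots,f^{(T)})=(R(f^{(1)}),\dots,R(f^{(T)}))$, an action that preserves the $\ell^2$ norm. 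By Fact~\ref{fact:invariant-poly} (parts 2 and 3), the invariant ring of this action is finitely generated and separates orbits, so I would fix generators $P_1,\dots,P_k$ (normalized to have coefficients of magnitude at most $1$) --- exactly the polynomials computed in the first step of {\sc Learn Constant-degree Coefficients for Multiple Shells} --- and note they all have degree $O(1)$ because there are $O(1)$ variables.

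Next I would check that the sampling model~(\ref{eq:multipleshells-sample-model}) fits the hypotheses of Lemma~\ref{lem:est-invariant}: since $\norm{f^{(j)}}_2\le 1$ for every $j$ we have $\norm{R\cdot f}_2\le\sqrt{T}=O(1)$ for all $R$ (rescale by a constant so the norm is $\le1$), and $SO(3)$ acts linearly on the $T(C+1)^2$-dimensional coefficient vector. Taking $Q=\poly(\sigma,1/\gamma)$ samples and applying Lemma~\ref{lem:est-invariant} with target accuracy a sufficiently high power of $\gamma$ and failure probability $2^{-1/\gamma}$ yields estimates $\wt{P_j}$ of the $P_j$ evaluated at the true coefficients, to within that accuracy for all $j\in[k]$, with probability $1-2^{-1/\gamma}$; this is the source of the probability in the statement.

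Finally I would analyze the grid-search test exactly as in the single-shell case. Working in a frame where the true coefficients satisfy $|f_{lm}^{(j)}|\le1$, the grid (whose spacing is a high power of $\gamma$) contains a point within that spacing of the truth in every coordinate; since each $P_j$ is $\poly(1)$-Lipschitz on the unit ball, that point passes the test, so the algorithm outputs something. Conversely, any output passing the test has $|P_j(\mathrm{output})-P_j(\mathrm{truth})|$ bounded by (threshold) $+$ (estimation error), which can be driven below any prescribed power of $\gamma$ by choosing the constant $K$ large. Then I would invoke the effective orbit-separation bound (Theorem~7 in \cite{solerno1991effective}, as used for the single shell): a {\L}ojasiewicz-type inequality bounding $d_{SO(3)}^2$ of two configurations by a fixed power of the discrepancy of $P_1,\dots,P_k$, with constants depending only on the $P_j$'s (hence only on $C$ and $T\le5$), in particular independent of $\gamma$. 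This forces $\sum_{l\le C,\,m,\,j\in[T]}|\wt{f_{lm}^{(j)}}-R(f^{(j)})_{lm}|^2\le\gamma$ for some $R\in SO(3)$, and because all $T$ shells are tested against one common rotation the output is automatically \emph{internally aligned} --- which is what the patching step of Algorithm~\ref{alg:full-multiple-shells} later exploits.

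I expect the only genuinely non-routine ingredient to be this effective algebraic-geometry bound; the main thing to be careful about is confirming that its constants are universal (they depend only on the fixed generators, hence on $C$ and $T\le5$, and not on $\gamma$), which is precisely what the cited statement guarantees once the group, $C$, and $T$ are fixed. Everything else --- the reduction to a constant-dimensional linear $SO(3)$-action, the sample complexity of estimating the generators, and the Lipschitz/discretization bookkeeping for the grid search --- is identical to the single-shell argument.
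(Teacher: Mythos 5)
Your proposal is correct and matches the paper's treatment: the paper proves this lemma by simply observing that it follows from exactly the same argument as Lemma~\ref{lem:alg1-analysis}, applied to the $T(C+1)^2$-dimensional coefficient vector under the diagonal action of $SO(3)$, which is precisely the reduction you carry out (generators via Fact~\ref{fact:invariant-poly}, estimation via Lemma~\ref{lem:est-invariant}, grid search, and the effective orbit-separation bound of Theorem~\ref{thm:dist-to-solution} with constants depending only on the fixed generators). Your added remarks about rescaling so $\norm{R\cdot f}_2\leq 1$ and about the single common rotation aligning all shells are consistent with the paper's intent.
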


The next lemma follows from exactly the same proof as Lemma \ref{lem:alg2-analysis}.
\begin{lemma}\label{lem:iterate-multiple-shells}
Fix an index $L$.  With probability $1 - 2^{-L^{0.1}}$ over the random smoothing of the coefficients, the following holds.  Given initial estimates for  {\sc Frequency Marching for Multiple Shells} that satisfy 
\[
\sum_{l = 0}^{0.9999L} \sum_{m = -l}^l |\wt{f_{lm}^{(1)}} - f_{lm}^{(1)}|^2 \leq \gamma' 
\]
for some sufficiently small $\gamma' \leq (\delta / L)^{O(1)}$ and any index $j$, the algorithm, with probability $1 - 2^{-LT}$ over the samples, outputs a solution $\{\wt{f_{L(-L)}^{(j)}}, \dots , \wt{f_{(LL)}^{(j)}} \}$ satisfying
\[
\sum_{m = -L}^L |\wt{f_{Lm}^{(j)}} - f_{Lm}^{(j)}|^2 \leq \poly(L/\delta) (\gamma' + \gamma) \,.
\]
\end{lemma}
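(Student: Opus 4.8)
The plan is to observe that the proof of Lemma~\ref{lem:alg2-analysis} carries over essentially verbatim, because {\sc Frequency Marching for Multiple Shells} solves a linear system with \emph{exactly} the same coefficient matrix as in the one-shell case, with the coefficients of the single function $f^{(1)}$ playing the role of the lower-degree coefficients. Indeed, by Theorem~\ref{thm:multiple-shell-invariants} with $s_1 = s_2 = 1$ and $s_3 = j$, the matrix $M^{\textsf{truth}}$ has entries
\[
M^{\textsf{truth}}_{(a,b)m} = \sum_{\substack{k_1 + k_2 = -m \\ \abs{k_1} \leq a, \abs{k_2} \leq b}} (-1)^m \langle ak_1bk_2 | L(-m) \rangle f^{(1)}_{ak_1} f^{(1)}_{bk_2} \,,
\]
which is precisely the matrix of Definition~\ref{def:key-matrix} applied to $f^{(1)}$, and the target satisfies $M^{\textsf{truth}} f^{(j)}_L = \{\mcl{I}_{1,a,1,b,j,L}(f)\}_{a,b \leq 0.9999L}$, again by the explicit formula.

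First I would invoke Lemma~\ref{lem:well-conditioned} for the function $f^{(1)}$. Since the coefficients $f^{(1)}_{lk}$ are $\delta$-smoothed independently of all other coefficients, and since the proof of Lemma~\ref{lem:well-conditioned} (through Lemmas~\ref{lemma:good-assignment}, \ref{lem:almostfullrank}, and~\ref{lemma:extra-subspace}) uses only the randomness in the coefficients of $f^{(1)}$ that appear in $M^{\textsf{truth}}$, the identical bound holds: with probability $1 - 2^{-L^{0.4}}$ the smallest singular value of $M^{\textsf{truth}}$ is at least $(\delta/L)^K$. In particular this condition-number bound is $\poly(N/\delta)$ with \emph{no} dependence on $T$ — the reason for reusing $f^{(1)}$ across all shells instead of running $T$ independent instances.

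Next, exactly as in the proof of Lemma~\ref{lem:alg2-analysis}, the hypothesis $\sum_{l \leq 0.9999L}\sum_m |\wt{f^{(1)}_{lm}} - f^{(1)}_{lm}|^2 \leq \gamma'$ gives $\norm{M - M^{\textsf{truth}}}_F^2 \leq \poly(L/\delta)\gamma'$, so for $\gamma'$ small enough the smallest singular value of $M$ is still at least $\tfrac12(\delta/L)^K$. For the target vector, apply Lemma~\ref{lem:est-invariant} to the degree-$3$ invariants $\mcl{I}_{1,a,1,b,j,L}$, viewing $f$ as a vector of dimension $T(N+1)^2$ on which $SO(3)$ acts linearly; the sample complexity then picks up only a $\poly(T)$ factor, giving $Q = \poly(LT\sigma/(\delta\gamma))$ and ensuring that with probability $1 - 2^{-LT}$ all estimates obey $|\wt{\mcl{I}_{1,a,1,b,j,L}} - \mcl{I}_{1,a,1,b,j,L}(f)| \leq \sqrt{\gamma}$. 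Combining yields $\norm{M f^{(j)}_L - \mcl{I}}_2^2 \leq \poly(L/\delta)(\gamma'+\gamma)$, and since the output minimizes $\norm{MX - \mcl{I}}_2^2$, the lower bound on the smallest singular value of $M$ gives $\norm{f^{(j)}_L - \wt{f^{(j)}_L}}_2^2 \leq \poly(L/\delta)(\gamma'+\gamma)$, as claimed.

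There is no genuinely new difficulty; the only steps requiring care are bookkeeping. One must check that the smoothing randomness driving Lemma~\ref{lem:well-conditioned} lives entirely in the coefficients of $f^{(1)}$ — it does, since $s_1 = s_2 = 1$ in the invariants used — so the well-conditioning of $M^{\textsf{truth}}$ is unaffected by how the other shells are smoothed and the argument can be conditioned on them. One must also verify that $T$ enters only polynomially: it does, both when estimating the invariants over $T(N+1)^2$ variables and in the union bound over the $T$ shells in the outer loop, so no $T^{O(\log N)}$ factor arises. If I had to name the main obstacle it would be exactly this last accounting point — confirming that reusing $f^{(1)}$ for every shell truly decouples the per-shell condition numbers from $T$ — but this is immediate from the layered structure of the degree-$3$ invariant polynomials.
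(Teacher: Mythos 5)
Your proposal is correct and matches the paper's treatment: the paper simply asserts that this lemma follows from exactly the same proof as Lemma~\ref{lem:alg2-analysis}, and your argument is precisely that adaptation — noting that with $s_1=s_2=1$, $s_3=j$ the coefficient matrix depends only on the $\delta$-smoothed coefficients of $f^{(1)}$, so Lemma~\ref{lem:well-conditioned} applies unchanged, with only $\poly(T)$ bookkeeping in the invariant estimation. Your added accounting of why no $T^{O(\log N)}$ factor arises is consistent with the paper's remarks and requires no correction.
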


Before we can prove Theorem \ref{thm:multiple-shells}, we need one more step.  One potential difficulty that could arise when attempting to align is that some of the functions are invariant under nontrivial rotations.  This would then prevent us from uniquely patching together the groups.  We will show in the proceeding section that this actually does not happen due to the random smoothing.  We complete the proof of Theorem \ref{thm:multiple-shells} afterwards.

\subsection{Smoothing Prevents Rotation Invariance}

\begin{claim}\label{claim:single-rotation}
Consider a function with $f$ whose degree $1$ spherical harmonic coefficients are $f_1 = \{ f_{1(-1)}, f_{10}, f_{11} \}$.  Let $R \in SO(3)$ be a fixed rotation such that $\norm{R - I}_F \geq \tau $ (where we use the $3 \times 3$ matrix corresponding to $R$) for some parameter $\tau$.  Consider $\delta$-smoothing the coefficients of $f$.  Then for any parameter $\rho$, 
\[
\Pr [ \norm{R(f_1)  - f_1}_2^2 \leq  \tau^2 \rho^2\delta^2 ] \leq O(\rho)\,. 
\]
where the probability is over the random smoothing. 
\end{claim}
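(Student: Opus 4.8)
The plan is to reduce the claim to a one-dimensional anticoncentration statement about a Gaussian-perturbed linear form. First I would observe that since $R$ acts on the degree-$1$ spherical harmonics as a fixed linear map, say $A_R \in \C^{3 \times 3}$ (the degree-$1$ block of the Wigner matrix of $R$), the quantity $R(f_1) - f_1$ equals $(A_R - I)f_1$. Thus $\norm{R(f_1) - f_1}_2^2 = \norm{(A_R - I)f_1}_2^2$, and the event in question is $\norm{(A_R - I)f_1}_2 \leq \tau\rho\delta$. Here $f_1 = f_1' + \xi$ where $f_1'$ is the adversarial choice and $\xi \in \C^3$ has i.i.d.\ real and imaginary parts distributed as $N(0,\delta^2)$; after conditioning on $f_1'$, it suffices to bound the probability that the Gaussian vector $\xi$ lands in the preimage under $A_R - I$ of a small ball around the fixed point $-(A_R - I)f_1'$.

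The key step is to control the smallest singular value of $A_R - I$ from below in terms of $\tau$. Since $R$ acts on the degree-$1$ representation, which is (equivalent to) the standard $3$-dimensional rotation representation, $A_R$ is conjugate to the $3\times 3$ rotation matrix $R$ itself by a fixed unitary change of basis; hence the singular values of $A_R - I$ coincide with those of $R - I$ (up to the unitary conjugation, which preserves singular values). A rotation in $SO(3)$ fixes an axis, so $R - I$ always has a zero eigenvalue, but its two nonzero singular values are both of order the rotation angle $\theta$, and $\norm{R - I}_F \asymp |\theta|$ (for $\theta$ bounded away from $\pi$; the case of angles near $\pi$ needs a separate easy check since then $\norm{R-I}_F$ is bounded below by a constant). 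So from $\norm{R - I}_F \geq \tau$ we extract a two-dimensional subspace $W \subseteq \C^3$ on which $A_R - I$ acts with all singular values $\gtrsim \tau$ (possibly after rescaling $\tau$ by an absolute constant).

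To finish: project $\xi$ onto $W$; the event $\norm{(A_R-I)f_1}_2 \leq \tau\rho\delta$ forces the $W$-component of $(A_R - I)f_1$ to have norm $\leq \tau\rho\delta$, which in turn forces the $W$-component of $\xi$ to lie within distance $O(\rho\delta)$ of a fixed point. Since the $W$-component of $\xi$ is a Gaussian of standard deviation $\delta$ in a $2$-real-dimensional (indeed $4$-real-dimensional if $W$ is complex $2$-dimensional) space, the probability of landing in a ball of radius $O(\rho\delta)$ is $O(\rho^2) \leq O(\rho)$; even just using one real direction gives $O(\rho)$, which is all that is claimed. One can cite Corollary~\ref{coro:anti-concentration-general} (or the elementary Gaussian small-ball bound) for this last step.

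The main obstacle I anticipate is the precise relationship between $\norm{R - I}_F$ and the singular values of $A_R - I$ on the degree-$1$ representation: one needs to know that the degree-$1$ action is unitarily equivalent to the standard representation (so no hidden scaling factors appear) and to handle the edge case of rotation angles close to $\pi$, where $R - I$ has a singular value close to $2$ but the estimate $\norm{R-I}_F \gtrsim \theta$ degrades. Both are routine but must be stated carefully; everything else is a standard Gaussian anticoncentration argument.
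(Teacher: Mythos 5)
Your proposal is correct, but it takes a genuinely different route from the paper at the key step. The paper never identifies the degree-$1$ action with the standard rotation representation: it lower-bounds $\norm{M_R - I}_F \geq \Omega(\tau)$ by testing the action against the linear function $h(x) = v\cdot x$ (where $v$ is a unit vector with $\norm{Rv - v}_2 \geq \Omega(\tau)$) and invoking Parseval, and then treats $\norm{(M_R-I)f_1}_2^2$ as a real quadratic polynomial in the six real coordinates of $f_1$, observes that the coefficients of the square terms sum to $\norm{M_R-I}_F^2$ so some single square term has coefficient $\Omega(\tau^2)$, and finishes by conditioning on the other five coordinates and applying the univariate anticoncentration bound (Claim~\ref{claim:univariate-poly}). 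You instead use the unitary equivalence of the degree-$1$ spherical-harmonic representation with the (complexified) standard representation, so that $M_R - I$ has the same singular values as $R - I$; since $R$ is normal these are exactly $0$ and $2|\sin(\theta/2)|$ (twice), while $\norm{R-I}_F = 2\sqrt{2}\,|\sin(\theta/2)|$, so in fact the two nonzero singular values equal $\norm{R-I}_F/\sqrt{2} \geq \tau/\sqrt{2}$ exactly and your worried-about edge case near $\theta = \pi$ does not arise at all. Writing $W$ for the orthogonal complement of the kernel, the event forces $\norm{P_W f_1}_2 \lesssim \rho\delta$, and a Gaussian small-ball bound on $P_W\xi$ finishes the argument (your phrasing ``the $W$-component of $(A_R-I)f_1$'' should really be ``since $(A_R-I)f_1 = (A_R-I)P_Wf_1$ and $A_R - I$ is invertible on $W$ with inverse of norm $O(1/\tau)$,'' but the logic is the intended one). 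What each buys: your spectral argument is more structural, requires the representation-theoretic identification you flag, and actually yields a stronger bound ($O(\rho^2)$, even $O(\rho^4)$, since $W$ is four real dimensions), whereas the paper's argument needs only the Frobenius-norm lower bound on $M_R - I$ and plugs directly into the generic anticoncentration toolkit it already built in Appendix~\ref{sec:poly-anticoncentration}, at the cost of only giving $O(\rho)$ — which is all the claim asserts.
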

\begin{proof}
Note that viewing $f_1$ as a vector in $\C^3$, we can write 
\[
R(f_1) - f_1 = (M_R - I)f_1
\]
where $M_R$ is the matrix corresponding to the linear map given by $R$ on the spherical harmonic coefficients of degree $1$.  Note that by assumption, there must exist a unit vector $v \in \R^3$ such that $\norm{Rv - v}_2 \geq \Omega(\tau)$.  Now consider the function from $h: S^2 \rightarrow \C$ given by $h(x) = v \cdot x$.  Note that the expansion of $h$ in spherical harmonics has degree exactly $1$ and is equal to say, $\{ h_{1(-1)}, h_{10}, h_{11} \}$ where 
\[
|h_{1(-1)}|^2 + |h_{10}|^2 + |h_{11}|^2 = O(1) \,.
\]
Now we know that 
\[
\norm{R(h) - h}_2 \geq \Omega(\tau)
\]
so therefore, by orthonormality (Fact \ref{fact:orthonormality}), we must have
\[
\norm{M_R - I}_{F} \geq  \Omega(\tau) \,.
\]
Let the real and imaginary parts of $\{ f_{1(-1)}, f_{10}, f_{11} \}$ be $a_{-1}, b_{-1}, a_0, b_0 , a_1, b_1$ respectively.  Note that the expression $\norm{(M_R - I)f_1}_2^2$ is a quadratic polynomial in $a_{-1}, b_{-1}, a_0, b_0 , a_1, b_1$ and the sum of the coefficients of the terms $a_{-1}^2 , \dots , b_1^2$ is exactly $\norm{M_R - I}_{F}^2$.  Thus, without loss of generality, the coefficient of  $a_{-1}^2$ has magnitude at least $\Omega(\tau^2)$.  Now we can sample the smoothing in all of the other variables $b_{-1}, a_0, b_0 , a_1, b_1$ first and then use Claim \ref{claim:univariate-poly} to get the desired inequality.
\end{proof}

\begin{claim}\label{claim:multipleshell-rotation}
Consider a function $f = (f^{(1)}, \dots , f^{(T)})$ where $T \geq 4$ and assume $\norm{f}_2 \leq 1$.  Also, assume that the spherical harmonic coefficients of each of  $f^{(1)}, \dots , f^{(T)}$ are $\delta$-smoothed.  Let $0 < \tau < 0.5$ be a parameter and let $\rho$ be a parameter with $\rho \leq c (\delta \tau)^{10}$ for some sufficiently small absolute constant $c$.  Then with probability at least $1 - \rho$ over the random smoothing,
\[
\norm{R(f) - f}_2^2 \geq \tau^2 \rho^2 \delta^2
\]
for all for all $ R \in SO(3)$ with $\norm{R - I}_F \geq \tau$.
\end{claim}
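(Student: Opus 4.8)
The plan is to derive this uniform-over-$R$ statement from the pointwise bound of Claim~\ref{claim:single-rotation} via a net over $SO(3)$ and a Lipschitz argument, after first reducing to the degree-$1$ part of $f$. Since a rotation preserves the degree of a spherical harmonic and distinct degrees are orthogonal, for any $R$ and any shell $j$ we have $\norm{R(f^{(j)})-f^{(j)}}_2^2 \ge \norm{R(f^{(j)}_1)-f^{(j)}_1}_2^2$, so $\norm{R(f)-f}_2^2 \ge \sum_{j=1}^T \norm{R(f^{(j)}_1)-f^{(j)}_1}_2^2 =: g(R)$; it therefore suffices to show $g(R)\ge \tau^2\rho^2\delta^2$ for all $R$ with $\norm{R-I}_F\ge\tau$. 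Note that $g$ depends only on the (independently $\delta$-smoothed) degree-$1$ coefficients of the $T$ shells and on the $3\times3$ degree-$1$ representation matrix $M_R$, which is an $O(1)$-Lipschitz function of $R$ with $\norm{M_R}_{\mathrm{op}}=1$; this is the point of passing to degree $1$ and is what makes the relevant Lipschitz constant independent of $N$.

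First I would fix a rotation $R_0$ with $\norm{R_0-I}_F\ge\tau/2$ and bound $\Pr[g(R_0)\le t]$. The event $g(R_0)\le t$ forces $\norm{R_0(f^{(j)}_1)-f^{(j)}_1}_2^2\le t$ for \emph{every} shell $j$; by Claim~\ref{claim:single-rotation} (applied with rotation $R_0$, parameter $\tau/2$ in place of $\tau$, and threshold parameter $2\sqrt t/(\tau\delta)$) each of these events has probability $O(\sqrt t/(\tau\delta))$, and since the smoothings of the $T$ shells are independent the joint probability is at most $\bigl(O(\sqrt t/(\tau\delta))\bigr)^T$. Next I would take an $\eta$-net $\mathcal N$ of $SO(3)$ with $\eta=c_0\,\tau\rho\delta$, which has $|\mathcal N|=O(1/\eta^3)$ points, set $t=(c_1\tau\rho\delta)^2$, and union-bound the previous estimate over the net points $R_0$ with $\norm{R_0-I}_F\ge\tau/2$; the resulting failure probability $O\bigl(1/(\tau\rho\delta)^3\bigr)\cdot\bigl(O(\rho)\bigr)^T$ is at most $\rho$ for suitable absolute constants, using the slack provided by $T\ge4$ and $\rho\le c(\delta\tau)^{10}$. (One could alternatively replace the degree-$1$ reduction by the degree-$\le C$ part of a single shell and invoke the multivariate anticoncentration bound of Corollary~\ref{coro:multivariate-anticoncentration} on the corresponding quadratic form, whose trace is still $\gtrsim\tau^2$, which makes the per-net-point failure probability polynomially small in the net size and gives additional room.) So with probability $\ge 1-\rho$ every such net point satisfies $g(R_0)\ge (c_1\tau\rho\delta)^2$.

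Finally I would extend from the net to all of $SO(3)$. Given any $R$ with $\norm{R-I}_F\ge\tau$, pick a nearest net point $R_0\in\mathcal N$; then $\norm{R_0-I}_F\ge\tau-\eta\ge\tau/2$, so the previous step applies, and since $R\mapsto M_R$ is $O(1)$-Lipschitz and $\sum_j\norm{f^{(j)}_1}_2^2\le\norm{f}_2^2\le1$ we get $\sqrt{g(R)}\ge\sqrt{g(R_0)}-\norm{(M_R-M_{R_0})\,f_1}_2\ge c_1\tau\rho\delta-O(c_0\tau\rho\delta)\ge\tau\rho\delta$ once $c_0$ is chosen small relative to $c_1$. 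Combining with the first paragraph gives $\norm{R(f)-f}_2^2\ge g(R)\ge\tau^2\rho^2\delta^2$ for all such $R$, as required.

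The main obstacle is the bookkeeping in the middle step: one must verify that the probability bound of Claim~\ref{claim:single-rotation} genuinely depends only on a single shell's degree-$1$ coefficients (so that the per-shell events are independent and multiply), that the degradation $\tau\to\tau/2$ incurred by rounding to the net still satisfies that claim's hypothesis, and — most delicately — that the net granularity $\eta$ is fine enough to absorb the $O(\eta)$ Lipschitz error yet coarse enough that $|\mathcal N|\cdot(O(\rho))^T\le\rho$; this last balance is exactly where the assumptions $T\ge4$ and $\rho\le c(\delta\tau)^{10}$ are consumed.
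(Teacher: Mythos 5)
Your proposal is correct and follows essentially the same route as the paper's proof: reduce to the degree-$1$ part of each shell, apply Claim~\ref{claim:single-rotation} independently across the $T$ shells at each point of an $O(\tau\rho\delta)$-net of $SO(3)$, union bound over the net using $T \geq 4$ and $\rho \leq c(\delta\tau)^{10}$, and extend to all of $SO(3)$ via the $O(1)$-Lipschitz map $R \mapsto M_R$. Your explicit bookkeeping with the constants $c_0, c_1$ in the net/Lipschitz step is, if anything, slightly more careful than the paper's own write-up, and the balance you flag as delicate is exactly the one the paper relies on.
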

\begin{proof}
It suffices to prove the desired inequality in the case that each function $f^{(j)}$ has expansion in spherical harmonics of degree exactly $1$ (since $SO(3)$ acts essentially independently on spherical harmonics of each degree, in the general case we can simply restrict to the degree-$1$ part).

Let $\mcl{S}$ be an $\eps$-net of matrices $R \in SO(3)$ (in Frobenius norm) where $\eps = c' \tau \rho \delta$ for some sufficiently small constant $c'$.  Note that we can ensure $|\mcl{S}| \leq ( 10 /(c'\tau \rho \delta))^{3} $.  Then by Claim \ref{claim:single-rotation} and a union bound, with probability
\[
1 - (O(\rho))^T \cdot ( 10 /(c'\tau \rho \delta))^{3} \geq 1 - \rho
\]
all rotations in $\mcl{S}$ such that $\norm{R - I}_F \geq \tau/2$ satisfy
\[
\norm{R(f) - f}_2^2 \geq \frac{\tau^2 \rho^2 \delta^2}{4} \,.
\]
Now since  $\mcl{S}$ is an $\eps$-net where $\eps = c' \tau \rho \delta$, for any $R \in SO(3)$ with $\norm{R - I}_F \geq \tau$, we can find $R' \in \mcl{S}$ with $\norm{R' - I}_F \geq \tau/2$ and $\norm{R' - R}_F \leq \eps$.  For $R \in SO(3)$, let $M_R$ denote the matrix in $\C^{3 \times 3}$ corresponding to the linear map defined by $R$ on the spherical harmonic coefficients of degree $1$.  It can be verified that the map $R \rightarrow M_R$ is $C$-Lipchitz (with respect to Frobenius norm) for some absolute constant $C$.

We now have
\[
\norm{R(f) - f}_2 \leq \norm{R'(f) - f}_2 + \norm{R'(f) - R(f)}_2 \leq \frac{\tau \rho \delta}{2} + \norm{M_R - M_R'}_F \cdot  \norm{f}_2 \leq \tau \rho \delta
\]
since we can choose $c'$ sufficiently small.  This completes the proof.   
\end{proof}

\subsection{Proof of Theorem \ref{thm:multiple-shells}}

Now we are ready to complete the proof of Theorem \ref{thm:multiple-shells}.
\begin{proof}[Proof of Theorem \ref{thm:multiple-shells}]
If $T \leq 5$ then there is no alignment necessary.  Now consider when $T > 5$.  First, we apply Claim \ref{claim:multipleshell-rotation} on $\left(f^{(1)}, f^{(2)},f^{(3)},f^{(4)} \right)$ with $\rho = 0.01$ to deduce that with $0.99$ probability, there are no rotations that are far from identity that almost preserve all of $\left(f^{(1)}, f^{(2)},f^{(3)},f^{(4)} \right)$ simultaneously.

Now by Lemma \ref{lem:const-degree-multiple-shells}, we can assume that our estimates $\{ \wt{f_{lm}^{(j)}} \}_{l \leq C, j \leq 5}$ satisfy 
\[
\sum_{l = 0}^C \sum_{m = -l}^l \sum_{j =1 }^5 \left\lvert \wt{f_{lm}^{(j)}} - f^{(j)}_{lm}\right \rvert^2 \leq \gamma \,.
\]
While the lemma is stated up to rotation, we can assume without loss of generality that the rotation is identity since otherwise, we can simply pretend that the unknown function $f$ is actually that rotation applied to $f$.

Also, for $i \geq 6$, we can assume that our estimates 
$\{ \wh{f_{lm}^{(j)}} \}_{l \leq C, j \in \{1,2,3,4 , i \}}$ satisfy 
\[
\sum_{l = 0}^C \sum_{m = -l}^l \sum_{j \in \{1,2,3,4,i \}} \left\lvert \wh{f_{lm}^{(j)}} - R_0(f^{(j)})_{lm}\right \rvert^2 \leq \gamma 
\]
for some rotation $R_0$.  Thus, we have 
\[
\sum_{l \leq C}\sum_{m = -l}^l \sum_{j \in \{1,2,3,4 \}} \left \lvert R_0^{-1}(\wh{f^{(j)}})_{lm} - \wt{f_{lm}^{(j)}}\right \rvert^2  \leq O(\gamma) \,.
\]
Now assume that the actual rotation computed by the algorithm is $R = R' R_0^{-1}$.  By Claim \ref{claim:multipleshell-rotation}, we must have $\norm{R' - I }_F \leq \poly(1/\delta)\gamma$. For a rotation $R$, let $M_R$ be the matrix defining the associated linear map on spherical harmonics of degree at most $C$.  Since $C$ is an absolute constant, this map is $C'$-Lipchitz for some absolute constant $C'$ (with respect to Frobenius norm).  Thus, we have
\begin{align*}
\sum_{l = 0}^C \sum_{m = -l}^l  \left\lvert R(\wh{f^{(i)}})_{lm} - f^{(i)}_{lm}\right \rvert^2 \leq 2\sum_{l = 0}^C \sum_{m = -l}^l  \left\lvert R_0^{-1}(\wh{f^{(i)}})_{lm} - f^{(i)}_{lm}\right \rvert^2 \\ + 2\sum_{l = 0}^C \sum_{m = -l}^l  \left\lvert R(\wh{f^{(i)}})_{lm} - R_0^{-1}(\wh{f^{(i)}})_{lm}\right \rvert^2 \leq \poly(1/\delta) \gamma \,.
\end{align*}
Overall, we have shown that all of our estimate $\{f^{(j)}_{lm} \}_{j \leq C}$ are close to the truth.  
The remainder of the proof follows from Lemma \ref{lem:iterate-multiple-shells} and repeating the same argument as the proof in Theorem \ref{thm:main-oneshell}.
\end{proof}

\section{Heterogeneous Mixtures}\label{sec:heterogeneous}

We can generalize our results even further to when the sampling model is heterogeneous.  More formally, assume there are $k$ distinct functions $f^{[1]}, \dots , f^{[k]}$ with mixing weights $w_1, \dots , w_k$ (where $w_i \geq 0$ and $w_1 + \dots + w_k = 1$).  Now our observations are obtained as follows: sample $j \in [k]$ according to the distribution $\{w_1, \dots , w_k \}$ and then observe
\begin{equation}\label{eq:sample-heterogeneous}
\wh{f} = R(f^{[j]}) + \zeta \,.
\end{equation}
\begin{remark}
Our observations will still be viewed as a vector of spherical harmonic coefficients as in  (\ref{eq:multipleshells-sample-model}).
\end{remark}

Our main theorem for the heterogeneous case is stated below.
\begin{theorem}\label{thm:heterogeneous}
Let $f^{[1]} ,  \dots , f^{[k]}$  all be functions with $T$ spherical shells and such that the expansion in spherical harmonics on each shell has degree at most $N$.  Also assume  $\norm{f^{[j]}}_2 \leq 1$ for all $j \in [k]$.  Let $w_1, \dots , w_k \geq w_{\min}$ be mixing weights summing to $1$.  Then given $Q$ observations from (\ref{eq:sample-heterogeneous}), where
\[
Q = \left( \frac{N}{\delta} \right)^{k O(\log N)} \poly\left( (T\sigma/(\eps w_{\min}))^k\right) 
\]
there is an algorithm that runs in $poly(Q)$ time and with probability $0.9$ outputs functions  $\wt{f^{[1]}}, \dots , \wt{f^{[k]}}$  and weights $\wt{w_1}, \dots , \wt{w_k}$ such that there is a permutation $\pi$ on $[k]$ such that for all $j \in [k]$
\[
d_{SO(3)}\left(\wt{f^{[j]}}, f^{[\pi(j)]} \right) + |\wt{w_j} - w_{\pi(j)} | \leq \eps \,.
\]
\end{theorem}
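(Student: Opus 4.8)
The plan is to \emph{decouple} the mixture, reducing it to $k$ independent instances of the multiple-shell problem of Theorem~\ref{thm:multiple-shells}, and then run Algorithm~\ref{alg:full-multiple-shells} on each component. The starting point is the observation that in the heterogeneous model each sample is an unbiased sample from $R(f^{[j]})$ with $j$ drawn from $\{w_1,\dots,w_k\}$, so for \emph{any} invariant polynomial $P$ of degree $O(1)$ the quantity $\sum_j w_j P(f^{[j]})$ is the expectation of an unbiased estimator and can be estimated by Lemma~\ref{lem:est-invariant}; in particular, since a product of degree-$3$ invariants $\mcl{I}_{s_1,l_1,s_2,l_2,s_3,l_3}$ (Theorem~\ref{thm:multiple-shell-invariants}) is again invariant, we have access to the weighted moments of the discrete measure $\mu=\sum_j w_j \delta_{v^{[j]}}$, where $v^{[j]}\in\R^D$, $D=\poly(NT)$, is the vector of all degree-$3$ invariants evaluated at $f^{[j]}$; estimating a product of $t$ such invariants costs $\poly((NT)^{t},\sigma^{t})$ samples.

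First I would use smoothing to find a good one-dimensional projection. Fix a generic (Gaussian) $c\in\R^D$ and set $\ell_j=\langle c,v^{[j]}\rangle$. Each coordinate of $v^{[j]}$ is a nonconstant polynomial in the $\delta$-smoothed coefficients of $f^{[j]}$, and the $f^{[j]}$ are smoothed independently, so $\ell_i-\ell_j$ is a nonconstant polynomial in independent Gaussians; applying the anti-concentration bounds of Appendix~\ref{sec:poly-anticoncentration} (Corollary~\ref{coro:anti-concentration-general}) and union bounding over the $\binom{k}{2}$ pairs, with probability $1-\rho$ the $\ell_j$ are pairwise $\Delta$-separated with $\Delta\ge(\delta\rho/(NT))^{O(1)}$ and are bounded in magnitude. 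Next, estimate the weighted power sums $p_t=\sum_j w_j\ell_j^{\,t}$ for $t=0,\dots,2k-1$ (note $\langle c,v\rangle^{t}$ is an invariant polynomial of degree $3t$), form the associated $k\times k$ Hankel matrix, and run Prony's method: the $\ell_j$ are the roots of the polynomial spanning the kernel of the Hankel matrix and the $w_j$ are recovered by a Vandermonde solve. Standard stability of Prony's method, together with $\Delta$-separation and $w_j\ge w_{\min}$, recovers every pair $(\ell_j,w_j)$ with error $((NT\sigma)/(\delta\eps w_{\min}))^{O(k)}$ times the estimation error in the $p_t$'s. Finally, for each fixed low-degree invariant $P$ — in particular each layered invariant $\mcl{I}_{1,a,1,b,j,L}$ and each generator of the relevant constant-degree invariant rings — estimate the mixed moments $\sum_j w_j\ell_j^{\,t}P(f^{[j]})$ for $t=0,\dots,k-1$ and solve the (now known, well-conditioned) Vandermonde system in the $\ell_j$; this yields $P(f^{[j]})$ for each individual $j$, with the indexing consistent across all $P$ since the same recovered nodes $\ell_j$ are used throughout.

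At this point, up to one global permutation $\pi$ of $[k]$, we have an estimate of each $w_j$ and of the value of every low-degree invariant of each component $f^{[j]}$ — exactly the inputs consumed by {\sc Learn Constant-degree Coefficients for Multiple Shells} and {\sc Frequency Marching for Multiple Shells}. So I would run the full multiple-shell reconstruction (Algorithm~\ref{alg:full-multiple-shells}) on each $f^{[j]}$ separately, invoking Lemma~\ref{lem:const-degree-multiple-shells} and Lemma~\ref{lem:iterate-multiple-shells}; the $\delta$-smoothing of $f^{[j]}$ is precisely what those lemmas require, and a union bound over the $k$ components and the $O(\log N)$ marching levels controls the failure probability. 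Propagating accuracy requirements backwards — each marching level loses $\poly(N/\delta)$, the decoupling loses a factor $(\cdot)^{O(k)}$, and invariant estimation up to degree $O(k)$ costs $\sigma^{O(k)}$ (and $(NT)^{O(k)}$) samples — gives sample complexity $Q=(N/\delta)^{kO(\log N)}\poly((T\sigma/(\eps w_{\min}))^{k})$ and running time $\poly(Q)$, with recovery up to the permutation $\pi$, a rotation per component, and additive error $\eps$ in the weights.

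I expect the decoupling step, and specifically showing the Prony/Hankel system is well-conditioned, to be the main obstacle. It rests on two smoothing-based facts: (i) a quantitative separation lower bound $\Delta$ on the projected invariants $\ell_j$, which is an anti-concentration statement for $\langle c,v^{[i]}-v^{[j]}\rangle$ and requires care to argue that this polynomial is genuinely nondegenerate (it is, since the two components are smoothed independently) and that a single fixed $c$ separates all pairs simultaneously; and (ii) the fact that Prony's method with $k$ nodes has condition number inverse-polynomial in $\Delta$ and $w_{\min}$ but with the exponent growing linearly in $k$. This $k$-fold blow-up is the origin of the $(\cdot)^{k}$ and $kO(\log N)$ factors in the sample complexity, and is why the guarantee degrades with the number of mixture components.
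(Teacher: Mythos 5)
Your proposal is correct in outline and shares the paper's core strategy: observe that the multiple-shell algorithm of Theorem~\ref{thm:multiple-shells} is a statistical-query algorithm consuming only values of $O(1)$-degree invariants, note that products/powers of invariants are again invariants so that weighted power sums over the mixture can be estimated (Lemma~\ref{lem:est-mixture-invariant}), decouple these via a Prony/matrix-pencil step whose conditioning is controlled by $w_{\min}$, a smoothing-induced separation, and a Vandermonde bound (the paper's Lemma~\ref{lem:decouple-moments} and Claim~\ref{claim:vandermonde}), and then feed the per-component invariant values back into Algorithm~\ref{alg:full-multiple-shells}, paying $(\cdot)^{O(k)}$ factors for degree-$O(k)$ invariant estimation. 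Where you genuinely diverge is in how consistency of the recovered values \emph{across different invariants} is obtained: the paper decouples each invariant $P_a$ separately, obtaining unordered multisets $\{P_a(f^{[j]})\}_j$, and then aligns pairs of multisets by also decoupling the product invariant $P_aP_b$ and proving, via smoothing, the distinctness bound (\ref{eq:distinct}) which forces a unique matching; you instead avoid alignment altogether by anchoring everything to a single projection $\ell_j=\langle c, v^{[j]}\rangle$ of the invariant vector, recovering the nodes $(\ell_j,w_j)$ once, and then extracting $P(f^{[j]})$ for every needed $P$ from the mixed moments $\sum_j w_j\ell_j^{\,t}P(f^{[j]})$ (also invariants) by a Vandermonde solve in the known nodes, dividing out $w_j\geq w_{\min}$. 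This single-anchor scheme is a legitimate and arguably cleaner alternative; its cost is that you must supply the analogue of the paper's (\ref{eq:distinct}) for the anchor, namely a quantitative pairwise separation of the $\ell_j$ under smoothing. Be aware that your parenthetical justification (``the two components are smoothed independently'') is not by itself sufficient: you also need a lower bound on the magnitude of some top-degree monomial coefficient of $\langle c,\cdot\rangle$ as a polynomial in the smoothed coefficients before Corollary~\ref{coro:anti-concentration-general} applies, since random signs in $c$ could in principle conspire with small CG coefficients. This is fillable—either use the fact (stated before the paper's proof) that each invariant has a leading monomial with coefficient at least $1/(C\poly(N))$, note that distinct triples $(l_1,l_2,l_3)$ contribute disjoint monomials so that the relevant coefficient is a single Gaussian entry of $c$ scaled by such a leading coefficient and apply Gaussian anticoncentration, or simply take $c$ to be a coordinate vector selecting one explicit invariant—so it is a missing lemma rather than a flaw, and with it your parameter bookkeeping matches the claimed bound on $Q$.
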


To prove Theorem \ref{thm:heterogeneous}, note that the algorithm in Theorem \ref{thm:multiple-shells} is a statistical query algorithm.  In other words, the algorithm does not need to actually work with the samples but instead only works with the values of invariant polynomials $P_1(f), \dots , P_n(f)$ where $n = \poly(NT)$ that it uses samples to estimate.  Thus, it suffices to compute the weights $w_1, \dots , w_k$ and the values $P_1(f^{[j]}), \dots , P_n(f^{[j]})$ for all $j \in [k]$.  It suffices to estimate these to accuracy \[
\gamma  = \left( \delta/N\right)^{O(\log N)} (\eps/T)^{O(1)}
\]
and then we will be done by the argument in the proof of Theorem \ref{thm:multiple-shells}.

Note that there is an absolute constant $C$ such that all of $P_1, \dots , P_n$ satisfy the following properties:
\begin{itemize}
    \item The degree is at most $C$
    \item All coefficients have magnitude at most $ C$
    \item There is some leading monomial i.e. a monomial with degree equal to $\deg(P_i)$ whose coefficient is at least $1/(C\poly(N))$
\end{itemize}
To see this, note that the above statements are clearly true for the degree-$3$ invariant polynomials that we use in our iterative procedures.  The only other polynomials that we use are fixed, independent of the problem parameters.

\subsection{Estimating Invariant Polynomials for Mixtures}

We have the following generalizations of Lemma \ref{lem:est-moments} for estimating invariant polynomials when our observations come from a heterogeneous mixture.  It is also an immediate consequence of the results in \cite{bandeira2018estimation}.

\begin{lemma}[See Section 7.1 in \cite{bandeira2018estimation}]\label{lem:est-mixture-moments}
Let $G$ be a compact group acting linearly on a vector space $V = \C^n$.  Let $x_1, \dots , x_k \in V$ and assume $\norm{g \cdot x}_2 \leq 1$ for all $g \in G$.  Let $w_1, \dots , w_k$ be mixing weights summing to $1$.  Assume we are given $Q$ independent observations $y_1, \dots , y_Q$ of the form
\[
y_j = g_j \cdot x_l + N(0, \sigma^2 I) + iN(0, \sigma^2 I)
\]  
where $l$ is sampled from $[k]$ according to $\{w_1, \dots , w_k \}$ and $g_j$ is drawn randomly (according to the Haar measure) from $G$.  

Let $P_{\alpha}(x) = x^{\alpha}$ for all $n$-variate monomials $x^{\alpha}$ of degree at most $d$.  Let $\tau > 0$ be a parameter.  We can compute in $\poly(Q, n^d) $ time, estimates $\wt{P_{\alpha}(x)}$ such that with probability $1- \tau$, we have for all $\alpha$,
\[
\left \lvert \wt{P_{\alpha}} - \E_{g \sim G}\left[ w_1P_{\alpha}(g \cdot x_1) + \dots + w_kP_{\alpha}(g \cdot x_k)   \right] \right \rvert \leq c_d \sigma^d \sqrt{\frac{\log n/\tau}{Q}}
\]
where $c_d$ is a constant depending only on $d$.
\end{lemma}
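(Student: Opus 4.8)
The plan is to observe that the estimator underlying Lemma~\ref{lem:est-moments} is built ``one sample at a time,'' and that this is exactly what makes it insensitive to the samples coming from a mixture rather than a single signal. Recall the construction from \cite{bandeira2018estimation}: for each $n$-variate monomial $x^{\alpha}$ of degree at most $d$ one forms a fixed polynomial $q_{\alpha}$ --- depending only on $\alpha$ and $\sigma$, obtained by Hermite/Wick debiasing of the Gaussian noise --- of degree at most $d$ and with coefficients bounded in terms of $d$ and $\sigma$, having the property that $\E_{\zeta}[q_{\alpha}(z + \zeta)] = z^{\alpha}$ for every fixed $z$, where $\zeta$ has real and imaginary parts drawn from $N(0,\sigma^{2})$ entrywise. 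The estimate is then the empirical average $\widetilde{P_{\alpha}} = \frac{1}{Q}\sum_{j=1}^{Q} q_{\alpha}(y_{j})$, and, exactly as in the remark following Lemma~\ref{lem:est-moments}, we may reduce from $\C$ to $\R$ by separating real and imaginary parts so that the real-valued results of \cite{bandeira2018estimation} apply verbatim.

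First I would verify unbiasedness by conditioning on the component labels. Let $l_{j} \in [k]$ be the random index with $y_{j} = g_{j} \cdot x_{l_{j}} + N(0,\sigma^{2}I) + i N(0,\sigma^{2}I)$. Conditionally on $(l_{j}, g_{j})$ the additive noise is a fresh draw of $\zeta$, so $\E[q_{\alpha}(y_{j}) \mid l_{j}, g_{j}] = (g_{j}\cdot x_{l_{j}})^{\alpha} = P_{\alpha}(g_{j}\cdot x_{l_{j}})$. Averaging over $g_{j}$ (Haar) and then over $l_{j}\sim\{w_{1},\dots,w_{k}\}$, and using linearity of expectation, gives
\[
\E[q_{\alpha}(y_{j})] = \sum_{l=1}^{k} w_{l}\, \E_{g\sim G}[P_{\alpha}(g\cdot x_{l})],
\]
which is precisely the target quantity; hence $\widetilde{P_{\alpha}}$ is an unbiased estimator of it. Note this step uses nothing about the $x_l$ except that they are fixed vectors, and nothing about the mixture except that each sample is generated by one component.

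Second, the concentration step is word-for-word the same as in the single-signal case, since the only structural inputs it uses are that $\norm{g\cdot x_{l}}_{2}\le 1$ for every $l$ and $g$ (which we assume) and that the additive noise is Gaussian with $\sigma\ge 1$. Concretely, conditioning on $(l_{j},g_{j})$ the noise coordinates are independent, so $\E[|q_{\alpha}(y_{j})|^{2}\mid l_{j},g_{j}]$ factors over coordinates; each factor is at most a constant (depending on $d$) times $\sigma^{2\alpha_{i}}$ because the relevant signal coordinate has magnitude at most $1$, and since $\sum_{i}\alpha_{i}\le d$ and at most $d$ of the $\alpha_{i}$ are nonzero, the product is at most $c_{d}\sigma^{2d}$ with $c_{d}$ depending only on $d$ and \emph{not} on $n$. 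Because $q_{\alpha}(y_{j})$ is a degree-$d$ polynomial of Gaussians it is hypercontractive, so a Bernstein-type tail bound yields $|\widetilde{P_{\alpha}} - \E[q_{\alpha}(y_{1})]| \le c_{d}\sigma^{d}\sqrt{\log(1/\tau')/Q}$ with probability $1-\tau'$ for $Q$ in the stated range (the lower-order term being absorbed), and a union bound over the at most $(n+1)^{d}$ monomials with $\tau' = \tau/(n+1)^{d}$ gives the claimed $\sqrt{\log(n/\tau)/Q}$ bound simultaneously for all $\alpha$. The only point that genuinely needs care is the first one --- confirming that the debiasing of Lemma~\ref{lem:est-moments} is applied per sample, so the conditioning-on-the-label argument goes through --- and that is the reason the lemma is ``an immediate consequence'' of the homogeneous analysis; everything after it is an unchanged copy of that argument.
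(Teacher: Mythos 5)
Your proposal is correct and matches how the paper treats this statement: the paper gives no separate proof, citing it as an immediate consequence of the per-sample debiased estimator of Section 7.1 in \cite{bandeira2018estimation} (the same estimator underlying Lemma~\ref{lem:est-moments}), and your conditioning-on-the-label argument plus the observation that the variance bound only uses $\norm{g \cdot x_l}_2 \leq 1$ uniformly over components is exactly the reasoning that makes it immediate. No gaps.
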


Copying the proof of Lemma \ref{lem:est-invariant}, we get the following result for estimating the invariant polynomials when our observations come from a mixture.
\begin{lemma}\label{lem:est-mixture-invariant}
Let $G$ be a compact group acting linearly on a vector space $V = \C^n$.  Let $x_1, \dots , x_k \in V$ and assume $\norm{g \cdot x}_2 \leq 1$ for all $g \in G$.  Let $w_1, \dots , w_k$ be mixing weights summing to $1$.  Assume we are given $Q$ independent observations $y_1, \dots , y_Q$ of the form
\[
y_j = g_j \cdot x_l + N(0, \sigma^2 I) + iN(0, \sigma^2 I)
\]  
where $l$ is sampled from $[k]$ according to $\{w_1, \dots , w_k \}$ and $g_j$ is drawn randomly (according to the Haar measure) from $G$.  

Let $\eps$ be a desired accuracy parameter and $\tau$ be the allowable failure probability.  If
\[
Q \geq O_d(1) \poly\left( n^d, \sigma^d, \frac{1}{\eps}, \log \frac{1}{\tau}\right)
\]
the for any invariant polynomial $P$ of degree at most $d$ with coefficients of magnitude at most $1$, we can compute in $\poly(Q)$ time, an estimate $\wt{P}$ such that with probability $1 - \tau$,
\[
\left \lvert \wt{P} - (w_1P(x_1) + \dots + w_kP(x_k)) \right \rvert \leq \eps  \,.
\]
\end{lemma}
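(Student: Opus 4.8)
The plan is to mimic the proof of Lemma~\ref{lem:est-invariant} essentially verbatim, reducing the estimation of an invariant polynomial to the estimation of its individual monomials via Lemma~\ref{lem:est-mixture-moments}. First I would write $P$ as a sum of monomials, $P = \sum_{\alpha} c_{\alpha} x^{\alpha}$, where $\alpha$ ranges over the at most $(n+1)^d$ monomials of degree at most $d$ and $|c_{\alpha}| \le 1$. The only genuinely new point is how invariance interacts with the mixture: since $P$ is invariant, $P(g \cdot x_l) = P(x_l)$ holds for every $g \in G$ and every component $x_l$ separately, so $P(x_l) = \E_{g \sim G}[P(g \cdot x_l)]$ for each $l$ individually. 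Averaging with the mixing weights then gives
\begin{align*}
w_1 P(x_1) + \dots + w_k P(x_k) &= \E_{g \sim G}\left[ w_1 P(g \cdot x_1) + \dots + w_k P(g \cdot x_k)\right] \\
&= \sum_{\alpha} c_{\alpha} \, \E_{g \sim G}\left[ w_1 P_{\alpha}(g \cdot x_1) + \dots + w_k P_{\alpha}(g \cdot x_k)\right],
\end{align*}
where $P_{\alpha}(x) = x^{\alpha}$.

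Next I would invoke Lemma~\ref{lem:est-mixture-moments} with failure probability $\tau$, which produces, simultaneously for all monomials $\alpha$ of degree at most $d$, estimates $\wt{P_{\alpha}}$ of $\E_{g}[w_1 P_{\alpha}(g \cdot x_1) + \dots + w_k P_{\alpha}(g \cdot x_k)]$ with error at most $c_d \sigma^d \sqrt{(\log(n/\tau))/Q}$, except with probability $\tau$. Choosing $Q \ge O_d(1) \poly(n^d, \sigma^d, 1/\eps, \log(1/\tau))$ large enough makes this per-monomial error at most $\eps / (n+1)^d$. Setting $\wt{P} = \sum_{\alpha} c_{\alpha} \wt{P_{\alpha}}$, the triangle inequality over the at most $(n+1)^d$ monomials, together with $|c_{\alpha}| \le 1$, yields $\left\lvert \wt{P} - (w_1 P(x_1) + \dots + w_k P(x_k)) \right\rvert \le \eps$ with probability $1 - \tau$. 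The running time is $\poly(Q, n^d)$ from Lemma~\ref{lem:est-mixture-moments} plus the $O((n+1)^d)$ additional arithmetic operations used to form $\wt{P}$, which is $\poly(Q)$.

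There is essentially no obstacle here: the argument is a direct transcription of the proof of Lemma~\ref{lem:est-invariant}, the only point requiring a trivial amount of care being that invariance must be applied to each mixture component $x_l$ before the weighted average is taken --- legitimate because $P(g \cdot x) = P(x)$ is a pointwise identity in $x$. As in the real-versus-complex remark following Lemma~\ref{lem:est-moments}, one could also note that passing from $\R$ to $\C$ costs only a factor of $2$ in the number of variables, so no further modification is needed.
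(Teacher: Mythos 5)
Your proposal is correct and is exactly the argument the paper intends: the paper simply asserts the lemma "by copying the proof of Lemma~\ref{lem:est-invariant}," and your adaptation (applying invariance to each mixture component before taking the weighted average, then invoking Lemma~\ref{lem:est-mixture-moments} and the triangle inequality over the at most $(n+1)^d$ monomials) is that same proof spelled out.
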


\subsection{Decoupling Moments of a Mixture}\label{sec:decouple}
The key observation is that if $P(x)$ is an invariant polynomial, then $P(x)^t$ is also an invariant polynomial for any $t \in \N$.  By measuring 
\begin{align*}
w_1P(x_1) + &\dots + w_k P(x_k) \\
&\vdots \\
w_1P(x_1)^t + &\dots + w_k P(x_k)^t
\end{align*}
for sufficiently large $t$, we may then hope to solve for the individual values of $P(x_1), \dots , P(x_k)$.  This motivates the following lemma.

\begin{lemma}\label{lem:decouple-moments}
Let $ 0 < \eps < 0.5$ be a parameter. Let $z_1, \dots , z_k \in \C$ with $|z_j| \leq K$, $|z{j_1} - z_{j_2} | \geq \eta$ for some constants $K \geq 1, 0 < \eta < 1$.  Let $w_1, \dots , w_k \geq w_{\min}$ be nonnegative real numbers with $w_1 + \dots + w_k = 1$.  Then given estimates $M_j$ for $j = 1,2, \dots , 2k - 1$ with 
\[
\left \lvert M_j - (w_1z_1^j + \dots + w_kz_k^j) \right \rvert \leq \eps \left( w_{\min} (\eta/K)^{k}\right)^{O(1)}
\]
we can compute estimates $\wt{w_1}, \dots  , \wt{w_k}, \wt{z_1}, \dots , \wt{z_k}$ such that there is a permutation $\pi$ on $[k]$ with
\[
|\wt{w_j} - w_{\pi(j)} | + |\wt{z_j} - z_{\pi(j)} | \leq \eps
\]
for all $j \in [k]$.
\end{lemma}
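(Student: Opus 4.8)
The plan is to recognize this as an instance of the classical truncated moment (Prony) problem for the discrete measure $\mu = \sum_{i=1}^k w_i \delta_{z_i}$ on $\C$: the quantities $w_1 z_1^j + \dots + w_k z_k^j$ are exactly the moments $p_j := \int z^j \, d\mu$, and since $w_1 + \dots + w_k = 1$ we also know $p_0 = 1$ exactly, so effectively we are handed estimates $M_0 = 1, M_1, \dots, M_{2k-1}$ of $p_0, \dots, p_{2k-1}$. Because the $z_i$ are pairwise $\eta$-separated (hence distinct) and the $w_i$ are positive, $\mu$ has exactly $k$ atoms, so the monic polynomial $q(x) = \prod_{i=1}^k (x - z_i) = x^k + c_{k-1}x^{k-1} + \dots + c_0$ has degree exactly $k$ and is characterized by the linear recurrence $\sum_{m=0}^{k} c_m p_{j+m} = 0$ (with $c_k = 1$) for $j = 0, \dots, k-1$. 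The algorithm is then: (i) solve the $k \times k$ Hankel system $H \vec c = -(M_k, \dots, M_{2k-1})^\top$ with $H = (M_{i+j})_{0 \le i,j \le k-1}$ for $\vec c$; (ii) compute the $k$ complex roots $\wt z_1, \dots, \wt z_k$ of $\wt q(x) = x^k + \sum_m \wt c_m x^m$; (iii) recover the weights by solving the Vandermonde system $\sum_i \wt w_i \wt z_i^j = M_j$, $j = 0, \dots, k-1$, for $\wt w_1, \dots, \wt w_k$.

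The first key step is a condition-number bound on the (true) Hankel matrix. Writing $H^{\textsf{true}} = V^\top W V$ where $V = (z_i^{j})_{1 \le i \le k,\, 0 \le j \le k-1}$ is the Vandermonde matrix on the nodes and $W = \operatorname{diag}(w_1, \dots, w_k)$, we get $\sigma_{\min}(H^{\textsf{true}}) \ge w_{\min}\, \sigma_{\min}(V)^2$. Since $|\det V| = \prod_{i < j}|z_i - z_j| \ge \eta^{\binom{k}{2}}$ while every singular value of $V$ is at most $\norm{V}_F \le k K^{k-1}$, we obtain $\sigma_{\min}(V) \ge \eta^{\binom k 2}/(k K^{k-1})^{k-1}$, hence an explicit lower bound $\sigma_{\min}(H^{\textsf{true}}) \ge \big(w_{\min}(\eta/K)^{k}\big)^{O(1)}$ of the form required. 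Consequently, as long as $\norm{\vec M - \vec p}$ is much smaller than this, $H$ is invertible with $\sigma_{\min}(H) \ge \tfrac12 \sigma_{\min}(H^{\textsf{true}})$, and the standard perturbation bound for linear systems gives $\norm{\wt{\vec c} - \vec c} \le (K/\eta)^{O(k^2)} w_{\min}^{-1}\,\norm{\vec M - \vec p}$.

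Next, propagate this into a bound on the roots and then on the weights. The true roots are simple and pairwise $\eta$-separated, so $|q'(z_i)| = \prod_{j \ne i}|z_i - z_j| \ge \eta^{k-1}$, while $|\wt q(x) - q(x)| \le k(K+1)^k \norm{\wt{\vec c} - \vec c}_\infty$ uniformly on $|x| \le K + 1$. A Rouch\'e comparison of $q$ and $\wt q$ on the circle $|x - z_i| = r$ — where $r := 2k(K+1)^k(2/\eta)^{k-1}\norm{\wt{\vec c} - \vec c}_\infty$, which is $< \eta/2$ once the moment error is small enough — shows that on that circle $|q(x)| \ge r(\eta/2)^{k-1} > |\wt q(x) - q(x)|$, so $\wt q$ has exactly one root $\wt z_i$ in that disk; as the $k$ disks are disjoint and $\wt q$ has exactly $k$ roots, this accounts for all of them and defines the permutation $\pi$, with $|\wt z_{i} - z_{\pi(i)}| \le r$. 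Finally, the Vandermonde system in step (iii) built from the $\wt z_i$ is well-conditioned by the same determinant argument (the $\wt z_i$ are $\eta/2$-separated and bounded), so solving it gives $|\wt w_i - w_{\pi(i)}| \le (K/\eta)^{O(k^2)}\big(\norm{\vec M - \vec p} + \max_i |\wt z_i - z_{\pi(i)}|\big)$. Chaining these three bounds, every error on the right-hand side is at most $\eps$ once $\norm{\vec M - \vec p} \le \eps\,\big(w_{\min}(\eta/K)^{k}\big)^{O(1)}$, which is exactly the hypothesis.

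The main obstacle is the condition-number estimate for the Hankel matrix together with carefully tracking how a tiny perturbation of the moments blows up through the non-linear root-finding step; in particular one must check that the perturbed polynomial $\wt q$, whose leading coefficient is pinned to $1$ by the monic normalization, still has exactly $k$ roots with one in a small disk around each true node, which is precisely what the Rouch\'e comparison delivers. Everything else — solving the two linear systems, the Vandermonde determinant lower bound, and bookkeeping the $(K/\eta)^{O(k^2)}$ amplification factors through the chain of perturbations — is routine.
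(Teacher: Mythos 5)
Your route is genuinely different from the paper's. You solve the classical Prony problem: invert the Hankel system for the coefficients of the annihilating polynomial $q(x)=\prod_i(x-z_i)$, localize its roots by a Rouch\'e argument, and then recover the weights from a Vandermonde system. The paper instead forms the two shifted moment matrices $M^{(0)}_{ij}=M_{i+j-2}$, $M^{(1)}_{ij}=M_{i+j-1}$, observes that $M^{(1)}(M^{(0)})^{-1}$ is conjugate to $\mathrm{Diag}(z_1,\dots,z_k)$ (a matrix-pencil/Jennrich-style step), reads off the $z_j$ as eigenvalues via a Gershgorin perturbation argument, and then solves the same Vandermonde least-squares problem for the weights, with the conditioning supplied by its Claim~\ref{claim:vandermonde}. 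Both schemes use exactly the same $2k-1$ moments plus $M_0=1$, and your Rouch\'e-based root localization is actually carried out more carefully than the paper's eigenvalue-perturbation step; structurally your argument is sound.

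There is, however, a quantitative gap that keeps your write-up from proving the lemma with the stated tolerance. You lower-bound $\sigma_{\min}(V)$ by $|\det V|/\norm{V}_F^{k-1}\geq \eta^{\binom{k}{2}}/(kK^{k-1})^{k-1}$, which is of size $(\eta/K)^{\Theta(k^2)}$, and you then assert this (and hence $\sigma_{\min}(H^{\textsf{true}})\geq w_{\min}\sigma_{\min}(V)^2$) is $\bigl(w_{\min}(\eta/K)^{k}\bigr)^{O(1)}$. That is false for a universal constant in the exponent: $\eta^{k(k-1)}$ is not bounded below by $\eta^{Ck}$ for fixed $C$ once $k>C+1$ (take, say, $K=1$ and $\eta$ close to $1$). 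Since this factor, together with the $(K/\eta)^{O(k^2)}$ amplification you carry through the Hankel inversion and the weight recovery, dictates how accurate the moments must be, your argument as written only yields the lemma under the stronger hypothesis that the moment error is at most $\eps\bigl(w_{\min}(\eta/K)^{k}\bigr)^{O(k)}$, not $\eps\bigl(w_{\min}(\eta/K)^{k}\bigr)^{O(1)}$; this exponent matters downstream, where the lemma feeds into bounds that are already exponential in $k$. The fix is local: replace the crude $|\det|$ bound by a singly-exponential bound on the Vandermonde conditioning, e.g.\ the paper's Claim~\ref{claim:vandermonde} (pre-multiplying by the matrix of elementary symmetric polynomials, giving $\sigma_{\min}(V)\geq \frac{1}{k}(\eta/2K)^{k-1}$) or equivalently the Lagrange-interpolation bound $\norm{V^{-1}}\leq \max_i\prod_{j\neq i}\frac{1+|z_j|}{|z_i-z_j|}$. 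With that substitution every amplification factor in your chain (Hankel inversion, the Rouch\'e radius $r$, and the weight solve) becomes $w_{\min}^{-O(1)}(CK/\eta)^{O(k)}$, matching the paper's precision, and the rest of your argument goes through as written.
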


Before we prove Lemma \ref{lem:decouple-moments}, we need the following result about the condition number of a Vandermonde matrix.
\begin{claim}\label{claim:vandermonde}
Let $z_1, \dots , z_k \in \C$ with $|z_j| \leq K$, $|z_{j_1} - z_{j_2} | \geq \eta$ for some constants $K \geq 1, 0 < \eta < 1$.  Let $A$ be the matrix whose rows are $(1, z_j, \dots , z_j^{k-1}) $ for $j = 1,2, \dots , k$.  Then the smallest singular value of $A$ is at least $\frac{1}{k} \cdot \left(\frac{\eta}{2K}\right)^{k-1}$.
\end{claim}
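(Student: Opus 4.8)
The plan is to pass to the dual description of the smallest singular value in terms of polynomial evaluations and then invert via Lagrange interpolation. Since $A$ is the $k \times k$ matrix with rows $(1, z_j, \dots , z_j^{k-1})$, for any vector $v = (v_0, \dots , v_{k-1}) \in \C^k$ we have $(Av)_j = p_v(z_j)$ where $p_v(z) = \sum_{m=0}^{k-1} v_m z^m$. Thus $\norm{Av}_2^2 = \sum_{j=1}^k |p_v(z_j)|^2$, and lower bounding $\sigma_{\min}(A)$ is equivalent to showing that a degree-$(k-1)$ polynomial whose values at $z_1, \dots , z_k$ are all small must have small coefficient vector.

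To make this quantitative I would use the Lagrange interpolation formula. The separation hypothesis $|z_{j_1} - z_{j_2}| \geq \eta > 0$ forces the $z_j$ to be distinct, so the Lagrange basis polynomials $L_i(z) = \prod_{j \neq i} \frac{z - z_j}{z_i - z_j}$ are well-defined, each has degree $k-1$, and $L_i(z_j) = 1_{i = j}$. Hence $p_v(z) = \sum_{i=1}^k p_v(z_i) L_i(z)$ as polynomials, so writing $\ell_i \in \C^k$ for the coefficient vector of $L_i$ we get $v = \sum_{i=1}^k p_v(z_i)\, \ell_i$. By Cauchy-Schwarz, $\norm{v}_2 \leq \big(\sum_i |p_v(z_i)|^2\big)^{1/2}\big(\sum_i \norm{\ell_i}_2^2\big)^{1/2} = \norm{Av}_2 \cdot \big(\sum_i \norm{\ell_i}_2^2\big)^{1/2}$, and therefore $\sigma_{\min}(A) \geq \big(\sum_i \norm{\ell_i}_2^2\big)^{-1/2}$.

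It remains to bound $\norm{\ell_i}_2$. The denominator of $L_i$ has magnitude $\prod_{j \neq i} |z_i - z_j| \geq \eta^{k-1}$ by separation. The numerator $\prod_{j \neq i}(z - z_j)$ is a monic degree-$(k-1)$ polynomial whose coefficient of $z^m$ is, up to sign, the elementary symmetric polynomial $e_{k-1-m}$ of the $k-1$ numbers $\{z_j\}_{j \neq i}$, each of modulus at most $K$; hence it has magnitude at most $\binom{k-1}{k-1-m} K^{k-1-m} \leq 2^{k-1} K^{k-1}$, using $K \geq 1$ and $\sum_m \binom{k-1}{m} = 2^{k-1}$. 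Summing the squares of the $k$ coefficients and dividing by the denominator gives $\norm{\ell_i}_2^2 \leq k \left(2K/\eta\right)^{2(k-1)}$, so $\sum_i \norm{\ell_i}_2^2 \leq k^2 \left(2K/\eta\right)^{2(k-1)}$, which yields $\sigma_{\min}(A) \geq \frac{1}{k}\left(\frac{\eta}{2K}\right)^{k-1}$ as claimed.

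There is no real obstacle; the only thing demanding care is the bookkeeping of constants in the last step — in particular keeping the factor $k$ from the number of coefficients of each $\ell_i$ separate from the factor $k$ coming from the number of basis polynomials, so that together they produce exactly the stated $1/k$ rather than $1/k^{3/2}$ or similar, and applying $K \geq 1$ at the right place to collapse $K^{k-1-m}$ into $K^{k-1}$.
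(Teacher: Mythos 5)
Your proof is correct and is essentially the paper's argument in different clothing: the coefficient vectors $\ell_i$ of your Lagrange basis polynomials are, up to the scalar factors $\prod_{j\neq i}(z_i-z_j)$, exactly the vectors $s^{(i)}$ the paper assembles from elementary symmetric polynomials, and both proofs combine the same separation bound $\prod_{j\neq i}|z_i-z_j|\geq\eta^{k-1}$ with the same coefficient bound $\binom{k-1}{m}K^{m}\leq(2K)^{k-1}$ (you bound $\norm{A^{-1}}_F$ directly, the paper equivalently bounds $\norm{S}_F$ with $SA$ diagonal). Your constant bookkeeping checks out and reproduces the stated $\frac{1}{k}\left(\frac{\eta}{2K}\right)^{k-1}$.
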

\begin{proof}
Let $s^{(1)}_j$ be the $j$\ts{th} elementary symmetric polynomial in the variables $z_2, \dots , z_k$ i.e.
\begin{align*}
s^{(1)}_{k-1} &= z_2 \cdots z_k \\
&\vdots  \\
s^{(1)}_{1} & = z_2 + \dots + z_k \,.
\end{align*}
We will also use the convention $s^{(1)}_{0} = 1$.  Now consider the vector 
\[
s^{(1)} = ((-1)^{k-1}s^{(1)}_{k-1}, \dots , -s^{(1)}_{1}, s^{(1)}_{0}) \,.
\]
Note that $s^{(1)}A = ( (z_1 - z_2) \cdots (z_1 - z_k)  , 0 \dots , 0)$.  Similarly, we can construct vectors $s^{(2)}, \dots , s^{(k)}$ and let $S$ be the matrix with these vectors as rows.  Then $SA$ is a diagonal matrix with entries 
\[
((z_1 - z_2) \cdots (z_1 - z_k) , \dots ,   (z_k - z_1) \cdots (z_k - z_{k-1}) ) \,.
\]
In particular, all singular values of $SA$ are at least $\eta^{k-1}$.  On the other hand
\[
\norm{S}_{\textsf{op}} \leq \norm{S}_F \leq k \cdot (2K)^{k-1} \,.
\]
Thus, we deduce that the smallest singular value of $A$ is at least $\frac{1}{k} \cdot \left(\frac{\eta}{2K}\right)^{k-1}$.
\end{proof}

Now we prove Lemma \ref{lem:decouple-moments}.
\begin{proof}[Proof of Lemma \ref{lem:decouple-moments}]
Let $A$ be the matrix whose rows are $(1, z_j, \dots , z_j^{k-1})$.  Construct the following $k \times k$ matrices: $M^{(0)}$ has entries $M^{(0)}_{ij} = M_{i + j - 2}$ and $M^{(1)}$ has entries $M^{(1)}_{ij} = M_{i + j - 1}$.  Note that if our estimates were exactly correct, then we would have
\begin{align*}
M^{(0)}_{\textsf{truth}} &= A^T \textsf{Diag}(w_1, \dots , w_k ) A\\
M^{(1)}_{\textsf{truth}} &= A^T \textsf{Diag}(w_1z_1 , \dots , w_kz_k ) A \,.
\end{align*}
Let 
\[
M_{\textsf{truth}} = M^{(1)}_{\textsf{truth}} \left(M^{(0)}_{\textsf{truth}}\right)^{-1} = A^T \textsf{Diag}(z_1 , \dots , z_k ) \left( A^T \right)^{-1} \,.
\]
Note that the eigenvalues of $M_{\textsf{truth}}$ are precisely $z_1, \dots , z_k$.  Now by Claim \ref{claim:vandermonde} and the assumption about our estimates, we can compute an estimate $M = M^{(1)}\left( M^{(0)} \right)^{-1}$ such that  
\[
\norm{M - M_{\textsf{truth}}}_F \leq \eps \left(w_{\min}(\eta/K)^{k}\right)^{O(1)} \,.
\]
Now, we can compute the eigenvalues of $M$.  By Gershgorin's disk theorem (see \cite{moitra2018algorithmic}) we will obtain estimates $\wt{z_1}, \dots , \wt{z_k}$ such that there is a permutation $\pi$ with 
\[
|\wt{z_j} - z_{\pi(j)} | \leq \eps \left(w_{\min}(\eta/K)^{k}\right)^{O(1)}
\]
for all $j$.  Now we can solve for the weights by simply solving a linear system.  Let $\wt{A}$ be the matrix whose rows are  $(1, \wt{z_j}, \dots , \wt{z_j}^{k-1})$.  We solve
\[
\arg\min_w \norm{ w\wt{A} - (M_0, M_1, \dots , M_{k-1}) }_2^2 \,.
\]
Since Claim \ref{claim:vandermonde} gives a bound on the condition number of $A$, we immediately get a similar bound on the condition number of $\wt{A}$.  If we replaced $\wt{A}$ with $A$ and our estimates $M_0 ,\dots , M_{k-1}$ were exactly correct, then the quantity would be minimized when $w = (w_{\pi(1)}, \dots , w_{\pi(k)} )$.

Thus, the solution that we obtain, say $(\wt{w_1}, \dots, \wt{w_k} )$ must satisfy 
\[
|\wt{w_j} - w_{\pi(j)} | \leq \eps \left(w_{\min}(\eta/K)^{k}\right)^{O(1)}
\]
and we are done.
\end{proof}

\subsection{Proof of Theorem \ref{thm:heterogeneous}}

We are now ready to prove Theorem \ref{thm:heterogeneous}.  Lemma \ref{lem:decouple-moments} combined with Lemma \ref{lem:est-mixture-invariant} allows us to recover the values of $ \{ P(f^{[1]}) , \dots , P(f^{[k]}) \}$ for any invariant polynomial $P$.  The main piece that remains is to show how to align two sets of values   $\{ P(f^{[1]}) , \dots , P(f^{[k]}) \},  \{ Q(f^{[1]}) , \dots , Q(f^{[k]}) \}$ for two different invariant polynomials $P, Q$.  To do this, note that $PQ$ is also an invariant polynomial so we can also obtain a set of values $\{ PQ(f^{[1]}) , \dots , PQ(f^{[k]}) \}$ and we will show that since the coefficients of $f^{[1]}, \dots , f^{[k]}$ are smoothed, with high probability there will be a unique way to align the sets $\{ P(f^{[1]}) , \dots , P(f^{[k]}) \}$ and  $\{ Q(f^{[1]}) , \dots , Q(f^{[k]}) \}$.

\begin{proof}[Proof of Theorem \ref{thm:heterogeneous}]
Let $P_1, \dots , P_n$ be the invariant polynomials that we need to measure (see the discussion proceeding the statement of Theorem \ref{thm:heterogeneous}).  Note $n = \poly(NT)$.  Recall that they all have degree at most $C$, coefficients with magnitude at most $C$, and that they all have some leading coefficient of magnitude at least $1/(C poly(N))$ where $C$ is an absolute constant. 

We first prove the following property.  With probability $0.99$ over the random smoothing, we have that for all indices $i, j_1, j_2 \in [k]$ with $j_1 \neq j_2$ and all $a,b \leq n$, that 
\begin{equation}\label{eq:distinct}
\left \lvert P_a(f^{[i]}) P_b(f^{[i]}) - P_a(f^{[j_1]})P_b(f^{[j_2]}) \right \rvert  \geq (\delta/(CNTk))^{O(1)} \,.
\end{equation}
To see this first fix $i, j_1, j_2,a,b$.  Without loss of generality $j_1 \neq i$.  Then we can sample the smoothing of $f^{[j_2]}$ and $f^{[i]}$ first.  By Corollary \ref{coro:anti-concentration-general}, with probability $1 - (\delta/(CNTk))^{10}$, we have $|P_b(f^{[j_2]}) | \geq (\delta/(CNTk))^{O(1)}$.  We can then view  the above as a polynomial in the spherical harmonic coefficients of  $f^{[j_1]}$ and apply Corollary \ref{coro:anti-concentration-general} again.  Union bounding over all $i, j_1, j_2,a,b$, we get the desired conclusion.

Now, we can apply Lemma \ref{lem:est-mixture-invariant} to estimate the quantities 
\begin{align*}
&w_1 P_a(f^{[1]})^t + \dots + w_k P_a(f^{[k]})^t \quad \forall a \in [n] \\
&w_1 \left(P_a(f^{[1]})P_b(f^{[1]}) \right)^t + \dots + w_k  \left(P_a(f^{[k]})P_b(f^{[k]}) \right)^t \quad \forall a,b \in [n] 
\end{align*}
for all $t = 1,2, \dots , 2k$ since products of invariant polynomials are still invariant polynomials.  We can then apply Lemma \ref{lem:decouple-moments} to obtain estimates for the sets 
\begin{align*}
&\{    w_1,  P_a(f^{[1]}), \dots , w_k, P_a(f^{[k]}) \} \quad \forall a \in [n] \\
&\{ w_1, P_a(f^{[1]})P_b(f^{[1]}), \dots ,  w_k, P_a(f^{[k]})P_b(f^{[k]})\}  \quad \forall a,b \in [n]
\end{align*}
that are accurate to say $(\gamma \delta/(CNTk))^{K}$ for some sufficiently large absolute constant $K$ where 
\[
\gamma  = \left( \delta/N\right)^{O(\log N)} (\eps/T)^{O(1)} \,.
\]
Then by (\ref{eq:distinct}), there is a unique way to align them, so we can recover 
\[
\{ \{w_1, P_1(f^{[1]}), \dots ,P_n(f^{[1]}) \}, \dots ,  \{w_k, P_1(f^{[k]}), \dots ,P_n(f^{[k]}) \} \}
\]
to accuracy $\gamma$ up to permutation on $[k]$.  We can then run the algorithm in Theorem \ref{thm:multiple-shells} to recover the functions and we are done. 
\end{proof}

\section{Further Discussion}\label{sec:further-discussion}

\subsection{Equivalence to Tensor Decomposition with Group Structure}

The problem of orbit recovery over $SO(3)$ is closely related to the problem of tensor decomposition over a continuous group (see \cite{moitra2019spectral} for a more complete exposition on these types of problems).  We will go back to the simplest setting (recall Section \ref{sec:problem-formulation}) where there is only one shell and the samples are homogeneous.  For a generic group $G$ acting linearly on a vector space $V = \C^n$, recall that the tensor decomposition problem over the group $G$ can be defined as follows: we are given some $n \times n \times n$ tensor 
\begin{equation}\label{eq:orbit-tensor}
 T = \int_{g \sim G} (g \cdot x)^{\otimes 3}
\end{equation}
where $x$ is some unknown vector and the integral is with respect to the Haar measure of $G$.  The goal is to recover $x$ from $T$.

For finite groups, it is often possible to just treat $T$ as a rank-$|G|$ tensor and employ standard tensor decomposition techniques such as Jennrich's algorithm without using the group structure at all.  Of course, this approach fails for infinite groups and we must instead exploit the group structure.  The main result of this paper, Theorem \ref{thm:main-oneshell}, immediately implies an algorithm for tensor decomposition over $SO(3)$.  The formulation of tensor decomposition over $SO(3)$ is as follows.  There is some vector $ x \in \C^{(N+1)^2}$ and  we observe
\begin{equation}\label{eq:so3-tensor}
T = \int_{R \sim SO(3)} (R \cdot x)^{\otimes 3}
\end{equation}
and our goal is to recover $x$.  To see the similarity to (\ref{eq:sample-model}), we view $x$ as the spherical harmonic coefficients of degree at most $N$ of some function $f$ and the rotation $R$ acts by rotating $f$ and then computing the resulting coefficients.  We will assume that the entries of $x$ are $\delta$-smoothed (i.e. we add Gaussian noise with variance $\delta^2$ to both the real and imaginary part).  As a consequence of Theorem \ref{thm:main-oneshell}, we have the following result:
\begin{theorem}\label{coro:orbit-tensor}
Let $x \in \C^{(N+1)^2}$ be some vector whose entries are $\delta$-smoothed and assume $\norm{x}_2 \leq 1$.  Assume we are given access to a tensor $\wh{T}$ such that
\[
\norm{\wh{T} - \int_{R \sim SO(3)} (R \cdot x)^{\otimes 3}}_F \leq \left(\left(\frac{\delta}{N}\right)^{\log N} \cdot \eps\right)^{O(1)} 
\]
then there is an algorithm that runs in $\poly((N/\delta )^{\log N}/\eps)$ time and with probability $0.9$ (over the smoothing) outputs a vector $\wt{x}$ such that
\[
\norm{ \int_{R \sim SO(3)} (R \cdot \wt{x})^{\otimes 3} -\int_{R \sim SO(3)} (R \cdot x)^{\otimes 3}}_F  \leq \eps \,.
\]
\end{theorem}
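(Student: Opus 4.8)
The plan is to reuse essentially all of the machinery behind Theorem~\ref{thm:main-oneshell}, once the precise dictionary between the orbit tensor and the degree-$3$ invariant polynomials is in place. Write $T_x = \int_{R \sim SO(3)}(R\cdot x)^{\otimes 3}\,dR$. Each entry of $T_x$ equals $\mcl{R}(x_{l_1k_1}x_{l_2k_2}x_{l_3k_3})(x)$, hence is a degree-$3$ invariant polynomial evaluated at $x$; conversely, applying the Reynolds operator to the explicit formula of Theorem~\ref{thm:explicit-invariants} (and using that $\mcl{I}_{l_1,l_2,l_3}$ is already invariant) gives
\[\mcl{I}_{l_1,l_2,l_3}(x) = \sum_{k_1+k_2+k_3=0}(-1)^{k_3}\la l_1k_1l_2k_2|l_3(-k_3)\ra (T_x)_{(l_1k_1)(l_2k_2)(l_3k_3)},\]
a linear functional of $T_x$ (with $|k_i|\le l_i$) whose coefficients have magnitude at most $1$ by Fact~\ref{fact:CG-trivial} and which has at most $O(N^2)$ terms. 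Hence from the input $\wh T$ one forms estimates $\wt{\mcl{I}_{a,b,L}}$ of every degree-$3$ invariant $\mcl{I}_{a,b,L}(x)$ with error at most $O(N)\norm{\wh T - T_x}_F$, by Cauchy--Schwarz. In the reverse direction I only need that $y\mapsto T_y$ is stable: since every irreducible representation of $SO(3)$ acts unitarily, a three-term telescoping estimate gives $\norm{T_y - T_{y'}}_F \le 3\max(\norm{y}_2,\norm{y'}_2)^2\,\norm{y-y'}_2$, so it suffices to output some $\wt x$ with $\norm{\wt x - x}_2$ small.

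I would then run an algorithm patterned on Algorithm~\ref{alg:main}, with a single modification. The frequency-marching phase carries over verbatim: for $L = C+1,\dots,N$ we build and solve the linear systems of {\sc Frequency Marching with Long Stride} (Algorithm~\ref{alg:iterate}), feeding in the estimates $\wt{\mcl{I}_{a,b,L}}$ extracted from $\wh T$ in place of sample-based estimates. The constant-degree phase is the sticking point: Algorithm~\ref{alg:const} tests a candidate $\{\wt{f_{lm}}\}_{l\le C}$ against generators of the full invariant ring, which may have degree larger than $3$ and thus are not recoverable from a degree-$3$ object such as $\wh T$. Instead I would grid-search over all $\{\wt{f_{lm}}\}_{l\le C}$ with $|\wt{f_{lm}}|\le 2$ at discretization $\gamma' := (\delta/N)^{O(\log N)}\eps$; for each candidate run frequency marching to the top to obtain a full vector $\wt x$, compute $T_{\wt x}$ in closed form from the explicit degree-$3$ invariant formulas, and accept $\wt x$ iff $\norm{T_{\wt x} - \wh T}_F \le \eps/2$, outputting the first accepted candidate. (One may also pre-filter grid points by the degree-$3$ invariant test, but this is not needed for correctness.)

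For correctness I would condition on the event, of probability at least $0.99$ over the $\delta$-smoothing, that the smoothed coefficients $\{f_{lm}\}$ of $x = f$ satisfy $|f_{lm}|\le 2$ (so the grid contains a point within $\gamma'$ of $\{f_{lm}\}_{l\le C}$) and that every matrix $M^{\textsf{truth}}$ arising at levels $L = C+1,\dots,N$ has smallest singular value at least $(\delta/N)^{O(1)}$ --- the latter by Lemma~\ref{lem:well-conditioned} and a union bound over $L$, taking $C$ a large enough universal constant. For the grid point near $\{f_{lm}\}_{l\le C}$, the induction in the proof of Theorem~\ref{thm:main-oneshell} --- iterating Lemma~\ref{lem:alg2-analysis} over $O(\log N)$ geometric batches of levels, each batch depending only on coefficients of strictly lower degree, with the invariant-estimation error $O(N)\norm{\wh T - T_x}_F$ in the role of the sampling error --- yields $\norm{\wt x - f}_2 \le \poly(N/\delta)^{O(\log N)}(\gamma' + \norm{\wh T - T_x}_F)$. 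Under the hypothesis $\norm{\wh T - T_x}_F \le ((\delta/N)^{\log N}\eps)^{O(1)}$ and with $\gamma'$ chosen accordingly, the right-hand side is smaller than any prescribed constant times $\eps$; combined with the stability of $y\mapsto T_y$ and $T_x = T_f$ this gives $\norm{T_{\wt x} - \wh T}_F \le \eps/2$, so this candidate is accepted and the algorithm outputs some $\wt x$. Soundness is then immediate: any accepted $\wt x$ satisfies $\norm{T_{\wt x} - T_x}_F \le \norm{T_{\wt x} - \wh T}_F + \norm{\wh T - T_x}_F \le \eps$. The grid has $(1/\gamma')^{O((C+1)^2)} = \poly\bigl((N/\delta)^{\log N},1/\eps\bigr)$ points since $C = O(1)$, and each candidate costs $\poly(N)$ for the $N$ linear solves and $\poly(N)$ for evaluating and comparing $T_{\wt x}$, giving the claimed running time.

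The main obstacle --- and the only genuinely new issue relative to Theorem~\ref{thm:main-oneshell} --- is the constant-degree subproblem: with only degree-$3$ information we cannot certify that a grid point equals $\{f_{lm}\}_{l\le C}$ up to rotation, and for grid points that do not track the true $f$ the frequency-marching linear systems may be ill-conditioned or even inconsistent and return garbage. This is harmless precisely because the guarantee of Theorem~\ref{coro:orbit-tensor} is stated in terms of the orbit tensor (equivalently, the degree-$3$ invariants) rather than closeness to $x$ up to rotation: we may verify each candidate by recomputing its orbit tensor and discard those inconsistent with $\wh T$. Correspondingly, we only ever need the well-conditioning guarantee of Lemma~\ref{lem:well-conditioned} --- a statement about $M^{\textsf{truth}}$, built from the true smoothed coefficients --- along the single grid point tracking $f$, which is exactly what the completeness direction uses; everything else is bookkeeping of the same kind as in the proof of Theorem~\ref{thm:main-oneshell}.
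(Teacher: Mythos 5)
Your proposal is correct and follows essentially the same route as the paper's proof: extract the degree-$3$ invariants directly from $\wh T$, grid search over the constant-degree coefficients in place of Algorithm~\ref{alg:const}, extend every guess by {\sc Frequency Marching with Long Stride}, and certify a candidate at the end by recomputing its orbit tensor and comparing with $\wh T$. You have merely filled in details (the linear dictionary between tensor entries and $\mcl{I}_{l_1,l_2,l_3}$, the Lipschitz stability of $y \mapsto T_y$, and the completeness/soundness bookkeeping) that the paper's short argument leaves implicit.
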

\begin{proof}
We can essentially imitate the proof of Theorem \ref{thm:main-oneshell} .  Note that Algorithm \ref{alg:iterate} for iteratively recovering the coefficients only uses the samples to measure the values of the degree-$3$ invariant polynomials.  In this setting, we can simply use $\wh{T}$ to estimate these polynomials to the same accuracy. Instead of Algorithm \ref{alg:const}, we can simply grid search for all possible constant-degree coefficients and then run Algorithm \ref{alg:iterate} to extend each guess (instead of trying to narrow down to a unique guess).  Then at the end, it suffices to compute 
\[
\int_{R \sim SO(3)} (R \cdot \wt{x})^{\otimes 3}
\]
and check if it is indeed close to $\wh{T}$.
\end{proof}

\subsection{What Happens for Cryo-EM?}\label{sec:cryo-EM2}

Cryo-electron microscopy (cryo-EM) is a well-known extension of the problem studied here, cryo-ET \cite{singer2018mathematics}.  In cryo-EM, there is still an unknown function $f: S^2 \rightarrow \C$ but instead of observing a function $\wh{f} = R(f) + \zeta$, we only observe a projection of $\wh{f}$ onto some plane.  While we will not go into the details, it is shown in \cite{bandeira2018estimation} that the degree-$3$ polynomials whose values we can measure (these are analogs of the invariant polynomials) have the following form.
\begin{theorem}[\cite{bandeira2018estimation}]
The degree-$3$ polynomials that we can measure in cryo-EM are 
\[
\mcl{P}_{k_1,k_2, k_3}(f) = \sum_{\substack{l_1,l_2,l_3 \\ |l_1 - l_2| \leq l_3 \leq |l_1 + l_2|}}C_{l_1,k_1,l_2,k_2,l_3,k_3} \langle l_1k_1l_2k_2|l_3(-k_3) \rangle \mcl{I}_{l_1,l_2,l_3}(f)
\]
where $\mcl{I}_{l_1,l_2,l_3}$ are as defined in Theorem \ref{thm:explicit-invariants} and $C_{l_1,k_1,l_2,k_2,l_3,k_3}$ are constants.
\end{theorem}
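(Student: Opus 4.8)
The plan is to follow the derivation of Bandeira et al.\ and reduce the degree-$3$ cryo-EM statistics to the already-understood $SO(3)$-invariants $\mcl{I}_{l_1,l_2,l_3}$ via the Fourier-slice description of tomographic projections together with the triple-product integral of Wigner $D$-matrices. First I would pin down the observation model precisely: a cryo-EM sample is a tomographic projection of $R(f)$ onto a fixed image plane, and the quantities we can consistently estimate from many samples (via an unbiased-estimator argument in the spirit of Lemma~\ref{lem:est-moments}) are the expectations over $R \sim SO(3)$ of products of the angular Fourier coefficients of the projected image. Writing the image in polar coordinates on the plane and passing to its angular Fourier series, the Fourier-slice theorem expresses each angular coefficient $\wh{I}_k$ of the projection of $R(f)$ as a linear functional of the spherical-harmonic coefficients of the rotated function: $\wh{I}_k = \sum_{l,m} c_{l,k}\, D^l_{k m}(R)\, f_{lm}$, where $D^l$ is the Wigner matrix implementing the action of $R$ on degree-$l$ harmonics (Fact~\ref{fact:harmonic-basic}) and the constants $c_{l,k}$ come from the value of the associated Legendre function on the equatorial great circle; in particular $c_{l,k} = 0$ unless $l+k$ is even, which encodes that ``odd slices'' carry no information.

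Next I would substitute this expansion into the degree-$3$ statistic $\mcl{P}_{k_1,k_2,k_3}(f) = \E_{R}\!\big[\wh{I}_{k_1}\wh{I}_{k_2}\wh{I}_{k_3}\big]$. Pulling the expectation inside, this becomes a sum over $l_1,l_2,l_3,m_1,m_2,m_3$ of $f_{l_1 m_1} f_{l_2 m_2} f_{l_3 m_3}$ times the constants $c_{l_i,k_i}$ times $\int_{SO(3)} D^{l_1}_{k_1 m_1}(R)\, D^{l_2}_{k_2 m_2}(R)\, D^{l_3}_{k_3 m_3}(R)\, dR$. The key algebraic input is the classical fact that this triple integral factors, up to an overall normalization and sign, as a product of two Clebsch--Gordan coefficients, one in the $m$-indices and one in the $k$-indices, and is supported exactly where $m_1+m_2+m_3 = 0$, $k_1+k_2+k_3 = 0$, and $|l_1-l_2|\le l_3\le l_1+l_2$. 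Carrying out the inner sum over $m_1,m_2,m_3$ against $\langle l_1 m_1 l_2 m_2 | l_3(-m_3)\rangle$ and the $f$'s reproduces precisely the invariant polynomial $\mcl{I}_{l_1,l_2,l_3}(f)$ of Theorem~\ref{thm:explicit-invariants} (with the $(-1)^{m_3}$ sign and the $k_1+k_2+k_3=0$ constraint matching up), while everything that survives --- the constants $c_{l_i,k_i}$, the normalization of the triple integral, and the remaining factor $\langle l_1 k_1 l_2 k_2 | l_3(-k_3)\rangle$ --- gets collected into $C_{l_1,k_1,l_2,k_2,l_3,k_3}$, yielding the stated identity.

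I expect the main obstacle to be the middle step: deriving and normalizing the $\int DDD$ identity in the exact sign and phase conventions used for the CG coefficients here (Fact~\ref{fact:CG-explicit}, Fact~\ref{fact:CG-orthogonality}), and verifying that the inner $m$-sum collapses to \emph{precisely} $\mcl{I}_{l_1,l_2,l_3}$ rather than a rescaled or re-signed variant. The consistency between the Wigner-$D$ action on $\{f_{lm}\}$ and the paper's CG convention, the parity constraints forced by the equatorial Legendre values, and (in the multi-shell version) the extra radial index all have to be tracked carefully; these are routine but error-prone. The remaining ingredients are comparatively soft: unbiasedness of the empirical moments follows from the same averaging argument as Lemma~\ref{lem:est-moments}, and the triangle-inequality support restriction on $(l_1,l_2,l_3)$ follows from Fact~\ref{fact:CG-trivial}.
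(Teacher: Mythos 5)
The paper does not actually prove this statement---it is imported by citation from \cite{bandeira2018estimation} (``While we will not go into the details, it is shown in \cite{bandeira2018estimation} that\ldots'')---so there is no internal proof to compare against. Your outline (projection-slice theorem expressing each angular Fourier coefficient of a projected image as $\sum_{l,m} c_{l,k}\,D^l_{km}(R)\,f_{lm}$ with $c_{l,k}=0$ unless $l+k$ is even, then the factorization of $\int_{SO(3)} D^{l_1}_{k_1m_1}D^{l_2}_{k_2m_2}D^{l_3}_{k_3m_3}\,dR$ into two CG coefficients, with the $m$-sum collapsing to $\mcl{I}_{l_1,l_2,l_3}(f)$ and the $k$-side CG coefficient and constants $c_{l_i,k_i}$ absorbed into $C_{l_1,k_1,l_2,k_2,l_3,k_3}$) is precisely the standard derivation used in that reference, and it is correct modulo the sign/normalization convention bookkeeping you already flag.
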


Note that there are only $O(N^2)$ such polynomials (since we must have $k_1 + k_2 + k_3 = 0$) if we assume that $f$ has spherical harmonic expansion of degree at most $N$, compared to $O(N^3)$ in cryo-ET (recall Theorem \ref{thm:explicit-invariants}).  There are roughly $ N^2$ variables that we need to solve for so the number of constraints and number of variables are comparable.  Also, the polynomials do not have a layered structure.  In particular, for any $k_1,k_2,k_3$, even $k_1 = k_2 = k_3 = 0$, the degree $N$ spherical harmonic coefficients $f_{Nm}$ are involved in the expression for $\mcl{P}_{k_1,k_2, k_3}(f)$.  Thus, it does seem that generalizing Theorem \ref{thm:main-oneshell} to cryo-EM likely requires different techniques.

\bibliographystyle{alpha}
\bibliography{bibliography}

\newcommand{\etalchar}[1]{$^{#1}$}
\begin{thebibliography}{BBSK{\etalchar{+}}18}

\bibitem[ADBS16]{aguerrebere2016fundamental}
Cecilia Aguerrebere, Mauricio Delbracio, Alberto Bartesaghi, and Guillermo
  Sapiro.
\newblock Fundamental limits in multi-image alignment.
\newblock {\em IEEE Transactions on Signal Processing}, 64(21):5707--5722,
  2016.

\bibitem[ADLM84]{adrian1984cryo}
Marc Adrian, Jacques Dubochet, Jean Lepault, and Alasdair~W McDowall.
\newblock Cryo-electron microscopy of viruses.
\newblock {\em Nature}, 308(5954):32--36, 1984.

\bibitem[AFH{\etalchar{+}}12]{anandkumar2012spectral}
Anima Anandkumar, Dean~P Foster, Daniel~J Hsu, Sham~M Kakade, and Yi-Kai Liu.
\newblock A spectral algorithm for latent dirichlet allocation.
\newblock {\em Advances in neural information processing systems}, 25, 2012.

\bibitem[AGM12]{arora2012learning}
Sanjeev Arora, Rong Ge, and Ankur Moitra.
\newblock Learning topic models--going beyond svd.
\newblock In {\em 2012 IEEE 53rd annual symposium on foundations of computer
  science}, pages 1--10. IEEE, 2012.

\bibitem[BBM{\etalchar{+}}17]{bendory2017bispectrum}
Tamir Bendory, Nicolas Boumal, Chao Ma, Zhizhen Zhao, and Amit Singer.
\newblock Bispectrum inversion with application to multireference alignment.
\newblock {\em IEEE Transactions on signal processing}, 66(4):1037--1050, 2017.

\bibitem[BBSK{\etalchar{+}}18]{bandeira2018estimation}
Afonso~S. Bandeira, Ben Blum-Smith, Joe Kileel, Amelia Perry, Jonathan Weed,
  and Alexander~S. Wein.
\newblock Estimation under group actions: recovering orbits from invariants,
  2018.

\bibitem[BCMV14]{bhaskara2014smoothed}
Aditya Bhaskara, Moses Charikar, Ankur Moitra, and Aravindan Vijayaraghavan.
\newblock Smoothed analysis of tensor decompositions.
\newblock In {\em Proceedings of the forty-sixth annual ACM symposium on Theory
  of computing}, pages 594--603, 2014.

\bibitem[BCPV20]{bhaskara2020smoothed}
Aditya Bhaskara, Aidao Chen, Aidan Perreault, and Aravindan Vijayaraghavan.
\newblock Smoothed analysis for tensor methods in unsupervised learning.
\newblock {\em Mathematical Programming}, pages 1--51, 2020.

\bibitem[BFB97]{blanco1997evaluation}
Miguel~A Blanco, Manuel Fl{\'o}rez, and Margarita Bermejo.
\newblock Evaluation of the rotation matrices in the basis of real spherical
  harmonics.
\newblock {\em Journal of Molecular Structure: THEOCHEM}, 419(1-3):19--27,
  1997.

\bibitem[BNWR20]{bandeira2020optimal}
Afonso~S Bandeira, Jonathan Niles-Weed, and Philippe Rigollet.
\newblock Optimal rates of estimation for multi-reference alignment.
\newblock {\em Mathematical Statistics and Learning}, 2(1):25--75, 2020.

\bibitem[BS10]{belkin2010polynomial}
Mikhail Belkin and Kaushik Sinha.
\newblock Polynomial learning of distribution families.
\newblock In {\em 2010 IEEE 51st Annual Symposium on Foundations of Computer
  Science}, pages 103--112. IEEE, 2010.

\bibitem[CM21]{chen2021algorithmic}
Sitan Chen and Ankur Moitra.
\newblock Algorithmic foundations for the diffraction limit.
\newblock In {\em Proceedings of the 53rd Annual ACM SIGACT Symposium on Theory
  of Computing}, pages 490--503, 2021.

\bibitem[CW01]{carbery2001distributional}
Anthony Carbery and James Wright.
\newblock Distributional and $l^q$ norm inequalities for polynomials over
  convex bodies in $\mathbb{R}^n$.
\newblock {\em Mathematical Research Letters}, 8, 05 2001.

\bibitem[DKK{\etalchar{+}}19]{diakonikolas2019robust}
Ilias Diakonikolas, Gautam Kamath, Daniel Kane, Jerry Li, Ankur Moitra, and
  Alistair Stewart.
\newblock Robust estimators in high-dimensions without the computational
  intractability.
\newblock {\em SIAM Journal on Computing}, 48(2):742--864, 2019.

\bibitem[Fra08]{frank2008electron}
Joachim Frank.
\newblock {\em Electron tomography: methods for three-dimensional visualization
  of structures in the cell}.
\newblock Springer, 2008.

\bibitem[GHK15]{ge2015learning}
Rong Ge, Qingqing Huang, and Sham~M Kakade.
\newblock Learning mixtures of gaussians in high dimensions.
\newblock In {\em Proceedings of the forty-seventh annual ACM symposium on
  Theory of computing}, pages 761--770, 2015.

\bibitem[HGKD15]{huang2015minimal}
Qingqing Huang, Rong Ge, Sham Kakade, and Munther Dahleh.
\newblock Minimal realization problems for hidden markov models.
\newblock {\em IEEE Transactions on Signal Processing}, 64(7):1896--1904, 2015.

\bibitem[HHR09]{heim2009some}
TA~Heim, J~Hinze, and ARP Rau.
\newblock Some classes of ‘nontrivial zeroes’ of angular momentum addition
  coefficients.
\newblock {\em Journal of Physics A: Mathematical and Theoretical},
  42(17):175203, 2009.

\bibitem[HK13]{hsu2013learning}
Daniel Hsu and Sham~M Kakade.
\newblock Learning mixtures of spherical gaussians: moment methods and spectral
  decompositions.
\newblock In {\em Proceedings of the 4th conference on Innovations in
  Theoretical Computer Science}, pages 11--20, 2013.

\bibitem[Kac94]{kac1994invariant}
Victor Kac.
\newblock Invariant theory.
\newblock {\em Lecture Notes}, 1994.

\bibitem[KMV10]{kalai2010efficiently}
Adam~Tauman Kalai, Ankur Moitra, and Gregory Valiant.
\newblock Efficiently learning mixtures of two gaussians.
\newblock In {\em Proceedings of the forty-second ACM symposium on Theory of
  computing}, pages 553--562, 2010.

\bibitem[LRV16]{lai2016agnostic}
Kevin~A Lai, Anup~B Rao, and Santosh Vempala.
\newblock Agnostic estimation of mean and covariance.
\newblock In {\em 2016 IEEE 57th Annual Symposium on Foundations of Computer
  Science (FOCS)}, pages 665--674. IEEE, 2016.

\bibitem[MNV15]{meka2015anticoncentration}
Raghu Meka, Oanh Nguyen, and Van Vu.
\newblock Anti-concentration for polynomials of independent random variables,
  2015.

\bibitem[Moi15]{moitra2015super}
Ankur Moitra.
\newblock Super-resolution, extremal functions and the condition number of
  vandermonde matrices.
\newblock In {\em Proceedings of the forty-seventh annual ACM symposium on
  Theory of computing}, pages 821--830, 2015.

\bibitem[Moi18]{moitra2018algorithmic}
Ankur Moitra.
\newblock {\em Algorithmic aspects of machine learning}.
\newblock Cambridge University Press, 2018.

\bibitem[MR05]{mossel2005learning}
Elchanan Mossel and S{\'e}bastien Roch.
\newblock Learning nonsingular phylogenies and hidden markov models.
\newblock In {\em Proceedings of the thirty-seventh annual ACM symposium on
  Theory of computing}, pages 366--375, 2005.

\bibitem[MV10]{moitra2010settling}
Ankur Moitra and Gregory Valiant.
\newblock Settling the polynomial learnability of mixtures of gaussians.
\newblock In {\em 2010 IEEE 51st Annual Symposium on Foundations of Computer
  Science}, pages 93--102. IEEE, 2010.

\bibitem[MW19]{moitra2019spectral}
Ankur Moitra and Alexander~S Wein.
\newblock Spectral methods from tensor networks.
\newblock In {\em Proceedings of the 51st Annual ACM SIGACT Symposium on Theory
  of Computing}, pages 926--937, 2019.

\bibitem[Nog16]{nogales2016development}
Eva Nogales.
\newblock The development of cryo-em into a mainstream structural biology
  technique.
\newblock {\em Nature methods}, 13(1):24--27, 2016.

\bibitem[OO19]{oymak2019non}
Samet Oymak and Necmiye Ozay.
\newblock Non-asymptotic identification of lti systems from a single
  trajectory.
\newblock In {\em 2019 American control conference (ACC)}, pages 5655--5661.
  IEEE, 2019.

\bibitem[PWB{\etalchar{+}}19]{perry2019sample}
Amelia Perry, Jonathan Weed, Afonso~S Bandeira, Philippe Rigollet, and Amit
  Singer.
\newblock The sample complexity of multireference alignment.
\newblock {\em SIAM Journal on Mathematics of Data Science}, 1(3):497--517,
  2019.

\bibitem[Sin18]{singer2018mathematics}
Amit Singer.
\newblock Mathematics for cryo-electron microscopy, 2018.

\bibitem[Sol91]{solerno1991effective}
Pablo Solern{\'o}.
\newblock Effective {\l}ojasiewicz inequalities in semialgebraic geometry.
\newblock {\em Applicable Algebra in Engineering, Communication and Computing},
  2(1):1--14, 1991.

\end{thebibliography}

\appendix
\newpage
\begin{center}
\Large{\textbf{Appendix}}
\end{center}

\section{Omitted Proof of Lemma \ref{lem:alg1-analysis}} \label{appendix:const-degree}
In this section, we prove Lemma \ref{lem:alg1-analysis}. The proof will make use of the following tool from algebraic geometry \cite{solerno1991effective}.

\begin{theorem}[Theorem 7 in \cite{solerno1991effective}]\label{thm:dist-to-solution}
Let $f_1, \dots , f_s \in \R[x_1, \dots , x_n]$ and let $D = \sum_{j = 1}^s \deg(f_i)$.  Let $V = \{x \in \R^n: f_1(x) = 0, \dots , f_s(x) = 0 \}$ and assume that $V$ is nonempty.  Then there is a constant $c$ not depending on the $f_i$ and a positive integer $m$ and constant $c'$ (both depending on the $f_i$) such that 
\[
d(x,V)^m \leq c' \cdot (1 + |x|)^{D^{n^{c}}}\max_{j} |f_j(x)|
\]
for all $x \in \R^n$ where $d(x,V)$ denotes the minimum distance from $x$ to an element of $V$.
\end{theorem}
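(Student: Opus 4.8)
The plan is to recast the inequality as an \emph{effective Łojasiewicz bound for a semialgebraic function of two variables} and then to control that function's complexity through effective quantifier elimination, following the strategy of \cite{solerno1991effective}. First, a reduction to a single polynomial: over $\R$ the common zero set of $f_1,\dots,f_s$ equals $V=\{x:f(x)=0\}$ for $f:=f_1^2+\cdots+f_s^2$, which has degree $\le 2D$, and $d(x,V)$ is unchanged. Splitting on whether $\max_j|f_j(x)|\le 1$ (using $f(x)=\sum_j f_j(x)^2\le s(\max_j|f_j(x)|)^2$ in one case, and the crude bound $d(x,V)\le|x|+|x_0|$ for a fixed $x_0\in V$ in the other) shows it suffices to prove the statement with the single polynomial $f$ and with $|f(x)|$ in place of $\max_j|f_j(x)|$, at the cost of enlarging $m$, $c'$ and the exponent by constant factors. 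So assume $V=f^{-1}(0)\ne\emptyset$ with $\deg f=\delta\le 2D$.

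The key object is the semialgebraic function $\Psi:\R_{\ge 0}^2\to\R_{\ge 0}\cup\{+\infty\}$ defined by $\Psi(u,v)=\inf\{|f(x)|:|x|\le u,\ d(x,V)\ge v\}$. It suffices to produce $m$, $c'$ and $N\le D^{n^{c}}$ (with $c$ universal) such that $\Psi(u,v)\ge v^m/\big(c'(1+u)^N\big)$ for all $u,v\ge 0$, since then plugging $u=|x|$, $v=d(x,V)$ gives the theorem. The qualitative input is a compactness fact: whenever $v>0$ and the set $\{|x|\le u,\ d(x,V)\ge v\}$ is nonempty, it is compact and $f$ is nowhere zero on it (the condition $d(x,V)\ge v>0$ forces $x\notin f^{-1}(0)$), so $\Psi(u,v)>0$. (When the set is empty, $\Psi(u,v)=+\infty$; when $v=0$ the claimed bound is vacuous.)

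To upgrade this positivity to an effective polynomial bound I would bound the \emph{complexity} of $\Psi$. The predicate $d(x,V)\ge t$ is equivalent to $\forall v\,(f(v)=0\Rightarrow|x-v|^2\ge t^2)$, a first-order formula with one block of $n$ quantified variables and polynomials of degree $\le\delta+2$; by effective quantifier elimination it has an equivalent quantifier-free description by polynomials whose number and degrees are bounded by $\delta^{n^{O(1)}}$ and depend only on $\delta$ and $n$, not on the coefficients of $f$. Substituting this into the formula defining $\Psi$ (which adds $\exists x$ and $\forall x$ blocks) and eliminating again keeps the degrees at $\delta^{n^{O(1)}}$, so $\Psi$ is a semialgebraic function of $(u,v)$ whose graph in $\R^3$ is cut out by polynomials of degree $\le\Delta:=\delta^{n^{O(1)}}$. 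Since the ambient dimension is fixed, a cylindrical algebraic decomposition adapted to the coordinate order $u,v,w$ multiplies $\Delta$ only polynomially, and on each cell $\Psi$ is a single branch of an algebraic function of $u$. Analyzing the Newton--Puiseux expansions of these branches as $v\to 0^{+}$ (where positivity rules out branches $\equiv 0$) and as $u\to\infty$ yields finitely many rational exponents with numerators and denominators bounded by $\Delta^{O(1)}$; combining them across cells and clearing denominators produces integers $m$ and $N\le\Delta^{O(1)}\le D^{n^{c}}$ for a universal $c$, together with a constant $c'$ depending on the actual leading coefficients of the branches (hence on $f$), giving $\Psi(u,v)\ge v^m/\big(c'(1+u)^N\big)$ as required.

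The step I expect to be the main obstacle is the last paragraph: one needs an effective quantifier-elimination bound yielding a quantifier-free description of $d(x,V)$, and of $\Psi$, whose degrees depend on $D,n$ exactly in the double-exponential shape $D^{n^{O(1)}}$, and one needs the low-dimensional Newton--Puiseux/cell-decomposition analysis to control not only the Łojasiewicz exponent $m$ near $V$ but also the ``at infinity'' exponent $N$ in the factor $(1+|x|)^N$ --- the latter is precisely the part that would fail for a naive fixed exponent and that forces the double-exponential growth $D^{n^c}$.
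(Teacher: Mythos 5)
This statement is not proved in the paper at all: it is imported verbatim as Theorem~7 of \cite{solerno1991effective} and used as a black box in Appendix~\ref{appendix:const-degree}, so there is no in-paper argument to compare against. Your sketch is, in outline, the strategy of the cited source itself (an effective \L{}ojasiewicz inequality obtained by reducing to a low-dimensional semialgebraic function and controlling its complexity through effective quantifier elimination), and the elementary reductions you carry out explicitly are correct: the passage to $f=\sum_j f_j^2$ with the case split on $\max_j|f_j(x)|\le 1$ versus $>1$ (using $d(x,V)\le|x|+|x_0|$ for a fixed $x_0\in V$, which is where $V\neq\emptyset$ and the $f$-dependence of $c'$ enter) is fine, as is the observation that $\Psi(u,v)>0$ by compactness, and that monotonicity of $\Psi$ in $u$ and $v$ lets you patch the two asymptotic regimes to the compact intermediate region.

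That said, as written this is a proof plan rather than a proof: the two steps you yourself flag --- an effective quantifier-elimination bound giving degrees $D^{n^{O(1)}}$ for the quantifier-free description of $d(x,V)\ge t$ and of the graph of $\Psi$, and the uniform two-variable Puiseux/cell analysis that extracts a single pair $(m,N)$ with $N\le D^{n^c}$ valid simultaneously as $v\to 0^+$ and $u\to\infty$ --- are precisely where all of the content of Solern\'o's paper lives, and neither is routine. If the intent is to justify the paper's use of the theorem, citing it (as the paper does) is the right move; if the intent is to reprove it, the sketch is a faithful roadmap but would need those two components supplied in full before it could be called a proof.
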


\begin{proof}[Proof of Lemma \ref{lem:alg1-analysis}]
Recall that $C$ is a universal constant.  By Fact \ref{fact:invariant-poly} (property 2), the ring 
\[
\C^{SO(3)}[x_{00}, \dots , x_{C(-C)}, \dots , x_{CC}]
\]
is finitely-generated.  Thus, we can compute the generators $P_1, \dots , P_k$ in $O(1)$ time since these are just fixed polynomials independent of the problem parameters.  This can be done using standard techniques, see e.g. Section 8.1 in \cite{bandeira2018estimation}.

Let $K$ be a sufficiently large universal constant (chosen in terms of $C, P_1, \dots , P_k$).  By Lemma \ref{lem:est-invariant}, with probability $1 - 2^{-10/\gamma}$, our estimates $\wt{P_j}$ for $P_j(\{f_{lm} \}_{l \leq C}) $ all satisfy
\begin{equation}\label{eq:estbound}
\left \lvert \wt{P_j} - P_j(\{f_{lm} \}_{l \leq C}) \right \rvert \leq 0.1 (1/\gamma)^K \,. 
\end{equation}
Note that $\poly(\sigma, \gamma)$ samples suffices because the polynomials $P_1, \dots , P_k$ are fixed, independent of the parameters of the problem.  Now let us write each spherical harmonic coefficient $f_{lm} = a_{lm} + ib_{lm}$ as a sum of its real and imaginary part (we need to do this because Theorem \ref{thm:dist-to-solution} is for polynomials with real coefficients).  We can also decompose each polynomial $P_j$ into its real and imaginary parts, i.e.
\[
P_j\left(\{ a_{lm} + ib_{lm}\}_{l \leq C} \right) = \textsf{Re}_j(\{a_{lm}, b_{lm} \}_{l \leq C} )  + i \textsf{Im}_j(\{a_{lm}, b_{lm} \}_{l \leq C} )
\]
where $\textsf{Re}_j(\{a_{lm}, b_{lm} \}_{l \leq C} ), \textsf{Im}_j(\{a_{lm}, b_{lm} \}_{l \leq C} )$ are each polynomials in the variables $\{a_{lm}, b_{lm} \}_{l \leq C}$ with real coefficients.  Now consider the system $\mcl{S}$ defined as follows:
\begin{align*}
\textsf{Re}_j(\{\wt{a_{lm}}, \wt{b_{lm}} \}_{l \leq C} ) = \textsf{Re}_j(\{\wh{a_{lm}}, \wh{b_{lm}} \}_{l \leq C} ) \quad \forall j \in [k]\\
\textsf{Im}_j(\{ \wt{a_{lm} }, \wt{b_{lm}} \}_{l \leq C} ) = \textsf{Im}_j(\{ \wh{a_{lm} }, \wh{b_{lm}} \}_{l \leq C} ) \quad \forall j \in [k]
\end{align*}
where the variables are $\{\wt{a_{lm}}, \wt{b_{lm}} \}_{l \leq C},  \{ \wh{a_{lm} }, \wh{b_{lm}} \}_{l \leq C} $ i.e. there are $2(C+1)^2$ variables.  By Fact \ref{fact:invariant-poly}  (part 3), the solutions to this system are precisely the sets of $\{\wt{a_{lm}}, \wt{b_{lm}} \}_{l \leq C} ,  \{\wh{a_{lm}}, \wh{b_{lm}} \}_{l \leq C}$ with the following property: there is a rotation $R \in SO(3)$ such that 
\[
\{ \wt{a_{lm}} + i \wt{b_{lm}} \}_{l \leq C} = R(\{\wh{a_{lm}} + i \wh{b_{lm}} \}_{l \leq C})
\]
i.e. the coefficients $\{ \wt{a_{lm}} + i \wt{b_{lm}} \}_{l \leq C}$ and $\{\wh{a_{lm}} + i \wh{b_{lm}} \}_{l \leq C}$ are equivalent up to rotation.

If our guesses for $\wt{f_{lm}}$ satisfy 
\begin{equation}\label{eq:test}
\left \lvert P_j( \{\wt{f_{lm}}\}_{l \leq C}) - \wt{P_j}\right \rvert  \leq 0.2 (1/\gamma)^{K} \quad \forall j \in [k]
\end{equation}
then by (\ref{eq:estbound}) we have
\[
\left \lvert P_j( \{\wt{f_{lm}}\}_{l \leq C}) - P_j( \{ f_{lm}\}_{l \leq C}) \right \rvert \leq  (1/\gamma)^{K} \quad \forall j \in [k] \,.
\]
Then by Theorem \ref{thm:dist-to-solution} applied to the system $\mcl{S}$,  there must be a rotation $R \in SO(3)$ and coefficients $\{\wt{f_{lm}}'\}_{l \leq C}, \{f_{lm}'\}_{l \leq C}$ such that 
\begin{align*}
&\{\wt{f_{lm}}'\}_{l \leq C} = R(\{f_{lm}'\}_{l \leq C}) \\ 
&\sum_{l = 0}^C \sum_{m = -l}^l (f_{lm} - f_{lm}')^2 + \left(\wt{f_{lm}} - \wt{f_{lm}}'\right)^2 \leq \gamma 
\end{align*}
where we use that $K$ is a sufficiently large universal constant.  In other words, $\{\wt{f_{lm}}\}_{l \leq C}$ and $\{ f_{lm}\}_{l \leq C} $ are close to some pair of sets of coefficients that are equivalent up to rotation.  Next by Fact \ref{fact:orthonormality} (orthonormality of spherical harmonics), 
\begin{align*}
&\gamma \leq \sum_{l = 0}^C \sum_{m = -l}^l\left(f_{lm} - f_{lm}'\right)^2  = \norm{f_{\leq C} - f'_{\leq C}}_2^2 = \norm{R(f)_{\leq C} - R(f')_{\leq C}}_2^2 = \norm{ R(f)_{\leq C} - \wt{f}'_{\leq C}}_2^2 \\
&\gamma \leq \sum_{l = 0}^C \sum_{m = -l}^l\left(\wt{f_{lm}} - \wt{f_{lm}}'\right)^2 = \norm{\wt{f}_{\leq C} - \wt{f}'_{\leq C} }_2^2
\end{align*}
so we deduce
\[
\norm{\wt{f}_{\leq C} - R(f)_{\leq C}}_2^2 = \sum_{l = 0}^C \sum_{m = -l}^l |\wt{f_{lm}} - R(f)_{lm}|^2 \leq 4\gamma \,.
\]
It remains to show that with high probability, one of our guesses actually satisfies the test (\ref{eq:test}).  This is true simply because the polynomials $P_1, \dots , P_k$ are fixed and $|f_{lm}| \leq 1$ for all $l,m$ so as long as we grid search with a sufficiently fine grid i.e. $(1/\gamma)^{K_0 \cdot K}$ for sufficiently large constant $K_0$, then the guess $\{\wt{f_{lm}} \}_{l \leq C}$ that is entrywise closest to $\{f_{lm} \}_{l \leq C}$ must satisfy 
\[
\left \lvert P_j( \{\wt{f_{lm}}\}_{l \leq C}) -  P_j( \{ f_{lm}\}_{l \leq C}) \right \rvert  \leq 0.1 (1/\gamma)^{K} \quad \forall j \in [k]
\]
which combined with (\ref{eq:estbound}) means that this guess passes the test.  Overall, it is clear that the algorithm runs in time $\poly(\sigma, \gamma)$ and the proof is complete.
\end{proof}

\section{Quantitative Bounds on Polynomials}\label{sec:poly-anticoncentration}
In smoothed analysis, it is usually necessary to show that a bad event e.g. some matrix being very close to singular, occurs with low probability.  This is usually done by proving anticoncentration of various quantities.  We begin with a standard anticoncentration bound for polynomials (see e.g. \cite{carbery2001distributional}).  Its proof is included here for the sake of completeness. 
\begin{claim}\label{claim:univariate-poly}
Let $P(x) : \R \rightarrow \R$ be a polynomial in one variable of degree at most $d$ with leading coefficient $1$.  Then
\[
\mu \{|P(x)| < \delta \} \leq 20\delta^{1/d}
\]
for all positive real numbers $\delta$ where $\mu$ denotes the uniform measure on the real line.
\end{claim}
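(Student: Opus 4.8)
The plan is to combine a spreading argument with Lagrange interpolation. First I would pass to the closed set $E = \{x : |P(x)| \le \delta\}$. Since $P$ is monic of degree $d \ge 1$ it tends to infinity, so $E$ is bounded, and since $\{|P| = \delta\}$ is the union of the zero sets of two nonzero polynomials it is finite; hence $\mu(E) = \mu\{|P| < \delta\} =: m$, and it suffices to bound $m$. If $m = 0$ there is nothing to prove, so assume $m > 0$.

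The crux is a separation lemma: there exist $x_0 < x_1 < \cdots < x_d$, all lying in $E$, with $x_{i+1} - x_i \ge h := \tfrac{m}{d+1}$ for every $i$. I would prove this via the quantile function of Lebesgue measure restricted to $E$. Let $F(t) = \mu\bigl(E \cap (-\infty, t]\bigr)$, which is continuous (its increments are bounded by the increments of $t$) and nondecreasing from $0$ to $m$, and set $\phi(s) = \inf\{t : F(t) \ge s\}$ for $s \in [0,m]$. Three facts need checking: $F(\phi(s)) = s$ (by continuity of $F$ and minimality in the definition of $\phi$); $\phi(s) \in E$ (if not, since $E$ is closed there is an open interval about $\phi(s)$ disjoint from $E$ on which $F$ is constant, contradicting minimality); and for $s < s'$, $\phi(s') - \phi(s) \ge \mu\bigl(E \cap (\phi(s), \phi(s')]\bigr) = F(\phi(s')) - F(\phi(s)) = s' - s$, because $E \cap (\phi(s),\phi(s')] \subseteq (\phi(s),\phi(s')]$. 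Taking $x_i = \phi\bigl(\tfrac{im}{d+1}\bigr)$ for $i = 0, \dots, d$ then yields $d+1$ distinct points of $E$ with $x_{i+1} - x_i \ge h$.

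With these nodes fixed, I would write $P$ in Lagrange form and compare coefficients of $x^d$: since $P$ is monic of degree $d$,
\[
1 = \sum_{i=0}^d \frac{P(x_i)}{\prod_{j \ne i}(x_i - x_j)}.
\]
Using $|P(x_i)| \le \delta$ and $\prod_{j\ne i}|x_i - x_j| \ge h^d \prod_{j \ne i}|i-j| = h^d\, i!\,(d-i)!$ (the bound $|x_i - x_j| \ge |i-j|h$ coming from telescoping consecutive gaps), summing gives
\[
1 \le \frac{\delta}{h^d}\sum_{i=0}^d \frac{1}{i!\,(d-i)!} = \frac{2^d\,\delta}{h^d\, d!}.
\]
Hence $h \le 2\delta^{1/d}/(d!)^{1/d} \le 2e\,\delta^{1/d}/d$ by $d! \ge (d/e)^d$, so $m = (d+1)h \le \tfrac{2e(d+1)}{d}\delta^{1/d} \le 4e\,\delta^{1/d} < 20\,\delta^{1/d}$, as claimed.

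The only genuinely delicate step is the separation lemma, specifically verifying that the quantile points $\phi(s)$ actually lie in $E$ and are spread by at least $m/(d+1)$; once that is in place, everything reduces to a one-line interpolation identity and Stirling's bound. (One can avoid the quantile function by constructing the $x_i$ greedily, but then tracking how much of $E$ has been consumed is slightly fussier, so I would prefer the quantile approach.)
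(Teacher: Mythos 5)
Your argument is correct and rests on the same engine as the paper's proof: Lagrange interpolation at $d+1$ points of the sublevel set, comparing the coefficient of $x^d$ to conclude that points of the set cannot all be pairwise separated by much more than $\delta^{1/d}$. The only real difference is bookkeeping: the paper takes a maximal $(10/d)\delta^{1/d}$-separated subset of $\{|P|<\delta\}$ and covers the set by at most $d$ intervals of length $2\cdot(10/d)\delta^{1/d}$, whereas you extract $d+1$ points separated by $m/(d+1)$ via the quantile function of Lebesgue measure restricted to $E$, which even yields the slightly better constant $4e$. One small slip to fix: with $\phi(s)=\inf\{t: F(t)\ge s\}$ and $F\ge 0$ everywhere, the endpoint $x_0=\phi(0)$ is $-\infty$, hence not a point of $E$; take instead $x_i=\phi\bigl((i+1)m/(d+1)\bigr)$ for $i=0,\dots,d$, or set $x_0=\min E$ (which exists since $E$ is nonempty, closed and bounded), and the gap estimate $\phi(s')-\phi(s)\ge s'-s$ goes through unchanged. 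Everything else --- the reduction from $\{|P|<\delta\}$ to the closed set $E$, the facts $F(\phi(s))=s$ and $\phi(s)\in E$ for $s\in(0,m]$, the telescoping bound $|x_i-x_j|\ge |i-j|\,h$ giving $\prod_{j\ne i}|x_i-x_j|\ge h^d\,i!\,(d-i)!$, and the final computation $m\le \frac{2e(d+1)}{d}\delta^{1/d}<20\,\delta^{1/d}$ --- checks out.
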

\begin{proof}
Let $S = \{|P(x)| < \delta \} $.  We claim that there cannot be real numbers $x_1 <  \dots < x_{d+1}$ in $S$ such that 
\[
|x_i - x_j| \geq  \frac{10}{d} \delta^{1/d}
\]
for all $i \neq j$.  To see this, by the Lagrange Interpolation formula
\[
P(x) = \sum_{i} \frac{P(x_i) \prod_{i' \neq i} (x - x_{i'}) }{\prod_{i' \neq i}(x_i - x_{i'})} \,. 
\]
The leading coefficient on the RHS of the above is less than 
\[
\frac{d^d}{10^d}\sum_{i=1}^{d+1}  \frac{1}{\prod_{i' \neq i} |i' - i| } = \frac{d^d}{10^d}\frac{2^{d}}{d!} < 1
\]
which is a contradiction.  Thus, if we take a maximal set of $ (10/d) \delta^{1/d}$-separated points in $S$, we conclude that the measure of $S$ is at most 
\[
2d \frac{10}{d} \delta^{1/d} = 20\delta^{1/d}
\]
\end{proof}

The lemma that is most important in our proof gives an anticoncentration inequality for multivariate polynomials that is somewhat nonstandard.  While multivariate generalizations of Claim \ref{claim:univariate-poly} exist (see e.g. \cite{carbery2001distributional, meka2015anticoncentration}), the key difference is that those results only use information about the total degree of the polynomial $P(x_1, \dots , x_n)$.  When only using the total degree, of course it is not possible to beat the bound in Claim \ref{claim:univariate-poly}.  However, in the next result, we prove that when the degrees in the $n$ variables are somewhat balanced, we can obtain a much better anticoncentration bound that has failure probability exponentially small in $n$.

\begin{lemma}\label{lemma:multivariate-poly}
Let $P(x_1, \dots , x_n): \R^n \rightarrow \R$ be a polynomial in $n$ variables.  Assume that for some nonnegative integers $a_1, \dots , a_n$
\begin{itemize}
    \item The coefficient of the monomial $x_1^{a_1}x_2^{a_2} \dots x_n^{a_n}$ in $P$ is equal to $1$
    \item For each $i$, the degree of $P$ when viewed as a polynomial in only $x_i$ is equal to $a_i$
\end{itemize}
then for any given real numbers $\alpha_1, \dots , \alpha_n$, if we perturb them to $\alpha_1 + \delta_1 , \dots , \alpha_n + \delta_n $ where the $\delta_i$ are drawn independently at random from $N(0,\delta^2)$ then for any $\eps > 0$,
\[
|P(\alpha_1 + \delta_1, \dots , \alpha_n + \delta_n) | \geq \delta^{a_1 + \dots + a_n}\left(\frac{\eps}{e}\right)^{40(a_1 + \dots + a_n)}
\]
with probability at least 
\[
1 - \eps^{20(a_1 + \dots + a_n)/\max(a_1, \dots, a_n)} \,.
\]
\end{lemma}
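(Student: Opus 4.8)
The plan is to reduce the multivariate statement to the univariate anticoncentration bound of Claim~\ref{claim:univariate-poly} by restricting $P$ to a random line (or, more precisely, by revealing the perturbations one coordinate at a time and tracking a suitable leading coefficient). The key structural fact to exploit is that $P$ is multilinear-free only in the weak sense that its degree in each $x_i$ is \emph{exactly} $a_i$, and that the coefficient of $x_1^{a_1}\cdots x_n^{a_n}$ is $1$; this guarantees that no matter which coordinates we have already fixed, the polynomial that remains still has a controlled leading coefficient in the variables not yet revealed. So first I would set $D = a_1 + \dots + a_n$ and $a_{\max} = \max_i a_i$, order the coordinates, and argue inductively: after substituting $x_j = \alpha_j + \delta_j$ for $j < i$, write the resulting polynomial in $x_i, \dots, x_n$ and observe that the coefficient of $x_i^{a_i} x_{i+1}^{a_{i+1}} \cdots x_n^{a_n}$, as a polynomial in $\delta_1, \dots, \delta_{i-1}$, is itself a nonzero polynomial of degree at most $a_1 + \dots + a_{i-1}$ whose leading coefficient is $1$ (it comes from the original coefficient of $x_1^{a_1}\cdots x_n^{a_n}$ when we track the top-degree term in each substituted variable).

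Second, I would handle each coordinate's perturbation with a one-variable argument. Fixing all other $\delta_j$, the quantity $P(\alpha_1+\delta_1, \dots, \alpha_n+\delta_n)$ viewed as a function of $\delta_i$ is a univariate polynomial of degree at most $a_i$ whose leading coefficient $c_i$ was, by the previous paragraph, guaranteed to itself be an anticoncentrated polynomial in the earlier $\delta_j$'s. Concretely, one first shows that with probability at least $1 - \eps^{\Omega(D/a_{\max})}$ the leading coefficient $c_i$ at each stage has magnitude at least roughly $\delta^{(a_1 + \dots + a_{i-1})}(\eps/e)^{O(a_1 + \dots + a_{i-1})}$ --- by applying Claim~\ref{claim:univariate-poly} (rescaled to a Gaussian, which only costs constant factors since a $N(0,\delta^2)$ has density at least $\Omega(1/\delta)$ on an interval of length $\Theta(\delta)$) to the polynomial defining $c_i$; then conditioned on all those good events one applies Claim~\ref{claim:univariate-poly} one final time to the univariate polynomial in $\delta_n$ (or in whichever is the last revealed variable) with its now-lower-bounded leading coefficient. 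Multiplying the per-step lower bounds telescopes to the claimed bound $\delta^D (\eps/e)^{40D}$, and union-bounding the per-step failure probabilities (each at most $\eps^{\Omega(D/a_{\max})}$, over at most $n \le D$ steps) gives total failure probability $\eps^{\Omega(D/a_{\max})}$; choosing the implied constants so the exponent is exactly $20 D / a_{\max}$ and the loss per step is absorbed into the $40$ in the exponent of $\eps/e$ finishes it. (One must be slightly careful that Claim~\ref{claim:univariate-poly} is stated for the uniform measure on $\mathbb{R}$; the Gaussian version follows by comparing densities on a bounded window and paying only a constant, which is why the exponent degrades from $1/d$ to a constant multiple and why the base is $\eps/e$ rather than $\eps$.)

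The main obstacle I anticipate is the bookkeeping that keeps the anticoncentration exponent at $\Omega(D/a_{\max})$ rather than merely $\Omega(1)$: a naive line-restriction argument only uses the total degree $D$ and would give failure probability $\eps^{1/D}$, which is useless. The trick is that we have $n$ \emph{independent} sources of randomness and the degree in each is at most $a_{\max}$, so morally there are at least $D/a_{\max}$ "independent" one-dimensional anticoncentration events, and we want the failure probabilities to multiply. Making this rigorous requires the inductive claim that the intermediate leading coefficients are themselves genuinely nonzero polynomials (so that Claim~\ref{claim:univariate-poly} applies to them) — this is exactly where the hypothesis "degree in $x_i$ equals $a_i$" and "the coefficient of $\prod x_i^{a_i}$ is $1$" are used, and where I expect the argument to need the most care. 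An alternative cleaner route, which I would try if the inductive approach gets messy, is to group the variables into $\Theta(D/a_{\max})$ blocks each of total degree $\Theta(a_{\max})$, restrict each block to an independent random direction to get $\Theta(D/a_{\max})$ genuinely independent univariate polynomials each of degree $O(a_{\max})$, and multiply their anticoncentration bounds; the hypothesis on the distinguished monomial ensures at least one such decomposition keeps all the relevant leading coefficients nonzero.
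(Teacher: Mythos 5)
Your setup matches the paper's: reveal $\delta_1,\dots,\delta_n$ one at a time and track the coefficient $L_i$ of the remaining distinguished monomial $x_{i+1}^{a_{i+1}}\cdots x_n^{a_n}$, using the hypothesis that $\deg_{x_i} P = a_i$ so that at each step $L_{i}$ is the value of a univariate polynomial of degree $a_i$ in $\delta_i$ with leading coefficient $L_{i-1}$, to which Claim~\ref{claim:univariate-poly} applies (the passage from Lebesgue measure to the Gaussian is indeed free since the Gaussian density is bounded, no windowing needed). But the concrete mechanism you propose for combining the steps --- per-step ``good events'' each holding with probability $1-\eps^{\Omega(D/a_{\max})}$, then a union bound --- cannot deliver the stated conclusion. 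Claim~\ref{claim:univariate-poly} imposes a hard trade-off at every step: to make the step-$i$ failure probability as small as $q$, you must tolerate a multiplicative loss of order $(q)^{a_i}$ in the magnitude of $L_i$. If each $q$ is $\eps^{\Omega(D/a_{\max})}$ (as your union bound requires), the accumulated magnitude loss is $\prod_i \eps^{\Omega(D a_i/a_{\max})} = \eps^{\Omega(D^2/a_{\max})}$, far weaker than the claimed $(\eps/e)^{40D}$ whenever $a_{\max}\ll D$ --- which is exactly the regime the lemma is used in (in Lemma~\ref{lem:almostfullrank} one has $a_i = O(L^{0.6})$ and $D=\Theta(L)$, and a determinant bound of $\eps^{\Omega(L^{1.4})}$ would no longer yield inverse-polynomial singular values). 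Conversely, if you take the per-step magnitude loss to be $(\eps/e)^{O(a_i)}$ so the product is $(\eps/e)^{O(D)}$, each step only fails with probability $O(\eps)$, and the union bound gives total failure $O(n\eps)$, not $\eps^{20D/a_{\max}}$. You correctly sense that the failure probabilities must ``multiply'' rather than add, but the events across steps are not independent (each is conditioned on the history), and your proposal supplies no mechanism to make them multiply; your alternative block-decomposition route has the same problem, since the leading coefficient governing each block's univariate anticoncentration depends on the other blocks' randomness.

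The missing idea is the paper's device for making the gains compose multiplicatively without any independence: instead of conditioning on good events, bound the conditional \emph{negative fractional moment}. With $f = 2\max_i a_i$, one shows
\[
\E_{\delta_{i+1}}\left[\,|L_{i+1}|^{-1/f}\,\right] \;\le\; e^{40 a_{i+1}/f}\,|L_{i}|^{-1/f},
\]
which follows from integrating the tail bound of Claim~\ref{claim:univariate-poly} (the exponent $1/f \le \tfrac{1}{2a_{i+1}}$ is what makes the integral $\int_1^\infty x^{-f/a_{i+1}}dx$ converge). Telescoping these conditional expectations gives $\E\bigl[|P|^{-1/f}\bigr] \le e^{40D/f}$, and a single application of Markov's inequality then yields simultaneously the magnitude bound $(\eps/e)^{40D}$ and the failure probability $\eps^{40D/f}=\eps^{20D/a_{\max}}$. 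Without this (or some equivalent martingale-type argument), the quantitative form of the lemma is not reached.
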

\begin{proof}
Note that by rescaling the polynomial $P$ and dividing out the leading coefficient, we may assume that $\delta = 1$.  Let $d = a_1 + \dots + a_n$ and let $f = 2\max(a_1, \dots , a_n)$.  

For each $i$ with $1 \leq i \leq n$, let  $P_{-i}[y_1, \dots , y_i]$ be the polynomial $P$, viewed as a polynomial in variables $x_{i+1}, \dots , x_n$ after plugging in values $y_1, \dots , y_i$ for  $x_1, \dots , x_i$ respectively.

Let $L_{-i}[y_1, \dots , y_i]$ be the coefficient of $x_{i+1}^{a_{i+1}} \cdots x_n^{a_n}$ in $P_{-i}[y_1, \dots , y_i]$.  We will plug in values for the variables $x_1, \dots , x_n$ in that order and analyze how the sequence $L_{-1}, \dots , L_{-n}$ behaves.  We first show that once $\delta_1, \dots , \delta_i$ are fixed, we have
\begin{equation}\label{eq:martingale-bound}
\E\left[\frac{1}{(L_{-(i+1)}[\alpha_1 + \delta_1 , \dots , \alpha_{i+1} + \delta_{i+1}])^{1/f}} \right] \leq \frac{ e^{40a_{i+1}/f}}{(L_{-i}[\alpha_1 + \delta_1 , \dots , \alpha_{i} + \delta_{i}])^{1/f}}
\end{equation}
where the expectation on the LHS is over the randomness in the choice of $\delta_{i+1}$.  To see this, note that once $\delta_1, \dots , \delta_i$ are chosen, $ L_{-(i+1)}[\alpha_1 + \delta_1 , \dots , \alpha_{i+1} + \delta_{i+1}]$ is obtained by plugging in $\alpha_{i+1} + \delta_{i+1}$ into some degree $a_{i+1}$ polynomial in $x_{i+1}$ with leading coefficient $L_{-i}[\alpha_1 + \delta_1 , \dots , \alpha_{i} + \delta_{i}]$.  By Claim \ref{claim:univariate-poly},
\begin{align*}
& E_{\delta_{i+1}}\left[\frac{1}{(L_{-(i+1)}[\alpha_1 + \delta_1 , \dots , \alpha_{i+1} + \delta_{i+1}])^{1/f}} \right] \\ &\leq \frac{1 + \int_{1}^{\infty} \Pr\left[\frac{1}{(L_{-(i+1)}[\alpha_1 + \delta_1 , \dots , \alpha_{i+1} + \delta_{i+1}])^{1/f}} \leq \frac{x}{ (L_{-i}[\alpha_1 + \delta_1 , \dots , \alpha_{i} + \delta_{i}])^{1/f} } \right] dx}{(L_{-i}[\alpha_1 + \delta_1 , \dots , \alpha_{i} + \delta_{i}])^{1/f}} 
\\ & \leq \frac{1}{(L_{-i}[\alpha_1 + \delta_1 , \dots , \alpha_{i} + \delta_{i}])^{1/f}} \left(1 + \int_1^{\infty}  20 \left(\frac{1}{x}\right)^{f/a_{i+1}} dx \right) \,.
\end{align*}
 Note 
\[
1 + \int_1^{\infty}  20 \left(\frac{1}{x}\right)^{f/a_{i+1}} dx  = 1 + \frac{20a_{i+1}/f}{1 - a_{i+1}/f} \leq 1 + 40a_{i+1}/f \leq e^{40a_{i+1}/f} \,,
\]
and this establishes equation (\ref{eq:martingale-bound}).  

Now we can multiply equation (\ref{eq:martingale-bound}) over all $i$ to get that 
\[
\E\left[\frac{1}{(L_{-n}[\alpha_1 + \delta_1 , \dots , \alpha_{n} + \delta_{n}])^{1/f}} \right] \leq e^{40 d / f} \,.
\]
Note that $L_{-n}[\alpha_1 + \delta_1 , \dots , \alpha_{n} + \delta_{n}]$ is exactly the value of $P(\alpha_1 + \delta_1 , \dots , \alpha_{n} + \delta_{n})$.  By Markov's inequality, we deduce that this value is less than $(\eps/e)^{40d}$ with probability at most $\eps^{40d/f}$, completing the proof.

\end{proof}

Since our main proof works over $\C$, we will translate the above result to work over $\C$.  The proof follows simply by separating real and imaginary parts and applying the previous lemma.
\begin{corollary}\label{coro:multivariate-anticoncentration}
Let $P(x_1, \dots , x_n): \C^n \rightarrow \C$ be a polynomial in $n$ variables.  Assume that for some nonnegative integers $a_1, \dots , a_n$
\begin{itemize}
    \item The coefficient of the monomial $x_1^{a_1}x_2^{a_2} \dots x_n^{a_n}$ in $P$ has magnitude $1$
    \item For each $i$, the degree of $P$ when viewed as a polynomial in only $x_i$ is equal to $a_i$
\end{itemize}
then for any given complex numbers $\alpha_1, \dots , \alpha_n$, if we perturb them to $\alpha_1 + \delta_1 + \gamma_1 i, \dots , \alpha_n + \delta_n + \gamma_n i $ where the $\delta_j, \gamma_j$ are drawn independently at random from $N(0,\delta^2)$ then for any $\eps > 0$,
\[
|P(\alpha_1 + \delta_1 + \gamma_1 i, \dots , \alpha_n + \delta_n + \gamma_n i) | \geq \delta^{a_1 + \dots + a_n}\left(\frac{\eps}{e}\right)^{40(a_1 + \dots + a_n)}
\]
with probability at least 
\[
1 - \eps^{20(a_1 + \dots + a_n)/\max(a_1, \dots, a_n)} \,.
\]
\end{corollary}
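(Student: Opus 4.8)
The plan is to reduce directly to Lemma~\ref{lemma:multivariate-poly} by conditioning on the imaginary parts of the perturbations and passing to the real part of the resulting complex polynomial in the remaining real variables. First I would fix arbitrary real values $\gamma_1, \dots, \gamma_n$ (at the end we average over their distribution) and set $\beta_j := \alpha_j + \gamma_j i \in \C$, so that the quantity to be bounded is $|Q(\delta_1, \dots, \delta_n)|$ where $Q(\delta_1, \dots, \delta_n) := P(\beta_1 + \delta_1, \dots, \beta_n + \delta_n)$ is a complex polynomial in the $n$ \emph{real} variables $\delta_1, \dots, \delta_n$ and $\delta_j \sim N(0,\delta^2)$ independently. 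I would then check that $Q$ inherits the structural hypotheses of Lemma~\ref{lemma:multivariate-poly}: because the monomial $x_1^{a_1}\cdots x_n^{a_n}$ attains the maximal allowed degree $a_j$ in \emph{every} variable simultaneously, it is the only monomial of $P$ that can contribute the term $\delta_1^{a_1}\cdots\delta_n^{a_n}$ after the shift $x_j \mapsto \beta_j + \delta_j$, so the coefficient of $\delta_1^{a_1}\cdots\delta_n^{a_n}$ in $Q$ equals the coefficient $c$ of $x_1^{a_1}\cdots x_n^{a_n}$ in $P$, which has $|c| = 1$; and since an additive shift is a ring automorphism, the degree of $Q$ in each individual $\delta_j$ is again exactly $a_j$.

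Next I would split $Q = Q_R + i Q_I$ with $Q_R, Q_I \in \R[\delta_1, \dots, \delta_n]$ and write $c = c_R + i c_I$ with $c_R^2 + c_I^2 = 1$, so at least one of $|c_R|, |c_I|$ is $\geq 1/\sqrt 2$; say $|c_R| \geq 1/\sqrt 2$ (the other case is symmetric, using $|Q| \geq |Q_I|$ and $Q_I$ in place of $Q_R$). The only point where the reduction is not completely automatic is checking that $Q_R$ still satisfies the ``degree exactly $a_j$'' hypothesis required by Lemma~\ref{lemma:multivariate-poly}. This works out because the monomial $\delta_1^{a_1}\cdots\delta_n^{a_n}$ appears in $Q_R$ with nonzero coefficient $c_R$, and this monomial already has $\delta_j$-degree $a_j$, while the $\delta_j$-degree of $Q_R$ cannot exceed that of $Q$, which is $a_j$; hence it equals $a_j$. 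Consequently $Q_R/c_R$ is a real polynomial satisfying \emph{all} the hypotheses of Lemma~\ref{lemma:multivariate-poly}, with center point the origin.

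Finally, applying Lemma~\ref{lemma:multivariate-poly} to $Q_R/c_R$ gives that, for any $\eps > 0$, with probability at least $1 - \eps^{20(a_1 + \dots + a_n)/\max(a_1,\dots,a_n)}$ over $\delta_1, \dots, \delta_n$,
\[
|P(\beta_1 + \delta_1, \dots, \beta_n + \delta_n)| = |Q(\delta)| \;\geq\; |Q_R(\delta)| \;=\; |c_R|\cdot\big|(Q_R/c_R)(\delta)\big| \;\geq\; \frac{1}{\sqrt 2}\,\delta^{a_1 + \dots + a_n}\Big(\frac{\eps}{e}\Big)^{40(a_1 + \dots + a_n)} .
\]
The stray factor $1/\sqrt 2$ is harmless and can be folded into the constants (for instance by slightly enlarging the exponent $40$, or by a cosmetic reparametrization of $\eps$). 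Since this holds, with the stated probability, for \emph{every} fixed choice of $\gamma_1, \dots, \gamma_n$, averaging over $\gamma_j \sim N(0,\delta^2)$ shows it holds with at least that probability over the joint perturbation, which is exactly the claim. I expect the main (really the only) thing requiring care is the degree bookkeeping in the second paragraph --- ensuring that taking the real part does not drop a per-variable degree below $a_j$ --- while everything else transfers verbatim from the real-variable case.
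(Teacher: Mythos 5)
Your proposal is correct and follows essentially the same route as the paper: condition on the imaginary perturbations $\gamma_1,\dots,\gamma_n$ and apply Lemma~\ref{lemma:multivariate-poly} to the real part of the resulting polynomial in the real perturbations. The only difference is that the paper first multiplies $P$ by a unimodular constant so that the coefficient of $x_1^{a_1}\cdots x_n^{a_n}$ equals exactly $1$ (which leaves $|P|$ unchanged), thereby avoiding your real/imaginary case split and the stray $\frac{1}{\sqrt{2}}$, so the stated bound comes out verbatim rather than after a reparametrization of $\eps$.
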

\begin{proof}
We can multiply $P$ by a suitable constant so that the coefficient of $x_1^{a_1}x_2^{a_2} \dots x_n^{a_n}$ is equal to $1$.  Now we can write each variable $x_j = y_j + iz_j$ and split $P$ into its real and imaginary parts.  Then the real part of $P$, say $P_{\text{real}}(y_1, z_1, \dots , y_n, z_n)$ contains the monomial $y_1^{a_1} \cdots y_n^{a_n}$ with coefficient $1$.  Now we can consider sampling $\gamma_1, \dots , \gamma_n$ first and fixing them so that now $z_1, \dots , z_n$ are fixed and $P_{\text{real}}$ becomes a polynomial in $n$ variables $y_1, \dots , y_n$ with individual degrees at most $a_1, \dots , a_n$ respectively.  We now apply Lemma \ref{lemma:multivariate-poly} on this polynomial to complete the proof.
\end{proof}

We will also use more standard anti-concentration inequalities in a few of the proofs.  We begin with the well-known Carbery-Wright inequality \cite{carbery2001distributional}.
\begin{lemma}[Carbery-Wright]\label{lem:carbery-wright}
There is a universal constant $B$ such that for any Gaussian $G$ over $\R^n$ and polynomial $P(x_1, \dots , x_n): \R^n \rightarrow \R$,
\[
\Pr_{x \sim G}\left[ |P(x) | \leq \eps \sqrt{\Var_{x \sim G}[P(x)]} \right] \leq B \eps^{1/d} \,.
\]
\end{lemma}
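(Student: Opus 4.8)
The statement is the classical Carbery--Wright anticoncentration inequality, and in the paper it is used purely as a black box, so strictly speaking the ``proof'' is the citation to \cite{carbery2001distributional}; let me nonetheless sketch the argument one would reconstruct. After rescaling $P$ we may assume $\Var_{x\sim G}[P(x)] = 1$, and it is convenient to split $P = \mu + Q$ with $\mu = \E P$, $\E Q = 0$, $\E Q^2 = 1$; the goal becomes $\Pr[|P(x)| \le \eps] \le B\eps^{1/d}$.

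The core of the argument is the one-dimensional case: for a degree-$d$ polynomial $q$ of one real variable and a one-dimensional log-concave probability measure $\nu$, one wants $\nu\{\,|q| \le \eps\sqrt{\Var_\nu q}\,\} \le Cd\,\eps^{1/d}$. Factoring $q(t) = a\prod_{i=1}^{d}(t-\zeta_i)$ over $\C$, on the event $\{|q(t)| \le \eps\}$ some factor obeys $|t-\mathrm{Re}\,\zeta_i| \le |t-\zeta_i| \le (\eps/|a|)^{1/d}$, so that event is covered by $d$ intervals of length $2(\eps/|a|)^{1/d}$; since a one-dimensional log-concave density of variance $v$ is $O(1/\sqrt v)$ in sup norm, each interval has $\nu$-mass $O((\eps/|a|)^{1/d}/\sqrt{\Var_\nu q})$. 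The remaining work is to show $|a|^{-1/d}$ cannot be too large relative to $1/\sqrt{\Var_\nu q}$, which one does by bounding $\int |q|^{-1/(2d)}\,d\nu$ via the AM--GM inequality $\prod_i |t-\zeta_i|^{-1/(2d)} \le \tfrac1d\sum_i |t-\zeta_i|^{-1/2}$ together with the local integrability of $|t-r|^{-1/2}$; this bookkeeping is the delicate part.

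To pass from one dimension to $\R^n$ I would invoke the localization lemma for log-concave measures (Lov\'asz--Simonovits / Kannan--Lov\'asz--Simonovits), which reduces inequalities of the form ``$\int g_1\,d\mu \ge 0$ and $\int g_2\,d\mu \ge 0$ force $\int f\,d\mu > 0$'' for log-concave $\mu$ to the same statement for one-dimensional log-concave measures on a segment. Choosing $g_1,g_2,f$ so that a counterexample would encode $\mu\{|P|\le\eps\} > B\eps^{1/d}$ under the constraint $\Var_\mu P = 1$ reduces the claim for the (log-concave) Gaussian $G$ to the one-dimensional estimate above. This reduction is exactly the main obstacle: unlike Lemma~\ref{lemma:multivariate-poly}, where the per-variable degrees are balanced and a clean leave-one-variable-out martingale argument suffices, here a single coordinate can carry the full degree $d$, so one really needs localization (or, in the Gaussian setting specifically, a hypercontractivity-based comparison of the $L^p$ norms of $Q$ together with its resulting sub-Gaussian-type tail bound, handled by a short case analysis on the size of $\mu$) rather than a direct induction on coordinates. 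As we only need the lemma as stated, in the paper we simply cite \cite{carbery2001distributional}.
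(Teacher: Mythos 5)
The paper gives no proof of this lemma at all: it is quoted as a known result directly from \cite{carbery2001distributional}, so the ``paper's approach'' is exactly the citation you identify as the intended justification. Your supplementary sketch (the one-dimensional root-factorization covering estimate for log-concave densities, followed by a localization or hypercontractivity argument to pass to $\R^n$) is a faithful outline of the standard proof of Carbery--Wright, so the proposal is correct and matches the paper's treatment.
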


\begin{claim}\label{claim:gaussian-variance}
Let $P(x_1, \dots , x_n): \R^n \rightarrow \R$ be a polynomial of degree $d \geq 1$ such that some monomial $x_1^{a_1}x_2^{a_2} \cdots x_n^{a_n}$ of degree $d$ has coefficient $1$.  Then
\[
\Var_{x \sim N(0, I)}[ P(x)] \geq 1 \,.
\]
\end{claim}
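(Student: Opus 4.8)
The plan is to expand $P$ in the (probabilists') Hermite basis with respect to the standard Gaussian and read off the variance from the Hermite coefficients. Let $\mathrm{He}_k$ denote the monic degree-$k$ univariate Hermite polynomial, normalized so that $\E_{Z\sim N(0,1)}[\mathrm{He}_j(Z)\mathrm{He}_k(Z)] = k!\,\mathbf{1}_{j=k}$, and for a multi-index $\beta=(\beta_1,\dots,\beta_n)$ set $H_\beta(x) = \prod_{i=1}^n \mathrm{He}_{\beta_i}(x_i)$ and $\beta! := \prod_i \beta_i!$. These $H_\beta$ form an orthogonal family with $\E_{x\sim N(0,I)}[H_\beta H_\gamma] = \beta!\,\mathbf{1}_{\beta=\gamma}$, and $H_0 = 1$. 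Since $P$ is a polynomial it has a finite expansion $P = \sum_\beta c_\beta H_\beta$, and then $\E[P] = c_0$ and $\E[P^2] = \sum_\beta c_\beta^2\,\beta!$, so
\[
\Var_{x\sim N(0,I)}[P(x)] = \sum_{\beta \neq 0} c_\beta^2\, \beta! \,.
\]

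The key step is to identify the coefficient $c_\alpha$, where $\alpha = (a_1,\dots,a_n)$ is the exponent vector of the given degree-$d$ monomial $x^\alpha := x_1^{a_1}\cdots x_n^{a_n}$ that has coefficient $1$ in $P$. I would compare the coefficient of $x^\alpha$ on both sides of $P = \sum_\beta c_\beta H_\beta$. Because each $\mathrm{He}_k$ is monic of degree $k$ and contains only monomials of degree $k, k-2, k-4, \dots$, every monomial occurring in $H_\beta$ has total degree at most $|\beta| := \sum_i \beta_i$, and the unique monomial of degree exactly $|\beta|$ occurring in $H_\beta$ is $x^\beta$ itself. Now $P$ has degree $d = |\alpha|$, so $c_\beta = 0$ whenever $|\beta| > d$; and whenever $|\beta| < d$ the polynomial $H_\beta$ contains no degree-$d$ monomial, in particular not $x^\alpha$. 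Hence the only term on the right-hand side contributing to the coefficient of $x^\alpha$ is $c_\alpha H_\alpha$, which contributes exactly $c_\alpha\cdot 1$. Matching with the left-hand side gives $c_\alpha = 1$.

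To finish: since $d \geq 1$ we have $\alpha \neq 0$, so the term $c_\alpha^2\,\alpha!$ is present in the variance sum, and therefore
\[
\Var_{x\sim N(0,I)}[P(x)] = \sum_{\beta\neq 0} c_\beta^2\,\beta! \;\geq\; c_\alpha^2\,\alpha! \;=\; \alpha! \;\geq\; 1 \,,
\]
using $\alpha! = \prod_i a_i! \geq 1$. I do not anticipate a genuine obstacle here; the only points requiring care are the choice of Hermite normalization (monic ``probabilists''' Hermite polynomials, so that the top monomial of $H_\beta$ is $x^\beta$ with coefficient $1$) and the fact that the variance formula omits the $\beta = 0$ term, which is exactly why the hypothesis $d \geq 1$ is needed. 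As an alternative that avoids Hermite polynomials altogether, one can induct on $n$: conditioning on $x_2,\dots,x_n$, the polynomial $P$ becomes a degree-$a_1$ polynomial in $x_1$ whose leading coefficient $Q(x_2,\dots,x_n)$ has degree $\le d - a_1$ and contains the monomial $x_2^{a_2}\cdots x_n^{a_n}$ with coefficient $1$; the one-variable case gives $\Var[P \mid x_2,\dots,x_n] \geq a_1!\, Q(x_2,\dots,x_n)^2 \geq Q(x_2,\dots,x_n)^2$, and then the law of total variance together with $\E[Q^2] \geq \Var[Q]$ and the inductive hypothesis applied to $Q$ closes the argument. The Hermite route is cleaner and yields the stated constant $1$ directly.
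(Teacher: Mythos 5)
Your proof is correct and follows essentially the same route as the paper: expand $P$ in the monic (probabilists') Hermite basis, observe that the coefficient of $H_\alpha$ equals the coefficient $1$ of the top-degree monomial $x_1^{a_1}\cdots x_n^{a_n}$, and lower bound the variance by $c_\alpha^2\,\alpha! \geq 1$. Your argument for why $c_\alpha = 1$ (degree and triangularity considerations) is spelled out a bit more explicitly than in the paper, but the substance is identical.
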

\begin{proof}
Let $H_1(x) = x, H_2(x) = x^2 - 1, H_3(x) = x^3 - 3x, \cdots $ be the Hermite polynomials.  The following properties are well-known:
\[
\E_{x \sim N(0,1)}[H_i(x)] = 0 \quad \forall i \geq 1
\]
\begin{equation}\label{eq:hermite}
\E_{x \sim N(0,1)}[H_i(x)H_j(x)] = 1_{i = j} (i!) \,.
\end{equation}
Now we can write $P$ in the following form
\[
P = \sum_{\alpha} c_{\alpha}H_{\alpha_1}(x_1) \cdots H_{\alpha_n}(x_n)
\]
where $\alpha$ runs over all $n$-tuples corresponding to monomials of degree at most $d$.  Note this decomposition is unique and also for $\alpha = (a_1, \dots , a_n)$, the coefficient $c_{\alpha} = 1$.  Then by (\ref{eq:hermite}),
\[
\Var_{x \sim N(0, I)}[P(x)] = \sum_{\alpha \neq 0} c_{\alpha}^2 \alpha_1! \cdots \alpha_n! \geq 1 \,. 
\]
\end{proof}

Combining the two previous claims, we get:
\begin{corollary}\label{coro:anti-concentration-general}
Let $P(x_1, \dots , x_n): \C^n \rightarrow \C$ be a polynomial of degree $d \geq 1$ such that some monomial $x_1^{a_1}x_2^{a_2} \cdots x_n^{a_n}$ of degree $d$ has coefficient with magnitude $1$.  Then for any given complex numbers $\alpha_1, \dots , \alpha_n$, if we perturb them to $\alpha_1 + \delta_1 + \gamma_1 i, \dots , \alpha_n + \delta_n + \gamma_n i $ where the $\delta_j, \gamma_j$ are drawn independently at random from $N(0,\delta^2)$ then for any $\eps > 0$, 
\[
\Pr\left[ |P(\alpha_1 + \delta_1 + \gamma_1 i, \dots , \alpha_n + \delta_n + \gamma_n i) | \leq \delta^d \cdot \eps  \right] \leq B\eps^{1/d}
\]
for some universal constant $B$.
\end{corollary}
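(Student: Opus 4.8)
The plan is to deduce the statement by combining Claim~\ref{claim:gaussian-variance} with the Carbery--Wright inequality (Lemma~\ref{lem:carbery-wright}), after first reducing from the complex setting to a real one. I would write each complex variable as $x_j = y_j + i z_j$ and set $y_j = \mathrm{Re}(\alpha_j) + \delta_j$, $z_j = \mathrm{Im}(\alpha_j) + \gamma_j$, so that $\delta_j = \delta g_j$ and $\gamma_j = \delta h_j$ with $g_j, h_j \sim N(0,1)$ independent. Splitting $P$ into its real and imaginary parts $P = P_{\mathrm{re}} + i P_{\mathrm{im}}$, where $P_{\mathrm{re}}, P_{\mathrm{im}}$ are real polynomials of degree at most $d$ in the $2n$ real variables $y_1, z_1, \dots, y_n, z_n$, we have $|P(x)| \geq \max(|P_{\mathrm{re}}|, |P_{\mathrm{im}}|)$, so it suffices to obtain an anticoncentration bound for whichever of the two retains a large degree-$d$ coefficient.

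Next I would track the relevant monomial. If $c$ is the coefficient of $x_1^{a_1}\cdots x_n^{a_n}$ in $P$, so $|c| = 1$, then $c = \mathrm{Re}(c) + i\,\mathrm{Im}(c)$ and at least one of $|\mathrm{Re}(c)|, |\mathrm{Im}(c)|$ is $\geq 1/\sqrt{2}$; say $|\mathrm{Re}(c)| \geq 1/\sqrt{2}$, the other case being symmetric. Then $y_1^{a_1}\cdots y_n^{a_n}$ appears in $P_{\mathrm{re}}$ with coefficient $\mathrm{Re}(c)$, and since $\deg P_{\mathrm{re}} = d = a_1 + \dots + a_n$, the only way to produce the monomial $g_1^{a_1}\cdots g_n^{a_n}$ (with every $h_j$ to the zeroth power) upon substituting $y_j \mapsto \mathrm{Re}(\alpha_j) + \delta g_j$ and $z_j \mapsto \mathrm{Im}(\alpha_j) + \delta h_j$ is from $y_1^{a_1}\cdots y_n^{a_n}$ itself; the resulting coefficient is $\mathrm{Re}(c)\,\delta^d$, of magnitude $\geq \delta^d/\sqrt{2}$. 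Dividing the substituted polynomial by this coefficient yields a degree-$d$ polynomial $\widehat{P}$ in the standard Gaussian variables $(g_1, h_1, \dots, g_n, h_n)$ possessing a degree-$d$ monomial with coefficient exactly $1$, so Claim~\ref{claim:gaussian-variance} gives $\Var_{x \sim N(0,I_{2n})}[\widehat{P}] \geq 1$.

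Finally I would apply Carbery--Wright to $\widehat{P}$: since $\Var[\widehat{P}] \geq 1$, we get $\Pr[\,|\widehat{P}| \leq \eps\,] \leq \Pr[\,|\widehat{P}| \leq \eps\sqrt{\Var[\widehat{P}]}\,] \leq B\eps^{1/d}$. Unwinding the reduction, the event $|P(\alpha_1 + \delta_1 + \gamma_1 i, \dots, \alpha_n + \delta_n + \gamma_n i)| \leq \delta^d \eps$ forces $|P_{\mathrm{re}}| \leq \delta^d \eps$, hence $|\widehat{P}| \leq \eps/|\mathrm{Re}(c)| \leq \sqrt{2}\,\eps$, so the probability is at most $B(\sqrt{2}\,\eps)^{1/d} \leq B'\eps^{1/d}$ for a suitable universal constant $B'$ absorbing the factor $2^{1/(2d)} \leq \sqrt{2}$. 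I do not expect a genuine obstacle here; the only points needing care are (i) confirming that substituting affine functions of fresh Gaussian variables preserves the top-degree monomial with the stated coefficient — which holds precisely because $\deg P_{\mathrm{re}} = d$ leaves no higher-degree terms that could also feed into $g_1^{a_1}\cdots g_n^{a_n}$ — and (ii) the constant bookkeeping so that the conclusion is phrased with a single universal $B$.
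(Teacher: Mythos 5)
Your proposal is correct and takes essentially the same approach as the paper's proof: split $P$ into real and imaginary parts, track the distinguished degree-$d$ monomial through the affine substitution by fresh Gaussians (using that the total degree is $d$ so no other monomial can contribute), and combine Claim~\ref{claim:gaussian-variance} with Carbery--Wright (Lemma~\ref{lem:carbery-wright}). The only immaterial difference is bookkeeping: the paper multiplies $P$ by a unimodular constant so the coefficient becomes exactly $1$ and conditions on the imaginary perturbations, applying the lemmas to an $n$-variable polynomial, whereas you keep all $2n$ Gaussian variables and work with whichever of $P_{\mathrm{re}}$, $P_{\mathrm{im}}$ inherits a coefficient of magnitude at least $1/\sqrt{2}$, absorbing the resulting $\sqrt{2}$ factor into the universal constant.
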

\begin{proof}
Multiply $P$ by a suitable constant so that the coefficient of $x_1^{a_1}x_2^{a_2} \cdots x_n^{a_n}$  is equal to $1$.  Sample $\gamma_1 , \dots , \gamma_n$ first and fix their values.  Now let $Q$ be the polynomial such that 
\[
Q(\delta_1, \dots , \delta_n) = P_{\text{real}}(\alpha_1 + \delta_1 + \gamma_1 i, \dots , \alpha_n + \delta_n + \gamma_n i) \,,
\]
where $P_{\text{real}}$ denotes the real part of $P$.  Note that $Q$ has real coefficients and is real valued and furthermore, the coefficient of $x_1^{a_1}x_2^{a_2} \cdots x_n^{a_n}$ is also equal to $1$.  Now we can apply Claim \ref{claim:gaussian-variance} and Lemma \ref{lem:carbery-wright} on the polynomial $\delta^{-d} Q(\delta x_1, \dots , \delta x_n)$ and get the desired conclusion.
\end{proof}

\end{document}